\documentclass[11pt]{article}
\usepackage[margin=1in]{geometry}
\usepackage{amsmath,amssymb,amsthm,mathtools,microtype,enumitem}
\usepackage[T1]{fontenc}
\usepackage{lmodern}
\usepackage{booktabs,tabularx}
\usepackage{needspace}
\usepackage{xcolor}
\usepackage[colorlinks=true,linkcolor=blue!60!black,citecolor=blue!60!black,urlcolor=blue!60!black]{hyperref}
\usepackage[capitalize,noabbrev]{cleveref}
\hypersetup{pdftitle={Sampling Matchings in Near-linear Time},
pdfauthor={Tianshun Miao and Yitong Yin},
pdfsubject={Glauber mixing, work-efficient parallel sampling, and approximate counting}}
\newtheorem{theorem}{Theorem}[section]
\newtheorem{lemma}[theorem]{Lemma}
\newtheorem{proposition}[theorem]{Proposition}
\newtheorem{corollary}[theorem]{Corollary}
\theoremstyle{definition}
\theoremstyle{remark}
\theoremstyle{definition}\newtheorem{introdefinition}{Definition}[section]
\theoremstyle{remark}
\crefname{introdefinition}{Definition}{Definitions}
\Crefname{introdefinition}{Definition}{Definitions}
\crefname{introremark}{Remark}{Remarks}
\Crefname{introremark}{Remark}{Remarks}
\theoremstyle{plain}
\newcommand{\wtO}{\widetilde O}
\newcommand{\tmix}{t_{\mathrm{mix}}}
\newcommand{\calM}{\mathcal M}
\newcommand{\E}{\mathbb E}
\newcommand{\Prb}{\mathbb P}
\newcommand{\Var}{\operatorname{Var}}
\newcommand{\Cov}{\operatorname{Cov}}
\newcommand{\Ent}{\operatorname{Ent}}
\newcommand{\Diag}{\operatorname{Diag}}

\newcommand{\cE}{\mathcal E}
\newcommand{\cL}{\mathcal L}
\newcommand{\cV}{\mathcal V}
\newcommand{\one}{\mathbf 1}
\newcommand{\rise}[2]{(#1)^{\overline{#2}}}
\newcommand{\TV}{d_{\mathrm{TV}}}
\DeclareMathOperator{\gap}{gap}
\setlist{itemsep=3pt,topsep=5pt}
\title{Sampling Matchings in Near-linear Time}
\author{Tianshun Miao\qquad Yitong Yin\\[0.6em]
\normalsize State Key Laboratory for Novel Software Technology,\\
\normalsize New Cornerstone Science Laboratory,\\
\normalsize Nanjing University, China\\[0.3em]
\normalsize\texttt{miaotianshun@smail.nju.edu.cn, yinyt@nju.edu.cn}}
\date{}
\begin{document}
\hypersetup{pageanchor=false}
\pagenumbering{gobble}
\maketitle

\begin{abstract}

%
For every fixed activity $\lambda>0$, we establish three results for the
monomer--dimer model on an $n$-vertex simple graph $G$ with $m\ge1$ edges and maximum degree $\Delta$.
\begin{enumerate}[label=(\roman*),leftmargin=*,itemsep=4pt,topsep=5pt]
\item \textbf{Near-linear mixing and sampling.}
Single-edge Glauber dynamics has mixing time
$O_\lambda(m[\log^2 n+\log(1/\varepsilon)])$, giving a near-linear-time approximate sampler.
\item \textbf{Work-efficient parallel sampling.}
We simulate the same Glauber dynamics in parallel using
$\wtO_\lambda(m+n)$ work and $\wtO_\lambda(\min\{\Delta,m^{1/3},\sqrt n\})$
depth with high probability.
\item \textbf{Fast approximate counting.}
We estimate the partition function within relative error $\varepsilon$
in $\wtO_\lambda(n^2/\varepsilon^2)$ work.
For dense graphs with $m=\Theta(n^2)$, this is near-linear in the input
size.
\end{enumerate}
For the mixing theorem, we establish a general log--Sobolev criterion
based on field-dynamics spectral stability, with only logarithmic
dependence on the inverse occupied-marginal lower bound.
Parallelism uses a matching-specific analysis of occupation-interval dependencies.
Counting uses monomer-preconditioned Jerrum--Sinclair dynamics, whose
parameters are learned efficiently by Glauber dynamics.
\end{abstract}
\clearpage
\begingroup\small\setlength{\parskip}{0pt}
\tableofcontents
\endgroup
\clearpage
\pagenumbering{arabic}
\hypersetup{pageanchor=true}

\section{Introduction}\label{sec:intro}

Let $G=(V,E)$ be a simple graph with $n$ vertices and $m$ edges.
The monomer--dimer distribution at activity $\lambda>0$ and its partition function  are defined as
\[
 \mu_{G,\lambda}(M)=\frac{\lambda^{|M|}}{Z_G(\lambda)},
 \qquad Z_G(\lambda)=\sum_{M\in\calM(G)}\lambda^{|M|},
\]
where $\calM(G)$ is the set of all matchings of $G$.
At $\lambda=1$, this is the uniform distribution, and $Z_G(1)$ is the number
of matchings.

Matchings have played a foundational role in the study of sampling and
approximate counting. The seminal work of Jerrum and Sinclair
\cite{JerrumSinclair} established polynomial-time approximation for the
monomer--dimer partition function through a rapidly mixing Markov chain.
The conductance and canonical-path methods developed in this line of
work, and subsequent refinements \cite{Sinclair92}, helped shape the
modern theory of Markov chain Monte Carlo algorithms. Together with
general reductions between sampling and counting \cite{JVV86}, these
results settled polynomial-time tractability.

\paragraph{Near-linear mixing and sampling.}
A natural local chain for sampling matchings is single-edge Glauber
dynamics (GD). Assuming $m\ge 1$, each step chooses an edge uniformly,
removes it if present, and then inserts it with probability
$\lambda/(1+\lambda)$ if both endpoints are unmatched; otherwise it
remains absent. Write $P_{\mathrm{GD}}$ for this transition matrix, whose
stationary distribution is $\mu_{G,\lambda}$.

For an ergodic Markov chain  $P$ on $\Omega$ with stationary distribution $\mu$, its \emph{mixing time} is defined by
\begin{align*}
\tmix(P,\varepsilon)
 &=\min\left\{t\in\mathbb Z_{\ge0}:
       \max_{x\in\Omega}d_{\mathrm{TV}}(P^t(x,\cdot),\mu)
       \le\varepsilon\right\},\qquad  0<\varepsilon<1.
\end{align*}
Here $d_{\mathrm{TV}}(\rho,\mu)=\frac12\sum_{x\in\Omega}|\rho(x)-\mu(x)|$ denotes the 
\emph{total variation distance} between probability distributions $\rho,\mu$ on a finite set $\Omega$.
For matching GD, write
 $\tmix(\varepsilon)=\tmix(P_{\mathrm{GD}},\varepsilon)$.

Using spectral independence, Chen, Liu, and Vigoda \cite{CLV21}
established $\tmix(1/4)=O(m\log n)$ at fixed activity and constant
maximum degree.

The unbounded-degree regime is more subtle. Although the monomer--dimer
model has no finite positive-activity phase transition \cite{HL72},
the rate of correlation decay deteriorates as the degree grows
\cite{BGKNT07}, already on high-degree trees. This loss of
degree-uniform correlation decay gives the regime a
\emph{near-critical} character. For comparison, recent work on the
hard-core and Ising models at their uniqueness thresholds proves
superlinear polynomial lower bounds on worst-case Glauber mixing
\cite{CCYZ25}. These developments motivate a fundamental question:
\emph{Can Glauber dynamics for matchings achieve near-linear mixing
without any degree restriction?}

We answer this question affirmatively: ordinary single-edge Glauber
dynamics achieves near-linear mixing and sampling time on every simple graph.

\begin{theorem}[Near-linear mixing and sampling]
\label{thm:intro-sequential}
For every activity $\lambda>0$, every simple graph $G$ with $n$ vertices
and $m\ge1$ edges, and $0<\varepsilon\le1/2$, single-edge Glauber dynamics
for $\mu_{G,\lambda}$ satisfies
\[
 \tmix(\varepsilon)
 =O_\lambda\!\left(m[\log^2 n+\log(1/\varepsilon)]\right).
\]
Consequently, a random matching with distribution $\widehat\mu$ satisfying
$d_{\mathrm{TV}}(\widehat\mu,\mu_{G,\lambda})\le\varepsilon$
can be generated in
$O_\lambda(n+m[\log^2 n+\log(1/\varepsilon)])$ time.
\end{theorem}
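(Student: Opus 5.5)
We prove a modified log-Sobolev inequality (or standard log-Sobolev inequality) for $P_{\mathrm{GD}}$ with constant $\rho \ge c_\lambda/(m\log n)$. Then we convert it to total-variation mixing in the usual way. Starting from a point mass at $x$, the initial relative entropy is $\log(1/\mu(x))$. Since every matching has weight at least $\lambda^{n/2}$ relative to $Z_G(\lambda)\le (1+\lambda)^m\le e^{O_\lambda(n\log n)}$ (indeed $Z\le \sum_k \binom{m}{k}\lambda^k$ with $k\le n/2$, so $\log Z=O_\lambda(n\log n)$), we get $\log(1/\mu_{\min})=O_\lambda(n\log n)$. Entropy decay $e^{-\rho t}$ together with Pinsker then gives $\tmix(\varepsilon)=O(\rho^{-1}[\log\log(1/\mu_{\min})+\log(1/\varepsilon)])$. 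That would already be $O_\lambda(m\log n[\log n+\log(1/\varepsilon)])$, which is slightly weaker than claimed in the $\varepsilon$-term. So I would instead aim for $\rho\ge c_\lambda/(m\log n)$ combined with a warm-start argument: after $O_\lambda(m\log^2 n)$ steps, the chain's relative entropy is $O(1)$. Beyond that point, the spectral gap of order $1/m$ gives the $O_\lambda(m\log(1/\varepsilon))$ tail, via $\chi^2$ contraction after a burn-in, or via an entropy-decay constant of order $1/m$ below a constant entropy level. The sampling-time claim follows immediately: initialize at the empty matching in $O(n+m)$ time, and each Glauber step costs $O(1)$ using an array of vertex-matched flags.

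The core is the log-Sobolev bound, which I would obtain through the field-dynamics / spectral-independence framework. First, establish $O_\lambda(1)$ spectral independence for $\mu_{G,\lambda}$ under \emph{every} external field $\theta\le1$, uniformly in $\Delta$. The natural route is via the Heilmann--Lieb zero-freeness: $Z_G(\lambda)$ has only negative real zeros, and so do all pinned and tilted versions. By the real-stability/zero-free-region argument, the influence matrix then has bounded spectral radius in a sector around $\lambda$, and this stays stable as the field varies. Second, use field dynamics to reduce to a highly subcritical model at activity $\theta\lambda$ with $\theta$ small. The factor-$O(1)$ entropy contraction of field dynamics follows from spectral independence. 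Third, at small activity, prove approximate tensorization of entropy for Glauber dynamics with a constant depending only on the occupied marginal lower bound $b$. Here $b\gtrsim \lambda/(1+\lambda\Delta)$ can be as small as $1/n$.

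The main obstacle is this last step. Standard conversions from spectral gap to entropy lose a factor of $\log(1/b)$ or worse, e.g.\ polynomial factors in $\Delta$ via Diaconis--Saloff-Coste-type bounds. The abstract hints that the paper's criterion loses only $\log(1/b)=O(\log n)$, and this log is exactly what produces the $\log^2 n$. I would try to prove it by bounding the log-Sobolev constant of the small-field Glauber chain through a comparison that treats each edge variable as a two-point space with bias $b$. A two-point space with bias $b$ has log-Sobolev constant of order $1/\log(1/b)$. I would combine this with the spectral-stability hypothesis, uniform over fields, to tensorize the entropy with only that single logarithmic loss. If this fails, the fallback is to handle high-degree vertices by a block decomposition: vertex-star blocks, whose entropy can be analyzed exactly since a star's matching is "at most one edge."
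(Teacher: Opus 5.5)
Your outer shell matches the paper: an LSI of order $m\log n$ and a spectral gap of order $1/m$, converted to mixing by hypercontractivity plus $L^2$ contraction, then $O(1)$ work per update. The step that carries the whole theorem, however, is missing.

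\textbf{The missing LSI argument.} You correctly locate the difficulty: pass from degree-free covariance control to a log--Sobolev bound losing only $\log(1/b)$. But ``treat each edge as a two-point space with bias $b$ and tensorize'' amounts to approximate tensorization of entropy with an $O(1)$ constant independent of degree and of $b$, followed by the single-site cost $\log(1/b)$. No available tensorization argument (e.g.\ \cite{CLV21,BCCPSV22}) gives such a constant when $b$ is polynomially small, and you supply none.

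Lowering the activity does not help. At activity $\theta\lambda$ with constant $\theta$, occupied marginals remain polynomially small; on $K_n$ every edge marginal is at most $1/(n-1)$ at any activity. Pushing the activity down to order $1/\Delta$, where Dobrushin-type arguments become degree-uniform, costs a factor of order $\lambda\Delta$ in the known field-dynamics comparisons.

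The paper's missing idea is a low-degree approximation (super-Poincar\'e) criterion that never tensorizes entropy. It has two ingredients:
\begin{itemize}
\item a Dirichlet-form bound $\cE_{\mathrm{GD}}(f,f)\ge\frac{(j+1)e^{-I}}{N(1+\kappa)}\|f-\Pi_jf\|_2^2$ at every polynomial cutoff $j$, proved by a Bochner-type interpolation along the down walk that uses only covariance stability of the scalar-field posteriors;
\item a norm bound $\|g\|_\infty\le(1+N\sqrt{(1+\kappa)/b})^j\|g\|_2$ for degree-$j$ polynomials, proved by forward differences under positive pinning.
\end{itemize}
A rare-support truncation argument then gives $\alpha_{\rm LS}^{-1}\lesssim N(1+\kappa)e^I\log(N\kappa/b)$. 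This is how $b$ enters only logarithmically.

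This route also yields a genuine LSI. If yours produced only MLSI or entropy contraction, the burn-in step would fail: relative entropy $O(1)$ does not bound the $\chi^2$ distance. You would be left with an $m\log n\log(1/\varepsilon)$ term.

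\textbf{Unjustified inputs in your first two steps.}
\begin{itemize}
\item \emph{Spectral independence from Heilmann--Lieb.} Heilmann--Lieb gives real stability in the vertex (monomer) variables and real-rootedness in the single activity. It gives no multivariate zero-free region in the edge activities of degree-independent size. Degree-uniform control of dimer correlations, $\Cov(X)\preceq(7+48s)\Diag(\E X)$, is the content of recent coupling-based work \cite{ConcurrentHolant26}. The paper must also combine it with the low-activity bound $(1-s)^{-1}$ to make $\int_0^1(C(t)-1)/(1-t)\,dt$ finite.
\item \emph{Entropy contraction of field dynamics.} Covariance (spectral) stability gives variance conservation, hence a spectral gap for field dynamics, but not entropy contraction. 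The known entropy versions need extra hypotheses, such as relative marginal stability \cite{CFYZ22} or entropic independence under arbitrary tilts \cite{AJKPV22}. You have not verified these for matchings uniformly in degree; the paper's criterion is built precisely to avoid them.
\end{itemize}
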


By storing each vertex’s matched neighbor (or a marker indicating that it is unmatched), we can check whether an edge can be inserted and update the matching in constant time. Thus each Glauber update takes constant time, yielding the stated running-time bound. 

\paragraph{Work-efficient parallel sampling.}
Parallel computation asks how much sequential dependence is inherent
in sampling. The work of Mulmuley, Vazirani, and Vazirani \cite{MVV87}
made finding a maximum matching a landmark example of randomized
polylogarithmic-depth computation. Sampling and approximate counting
pose different challenges. Teng \cite{Teng95} identified an obstruction
to NC simulation of the Broder--Jerrum--Sinclair approach to bipartite
perfect matchings. This is a barrier to simulating prescribed chain
trajectories, not a lower bound for sampling itself.

For the monomer--dimer model, recent degree-dependent mixing bounds
and parallel simulation techniques yield polylogarithmic depth at
fixed activity when the maximum degree is polylogarithmic
\cite{CFJMYZ25,LY25,HKLYZ26}. The challenge is to handle arbitrary
degrees:
\emph{Can matchings be sampled with substantially sublinear depth
while preserving polynomial or even near-linear total work?}

We achieve both guarantees by simulating the same single-edge
Glauber trajectory in parallel.

\begin{theorem}[Work-efficient parallel sampling]
\label{thm:intro-parallel}
There is a randomized CREW PRAM algorithm that, given a simple graph $G$
with $n$ vertices, $m\ge1$ edges, and maximum degree $\Delta$,
an activity $\lambda>0$, and $0<\varepsilon\le1/2$, returns a random matching with distribution
$\widehat\mu$ satisfying
$d_{\mathrm{TV}}(\widehat\mu,\mu_{G,\lambda})\le\varepsilon$.
The algorithm uses $\wtO_\lambda(m+n)$ work on every run and with high probability has depth
\[
 \wtO_\lambda\!\left(\min\{\Delta,m^{1/3},\sqrt n\}\right).
\]
\end{theorem}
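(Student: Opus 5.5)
The plan is to simulate exactly the sequential trajectory of $T=O_\lambda(m[\log^2 n+\log(1/\varepsilon)])$ single-edge Glauber steps from the empty matching, as given by \cref{thm:intro-sequential}. The result then inherits the correctness guarantee $d_{\mathrm{TV}}(\widehat\mu,\mu_{G,\lambda})\le\varepsilon$ directly. We first pre-sample all randomness: the edge sequence $e_1,\dots,e_T$ and, for each step $t$, a coin $c_t\sim\mathrm{Bernoulli}(\lambda/(1+\lambda))$. Step $t$ is resolved as follows. If $c_t=0$, then $e_t$ becomes absent regardless of the state, so this step can be decided without any information. If $c_t=1$, then $e_t=uv$ is present after step $t$ iff both $u$ and $v$ are unmatched just before time $t$ by edges other than $e_t$.

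That dependency reduces to a local question. For each edge $f\ne e_t$ adjacent to $u$ or $v$, we need to know whether $f$ is occupied at time $t-1$. The answer is determined by the \emph{last} update of $f$ before $t$. Writing $s<t$ for that update time, $f$ is occupied at time $t-1$ iff step $s$ set it present. We therefore build a dependency DAG on update times: node $t$ points to the last previous updates of the edges in $N(e_t)$, and nodes with $c_t=0$ are sinks. Evaluating the DAG layer by layer, in rounds of parallel resolution, uses work proportional to the number of arcs. Each step can have $\Theta(\Delta)$ neighbouring edges, so to keep the work near-linear we use the standard trick of reading only the currently matched partner pointer. Instead of scanning all neighbours, we replace the arcs by "occupation intervals": for each vertex $u$, the set of times at which some edge at $u$ is occupied forms a union of intervals. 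Each interval begins at an accepted insertion and ends at the next update of that edge. Node $t$ then only needs to know whether $t$ lies in an occupation interval at $u$ or at $v$. Determining this takes $O(\log)$ work per query after sorting the update times per vertex. The total work is thus $\wtO(T)=\wtO_\lambda(m+n)$ on every run, since a run never does more than this even when the depth is large.

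The depth equals the length of the longest chain of genuine dependencies, i.e.\ of steps $t_1>t_2>\dots>t_k$ where each $t_{i+1}$ is an accepted-candidate ($c=1$) update of an edge adjacent to $e_{t_i}$ whose outcome actually matters. I expect bounding this chain is the main obstacle. The plan is to bound the probability of a fixed dependency path of length $k$ by a union bound over the path's combinatorial shape. Fix a time window of length $L$ (say $L=\Theta(m)$ steps, so that $\lambda$-dependent constants absorb the $\log$ factors in $T$). Consecutive steps in the path must use adjacent edges, which contributes at most $2\Delta$ choices each, with probability $1/m$ each. The $k$ times within the window must also be decreasing, which contributes at most $L^k/k!$ orderings, and each step must have $c=1$. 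Together these give
\[
\Pr[\text{path of length }k]\le m\cdot\frac{L^k}{k!}\Bigl(\frac{2\Delta\lambda}{m(1+\lambda)}\Bigr)^{k}\le m\Bigl(\frac{O_\lambda(\Delta)}{k}\Bigr)^{k},
\]
so paths longer than $C_\lambda\Delta+\log n$ do not occur w.h.p.; chaining $O(\log^2 n+\log(1/\varepsilon))$ windows gives depth $\wtO_\lambda(\Delta)$. For the bounds $m^{1/3}$ and $\sqrt n$, this crude count is too weak when $\Delta$ is large. The matching-specific analysis of occupation intervals exploits the fact that a vertex is matched by at most one edge at a time, so a chain can only continue through the \emph{unique} occupied edge at a vertex rather than through any of $\Delta$ neighbours. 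Using this, I would charge each chain step to the birth of an occupation interval at a vertex, and bound the number of interval births that can be nested in a chain. A high-degree vertex is covered almost all the time, so long chains must pass through low-degree regions. Balancing against the constraints $\sum\deg=2m$ and $n$ vertices should give chain lengths $\wtO(\sqrt n)$ and $\wtO(m^{1/3})$. Making this balancing rigorous, with a single union bound that simultaneously yields the three-way minimum, is the step I expect to require the most care.
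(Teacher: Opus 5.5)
\textbf{There is a genuine gap.} Your overall plan matches the paper: exactly simulate the GD trajectory of \cref{thm:intro-sequential} using occupation intervals. Your union bound over adjacent-edge chains also validly gives depth $\wtO_\lambda(\Delta)$ for the evaluation you describe. The problem is the other two terms.

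\textbf{The $\sqrt n$ and $m^{1/3}$ bounds fail for your evaluation rule.} You give only a heuristic for them, and for the rule you actually specify they are false. That rule is layer-by-layer evaluation of the dependency DAG: step $t$ is resolved only after every earlier proposal whose interval could cover $t$ is resolved. Take $K_n$ at constant $\lambda$, with $p=\lambda/(1+\lambda)$.
\begin{itemize}
\item One window of $m$ steps contains about $p(n-1)$ coin-one proposals at a fixed vertex $v$, spaced about $m/(pn)$ apart.
\item Each interval survives such a gap except with probability $O(1/(pn))$.
\item Consecutive proposals at $v$ therefore form a dependency chain of length $\Theta(n)$ per window, far above $\wtO(\sqrt n)$.
\end{itemize}
The phrase ``whose outcome actually matters'' hints at pruning, but neither your algorithm nor your union bound implements it.

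\textbf{The missing algorithmic idea is early rejection.} The paper accepts a record once no earlier \emph{unresolved} record conflicts with it. In the same round, it rejects every record conflicting with a newly accepted one, without waiting for that record's other blockers. This collapses the chain at $v$, and it changes what must be bounded. Backward tracing yields a witness $t_{x_0}<t_{z_0}<t_{x_1}<\cdots$ that alternates accepted records $x_i$ with \emph{rejected} records $z_i$, each $z_i$ conflicting with both neighbours. Your picture of chains continuing only through the unique occupied edge at a vertex misses these rejected connectors, which are what transmit dependencies between disjoint accepted edges.

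\textbf{Bounding such witnesses needs ingredients your sketch does not supply.}
\begin{itemize}
\item \emph{Conditional independence.} A witness is specified through acceptance and rejection outcomes, so the probability of a fixed witness no longer factors into independent $p/m$ terms over the pre-sampled choices. Your unconditional union bound therefore does not apply. The paper reveals the accepted trajectory $\mathcal F_{\rm vis}$ and, in the analysis only, pads each batch with independent no-op slots. Conditionally on $\mathcal F_{\rm vis}$, each unrevealed slot is then a rejected coin-one update of a given edge with probability at most $p/m$ (\cref{lem:visible}).
\item \emph{Descendant bound.} Batches of at most $m$ updates, together with a branching-process domination, show that each accepted record has $O(\log S)$ descendants in the successor DAG $F$ (\cref{lem:descendants}). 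This gives at most $L$ accepted records per vertex, and $|\mathcal A|\le 2rL$ via the vertex cover formed by a maximum matching.
\item \emph{The $\sqrt{pr}$ term.} Two disjoint accepted edges have at most four connecting edges. Combined with choosing the target list in $\binom hq$ ways, this gives the $\sqrt{pr}$ term, hence $\sqrt n$ since $r\le n/2$.
\item \emph{The $m^{1/3}$ term.} This requires contracting successor paths into a graph $J$ with $O(mL^3)$ edges. It also needs the total successor-path-length bound (\cref{lem:several-paths}) to control ancestor choices, and the weighted increasing-path estimate (\cref{lem:weighted-paths}).
\end{itemize}
Your proposed balancing (high-degree vertices are almost always covered, so long chains pass through low-degree regions) is not what drives these bounds, and as stated it yields no estimate.

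\textbf{The work bound is asserted, not shown.} Occupation intervals become known only as evaluation proceeds. An $O(\log n)$ query re-issued in every round would cost up to depth times $T$ in total. You need per-vertex structures that maintain earlier-blocker counts under range decrements and report each conflict of a newly accepted interval only once, as in \cref{lem:interval-implementation}.
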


Here $\wtO_\lambda$ suppresses polylogarithmic factors in graph size and $1/\varepsilon$. 
The total-variation guarantee is unconditional.
At fixed activity and inverse-polynomial accuracy and failure
probability, the depth is polylogarithmic when the maximum degree
is polylogarithmic, and at most $\wtO_\lambda(\sqrt n)$ without
any degree restriction.

\paragraph{Fast approximate counting.}
At fixed activity, standard annealing reductions estimate the partition
function $Z_G(\lambda)$ within relative error $\varepsilon$ using
$\wtO(n/\varepsilon^2)$ samples at intermediate activities
\cite{SVV09,HKLYZ26}. Combined with our sampler, these reductions give
an independent-sample annealing baseline of $\wtO_\lambda(mn/\varepsilon^2)$
work, after one-time $O(m+n)$ input preprocessing.

For vertex-spin systems on bounded-degree graphs with near-linear-time
samplers, the analogous baseline is $\wtO(n^2/\varepsilon^2)$:
each sample costs $\wtO(n)$ rather than the edge-based $\wtO(m)$
for matchings. Recent advances improve on this baseline for several
subcritical spin systems \cite{AFFGW25,CCLZ26}. For matchings,
the loss of degree-uniform correlation decay gives the unbounded-degree
regime a near-critical character. This raises a further question:
\emph{Can approximate counting beat the independent-sample annealing baseline
without any degree restriction?}

We give an approximate counting algorithm with
$\wtO_\lambda(n^2/\varepsilon^2)$ work on every simple graph,
achieving near-linear work in the input size on dense graphs at constant accuracy.

\begin{theorem}[Fast approximate counting]
\label{thm:intro-counting}
There is a randomized algorithm that, given a simple graph $G$ with
$n$ vertices, an activity $\lambda>0$, and $0<\varepsilon\le1/2$,
returns $\widehat Z$ satisfying
\[
 \Pr\!\left((1-\varepsilon)Z_G(\lambda)\le\widehat Z
                     \le(1+\varepsilon)Z_G(\lambda)\right)\ge\frac34.
\]
The algorithm uses $\wtO_\lambda(n^2/\varepsilon^2)$ work on every run.
\end{theorem}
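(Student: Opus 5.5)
The plan is an annealing scheme whose intermediate ratios are estimated by a chain that costs about $n$ work per sample rather than $m$. Pure Glauber samples cost $\wtO_\lambda(m)$ each, so we cannot afford them at every annealing level. Instead we use a small number of Glauber samples only to \emph{learn} monomer marginals, and then run a preconditioned Jerrum--Sinclair chain whose steps are vertex-based.

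\textbf{Annealing schedule and product estimator.} Fix a schedule $0=\lambda_0<\lambda_1<\dots<\lambda_\ell=\lambda$ with $\ell=\wtO_\lambda(\sqrt n)$ levels, so that consecutive ratios $Z_G(\lambda_{i+1})/Z_G(\lambda_i)=\E_{\mu_{\lambda_i}}[(\lambda_{i+1}/\lambda_i)^{|M|}]$ have bounded relative variance. This is possible because $\log Z_G$ is convex in $\log\lambda$ and $\Var(|M|)=O_\lambda(n)$. We use the paired-product or telescoping estimator of \cite{SVV09,HKLYZ26}, which needs $\wtO(n/\varepsilon^2)$ samples in total, counted over all levels. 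It therefore suffices to produce each approximate sample from $\mu_{G,\lambda_i}$ in $\wtO_\lambda(n)$ amortized work, after a preprocessing phase of total cost $\wtO_\lambda(n^2/\varepsilon^2)$. Note that $m\le n^2$, so any $\wtO(m)$-per-level overhead fits within this budget.

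\textbf{Preconditioning by monomer probabilities.} At each level, we first run Glauber dynamics via \cref{thm:intro-sequential} for $\wtO_\lambda(m)$ time to obtain polylogarithmically many samples. From these we estimate the unmatched probabilities $q_v=\mu(v\text{ unmatched})$ within constant factors. The lower bound $q_v\ge 1/(1+\lambda\deg v)$ means these probabilities need not be large, and they are learned only where needed. The estimates then define proposal weights for a Jerrum--Sinclair-type chain with add/delete/slide moves, which is run on the vertex side. A step picks a vertex $u$. If $u$ is unmatched, it picks a neighbor $w$ with probability proportional to $\widehat q_w$, sampled from a precomputed alias table of size $\deg u$ built once per level in $O(m)$ work. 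The Metropolis acceptance is then computed from $\lambda$ and $\widehat q$. Each step therefore costs $O(1)$.

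\textbf{Mixing of the preconditioned chain.} We then show that this chain mixes in $\wtO_\lambda(n)$ steps, with mixing time depending only logarithmically on the accuracy of the $\widehat q$ approximations. This is the main obstacle. The plan is to compare it with the field/spectral-stability log-Sobolev framework used for \cref{thm:intro-sequential}. The key observation is that proposing an edge $uw$ with probability $\propto \widehat q_w$ reweights edges by roughly their true conditional occupation probabilities. This turns the effective per-vertex Dirichlet form into one with $\Theta(1)$ rate per vertex instead of per edge, replacing the factor $m$ with $n$.

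If a direct log-Sobolev comparison fails, a fallback is canonical paths with the preconditioned weights, which would cost extra polynomial factors. Combining everything, the total work is $\ell\cdot\wtO(m) + \wtO(n/\varepsilon^2)\cdot\wtO_\lambda(n)=\wtO_\lambda(n^2/\varepsilon^2)$. The success probability $3/4$ follows from Chebyshev's inequality applied to the product estimator, after accounting for the total-variation errors by a union bound.
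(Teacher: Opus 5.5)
Your architecture matches the paper's: Glauber-learned monomer weights drive a Metropolis-corrected Jerrum--Sinclair chain. But the step you call the main obstacle carries essentially the whole proof, and the route you suggest does not reach it.

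\textbf{The mixing step.} Your heuristic is correct as an identity. Propose deletions at each matched endpoint $v$ with weight $w_v$, and insertions/exchanges by choosing a free $u$ with weight $\lambda\sum_{v\sim u}w_v$ and then $v$ with weight $w_v$. The proposal generator is then exactly $\sum_v w_vZ_v(H_v-I)$: star resampling at $v$ at rate $w_vZ_v$, which has mean one under $\mu^{v^0}$ when $w_v=\mu(v^0)$. But this only reduces the problem to a star relaxation bound $\Var_\mu f\le\wtO_\lambda(1)\sum_v\E_\mu\Var_v(f)$.

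Theorem~\ref{thm:intro-boolean} does not supply that bound. It concerns single-coordinate updates with $N=m$ coordinates. In $\E_\mu\Var_v(f)$, each single-edge increment at $v$ carries only about a $1/Z_v$ fraction of its Glauber weight, so the comparison loses a factor of order $Z_v$, which is polynomially large on dense graphs.

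The paper proves the star bound by a separate chain of arguments:
\begin{itemize}
\item the joint monomer--dimer covariance bound of Appendix~\ref{sec:covariance-proof};
\item a down--up encoding of one-shore star updates on bipartite graphs, with the local-to-global product at activity $\Theta(\log^{-2}n)$;
\item field dynamics up to the target activity;
\item a random-bipartition comparison controlled by the Glauber gap.
\end{itemize}
Even then, $w_vZ_v$ has a lower tail and the total rate $R(M)$ an upper tail. So one gets only a weak Poincar\'e inequality, i.e.\ mixing from polynomially warm starts (Lemma~\ref{js:lem:learned-mixing}), not an LSI. Your canonical-path fallback loses polynomial factors, as you note.

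\textbf{The estimator.} This also breaks your estimator. The product estimator of \cite{SVV09,HKLYZ26} needs nearly independent samples. At $\wtO(n)$ cost each, that would require mixing from point masses. With $\wtO(\sqrt n)$ levels, consecutive stationary laws can moreover have density ratios $e^{\Theta(\sqrt n)}$. A union bound over TV errors controls each sample's law, not the covariances Chebyshev needs. The paper instead uses:
\begin{itemize}
\item a fine schedule, with consecutive laws $2$-warm;
\item a thermodynamic-integration estimator along one dependent trajectory;
\item a bound on $\Cov(H_i,H_j)$ obtained by applying the warm-start guarantee to both $\mu_i$ and the biased law $\frac{H+1}{h_i+1}\mu_i$, each $(n+2)$-warm at the next activity.
\end{itemize}

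\textbf{The learning step.} Polylogarithmically many samples cannot estimate $\mu(v^0)$ within a constant factor when it is polynomially small. For example, it is $(1+\lambda(n-1))^{-1}$ at the center of $K_{1,n-1}$, and $\Omega(1/\mu(v^0))$ independent observations are needed. Both sides of the estimate are used: $w_v\ge\mu(v^0)$ controls the lower tail of the star rates, and $w_v\le B\mu(v^0)$ controls the total rate. Falling back on the a priori bound $1/(1+\lambda\deg v)$ therefore lands on the wrong side.

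The paper instead averages occupation times along Glauber runs of length $O_\lambda(m)$. It exploits that the asymptotic variance of $J_v$ is $O_\lambda(\mu(v^0)^2)$ rather than of order $\mu(v^0)$, via a local Poisson equation at the star plus the Glauber gap. It then takes medians over $O(\log(n/\delta))$ blocks (Lemma~\ref{js:lem:learning}).

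\textbf{The work bound.} $\wtO(m)$ work at each of $\wtO(\sqrt n)$ levels totals $\wtO(m\sqrt n)$. That is $n^{5/2}$ on dense graphs and exceeds $\wtO(n^2/\varepsilon^2)$ at constant $\varepsilon$, so your claim that per-level $\wtO(m)$ overhead fits the budget is false. The paper learns one table per doubling interval of activity, only $O(\log(m\lambda/\varepsilon))$ tables in all. Each stays valid up to a further factor $2$ across its interval because $-1\le d\log\mu(v^0)/d\log\lambda\le0$. Within an interval, it moves between fine activities by changing a single global multiplier, with no edge scan.
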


For dense graphs with $m=\Theta(n^2)$, the bound becomes
$\wtO_\lambda(m/\varepsilon^2)$: near-linear in the input size
at a constant accuracy. This removes a factor $n$
from the independent-sample annealing baseline.

\subsection{A general mixing criterion from field-dynamics spectral stability}
\label{sec:mixing-criterion}

The near-linear Glauber mixing bound follows from a general criterion
that applies beyond matchings. Its assumptions combine field-dynamics
spectral stability with two marginal bounds.

Let $\mu$ be positive on a downward-closed family
$\Omega\subseteq2^{[N]}$ containing at least two states.
For $\tau\in\Omega$, a \emph{positive pinning} conditions on
$\tau\subseteq S$ and leaves the other coordinates unpinned.
We write $\mu^\tau$ for this conditional distribution, viewed on
the unpinned coordinates $[N]\setminus\tau$; explicitly,
\[
 \mu^\tau(T)
 =\frac{\mu(\tau\cup T)}
        {\sum_{U\in\Omega:\,\tau\subseteq U}\mu(U)},
 \qquad T\in\Omega^\tau,
\]
where
$\Omega^\tau=\{T\subseteq[N]\setminus\tau:\tau\cup T\in\Omega\}$.
Define the decreasing scalar-field laws
\[
 \nu_t(S)=\frac{(1-t)^{|S|}\mu(S)}
                   {\sum_{T\in\Omega}(1-t)^{|T|}\mu(T)},
 \qquad 0\le t<1.
\]
We use \emph{field-dynamics spectral stability} as shorthand for spectral
stability with respect to field dynamics, in the formulation of
Chen, Chen, Yin, and Zhang \cite{CCYZ25}, building on the localization
framework of Chen and Eldan \cite{CE22}. We use the equivalent covariance
characterization in \cite[Proposition~3.3]{CCCYZ25}.

\begin{introdefinition}[Field-dynamics spectral stability]\label{def:intro-stability}
The law $\mu$ satisfies field-dynamics spectral stability with rate
$C:[0,1)\to[0,\infty)$ if, for every $0\le t<1$ and every
positive pinning $\tau\in\Omega$,
\begin{equation}\label{intro-eq:stability}
 \Cov_{\nu_t^\tau}(X)\preceq C(t)D_{\nu_t^\tau},
 \qquad D_\rho:=\Diag(\E_\rho X).
\end{equation}
Here $X$ is the residual indicator vector. Coordinates with zero marginal
can be omitted. Equivalently, the largest eigenvalue of
$D_\rho^{-1/2}\Cov_\rho(X)D_\rho^{-1/2}$ is at most $C(t)$ for
$\rho=\nu_t^\tau$; an empty matrix satisfies the condition vacuously.
\end{introdefinition}

This is a property of the negative-field localization associated with the
field dynamics introduced by Chen, Feng, Yin, and Zhang \cite{CFYZ21}. To see the
connection, sample $S\sim\mu$ and reveal each occupied coordinate at an
independent uniform time in $[0,1]$. Writing $Y_t$ for the revealed set,
each positive-probability observation $Y_t=\tau$ has residual posterior
$\nu_t^\tau$. Spectral stability
bounds the instantaneous conditional variance rate of the posterior expectation
by $C(t)/(1-t)$ times the posterior variance
\cite[Appendix~A.1]{CCYZ25}.
The normalization uses marginal \emph{means}, not variances; it is distinct
from the usual spectral-independence normalization
\cite[Remark 3.4]{CCCYZ25}.

To state the analytic conclusion, for an irreducible $\mu$-reversible Markov kernel $P$,
write
\begin{align*}
 \cE_P(f,g)&=\frac12\sum_{S,T\in\Omega}\mu(S)P(S,T)
                  (f(S)-f(T))(g(S)-g(T)),\\
 \Ent_\mu(h)&=\E_\mu[h\log h]-\E_\mu[h]\log\E_\mu[h],
\end{align*}
for real $f,g$ and $h\ge0$, with $0\log0=0$.
These are the \emph{Dirichlet form} and \emph{entropy}, respectively.
The \emph{log--Sobolev inequality} (LSI) takes the form
\[
 \Ent_\mu(f^2)\le\alpha_{\rm LS}^{-1}\cE_P(f,f),
 \qquad f:\Omega\to\mathbb R.
\]
It controls entropy by the average squared change in one update.
The constant $\alpha_{\rm LS}$ and the other functional-inequality
constants are formally defined in Section~\ref{sec:functional-inequalities}.
For Glauber dynamics, $P_{\mathrm{GD}}$ chooses a coordinate uniformly and
resamples it conditional on the others; write $\cE_{\mathrm{GD}}=\cE_{P_{\mathrm{GD}}}$.

Under the assumptions below, we obtain a log--Sobolev inequality with only
logarithmic dependence on the inverse occupied-marginal lower bound.

\begin{theorem}[Glauber mixing from field-dynamics spectral stability]\label{thm:intro-boolean}\label{abs-thm:mix}\label{green-thm:strengthening}
Let $N\ge1$ be an integer, and let $\mu$ be a probability law with downward-closed
support $\Omega\subseteq2^{[N]}$ containing at least two states.
Let $\kappa\ge1$, $0<b\le1$, and let
$C:[0,1)\to[1,\infty)$ be nonincreasing. Suppose that:
\begin{enumerate}[label=(\roman*)]
\item \emph{Field-dynamics spectral stability.}
The law $\mu$ satisfies Definition~\ref{def:intro-stability} with rate $C$, and
\begin{equation}\label{eq:integrated-stability}
 I:=\int_0^1\frac{C(t)-1}{1-t}\,dt<\infty.
\end{equation}
\item \emph{Bounded marginal ratios.} For every $S\in\Omega$ and
$i\in[N]\setminus S$ with $S\cup\{i\}\in\Omega$,
\begin{equation}\label{intro-eq:kappa}
 \frac{\mu(S\cup\{i\})}{\mu(S)}\le\kappa.
\end{equation}
\item \emph{Occupied-marginal lower bounds under positive pinning.} For every
$\tau\in\Omega$ and $i\in[N]\setminus\tau$ with
$\tau\cup\{i\}\in\Omega$,
\begin{equation}\label{intro-eq:b}
 \mu^\tau(i\in S)\ge b.
\end{equation}
\end{enumerate}
Then Glauber dynamics with stationary law $\mu$ 
satisfies the log--Sobolev inequality
\[
 \Ent_\mu(f^2)\lesssim
 \kappa \mathrm{e}^I N\log\frac{N\kappa}{b}\,\mathcal E_{\mathrm{GD}}(f,f),
 \qquad f:\Omega\to\mathbb R.
\]
Write $\mu_{\min}=\min_{S\in\Omega}\mu(S)$. For every $0<\varepsilon\le1/2$, the mixing-time bound is
\begin{equation}\label{abs-eq:mix}
 \tmix(\varepsilon)\lesssim
 \kappa \mathrm{e}^I N
 \left[\log\frac{N\kappa}{b}\,
       \log\log\frac{\mathrm{e}^2}{\mu_{\min}}+\log(1/\varepsilon)\right].
\end{equation}
In particular, for $N\ge2$, the bounds $\kappa,I=O(1)$ and $b\ge N^{-O(1)}$
imply an inverse LSI constant $O(N\log N)$ and mixing time
$O(N[\log^2 N+\log(1/\varepsilon)])$.
\end{theorem}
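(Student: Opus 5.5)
We plan to follow the field-dynamics route of \cite{CFYZ21,CCYZ25}, splitting the argument into three stages.

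\textbf{Stage 1: entropy contraction of the field dynamics.} The negative-field localization $t\mapsto\nu_t^{Y_t}$ runs from $\nu_0=\mu$ at $t=0$ to a posterior concentrated near the revealed set as $t\to1$. Along this process, Definition~\ref{def:intro-stability} bounds the instantaneous conditional variance rate by $C(t)/(1-t)$ times the posterior variance. The standard entropic-stability argument then shows that entropy decays, relative to the trivial rate $1/(1-t)$, by a factor $\exp\bigl(-\int(C(t)-1)/(1-t)\,dt\bigr)$. Integrating up to a stopping time $t_0$ should give approximate entropy conservation for the field dynamics $P_{\theta}$ with $\theta=1-t_0$:
\[
 \Ent_\mu(f)\le \mathrm{e}^{I}\,\theta^{-1}\,\E\bigl[\Ent_{\nu_{t_0}^{Y_{t_0}}}(f)\bigr]
\]
in the appropriate form, i.e.\ a field-dynamics entropy factorization. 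The ``$-1$'' in $C(t)-1$ reflects that a product measure with $C\equiv1$ loses nothing beyond the trivial rate. We expect this step to be the formal adaptation of \cite[Appendix~A.1]{CCYZ25} to positive pinnings on a downward-closed family.

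\textbf{Stage 2: LSI for the residual law at strong negative field.} We stop at $1-t_0=\theta\asymp 1/(N\kappa)$. At this field, hypothesis (ii) shows that under $\nu_{t_0}^\tau$ each unpinned coordinate is occupied with probability at most $\theta\kappa$. The expected number of occupied residual coordinates is then $O(1)$, so the residual law is a small perturbation of the empty configuration. For such laws we aim for a Glauber LSI with inverse constant $\lesssim \kappa N\log(1/p_{\min})$. Here $p_{\min}$ is the smallest nonzero residual occupation probability, which by (iii) together with the field factor is at least $b\theta/\kappa\gtrsim b/(N\kappa^2)$. The logarithmic dependence on $p_{\min}$ comes from the Diaconis--Saloff-Coste sparse-Bernoulli LSI, whose constant scales as $\log(1/p)/(1-2p)$, applied coordinatewise. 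We then pass to the correlated law by a comparison or tensorization argument, which is valid because the residual law is dominated by a product of Bernoulli$(\theta\kappa)$ variables. This produces the factor $\log(N\kappa/b)$. We expect this to be the \textbf{main obstacle}: we need an LSI, not merely a modified LSI, whose dependence on the marginals is only logarithmic. The natural first attempt is a Stein/Holley--Stroock-type comparison with the product measure on the residual coordinates, using (iii) to control the ratio of occupied marginals under pinning.

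\textbf{Stage 3: composition and mixing time.} We compose the two stages. The field dynamics decomposes each Glauber step, since resampling under $\nu_{t_0}^\tau$ is realised by Glauber updates, and the standard comparison yields
\[
 \Ent_\mu(f^2)\lesssim \kappa\,\mathrm{e}^{I}N\log\frac{N\kappa}{b}\,\cE_{\mathrm{GD}}(f,f).
\]
Applying Stage 2 to $f^2$ is legitimate because both the factorization and the residual LSI hold for all test functions. The mixing bound \eqref{abs-eq:mix} then follows from the LSI via the Diaconis--Saloff-Coste bound $\tmix(\varepsilon)\lesssim\alpha_{\rm LS}^{-1}\bigl(\log\log(1/\mu_{\min})+\log(1/\varepsilon)\bigr)$. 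The displayed form is slightly sharper than this bound, with the $\log(1/\varepsilon)$ term not multiplied by $\log(N\kappa/b)$. To obtain it, we will combine the LSI, used for the burn-in down to constant entropy, with the spectral gap. The gap has inverse $\lesssim\kappa\mathrm{e}^I N$ without the logarithm, by running Stage 2 with variance in place of entropy. The final particular case is a substitution, using $\log\log(1/\mu_{\min})\lesssim\log N$ when $\kappa=O(1)$, since $\mu_{\min}\ge \kappa^{-N}\cdot(\text{poly})$.
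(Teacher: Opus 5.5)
\textbf{There is a genuine gap in Stage~1, and it is the crux of the problem.} Hypothesis~(i) bounds the covariance of $X$ under the \emph{untilted} posteriors $\nu_t^\tau$. Along the negative-field localization, this controls the growth of $\E[\nu_t^{Y_t}(f)^2]$. At a posterior $\nu$ the rate is $\frac1{1-t}\sum_i\Cov_\nu(f,X_i)^2/\E_\nu X_i\le\frac{C(t)}{1-t}\Var_\nu(f)$. That computation is what \cite[Appendix~A.1]{CCYZ25} provides. Integrating it gives approximate \emph{variance} conservation, which is how Proposition~\ref{green-lem:known-gap} is proved.

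The entropy analogue is a different statement. The dissipation rate of the posterior entropy is $\frac{\nu(f)}{1-t}$ times a divergence between two marginal vectors: that of the $f$-tilted posterior $f\nu/\nu(f)$ and that of $\nu$. Bounding this by $\frac{C(t)}{1-t}\Ent_\nu(f)$ for all $f$ is an entropic-stability (entropic-independence) statement about arbitrary tilts. That is exactly the extra input in \cite{AJKPV22}, via coordinatewise field control, or in \cite{CFYZ22}, via marginal stability. Hypotheses (i)--(iii) speak only of scalar decreasing fields and positive pinnings, so they do not supply it. The field-dynamics entropy factorization you write down is therefore unavailable, and Stage~3 has nothing to compose.

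Stage~2 is also unsupported. A Holley--Stroock or Stein-type comparison with a product law does not apply to a correlated law on a downward-closed support. Stochastic domination by independent Bernoulli$(\theta\kappa)$ variables does not transfer an LSI either. A minor slip as well: $\mu_{\min}$ is not $\kappa^{-N}$ times a polynomial. Chaining (iii) and then using (ii) gives $\mu_{\min}\ge(b/(1+\kappa))^N$, which still yields $\log\log(e^2/\mu_{\min})=O(\log N)$ in the final regime.

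The idea you are missing is to obtain LSI without ever tracking entropy along the localization. The paper stays in $L^2$ but strengthens the variance statement to every polynomial cutoff $j$. It interpolates insertion--deletion Dirichlet forms along the down walk and does a Bochner-type computation at an optimizer: the optimality identities cancel the residual and mixed terms, and spectral stability bounds the remaining loss by $(C(\theta)-1)/(1-\theta)$. This proves $\cE_{\mathrm{GD}}(f,f)\ge\frac{(j+1)e^{-I}}{N(1+\kappa)}\|f-\Pi_jf\|_2^2$ (Lemma~\ref{abs-thm:hierarchy}). The starting point is the limit $\theta\downarrow0$, where normalized monomials become orthogonal with Dirichlet values equal to their degrees.

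Separately, (ii) and (iii) give the low-degree norm bound $\|g\|_\infty\le(1+N\sqrt{(1+\kappa)/b})^j\|g\|_2$ on $\cV_j$ (Lemma~\ref{abs-lem:christoffel-rho-b}), by forward differences under positive pinning. The super-Poincar\'e and rare-support truncation argument of \cref{abs-prop:degree-to-functional} then turns these two $L^2$ facts into an LSI. The inverse constant is $\lesssim A\Lambda$ with $A=N(1+\kappa)e^I$ and $\Lambda\lesssim\log(N\kappa/b)$. This single mechanism replaces both your Stage~1 and your Stage~2, and it is where the logarithmic dependence on $1/b$ comes from. Your Stage~3 treatment of the mixing time, using the LSI for burn-in and the logarithm-free gap for the $\log(1/\varepsilon)$ term, matches the paper.
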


Theorem~\ref{thm:intro-boolean} upgrades the scalar-field spectral-gap
criterion of \cite[Theorem 1.16 and Lemma 1.18]{CCCYZ25} to LSI.
Compared with the SI-based entropy-factorization bounds of
\cite{CLV21,BCCPSV22}, our log--Sobolev bound has no explicit
maximum-degree factor and depends only logarithmically on $1/b$.
Thus polynomially small occupied marginals incur only a logarithmic loss.
The marginal hypotheses also differ.
Unlike full marginal boundedness, which controls every feasible spin
under arbitrary pinning, our assumptions bound marginal ratios above
under arbitrary pinning in (ii), and occupied marginals below only
under positive pinning in (iii).

Field dynamics already gives degree-free spectral-gap and MLSI
comparisons \cite{CFYZ21,CFYZ22}, but the latter criterion requires
relative marginal stability under further pinning, which we do not
assume. Entropic-independence criteria give MLSI for down--up walks
without marginal lower bounds, and restricted entropy factorization
under coordinatewise field control \cite{AJKPV22}.
Our criterion instead uses integrable stability along scalar
decreasing fields to obtain a full LSI for ordinary GD.
These are alternative sufficient conditions, not uniformly ordered;
Section~\ref{sec:related-work} gives further comparisons.

For matchings with $m\ge1$ edges, the substitution is particularly simple:
we have $N=m$, $\kappa=\max\{1,\lambda\}$, $b=\frac{\lambda}{(1+\lambda\Delta)^2}$, where $\Delta$ is the maximum degree.
The degree-free covariance bound
\cite{ConcurrentHolant26}, combined with the low-activity bound in
\cite[Theorem 1.14 and Section 5.3]{CCCYZ25}, implies an integrable
spectral-stability rate with $I=O_\lambda(1)$; see
\cref{prop:holant-edge}.
Meanwhile, $\Delta<n$ enters only through $\log(1/b)=O_\lambda(\log n)$.
This is how the theorem avoids a multiplicative maximum-degree loss.
This gives an inverse LSI constant
$O_\lambda(m\log n)$ for single-edge GD; see Corollary~\ref{thm:main}.

\subsection{Technical overview}\label{sec:proof-ideas}

\paragraph{Near-linear Glauber mixing.}
The analytic step is to strengthen spectral-gap control to LSI through
low-degree approximation. Let $\cV_j$ be the space of polynomials of
degree at most $j$ in the occupation indicators, and let $\Pi_j$ be its
$L^2(\mu)$-orthogonal projection. We combine two estimates:
\[
 \|f\|_{2,\mu}^2
 =\|f-\Pi_jf\|_{2,\mu}^2+\|\Pi_jf\|_{2,\mu}^2
 \le \underbrace{\frac{A}{j+1}\mathcal E_{\mathrm{GD}}(f,f)}_{
          \text{Dirichlet-form bound}}
     +\underbrace{e^{\Lambda j}\|f\|_{1,\mu}^2}_{
          \text{low-degree norm bound}}.
\]
Choosing the cutoff according to $t$, or using the full space when
that cutoff is too large, gives the \emph{super-Poincar\'e inequality}
\[
 \|f\|_{2,\mu}^2\le t\,\mathcal E_{\mathrm{GD}}(f,f)
             +\exp(K/t)\|f\|_{1,\mu}^2,
 \qquad t>0,\qquad K=A\Lambda.
\]
This is the classical super-Poincar\'e route to LSI
\cite{Wang00SPI,GM15}, related to Nash methods \cite{DSC96Nash}.
The exponential profile gives a defective LSI; centering and the
$j=0$ Poincar\'e bound remove the defect \cite{Wang00SPI,Rothaus}.
Proposition~\ref{abs-prop:degree-to-functional} gives a self-contained
rare-support and truncation proof, with inverse LSI constant $O(A\Lambda)$.
Our contribution is to derive both approximation estimates from
field-dynamics spectral stability and marginal bounds, without
requiring GD to preserve the polynomial spaces.

For the Dirichlet-form bound, we interpolate auxiliary insertion--deletion
forms along the down walk of field-dynamics localization
\cite{CFYZ21,CE22,CCYZ25}. A Bochner-type interpolation
(cf.~\cite[Lemmas 2.2--2.3]{Bochner}) tracks the best approximation constant:
optimality cancels derivative terms, and spectral stability of the
posteriors bounds its deterioration. At vanishing retention, normalized
monomials become orthogonal, with Dirichlet-form values approaching
their degrees. Integrating from this limit yields $A=N(1+\kappa)e^I$
(Lemma~\ref{abs-thm:hierarchy}).

The second estimate follows from
$\|g\|_\infty\le e^{\Lambda j/2}\|g\|_2$ for $g\in\cV_j$,
by projection duality. For feasible $S$ and $X\sim\mu$, testing it on
$\one_{\{S\subseteq X\}}$ shows why a marginal hypothesis is needed:
it requires $\Prb(S\subseteq X)\ge e^{-\Lambda|S|}$.
Our occupied-marginal lower bounds under positive pinning ensure
$\Prb(S\subseteq X)\ge b^{|S|}$ and, crucially, remain available after
each further positive pinning. Together with bounded marginal ratios,
they control forward differences, which lower polynomial degree.
Induction gives the norm bound with
$\Lambda=2\log(1+N\sqrt{(1+\kappa)/b})$
(Lemma~\ref{abs-lem:christoffel-rho-b}). Thus $1/b$ enters the LSI
coefficient only logarithmically. When $\kappa,I=O(1)$ and
$b\ge N^{-O(1)}$, we obtain inverse LSI constant $O(N\log N)$;
the LSI startup estimate and the spectral-gap accuracy estimate then
give the mixing bound of Theorem~\ref{thm:intro-boolean}.

\paragraph{Work-efficient parallel simulation.}
We represent each proposed occupation in a fixed GD update table by an
interval ending at that edge's next update. Overlapping intervals on
incident edges conflict, and greedy parallel peeling \cite{BFS12}
reproduces the trajectory. Rejected proposals can transmit dependencies,
so a long execution yields a witness alternating between accepted
occupations and rejected proposals connecting them
(see \eqref{eq:interval-witness}).

The conflict graph and priorities share the same randomness.
Auxiliary no-op padding controls rejection probabilities after the
accepted trajectory is revealed. Counting the witnesses uses the
matching constraint: two disjoint accepted edges have at most four
possible connecting edges. The resulting bounds give
$\wtO_\lambda(\min\{\Delta,\sqrt n,m^{1/3}\})$ depth for one algorithm
(Section~\ref{sec:parallel-rounds}). An implicit interval representation
avoids constructing the dense conflict graph and uses polylogarithmic
work per update. Together with near-linear GD mixing, this gives
near-linear total work.

\paragraph{Counting through learned Jerrum--Sinclair dynamics.}
To improve the $\wtO_\lambda(mn/\varepsilon^2)$ annealing baseline,
we seek observations costing $\wtO_\lambda(n)$ rather than
$\wtO_\lambda(m)$. JS exchanges can replace an occupied edge directly,
but uniform proposals still waste steps on dense regions while rarely
selecting isolated edges. We therefore learn nonuniform rates from
monomer marginals. An observable-specific variance bound lets ordinary
GD estimate all these marginals within constant factors in
$\wtO_\lambda(m+n)$ work, even when they are small
(Lemma~\ref{js:lem:learning}). The estimates weight the
classical JS insertion, deletion, and exchange moves, with a
Metropolis--Hastings correction preserving the matching law
\cite{JerrumSinclair,JerrumBook,Hastings70}.

We compare the learned chain with star updates, which resample the
edges incident to one vertex. On bipartite graphs, adding a blank for
each unmatched vertex on one side gives a fixed-cardinality encoding
whose down--up walk resamples stars on that side. Our joint monomer--dimer
covariance bound controls this encoding under positive pinning.
Its proof adapts the covariance, harmonic-surrogate, and random-revealing
arguments of \cite{ConcurrentHolant26}, retaining cancellation through
a corrected monomer statistic (Appendix~\ref{sec:covariance-proof}).
The coefficient $1+O(\sqrt\lambda)$ at small activity gives a down--up
gap at $\lambda=\Theta(\log^{-2}n)$ by \cite{AL20}.
Field-dynamics comparison \cite[Theorem 1.16]{CCCYZ25} transfers
star relaxation to the target activity; an analytic random-bipartition
comparison, controlled by the GD gap, extends it to general graphs.

Star relaxation and tail bounds for the learned rates yield a weak
Poincar\'e inequality and $\wtO_\lambda(n)$ work to mix from
polynomially warm starts (Lemma~\ref{js:lem:learned-mixing}).
This is not a worst-start guarantee for the learned chain.
Thermodynamic integration \cite{GelmanMeng98} expresses $\log Z_G$
through mean matching cardinalities. The bound $\Var|M|=O(n)$
controls discretization and single-observation variance. We control correlations along
one annealing trajectory by applying warm-start mixing to stationary
laws and laws biased by $|M|+1$; the intermediate distributions
themselves need not be warm. Reusing each guide table over a doubling
interval avoids repeated graph scans. With
$\wtO_\lambda(n/\varepsilon^2)$ observation blocks, the total work is
$\wtO_\lambda(n^2/\varepsilon^2)$, including learning.

\subsection{Related work}\label{sec:related-work}

\paragraph{Sampling matchings.}
At fixed activity and accuracy, classical analyses give
$O_\lambda(mn^2\log n)$ mixing for lazy JS dynamics
\cite{JerrumSinclair,Sinclair92}, while GD mixes in
$O_{\lambda,\Delta}(m\log n)$ steps on bounded-degree graphs
\cite{CLV21}. More recent bounds are $\wtO_\lambda(\Delta^2m)$ for JS
and $\wtO_\lambda(\Delta^3m)$ for GD \cite{CFJMYZ25}; both retain
polynomial dependence on the maximum degree.
The two chains differ in their local moves: JS can exchange an occupied
edge with an incident edge, whereas GD only resamples one edge indicator.
A different approach first samples the monomer set and then a perfect
matching on the remaining vertices. Efficient perfect-matching counting
makes this approach implementable on planar graphs \cite{AASV21}.
The method also handles matchings of a prescribed size; it does not
simulate the single-edge GD considered here.

Concurrent work of Xiaoyu Chen, Zejia Chen, and Xinyuan Zhang
\cite{ConcurrentHolant26} establishes degree-free spectral independence
for log-concave Holant measures and an $O_\lambda(m)$ Glauber relaxation
bound for matchings. Linear relaxation does not by itself give
near-linear worst-start mixing, since the standard spectral-gap estimate
also depends on the smallest stationary probability.
We use their covariance estimates, together with
the low-activity estimate of \cite{CCCYZ25}, to verify our general
criterion (Proposition~\ref{prop:holant-edge}). It strengthens this
spectral control to an inverse LSI constant $O_\lambda(m\log n)$ and
near-linear mixing. The same mixing bound holds for JS by comparison
(Corollary~\ref{cor:original-js}). On a disjoint union of edges, both
uniformly edge-selected chains require $\Omega(m\log m)$ updates,
so our worst-case mixing bound is optimal up to a logarithmic factor.
Our counting proof also adapts their
covariance arguments to control monomer and dimer indicators jointly.

\paragraph{General mixing criteria.}
Spectral independence \cite{ALO20}, field dynamics \cite{CFYZ21},
and localization \cite{CE22,CCYZ25,CJMYZ26} connect local spectral
control to global mixing. Criteria that also control entropy include
spectral independence with marginal bounds \cite{CLV21,BCCPSV22},
entropic independence \cite{AJKPV22,JPV26}, and spectral independence
with relative marginal stability \cite{CFYZ22}.
These criteria differ in their pinning assumptions and the dynamics
they analyze: for example, the down--up walks in \cite{AJKPV22} need
not be single-site GD. Section~\ref{sec:mixing-criterion} gives a
more detailed comparison.
These are alternative sufficient conditions, rather than a single
hierarchy in which each criterion implies the others.

Our criterion combines field-dynamics spectral stability along scalar
decreasing fields with marginal assumptions to obtain LSI for ordinary
GD. Its logarithmic dependence on inverse occupied marginals allows
those marginals to be polynomially small. Thus the matching application
does not require degree-independent lower bounds on edge marginals.
Low-degree decompositions also appear in analyses of higher-order
walks \cite{DDFH24,KO20}; our argument uses low-degree approximation to
control single-site GD on variable-cardinality supports.

\paragraph{Parallel sampling.}
The matching sampler in \cite{HKLYZ26}, based on the parallel simulation
of Liu and Yin \cite{LY22,LY25}, has $\wtO_\lambda(\Delta^4)$ depth and
$\wtO_\lambda(m\Delta)$ processors. Substituting our mixing bound gives
$\wtO_\lambda(\Delta)$ depth, but the direct line-graph implementation
does not establish near-linear work. Our interval evaluator achieves
near-linear total work together with this depth bound; the same
implementation also gives both degree-independent depth bounds.
Other approaches use block updates \cite{Lee24}, localization
\cite{CLYZ25}, or entropic independence and counting \cite{ABTV23}.
The localization-based GD simulation in \cite{CLYZ25} still requires
coupling contraction on a suitable high-probability set; rapid mixing
alone does not verify this condition.
The work \cite{ABTV23} also gives a separate planar-perfect-matching
sampler, whose depth was subsequently improved in \cite{AGR24}.
The matching sampler of Feng and Yin \cite{FY18} uses the LOCAL model,
where local computation and message length are not charged.

General oracle-based sampling methods \cite{AGR24,ABCHKLV26} bound
the expected number of adaptive rounds and conditional-marginal queries.
A query is treated as a primitive, so these bounds do not by themselves
account for the cost of implementing it.
Continuous-walk reductions \cite{AHLVXY23,ACV24} require both
weighted-counting access and covariance or transport control under
arbitrary exponential tilts. Our scalar decreasing-field estimates
do not supply those hypotheses. In contrast, our parallel sampler
accounts for all work and gives a high-probability depth bound without
a counting oracle.

\paragraph{Counting and learned dynamics.}
Warm starts and dependent observations are already part of classical
annealing \cite[Section 7]{SVV09}.
Recent subquadratic counting algorithms for vertex-spin models use
low-variance marginal estimation \cite{AFFGW25} or aggregation of perfect
marginal samples \cite{CCLZ26}, on bounded-degree or restricted-growth
graphs. Our algorithm counts matchings, an edge model, without a degree
restriction or access to a perfect-marginal oracle.
On dense graphs, it saves a factor $n$ over the independent-sample
annealing baseline, up to logarithms. The analysis combines thermodynamic
integration \cite{GelmanMeng98} with warm-start mixing of learned JS
dynamics, rather than the paired-product estimator of \cite{HKLYZ26}.

Learning sampling parameters also has precedents. The algorithm of
\cite{JSV04} learns weights balancing perfect and near-perfect bipartite
matching classes during annealing. Marginal-based domain sparsification
\cite{AD20,ADVY22} instead uses preprocessing to accelerate subsequent
samples from the same law. The learned quantities play a different role
here: monomer-marginal estimates determine local JS move rates, while
the Metropolis--Hastings correction \cite{Hastings70} preserves the
original matching law. Thus our dynamics do not reweight auxiliary
matching classes. Preservation of the target law is also a feature
of the sparsification methods.

\section{Preliminaries}\label{sec:preliminaries}\label{sec:algorithms}

We retain the notation introduced in Section~\ref{sec:intro}.
All logarithms are natural unless a base is indicated.

For matching laws, a feasible edge pinning conditions specified edges
to be present or absent.
Deleting the endpoints of edges pinned present and the edges pinned
absent leaves a simple graph on which the residual law is again a
matching law at activity $\lambda$.

\subsection{Functional inequalities and mixing}\label{sec:functional-inequalities}

We use the entropy and Dirichlet form introduced in
Section~\ref{sec:mixing-criterion}. Let $P$ be an irreducible
$\mu$-reversible Markov kernel on a finite support $\Omega$ with at least
two states. In operator notation,
\[
 \cE_P(f,g)=\langle f,(\mathrm{Id}-P)g\rangle_\mu.
\]
Write $\Var_\mu(f)=\E_\mu[f^2]-(\E_\mu f)^2$ and
$\operatorname{osc}(f)=\max_\Omega f-\min_\Omega f$.
The spectral gap $\gamma$, log--Sobolev constant $\alpha_{\rm LS}$,
and modified log--Sobolev constant $\alpha_{\rm mLS}$ are the largest
constants for which the following inequalities hold, respectively:
\begin{align}
 \Var_\mu(f)&\le\gamma^{-1}\cE_P(f,f),
 &&f:\Omega\to\mathbb R,\notag\\
 \Ent_\mu(f^2)&\le\alpha_{\rm LS}^{-1}\cE_P(f,f),
 &&f:\Omega\to\mathbb R,\label{eq:lsi}\\
 \Ent_\mu(h)&\le\alpha_{\rm mLS}^{-1}\cE_P(h,\log h),
 &&h>0.\label{eq:mlsi}
\end{align}
These are the Poincar\'e, log--Sobolev (LSI), and modified
log--Sobolev (MLSI) inequalities.
The pointwise inequality
$(a-b)(\log a-\log b)\ge4(\sqrt a-\sqrt b)^2$ for $a,b>0$
gives $\alpha_{\rm mLS}\ge4\alpha_{\rm LS}$ with these conventions;
no additional assumption is needed. See \cite{DSC96} for background
on log--Sobolev inequalities for finite Markov chains.

Total variation and worst-start mixing time are as defined in the
introduction. For kernels with nonnegative spectrum and
$0<\varepsilon\le1/2$, the log--Sobolev inequality gives
\begin{equation}\label{green-eq:ls-mixing}
 t_{\rm mix}(P,\varepsilon)
 \lesssim
 \alpha_{\rm LS}^{-1}\log\log\frac{e^2}{\mu_{\min}}
 +\gamma^{-1}\log\frac1\varepsilon.
\end{equation}
Hypercontractivity of $e^{t(P-\mathrm{Id})}$ first brings a point-mass
density to bounded $L^2$ norm. Nonnegative spectrum gives
$\|P^k h\|_2\le\|e^{k(P-\mathrm{Id})}h\|_2$; spectral-gap contraction
then controls the remaining error.
See \cite[Theorem 3.7 and Corollary 3.8]{DSC96}.
Glauber dynamics has nonnegative spectrum because its kernel is an
average of conditional-expectation projections. We count individual
coordinate updates, not sweeps.

\subsection{Field dynamics and spectral-gap comparison}
\label{sec:field-preliminaries}

We record the analytic field-dynamics construction and the known
Poincar\'e bound that motivates Theorem~\ref{thm:intro-boolean}.
For a downward-closed law $\mu$ on $2^{[N]}$ and a field
$z\in(0,\infty)^N$, define
\[
 (z*\mu)(S)\propto\mu(S)\prod_{i\in S}z_i.
\]
A scalar field has every coordinate equal. Positive pinnings and the
laws $\nu_t^\tau=((1-t)*\mu)^\tau$ are as in the introduction.

For $0<a<1$, a transition of field dynamics $F_a$ retains each occupied
coordinate with probability $1-a$, producing $\tau$, and draws the
remaining coordinates from $(a*\mu)^\tau$, restoring $\tau$ to obtain
the full configuration. By the posterior identity in
Section~\ref{sec:mixing-criterion}, conditional on the revealed set
$Y_{1-a}$ its input and output are independent draws from the same
posterior. It is therefore reversible, with $X\sim\mu$ and
\begin{equation*}
 \cE_{F_a}(f,f)=
 \E\bigl[\Var(f(X)\mid Y_{1-a})\bigr].
\end{equation*}
This construction is used for analysis, not as an implemented sampling
oracle in our algorithms.

\begin{proposition}[Known spectral-gap bound
{\cite{CE22,CCYZ25,CCCYZ25}}]\label{green-lem:known-gap}
Let $\mu$ have nontrivial downward-closed support in $2^{[N]}$ and satisfy
the field-dynamics spectral-stability and bounded-marginal-ratio assumptions
(i)--(ii) of Theorem~\ref{thm:intro-boolean}. Then
\begin{equation*}
 \gap(P_{\mathrm{GD}})^{-1}\le N(1+\kappa)e^I.
\end{equation*}
No occupied-marginal lower bound is needed.
\end{proposition}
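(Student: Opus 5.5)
The bound splits into two comparisons: Glauber dynamics is first compared with the field dynamics $F_a$ at a small retention level $a$, and the field dynamics is then compared with the identity through spectral stability. Integrating the stability along the whole path $t\in[0,1)$ produces the factor $e^I$, and the local Glauber comparison at $a\to0$ produces the factor $N(1+\kappa)$. Concretely, I would use the localization description from Section~\ref{sec:field-preliminaries}: with $X\sim\mu$ and the revealed set $Y_t$, the quantity
\[
\Phi(t)=\E\bigl[\Var(f(X)\mid Y_t)\bigr]
\]
equals $\cE_{F_{1-t}}(f,f)$. Here $\Phi(0)=\Var_\mu(f)$ and $\Phi(t)\to0$ as $t\to1$. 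The goal is to bound $\Var_\mu(f)$ by $\Phi(t)$ for $t$ close to $1$, and then bound $\Phi(t)$ for such $t$ by the Glauber Dirichlet form.

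\emph{Step 1 (stability gives approximate conservation of variance).} The process $\E[f(X)\mid Y_t]$ is a martingale. Its quadratic variation rate is controlled by the covariance of the posterior $\nu_t^{\tau}$ through the residual indicators. Using \eqref{intro-eq:stability} in the means-normalized form, as in \cite[Appendix~A.1]{CCYZ25}, I would derive the differential inequality
\[
-\tfrac{d}{dt}\Phi(t)\le \frac{C(t)}{1-t}\,\Phi(t)-(\text{a term exactly matched by the trivial product-measure rate } \tfrac{1}{1-t}\Phi).
\]
The net effect I expect is that $\Phi(t)(1-t)^{-1}e^{-\int_0^t\frac{C(s)-1}{1-s}ds}$ is nondecreasing. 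Integrating then gives
\[
\Var_\mu(f)\le e^{I}\liminf_{t\to1}\frac{\Phi(t)}{1-t}.
\]
This is the Chen--Eldan/CCYZ entropic-stability argument specialized to variance. I regard this step as essentially the content of \cite[Theorem 1.16]{CCCYZ25} and \cite{CCYZ25}, so I would quote it.

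\emph{Step 2 (the $a=1-t\to0$ limit of field dynamics versus Glauber).} When $a$ is small, $F_a$ keeps almost all occupied coordinates. To first order in $a$, it drops a single occupied coordinate $i\in X$ and resamples from $(a*\mu)^{X\setminus\{i\}}$. Under that law, the only configurations with non-negligible weight are those adding at most one new coordinate, each weighted by $a\,\mu(S\cup\{j\})/\mu(S)\le a\kappa$. So $\Phi(1-a)/a$ converges to the Dirichlet form of an insertion--deletion chain: it deletes an occupied coordinate at rate $1$ and inserts coordinate $j$ with weight given by the marginal ratio. Comparing edge by edge with the Glauber kernel gives
\[
\lim_{a\to0}\frac{\Phi(1-a)}{a}\le N(1+\kappa)\,\cE_{\mathrm{GD}}(f,f).
\]
The factor $N$ comes from the $1/N$ selection probability in $P_{\mathrm{GD}}$. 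The factor $1+\kappa$ comes from the Glauber resampling probability $\mu(S\cup i)/(\mu(S)+\mu(S\cup i))\ge 1/(1+\kappa)$ for both deletions and insertions.

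Combining the two steps gives $\gap(P_{\mathrm{GD}})^{-1}\le N(1+\kappa)e^I$. No lower bound on occupied marginals enters anywhere, because only ratios from above are used. The main obstacle is Step 1: obtaining the sharp exponent with $C(t)-1$ rather than $C(t)$, which requires exact cancellation against the product-measure rate $1/(1-t)$. I would take that directly from the cited covariance characterization \cite[Proposition~3.3]{CCCYZ25}. A secondary point is justifying the exchange of the limit with the first-order expansion in Step 2, which is routine since $\Omega$ is finite.
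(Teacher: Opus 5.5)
Your proposal follows the paper's proof. That proof simply invokes \cite[Theorem 1.16]{CCCYZ25} for $\liminf_{a\downarrow0}\gap(F_a)/a\ge e^{-I}$ (your Step~1) and \cite[Lemma 1.18]{CCCYZ25} for the cost $N(1+\kappa)$ of comparing $F_a$ with one Glauber update (your Step~2, where indeed $\Phi(1-a)/a\to\cE_1(f,f)\le N(1+\kappa)\cE_{\mathrm{GD}}(f,f)$).

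Two slips should be fixed. First, the rate bound is $-\Phi'(t)\le\frac{C(t)}{1-t}\Phi(t)$ with nothing subtracted, so the nondecreasing quantity is $\frac{\Phi(t)}{1-t}\exp\bigl(+\int_0^t\frac{C(s)-1}{1-s}\,ds\bigr)$; with your minus sign you would get $e^{-I}$, contradicting your own conclusion. Second, the probability bounded below by $1/(1+\kappa)$ is the deletion probability $\mu(S)/(\mu(S)+\mu(S\cup\{i\}))$, not the insertion probability.
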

\begin{proof}
Apply \cite[Theorem 1.16 and Lemma 1.18]{CCCYZ25}: the field gap
satisfies $\liminf_{a\downarrow0}\gap(F_a)/a\ge e^{-I}$,
and comparison with one Glauber update costs at most $N(1+\kappa)$.
\end{proof}
Spectral-gap contraction alone gives a mixing bound with a logarithmic
startup dependence on $1/\mu_{\min}$.
Theorem~\ref{thm:intro-boolean} strengthens this to log--Sobolev control,
reducing that dependence to a double logarithm via
\eqref{green-eq:ls-mixing}.

\subsection{Computational model}\label{sec:computational-model}

The input is a labeled edge list, with $O(m+n)$
initialization work. We count arithmetic operations, comparisons,
exact uniform choices, and Bernoulli draws in the random-choice RAM
model.

For parallel computation we use a CREW PRAM, with work equal to the total
number of operations and depth equal to the longest dependency chain.
Sorting, scans, reductions, and compaction have polylogarithmic depth
and linear work up to logarithmic factors; sorting networks suffice
\cite{Batcher68}. These are arithmetic-work guarantees, not finite-word or
random-bit complexity bounds.

\section{Glauber mixing from field-dynamics spectral stability}
\label{sec:abstract-glauber}\label{sec:hierarchy}

We prove \cref{thm:intro-boolean} using a low-degree approximation
criterion for LSI. Section~\ref{sec:abstract-entropy} establishes this
criterion and reduces the theorem to two supporting estimates, proved
in Sections~\ref{sec:abstract-hierarchy} and~\ref{abs-sec:difference}.

\subsection{A low-degree approximation criterion for LSI}
\label{sec:abstract-entropy}

Use the notation of \cref{thm:intro-boolean}, with $I$ as in
\eqref{eq:integrated-stability}.
Write $r=\max_{S\in\Omega}|S|\ge1$ and $S-i=S\setminus\{i\}$. We
use the residual configuration spaces $\Omega^\tau$ defined in the introduction.

We approximate functions on $\Omega$ by multilinear polynomials in
the occupation indicators $X_i(S)=\one_{\{i\in S\}}$.
For $A\in\Omega$, let
$X_A(S)=\prod_{i\in A}X_i(S)=\one_{\{A\subseteq S\}}$, and define
\[
 \cV_j:=\operatorname{span}\{X_A:A\in\Omega,\ |A|\le j\},
 \qquad 0\le j\le r.
\]
Thus $\cV_j$ consists of functions admitting a polynomial representation
of degree at most $j$ on $\Omega$; $j$ is the approximation cutoff.
Let $\Pi_j:L^2(\mu)\to\cV_j$ denote the orthogonal projection, defined by
\[
 \Pi_j f:=\underset{g\in\cV_j}{\operatorname{argmin}}\;
             \E_\mu[(f-g)^2].
\]
This is the unique best mean-square approximation to $f$ in $\cV_j$,
characterized by the orthogonal decomposition
\[
 f=\Pi_jf+(f-\Pi_jf),\qquad
 \langle f-\Pi_jf,h\rangle_\mu=0\quad(h\in\cV_j).
\]
Since monomials need not be orthogonal under $\mu$, this is not simply
a truncation of their coefficients.

Order $\Omega$ by nondecreasing cardinality. The evaluation matrix
$(X_A(S))_{S,A\in\Omega}$ is lower triangular with ones on its diagonal:
$A\subseteq S$ implies $|A|\le|S|$, with equality only when $A=S$.
Hence these monomials form a basis, and $\cV_r=L^2(\mu)$.

Unmarked norms and inner products use $\mu$; in particular,
$\|f\|_2^2=\E_\mu f^2$ and $\|f\|_\infty=\max_{S\in\Omega}|f(S)|$.
With one coordinate update as the time unit,
\[
\cE_{\mathrm{GD}}(f,f)=N^{-1}\sum_i\E_\mu[\Var_\mu(f\mid X_{-i})].
\]

For the uniform law on $2^{[N]}$, Fourier analysis gives a prototype
of the Dirichlet-form bound below. The orthonormal characters
$\chi_A=\prod_{i\in A}(2X_i-1)$ have degree $|A|$. Writing
$f=\sum_A\widehat f(A)\chi_A$, the standard identity
\cite[Theorem 2.38]{ODonnell14} gives
\[
 \cE_{\mathrm{GD}}(f,f)
 =\frac1N\sum_{A\subseteq[N]}|A|\widehat f(A)^2
 \ge\frac{j+1}{N}\|f-\Pi_jf\|_2^2,
 \qquad 0\le j<N.
\]
Thus the part beyond degree $j$ relaxes at a rate at least $(j+1)/N$.
For dependent distributions, GD need not preserve the low-degree polynomial
spaces. Our proof therefore controls the error of low-degree approximation directly.

The following criterion uses the classical super-Poincar\'e approach to
LSI \cite{Wang00SPI,GM15}. We give a direct proof for arbitrary nested
approximation spaces.

\begin{proposition}[Low-degree approximation criterion for LSI]
\label{abs-prop:degree-to-functional}
Let $\mu$ have finite support $\Omega$ with at least two states, and let
$P$ be an irreducible $\mu$-reversible Markov kernel.
Let $r\ge1$ be integer, and let
\[
 \{\text{constant functions}\}=\cV_0\subseteq\cV_1
 \subseteq\cdots\subseteq\cV_r=L^2(\mu)
\]
be nested linear subspaces with orthogonal projections $\Pi_j$.
Suppose that, for some $A,\Lambda\ge1$, the following bounds hold:
\begin{enumerate}[label=\textup{(\roman*)},leftmargin=*]
\item \textup{\textbf{Dirichlet-form control of approximation error.}}
For $0\le j<r$ and $f:\Omega\to\mathbb R$,
\[
 \cE_P(f,f)\ge\frac{j+1}{A}\|(\mathrm{Id}-\Pi_j)f\|_2^2.
\]
\item \textup{\textbf{Norm bound on the approximation spaces.}}
For $0\le j\le r$ and $g\in\cV_j$,
\[
 \|g\|_\infty\le e^{\Lambda j/2}\|g\|_2.
\]
\end{enumerate}
Then $\alpha_{\rm LS}^{-1}\lesssim A\Lambda$.
All implicit constants are universal.
\end{proposition}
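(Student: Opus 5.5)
We follow the super-Poincar\'e route already sketched in the technical overview: first combine (i) and (ii) into a super-Poincar\'e inequality with exponential profile, then convert it into a defective LSI, and finally remove the defect using the Poincar\'e inequality given by case $j=0$ of (i). That case reads $\cE_P(f,f)\ge A^{-1}\Var_\mu(f)$, so $\gamma^{-1}\le A$.

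\emph{Step 1 (super-Poincar\'e).} For $g\in\cV_j$ we dualize (ii). Using $\|\Pi_j f\|_2^2=\langle f,\Pi_jf\rangle_\mu\le\|f\|_1\|\Pi_jf\|_\infty\le\|f\|_1e^{\Lambda j/2}\|\Pi_jf\|_2$, we get $\|\Pi_jf\|_2^2\le e^{\Lambda j}\|f\|_1^2$. Orthogonality and (i) then give
\[
 \|f\|_2^2\le \frac{A}{j+1}\cE_P(f,f)+e^{\Lambda j}\|f\|_1^2,\qquad 0\le j<r .
\]
For $j=r$ we use $\|f\|_2^2\le e^{\Lambda r}\|f\|_1^2$ directly from (ii). Given $s>0$, choose $j=\lceil A/s\rceil-1$ when this is $<r$, and $j=r$ otherwise. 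In both cases $\Lambda j\le \Lambda A/s$, so
\[
\|f\|_2^2\le s\,\cE_P(f,f)+\exp(A\Lambda/s)\|f\|_1^2\qquad\text{for all }s>0.
\]

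\emph{Step 2 (defective LSI).} We follow the rare-support and truncation argument, i.e.\ the standard Wang/Nash-type level decomposition. Take $f\ge0$ with $\|f\|_2=1$. We may reduce to $f\ge0$ since $\cE_P(|f|,|f|)\le\cE_P(f,f)$. Decompose $f$ into dyadic layers $f_k=\min\{(f-2^k)_+,2^k\}$. For each layer, Chebyshev bounds the support: $\mu(f>2^k)\le4^{-k}$, so $\|f_k\|_1\le 2^k\mu(f>2^k)^{1/2}\|f_k\|_2$, roughly. Applying the super-Poincar\'e inequality with $s_k$ tuned so that $e^{A\Lambda/s_k}\mu(f>2^k)\le1/2$ (that is, $s_k\asymp A\Lambda/k$) absorbs the $L^1$ term. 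This gives $\|f_k\|_2^2\lesssim (A\Lambda/k)\,\cE_P(f_k,f_k)$. Summing $k\cdot\|f_k\|_2^2$, which controls $\E[f^2\log f^2]$ up to constants, and using that truncations are Markovian contractions with $\sum_k\cE_P(f_k,f_k)\lesssim\cE_P(f,f)$, yields the defective LSI
\[
\Ent_\mu(f^2)\lesssim A\Lambda\,\cE_P(f,f)+C\|f\|_2^2
\]
with a universal constant $C$. The main obstacle is the superadditivity $\sum_k\cE_P(f_k,f_k)\le \cE_P(f,f)$. This holds for jump kernels because, for the layer decomposition, $\sum_k|f_k(x)-f_k(y)|^2\le|f(x)-f(y)|^2$ pointwise: the layers are monotone 1-Lipschitz pieces with disjoint increments.

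\emph{Step 3 (removing the defect).} We apply Rothaus's lemma. With $\bar f=f-\E_\mu f$, we have $\Ent_\mu(f^2)\le\Ent_\mu(\bar f^2)+2\|\bar f\|_2^2$. Applying the defective LSI to $\bar f$ and then the Poincar\'e bound $\|\bar f\|_2^2\le A\,\cE_P(f,f)$ gives $\Ent_\mu(f^2)\lesssim (A\Lambda+A)\cE_P(f,f)\lesssim A\Lambda\,\cE_P(f,f)$, since $\Lambda\ge1$. Hence $\alpha_{\rm LS}^{-1}\lesssim A\Lambda$ with universal constants.
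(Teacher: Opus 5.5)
Your proposal is correct and follows the paper's route. The paper merges your Steps~1--2 by applying (i)--(ii) directly to a function supported on a set of measure $s$, with cutoff $j=\lfloor\log(1/s)/(2\Lambda)\rfloor$; it uses $\mu_{\min}\ge e^{-\Lambda r}$ for admissibility, which your $j=r$ branch handles instead. It then uses the same dyadic truncations, the same slice-superadditivity of the Dirichlet form, and Rothaus plus the $j=0$ Poincar\'e bound.

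One small slip: the bound needed in Step~2 is the Cauchy--Schwarz estimate $\|f_k\|_1\le\mu(f>2^k)^{1/2}\|f_k\|_2$, without the factor $2^k$. With that factor the $L^1$ term would not be absorbed. Your choice of $s_k$ already implicitly uses the correct bound.
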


To see how these assumptions imply LSI, consider a function supported on a rare event.
At a cutoff chosen according to the event's probability, the norm bound
limits the $L^2$ mass captured by its projection. A substantial residual
remains, so the Dirichlet-form bound forces a large value relative to
the function's squared $L^2$ norm. Rarer events permit larger cutoffs,
yielding a logarithmic penalty in the inverse event probability.
Applying this estimate to truncated levels of a general function,
and then centering, gives LSI.

\begin{proof}[Proof of \cref{abs-prop:degree-to-functional}]
\emph{Dirichlet-form bound for functions supported on rare events.}
Let $g\ne0$ be supported on a set of measure $s\le1/4$.
Testing $\Pi_jg$ against unit vectors in $\cV_j$, the norm bound
and Cauchy--Schwarz give
\[
 \|\Pi_jg\|_2\le e^{\Lambda j/2}\|g\|_1
                 \le e^{\Lambda j/2}\sqrt{s}\,\|g\|_2.
\]
Choose $j=\lfloor\log(1/s)/(2\Lambda)\rfloor$. This cutoff is
admissible: since $\cV_r=L^2(\mu)$, applying the norm bound to
each state indicator gives $1\le e^{\Lambda r/2}\sqrt{\mu(x)}$.
Thus $\mu_{\min}\ge e^{-\Lambda r}$, so $s\ge\mu_{\min}$ implies
$0\le j<r$. Moreover, $s e^{\Lambda j}\le\sqrt{s}\le1/2$:
at least half of $g$'s squared $L^2$ norm lies outside $\cV_j$.
The Dirichlet-form bound therefore yields
\begin{equation}\label{abs-eq:small-support}
 \cE_P(g,g)\ge\frac{j+1}{A}
                (1-s e^{\Lambda j})\|g\|_2^2
 \ge\frac{\log(1/s)}{4A\Lambda}\|g\|_2^2.
\end{equation}
This is the logarithmic rare-support penalty used below.

\smallskip\noindent\emph{From the rare-support bound to entropy.}
We decompose an arbitrary function into levels supported on rare events,
with truncations whose Dirichlet-form values can be summed without loss.
Normalize $\|f\|_2=1$ and truncate $F=|f|$:
$g_k=\min\{(F-2^k)_+,2^k\}$, $k\ge1$. Each nonzero $g_k$ has support of measure
at most $4^{-k}$, so \eqref{abs-eq:small-support} gives
$k\|g_k\|_2^2\lesssim A\Lambda\cE_P(g_k,g_k)$.
For $2^{k+1}\le F<2^{k+2}$, $g_k=2^k$ and
$F^2\log F^2\lesssim k g_k^2$; the part $F<4$ contributes $O(1)$.
Summing over the dyadic levels gives
\[
 \Ent_\mu(f^2)\lesssim1+\sum_{k\ge1}k\|g_k\|_2^2
 \lesssim1+A\Lambda\sum_{k\ge1}\cE_P(g_k,g_k).
\]
It remains to control the sum of the truncated Dirichlet forms.
For each pair $x,y$, the differences
$|g_k(x)-g_k(y)|$ are lengths of disjoint slices of the interval
between $|f(x)|$ and $|f(y)|$. Hence
$\sum_k|g_k(x)-g_k(y)|^2\le|f(x)-f(y)|^2$ for every pair $x,y$.
Summing against the transition weights gives
$\sum_k\cE_P(g_k,g_k)\le\cE_P(f,f)$. Combining these estimates and rescaling proves
\[
 \Ent_\mu(f^2)\lesssim A\Lambda\cE_P(f,f)+\|f\|_2^2.
\]
To remove the extra $L^2$ term, apply this estimate to $h=f-\mu f$.
Rothaus' inequality~\cite{Rothaus} then gives
\[
 \Ent_\mu(f^2)\le\Ent_\mu(h^2)+2\Var_\mu(f)
 \lesssim A\Lambda\cE_P(f,f)+\Var_\mu(f)
 \lesssim A\Lambda\cE_P(f,f).
\]
The last step uses the $j=0$ Dirichlet-form bound
$\Var_\mu(f)\le A\cE_P(f,f)$ and $\Lambda\ge1$.
Thus $\alpha_{\rm LS}^{-1}\lesssim A\Lambda$.
\end{proof}

\paragraph{The Dirichlet-form and low-degree norm bounds.}
For the polynomial spaces defined above, the following lemmas verify
the two hypotheses of \cref{abs-prop:degree-to-functional}.
Field-dynamics spectral stability supplies the Dirichlet-form bound;
the marginal assumptions supply the low-degree norm bound. Their proofs are
given in Sections~\ref{sec:abstract-hierarchy}
and~\ref{abs-sec:difference}, respectively.

\begin{lemma}[Dirichlet-form bound]\label{abs-thm:hierarchy}
Assume the field-dynamics spectral-stability and integrability hypotheses
\eqref{intro-eq:stability} and \eqref{eq:integrated-stability},
and the marginal-ratio bound \eqref{intro-eq:kappa} with $\kappa>0$.
Then, for every $0\le j<r$ and every $f:\Omega\to\mathbb R$,
\begin{equation}\label{abs-eq:hierarchy}
 \cE_{\mathrm{GD}}(f,f)
 \ge\frac{(j+1)e^{-I}}{N(1+\kappa)}\|f-\Pi_jf\|_2^2.
\end{equation}
\end{lemma}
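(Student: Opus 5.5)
We plan to interpolate along the field-dynamics localization, following the Bochner-type argument outlined in Section~\ref{sec:proof-ideas}. For $0\le t<1$ and a revealed set $\tau$, consider the posterior $\nu_t^\tau$ on $\Omega^\tau$. For this posterior define the auxiliary insertion--deletion form
\[
 \cE_t^\tau(f,f)=\sum_{T}\sum_{i\notin\tau\cup T}\nu_t^\tau(T\cup\{i\})\,(f(T\cup\{i\})-f(T))^2 .
\]
Up to normalization, this is the Dirichlet form of the down--up (delete-one) chain weighted by occupied marginals. Let $R_t$ denote the best constant with
\[
 \E_{Y_t}\,\cE_t^{Y_t}(f,f)\ge R_t\,\E_{Y_t}\|f-\Pi^{Y_t}_{j}f\|^2_{2,\nu_t^{Y_t}},
\]
where $\Pi_j^{\tau}$ projects onto polynomials of degree at most $j$ in the residual indicators. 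The goal is to track $\log R_t$ as $t$ runs from $t\uparrow1$ down to $t=0$. At $t=0$ we have $Y_0=\emptyset$ and $\nu_0=\mu$, so the resulting inequality involves $\mu$ directly.

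\textbf{Step 1: the limit $t\uparrow1$.} In this regime $\nu_t^\tau$ concentrates on small sets, since occupied coordinates are suppressed by the factor $(1-t)^{|T|}$. To leading order, the normalized monomials $X_A/\sqrt{\nu(A\subseteq T)}$ become orthogonal. The insertion form evaluated on a degree-$k$ monomial component is then approximately $k$ times its squared norm, because each of its $k$ occupied coordinates can be deleted. Hence $\liminf_{t\to1}R_t\ge j+1$, which is the analogue of the Fourier identity $\cE=\frac1N\sum|A|\widehat f(A)^2$. We will verify this by expanding in $\epsilon=1-t$: the degree-$k$ part has mass $\Theta(\epsilon^k)$, and cross terms are of lower order.

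\textbf{Step 2: the differential inequality.} This is the main obstacle. We differentiate the martingale quantities $\E\|f-\Pi^{Y_t}_jf\|^2$ and $\E\,\cE_t^{Y_t}(f,f)$ in $t$. Because $\Pi_j$ is an optimal projection, the terms coming from the variation of the projection vanish, by the envelope/orthogonality argument. The remaining terms are quadratic variations of the posterior means, since revealing coordinates is a localization. Field-dynamics spectral stability $\Cov_{\nu_t^\tau}\preceq C(t)D_{\nu_t^\tau}$ bounds the variance rate by $C(t)/(1-t)$ times the mean-normalized quantities. Our aim is to show
\[
 \frac{d}{dt}\log R_t\ge-\frac{C(t)-1}{1-t}.
\]
The $-1$ is important: the diagonal (independent) part should cancel exactly between the error and the form, which is why an integral of $(C-1)/(1-t)$ appears rather than $C/(1-t)$. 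I would first check this cancellation in the product-measure case, where $C\equiv1$ and $R_t$ should be constant, and then perturb using the stability bound. Integrating the inequality from $0$ to $1$ gives $R_0\ge(j+1)e^{-I}$.

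\textbf{Step 3: comparison with Glauber.} At $t=0$ the inequality states that $\cE_0(f,f)\ge(j+1)e^{-I}\|f-\Pi_jf\|_2^2$. Each term $\mu(S\cup\{i\})(f(S\cup i)-f(S))^2$ is dominated by the Glauber resampling of coordinate $i$ at configuration $S\cup i$ or $S$. In the conditional variance $\Var(f\mid X_{-i})$, the pair has weights proportional to $\mu(S)$ and $\mu(S\cup i)$, and \eqref{intro-eq:kappa} gives the bound
\[
 \mu(S\cup i)\le \kappa\,\mu(S),
\]
which yields $\mu(S\cup i)(\Delta f)^2\le(1+\kappa)\,\mu(\{S,S\cup i\})\Var(f\mid X_{-i})$. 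Summing over $i$ and using the normalization $\cE_{\mathrm{GD}}=N^{-1}\sum_i\E\Var(f\mid X_{-i})$ gives $\cE_0\le N(1+\kappa)\cE_{\mathrm{GD}}$, which is \eqref{abs-eq:hierarchy}. As a consistency check, for $j=0$ this reproduces Proposition~\ref{green-lem:known-gap}.
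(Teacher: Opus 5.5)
\textbf{There is a genuine gap in Step~2.}

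\emph{The differential inequality points the wrong way.} Your known endpoint is $t\uparrow1$ and your target is $t=0$. So $\frac{d}{dt}\log R_t\ge-\frac{C(t)-1}{1-t}$ integrates to $\log R_0\le\log R_{1^-}+I$, which is an upper bound on $R_0$. It is also already false for a single Bernoulli$(p)$ coordinate, where $R_t=(1-pt)/q$ with $q=1-p$.

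\emph{The correctly oriented inequality fails for $j\ge1$.} What you need is $\frac{d}{dt}\log R_t\le\frac{C(t)-1}{1-t}$. This holds at $j=0$, where it is the usual variance-conservation argument and recovers Proposition~\ref{green-lem:known-gap}. It fails for every $j\ge1$, because pinning lowers degree.
\begin{itemize}
\item Each occupied coordinate is still unrevealed at time $t$ with probability $1-t$, so exactly $\E_{Y_t}\cE_t^{Y_t}(f,f)=(1-t)\cE_1(f,f)$.
\item Revealing any coordinate of a degree-$(j+1)$ monomial leaves a residual polynomial of degree $j$. Interpolating $f$ on residual sets of size at most $j$ then shows $\E_{Y_t}\|f-\Pi_j^{Y_t}f\|^2=O_f((1-t)^{j+1})$.
\item For the product law on $2^{[2]}$ with marginals $p$ and $j=1$, a Fourier computation gives
\[
 R_t=\frac{2(1-pt)^2}{q(1-t)}.
\]
This equals $2/q$ at $t=0$ but increases to $\infty$ as $t\uparrow1$, even though $C\equiv1$.
\end{itemize}
So $\liminf_{t\uparrow1}R_t\ge j+1$ is true but carries no information about $R_0$.

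\emph{Why spectral stability does not control the loss.} At a revealing jump $\tau\to\tau+i$, the approximation error drops by $\|\Pi_j^{\tau+i}R\|^2_{\nu^{\tau+i}}$ with $R=f-\Pi_j^\tau f$. For $j\ge1$ this is not a squared posterior mean of a linear statistic, and a covariance bound on $X$ does not control it. Already for product laws and a degree-$(j+1)$ character, the error decays like $((1-t)/(1-pt))^{j+1}$, at rate about $(j+1)/(1-t)$ rather than $1/(1-t)$. Your proposed product-measure check would have exposed this, since $R_t$ is not constant there.

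\textbf{How the paper avoids this.} It does not localize the function. It keeps $f$ and $\cV_j$ fixed on $\Omega$, and takes as auxiliary stationary laws the thinned laws $\rho_\theta$ of the retained set of the down walk.
\begin{itemize}
\item For these laws $\Prb(A\subseteq X_\theta)=\theta^{|A|}p_A$, so monomials are only rescaled, and the deletion part $\mathcal D$ preserves $\cV_j$.
\item The posteriors $\nu_\theta^S$ enter only inside the derivative of $\cE_\theta(h,h)$ (Lemma~\ref{abs-lem:ss-sum}). There spectral stability bounds cross products of insertion increments from a common configuration.
\item At an optimizer, $-L_\theta f=\gamma R$ and $\cE_\theta(f,\Pi_{j,\theta}f)=0$ make the squared-residual and mixed terms cancel. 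This gives $\gamma_j'(\theta)\ge-\frac{C(\theta)-1}{1-\theta}\gamma_j(\theta)$, with initial value $j+1$ as $\theta\downarrow0$ and the target at $\theta=1$, so the orientation is correct.
\end{itemize}
Your Step~1 limit computation has a direct analogue for $\rho_\theta$ as $\theta\downarrow0$. Your Step~3 comparison with Glauber dynamics matches the paper's.
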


\begin{lemma}[Low-degree norm bound]
\label{abs-lem:christoffel-rho-b}
Assume \eqref{intro-eq:kappa} and \eqref{intro-eq:b}. For every positive
pinning $\nu=\mu^\tau$, every integer $j\ge0$, and every polynomial
$f$ of degree at most $j$ on $\Omega^\tau$,
\[
 \|f\|_{\infty,\nu}
 \le\left(1+N\sqrt{\frac{1+\kappa}{b}}\right)^j\|f\|_{2,\nu}.
\]
\end{lemma}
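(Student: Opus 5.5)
We argue by induction on $j$, uniformly over all positive pinnings $\tau$; the point of quantifying over every pinning is that pinning an occupied coordinate produces another law $\mu^{\tau\cup\{i\}}$ of the same type. The case $j=0$ is trivial, since a constant satisfies $\|f\|_\infty=\|f\|_2$. Let $j\ge1$ and $\nu=\mu^\tau$, and let $f$ have degree at most $j$ on $\Omega^\tau$. Because $\|f\|_\infty$ is attained at some state $S\in\Omega^\tau$, it suffices to bound $|f(S)|$.

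The tool is a forward-difference (Möbius) expansion. Write $\partial_i f(T)=f(T\cup\{i\})-f(T)$; on configurations $T$ with $T\cup\{i\}$ feasible, this lowers degree. It is most naturally viewed as a polynomial of degree at most $j-1$ on the pinned space $\Omega^{\tau\cup\{i\}}$, via $T\mapsto f(T\cup\{i\})-f(T\setminus\{i\})$, using downward closure. Peeling off the elements of $S$ one at a time and comparing with $f(\emptyset)$ directly would lose a factor $|S|$ at every level. We will therefore instead represent $f(S)$ through a single pinned coordinate and an average.

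\textbf{Key step.} Take $S\neq\emptyset$ and $i\in S$. Then
\[
|f(S)|\le |f(S-i)|+|\partial_if(S-i)|.
\]
The second term is controlled by the inductive hypothesis applied to the degree-$(j-1)$ polynomial $\partial_i f$ under $\nu^{\{i\}}=\mu^{\tau\cup\{i\}}$:
\[
|\partial_if(S-i)|\le c^{j-1}\|\partial_if\|_{2,\nu^{\{i\}}},\qquad c:=1+N\sqrt{(1+\kappa)/b}.
\]
To return to $\|f\|_{2,\nu}$ we need
\[
\|\partial_i f\|_{2,\nu^{\{i\}}}^2\lesssim\frac{1+\kappa}{b}\|f\|_{2,\nu}^2 .
\]
We check this term by term:
\begin{itemize}
\item $\E_{\nu^{\{i\}}}f(T\cup\{i\})^2=\E_\nu[f^2\mid i\in T]\le\|f\|_{2,\nu}^2/\nu(i\in T)\le\|f\|_{2,\nu}^2/b$, using hypothesis (iii).
\item $\E_{\nu^{\{i\}}}f(T)^2$, where $T$ omits $i$, is the law of $X\setminus\{i\}$ under $X\sim\nu$ conditioned on $i\in X$. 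Its density relative to $\nu(\cdot,\,i\notin X)$ is at most $\kappa/\nu(i)\le\kappa/b$, using (ii) and (iii).
\end{itemize}
Together these give $\|\partial_if\|_{2,\nu^{\{i\}}}\le\sqrt{2(1+\kappa)/b}\,\|f\|_{2,\nu}$, up to adjusting constants; a sharper Cauchy–Schwarz should recover exactly $\sqrt{(1+\kappa)/b}$.

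\textbf{Main obstacle.} The first term, $|f(S-i)|$, is the hard part. Iterating down to $\emptyset$ accumulates $|S|\le N$ copies of the derivative bound, giving
\[
|f(S)|\le|f(\emptyset)|+|S|\,c^{j-1}\sqrt{(1+\kappa)/b}\,\|f\|_2 .
\]
This costs one factor of $N$ per level, and we claim that is acceptable because it is exactly the factor $N$ inside $c$. Each derivative, however, is taken at a different base point $S\setminus\{i_1,\dots,i_k\}$. This is still consistent with induction, since each $\partial_{i_k}f$ is a degree-$(j-1)$ polynomial on its own pinned law, and we apply the inductive hypothesis there with its $\sup$-norm.

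It remains to bound $|f(\emptyset)|$ by $\|f\|_2$ without an $N^{j}$ loss. Here we use
\[
f(\emptyset)=\E_\nu f-\E_\nu[f(X)-f(\emptyset)],
\]
and telescope $f(X)-f(\emptyset)$ along $X$ in the same way, so each step is again a derivative bound. This gives $|f(\emptyset)|\le\|f\|_2+N c^{j-1}\sqrt{(1+\kappa)/b}\,\|f\|_2$. The total is then of order $(1+2N\sqrt{\cdot})c^{j-1}$. To reach the exact constant $c^j$, we would combine the two telescopes: write
\[
f(S)=\E_\nu f+\E_\nu[f(S)-f(X)],
\]
and path both $S$ and $X$ through their symmetric difference, which uses at most $N$ derivative steps. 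That yields $|f(S)|\le\|f\|_2+N\sqrt{(1+\kappa)/b}\,c^{j-1}\|f\|_2\le c^j\|f\|_2$.
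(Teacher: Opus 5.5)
Your final argument is essentially the paper's proof. It inducts on $j$ uniformly over positive pinnings and treats $\partial_i f$ as a degree-$(j-1)$ polynomial on $\Omega^{\tau\cup\{i\}}$. It then bounds $|f(S)|\le|\E_\nu f|+\sup_X|f(S)-f(X)|$, telescoping each difference along $X\to X\cap S\to S$ in at most $N$ steps. The paper instead fixes one point $x$ with $|f(x)|\le\|f\|_{2,\nu}$, which exists by averaging; that is equivalent to your averaging over $X\sim\nu$. Your earlier detour through $f(\emptyset)$ is unnecessary.

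The one step you leave unproved is the exact constant $\sqrt{(1+\kappa)/b}$ in the difference bound. You bound $\|f(\cdot+i)\|$ and $\|f(\cdot)\|$ separately, each against all of $\|f\|_{2,\nu}^2$. That gives only $(1+\sqrt\kappa)/\sqrt b$, and no Cauchy--Schwarz applied to those two norms alone can do better. Your final display, which uses the sharp constant to reach $c^j$, therefore needs one more observation. The two pieces live on disjoint configurations of $\Omega^\tau$: those containing $i$ and those not. For $T\in\Omega^{\tau\cup\{i\}}$, the inequality $(a-c)^2\le(1+\kappa)a^2+(1+\kappa^{-1})c^2$ together with $\nu(T+i)\le\kappa\,\nu(T)$ gives
\[
 \nu(T+i)\,[f(T+i)-f(T)]^2\le(1+\kappa)\,[\nu(T+i)f(T+i)^2+\nu(T)f(T)^2].
\]
Summing over $T$ gives $\nu(i\in S)\,\|\partial_i f\|_{2,\mu^{\tau\cup\{i\}}}^2\le(1+\kappa)\|f\|_{2,\nu}^2$. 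Hypothesis (iii) then yields $\|\partial_i f\|_{2,\mu^{\tau\cup\{i\}}}\le\sqrt{(1+\kappa)/b}\,\|f\|_{2,\nu}$. This is exactly the paper's weighted Cauchy--Schwarz step, and with it your last line proves the lemma with the stated constant.
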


\paragraph{Completing the mixing criterion.}
It remains to insert the two estimates and convert LSI to mixing,
keeping the spectral-gap bound for the accuracy term.

\begin{proof}[Proof of \cref{thm:intro-boolean}]
Apply \cref{abs-prop:degree-to-functional} using
\cref{abs-thm:hierarchy,abs-lem:christoffel-rho-b}, with
\[
 A=N(1+\kappa)e^I,\qquad
 \Lambda=2\log\left(1+N\sqrt{\frac{1+\kappa}{b}}\right).
\]
Since $\kappa\ge1$, $A\lesssim N\kappa e^I$ and
$\Lambda\lesssim\log(N\kappa/b)$. The latter comparison uses
$b\le\kappa/(1+\kappa)$: for any feasible coordinate $i$, summing
$\mu(T+i)\le\kappa\mu(T)$ over legal insertions gives
$\mu(X_i=1)\le\kappa\mu(X_i=0)$, while
$\mu(X_i=1)\ge b$. In particular, $N\kappa/b\ge2$.
This proves the stated LSI. Moreover, the Dirichlet-form bound at $j=0$ gives
$\gamma^{-1}\le A$ for $\gamma=\gap(P_{\mathrm{GD}})$.
Glauber dynamics has nonnegative spectrum, so substituting these
LSI and gap bounds into \eqref{green-eq:ls-mixing} proves
\eqref{abs-eq:mix} for every $0<\varepsilon\le1/2$.

For the final asymptotic assertion, the state-indicator argument in the
proof of \cref{abs-prop:degree-to-functional} gives
$\log(1/\mu_{\min})\le r\Lambda\le N\Lambda$.
Under the stated bounds on $\kappa,I,b$, we have
$A=O(N)$ and $\Lambda=O(\log N)$, hence inverse LSI constant
$O(N\log N)$. Also
$\log\log(e^2/\mu_{\min})\le\log(2+N\Lambda)=O(\log N)$,
giving mixing time $O(N[\log^2N+\log(1/\varepsilon)])$.
\end{proof}

For fixed $\kappa$ and $I$, the lower bound $b$ enters only
logarithmically. This dependence is necessary for LSI even for a
Bernoulli law with $\mu(1)=b\in(0,1)$: testing
$f=\one_{\{1\}}$ gives $\alpha_{\rm LS}^{-1}\ge\log(1/b)/(1-b)$,
although its field-dynamics spectral-stability rate can be taken to be
$C(t)\equiv1$.

\subsection{Dirichlet-form bounds from field-dynamics spectral stability}
\label{sec:abstract-hierarchy}

We prove \cref{abs-thm:hierarchy}, the Dirichlet-form bound used in
Section~\ref{sec:abstract-entropy}. The factor $j+1$ strengthens the
Poincar\'e inequality, recovered at $j=0$, to control the error of
approximation at every degree.

Consider continuous-time \emph{insertion--deletion dynamics} on $\Omega$:
from $S$, delete each $i\in S$ at rate one and insert each
$i\notin S$ with $S+i\in\Omega$ at rate $\mu(S+i)/\mu(S)$.
The insertion and deletion rates satisfy detailed balance, so
the dynamics is reversible with respect to $\mu$. Its Dirichlet form is
\[
 \cE_1(f,f)
 =\sum_{S\in\Omega}\mu(S)\sum_{i\in S}[f(S)-f(S-i)]^2.
\]
Each legal pair $(S,S+i)$ has weight
$\mu(S+i)/[N(1+\mu(S+i)/\mu(S))]$ in $\cE_{\mathrm{GD}}$.
The marginal-ratio bound \eqref{intro-eq:kappa} therefore gives
\[
 \cE_1(f,f)\le N(1+\kappa)\cE_{\mathrm{GD}}(f,f).
\]
It suffices to prove the following auxiliary estimate:
\begin{equation}\label{abs-eq:bdhier}
 \cE_1(f,f)
 \ge(j+1)e^{-I}\|f-\Pi_jf\|_2^2,
 \qquad 0\le j<r,
\end{equation}
which implies \eqref{abs-eq:hierarchy} by the comparison above.
We prove \eqref{abs-eq:bdhier} using only spectral stability and
integrability. The unit deletion rates make this form convenient along
the down walk: the deletion operator stays fixed while the stationary law changes.

\paragraph{Interpolating Dirichlet forms along the down walk.}
We use the continuous-time down walk underlying the localization
representation of field dynamics~\cite[Section~3.1.2]{CCYZ25}.
In this representation, a field-dynamics transition retains a random
subset of occupied coordinates and then resamples all remaining
coordinates from the posterior given the retained set
(Section~\ref{sec:field-preliminaries}).
These posteriors are the decreasing-field laws controlled by spectral
stability. We use this covariance control to compare the Dirichlet forms
of auxiliary insertion--deletion dynamics whose stationary laws are the
retained-set distributions. As retention vanishes, normalized monomials
become orthogonal and their Dirichlet-form values approach their degrees,
providing an initial bound that we can propagate back to the original law.

The \emph{continuous-time down walk} deletes the elements of $X\sim\mu$
at independent uniform times in $[0,1]$.
Let $X_\theta$ be the remaining set at time $1-\theta$, so each element
is retained independently with probability $\theta$; the down direction
is $\theta\downarrow0$. For $0<\theta\le1$, write
\begin{equation*}
 \rho_\theta(S)=\Prb(X_\theta=S)
 =\theta^{|S|}\sum_{B\in\Omega^S}(1-\theta)^{|B|}\mu(S\cup B).
\end{equation*}
Differentiating gives
\begin{equation}\label{abs-eq:thinning-derivative}
 \theta\rho_\theta'(S)=|S|\rho_\theta(S)
                         -\sum_{k\notin S}\rho_\theta(S+k).
\end{equation}
Indeed, differentiating $(1-\theta)^{|B|}$ counts each nonempty $B$
$|B|$ times, once for each $k\in B$.

Consider auxiliary insertion--deletion dynamics that deletes each occupied coordinate
at rate one and inserts each addable coordinate $i$ at rate
$\rho_\theta(S+i)/\rho_\theta(S)$. Its stationary law is $\rho_\theta$,
and its Dirichlet form (not that of the down walk), for
$g,h:\Omega\to\mathbb R$, is
\begin{equation}\label{abs-eq:thinned-energy}
 \cE_\theta(g,h)=\sum_S\rho_\theta(S)\sum_{i\in S}
 [g(S)-g(S-i)][h(S)-h(S-i)].
\end{equation}
Reversibility combines the insertion and deletion contributions into this single sum.
Its generator $L_\theta$, with deletion part $\mathcal D$, is
\[
 \mathcal Dh(S)=\sum_{i\in S}[h(S-i)-h(S)],\qquad
 L_\theta h(S)=\mathcal Dh(S)
 +\sum_{i\notin S}\frac{\rho_\theta(S+i)}{\rho_\theta(S)}
                         [h(S+i)-h(S)].
\]
Insertion sums use only legal insertions, and probabilities of
infeasible configurations are zero. Then
$\cE_\theta(g,h)=-\langle g,L_\theta h\rangle_{\rho_\theta}$.

\paragraph{A variational formulation of the Dirichlet-form bound.}
Fix $0\le j<r$. Let $\Pi_{j,\theta}$ be the orthogonal projection onto
$\cV_j$ in $L^2(\rho_\theta)$, and write
\begin{equation}\label{abs-eq:least-squares}
 Q_\theta(f):=\|f-\Pi_{j,\theta}f\|_{2,\rho_\theta}^2
       =\min_{g\in\cV_j}\Var_{\rho_\theta}(f-g).
\end{equation}
The second equality holds because $\cV_j$ contains constants. Define
\begin{equation}\label{abs-eq:best-constant}
 \gamma_j(\theta):=\inf_{f\notin\cV_j}
                       \frac{\cE_\theta(f,f)}{Q_\theta(f)}.
\end{equation}
This is the largest coefficient in the inequality
$\cE_\theta(f,f)\ge\gamma_j(\theta)Q_\theta(f)$; at $j=0$ it is
the spectral gap of the auxiliary insertion--deletion dynamics.
Our goal is exactly $\gamma_j(1)\ge(j+1)e^{-I}$.
We will prove that $\liminf_{\theta\downarrow0}\gamma_j(\theta)\ge j+1$,
and that, as $\theta$ increases toward $1$, its relative rate of decrease
is at most $(C(\theta)-1)/(1-\theta)$.
Integrating then gives the desired bound.
The next lemma supplies the derivative estimate. Its paired-increment
calculation is related to the discrete Bochner estimates in
\cite[Lemmas 2.2--2.3]{Bochner} and earlier jump-process Bochner
identities \cite{BCDP06,KKO13}. Those works establish spectral-gap
estimates; here we use the calculation along the down walk to control
approximation error at every polynomial cutoff.

\begin{lemma}[Dirichlet-form evolution along the down walk]\label{abs-lem:ss-sum}
Assume field-dynamics spectral stability \eqref{intro-eq:stability}
with rate $C$. For $0<\theta<1$, set
\begin{equation}\label{abs-eq:beta}
 \beta(\theta):=\frac{\theta(C(\theta)-1)}{1-\theta}.
\end{equation}
Then every fixed function $h:\Omega\to\mathbb R$ satisfies
\begin{equation}\label{abs-eq:energy-bound}
 \theta\frac d{d\theta}\cE_\theta(h,h)
 \ge\|\mathcal Dh\|_{2,\rho_\theta}^2
       -\|(L_\theta-\mathcal D)h\|_{2,\rho_\theta}^2
       -\beta(\theta)\cE_\theta(h,h).
\end{equation}
\end{lemma}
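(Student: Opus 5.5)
Write $\cE_\theta(h,h)=\sum_S\rho_\theta(S)\Gamma(S)$ with $\Gamma(S)=\sum_{i\in S}[h(S)-h(S-i)]^2$; note that $\Gamma$ does not depend on $\theta$. I will differentiate directly using \eqref{abs-eq:thinning-derivative}:
\[
\theta\frac{d}{d\theta}\cE_\theta(h,h)=\sum_S\Bigl(|S|\rho_\theta(S)-\sum_{k\notin S}\rho_\theta(S+k)\Bigr)\Gamma(S)
=\sum_S\rho_\theta(S)\Bigl(|S|\Gamma(S)-\sum_{k\in S}\Gamma(S-k)\Bigr),
\]
after reindexing $T=S+k$ in the second sum. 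Writing $a_i(S)=h(S)-h(S-i)$, the integrand expands as $\sum_{i\in S}a_i^2\cdot|S|-\sum_{k\in S}\sum_{i\in S-k}[h(S-k)-h(S-k-i)]^2$. The plan is to regroup these into paired increments over ordered pairs $i\ne k$ in $S$, as in the Bochner calculation of \cite{Bochner}. The diagonal terms $i=k$ give $\sum_i a_i^2$. For the off-diagonal pair $(i,k)$, the key identity is $a_i(S)^2-a_i(S-k)^2$, where $a_i(S-k)=h(S-k)-h(S-k-i)$.

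Next I identify the target quantities. We have $\mathcal Dh(S)=-\sum_{i\in S}a_i(S)$, so $\|\mathcal Dh\|^2=\sum_S\rho_\theta\sum_{i,k\in S}a_ia_k$. I expect an exact identity of the shape
\[
\theta\tfrac{d}{d\theta}\cE_\theta(h,h)=\|\mathcal Dh\|^2_{2,\rho_\theta}+\sum_S\rho_\theta(S)\sum_{i\ne k\in S}\bigl(\text{second difference in }i,k\bigr)^2\text{-type terms}-(\text{insertion correction}),
\]
obtained by completing squares. The nonnegative second-difference terms can be dropped. The remaining correction involves the insertion part, and it appears as $\langle (L_\theta-\mathcal D)h,\cdot\rangle$. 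Using $\cE_\theta(h,h)=-\langle h,L_\theta h\rangle$, this correction can be written as a cross term. Cauchy--Schwarz, or rather the exact expansion $\|L_\theta h\|^2$ versus $\|\mathcal Dh\|^2$, then bounds it by $\|(L_\theta-\mathcal D)h\|^2$ plus a covariance-type quadratic form in the insertion increments $h(S+i)-h(S)$ weighted by the ratios $\rho_\theta(S+i)/\rho_\theta(S)$.

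The main obstacle is this last covariance term. I plan to recognize it as a quadratic form $v^\top\Cov\,v$ versus $v^\top D v$ for the posterior law given $X_\theta=S$. The insertion ratios $\rho_\theta(S+i)/\rho_\theta(S)$ are, up to the factor $\theta/(1-\theta)$, the occupation marginals of the residual law $\nu_{1-\theta}^{S}$ obtained by conditioning $\mu$ on $S\subseteq X$ with field $1-\theta$. Pair marginals come from $\rho_\theta(S+i+k)$. Spectral stability \eqref{intro-eq:stability} at $t=\theta$, or at the corresponding field, with pinning $\tau=S$ gives $\Cov\preceq C\,D$. The excess $(C-1)D$ times the prefactor $\theta/(1-\theta)$ then contributes at most $\beta(\theta)\sum_S\rho_\theta(S)\sum_{i}(\rho_\theta(S+i)/\rho_\theta(S))[h(S+i)-h(S)]^2$. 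By reversibility this equals $\beta(\theta)\cE_\theta(h,h)$.

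The first thing I will try is to check carefully which field parameter matches $\theta$, since the posterior of $X$ given $X_\theta=S$ has residual law $((1-\theta)*\mu)^S=\nu_\theta^S$. This matches the $C(\theta)$ appearing in \eqref{abs-eq:beta}, which gives me some confidence that the bookkeeping closes.
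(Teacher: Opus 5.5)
Your overall architecture matches the paper's proof. You differentiate with \eqref{abs-eq:thinning-derivative}, reduce to cross products of insertion increments from a common lower configuration, and control those by spectral stability of the posterior $\nu_\theta^S$, applied to $Y_S=\sum_{i\notin S}[h(S+i)-h(S)]\one_{\{i\in X\setminus S\}}$; your $\theta/(1-\theta)$ bookkeeping leading to $\beta(\theta)\cE_\theta(h,h)$ is also right.

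\emph{The gap.} The step that actually carries the proof is never derived, and the shape you predict for it is wrong. There are no nonnegative second-difference squares to drop. The term $\|(L_\theta-\mathcal D)h\|^2$ also does not arise from a Cauchy--Schwarz step or an $\|L_\theta h\|^2$ expansion of a cross term. What holds is the exact identity
\[
 \|\mathcal Dh\|_{2,\rho_\theta}^2-\theta\frac{d}{d\theta}\cE_\theta(h,h)
 =\sum_T\sum_{\substack{i,k\notin T\\ i\ne k}}\rho_\theta(T+i+k)\,\delta_i(T)\,\delta_k(T),
 \qquad \delta_i(T)=h(T+i)-h(T).
\]
To get it from your expansion, subtract $(\sum_{i\in S}a_i)^2$ and pair $(i,k)$ with $(k,i)$. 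Each unordered pair $\{i,k\}\subseteq S$ then contributes $(a_i(S)-a_k(S))^2-a_i(S-k)^2-a_k(S-i)^2$. Since $a_i(S)-a_k(S)=h(S-k)-h(S-i)=a_i(S-k)-a_k(S-i)$, this equals $-2a_i(S-k)a_k(S-i)=-2\delta_i(T)\delta_k(T)$ with $T=S-i-k$. Any second differences you introduce cancel identically.

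\emph{Closing the argument.} After multiplying by $\rho_\theta(T)\theta^2/(1-\theta)^2$, the inner sum above is exactly the off-diagonal part of $\E[Y_T^2\mid X_\theta=T]$. Write $\E Y_T^2=\Var Y_T+(\E Y_T)^2$ and apply $\Var Y_T\le C(\theta)\sum_i\delta_i(T)^2\,\Prb(i\in X\setminus T\mid X_\theta=T)$. This bounds the off-diagonal part by the squared mean plus $(C(\theta)-1)$ times the diagonal. The squared mean rescales exactly to $\rho_\theta(T)[(L_\theta-\mathcal D)h(T)]^2$. The diagonal rescales to $\beta(\theta)\sum_i\rho_\theta(T+i)\delta_i(T)^2$, which sums over $T$ to $\beta(\theta)\cE_\theta(h,h)$.

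\emph{Why exactness matters.} The proof of \cref{abs-thm:hierarchy} substitutes $(L_\theta-\mathcal D)f=-\gamma R-\mathcal Df$ and subtracts $\gamma\,\partial_\theta Q_\theta(f)$. This needs coefficient exactly one on both $\|\mathcal Dh\|^2$ and $\|(L_\theta-\mathcal D)h\|^2$, so that $\|\mathcal Df\|^2$, the squared-residual term and the mixed term all cancel. A Cauchy--Schwarz step with any slack would break that argument.
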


Assuming this estimate, we first complete the proof of
\cref{abs-thm:hierarchy}, then verify the estimate itself.

\begin{proof}[Proof of \cref{abs-thm:hierarchy}]
We prove the bound on $\gamma_j(1)$ outlined above.
Inner products and norms in the calculations below use $\rho_\theta$.

\paragraph{Differentiating at an optimizing function.}
Fix a parameter $\theta$ where $\gamma_j$ is differentiable, and
choose a function $f$ attaining \eqref{abs-eq:best-constant}.
Existence and the differentiation of this minimum are justified below.
Write $g=\Pi_{j,\theta}f$, $R=f-g$, $Q=Q_\theta(f)$, and
$\gamma=\gamma_j(\theta)$. Varying $f$ in a direction $h$ at a minimum
of the quotient gives
\[
 \cE_\theta(f,h)=\gamma\langle R,h\rangle
 \quad(h:\Omega\to\mathbb R).
\]
Indeed, the numerator and denominator have directional derivatives
$2\cE_\theta(f,h)$ and $2\langle R,h\rangle$, respectively.
Consequently,
\begin{equation}\label{abs-eq:optimizer}
 -L_\theta f=\gamma R,\qquad
 \cE_\theta(f,g)=0,\qquad
 \cE_\theta(f,f)=\gamma Q.
\end{equation}
The middle identity follows from $\langle R,g\rangle=0$.
Thus optimality supplies the additional Dirichlet-form orthogonality
needed below; it does not assert that $L_\theta$ preserves $\cV_j$.

Keep this optimizing $f$ fixed when differentiating in $\theta$.
We use $\partial_\theta$ to emphasize this convention: the law and
projection vary, but the function does not.
For a fixed $h$, \eqref{abs-eq:thinning-derivative} gives
\[
 \theta\partial_\theta\|h\|_2^2
 =-\cE_\theta(h,h)-2\langle h,\mathcal Dh\rangle.
\]
This follows by reindexing the insertion sum and using
$x^2-y^2=-(x-y)^2+2x(x-y)$.
When differentiating $Q_\theta(f)=\langle R,R\rangle_{\rho_\theta}$,
the change in $R=f-\Pi_{j,\theta}f$ contributes zero because
$\partial_\theta\Pi_{j,\theta}f\in\cV_j$ and
$\langle R,\partial_\theta\Pi_{j,\theta}f\rangle_{\rho_\theta}=0$.
The remaining change of measure is given by the preceding identity with $h=R$.
Also $\mathcal D\cV_j\subseteq\cV_j$, since
$\mathcal DX_A=-|A|X_A$, so $\langle R,\mathcal Dg\rangle=0$.
Using \eqref{abs-eq:optimizer}, we obtain
\begin{equation}\label{abs-eq:projection-derivative}
 \theta\partial_\theta Q_\theta(f)
 =-\gamma Q-\cE_\theta(g,g)-2\langle R,\mathcal Df\rangle.
\end{equation}
Here $\cE_\theta(R,R)=\cE_\theta(f,f)+\cE_\theta(g,g)$ because
$\cE_\theta(f,g)=0$.

On the other hand, substitute
$(L_\theta-\mathcal D)f=-\gamma R-\mathcal Df$ into
\cref{abs-lem:ss-sum}. Expanding the squared norm gives
\[
 \theta\partial_\theta\cE_\theta(f,f)
 \ge-\gamma^2Q-2\gamma\langle R,\mathcal Df\rangle
                         -\beta(\theta)\gamma Q.
\]
Subtract $\gamma$ times \eqref{abs-eq:projection-derivative}.
Both the squared-residual term and the mixed term cancel:
\[
 \theta\bigl[\partial_\theta\cE_\theta(f,f)
                    -\gamma\partial_\theta Q_\theta(f)\bigr]
 \ge\gamma\cE_\theta(g,g)-\beta(\theta)\gamma Q
 \ge-\beta(\theta)\gamma Q.
\]
Dividing by $Q$ differentiates the quotient at its minimizer.
Thus, with $\beta$ from \eqref{abs-eq:beta},
\begin{equation}\label{abs-eq:hierarchy-ode}
 \gamma_j'(\theta)\ge-\frac{\beta(\theta)}{\theta}\gamma_j(\theta)
                  =-\frac{C(\theta)-1}{1-\theta}\gamma_j(\theta)
 \quad\text{for almost every }\theta\in(0,1).
\end{equation}

For completeness, we justify attainment and differentiation without
choosing a differentiable family of optimizers. Adding constants changes
neither $\cE_\theta(f,f)$ nor $Q_\theta(f)$, so restrict to the fixed
space $H=\{f:f(\varnothing)=0\}$. The deletion graph is connected;
hence $\cE_\theta$ is positive definite on $H$. Moreover,
$\rho_\theta(S)\ge\theta^{|S|}\mu(S)>0$, and the Gram matrix defining
$\Pi_{j,\theta}$ is smooth and positive definite for $\theta>0$.
Both quadratic forms therefore have smooth coefficients, and
\[
 \gamma_j(\theta)^{-1}
 =\max_{\substack{f\in H\\\|f\|_{\mathrm{Euclidean}}=1}}
                         \frac{Q_\theta(f)}{\cE_\theta(f,f)}.
\]
Here the norm is the usual Euclidean norm of the vector of function values.
The maximum is attained and positive because $\cV_j\ne L^2(\rho_\theta)$.
On compact intervals bounded away from zero, the denominator is uniformly
positive on this fixed unit sphere. The maximum, and hence $\gamma_j$,
is locally Lipschitz. At a differentiability point $\theta$, freeze any
minimizer $f$. For nearby $t$,
$\gamma_j(t)\le\cE_t(f,f)/Q_t(f)$, with equality at $t=\theta$.
The two functions therefore have the same derivative there, which
justifies the quotient differentiation above even if the minimizer is
not unique.

\paragraph{Establishing the Dirichlet-form bound as $\theta\downarrow0$.}
We now identify the initial coefficient $j+1$.
For nonempty $A\in\Omega$, put
\[
 p_A=\Prb(A\subseteq X)>0,\qquad
 \phi_{A,\theta}(S)=\frac{\one_{\{A\subseteq S\}}}
                            {\sqrt{\theta^{|A|}p_A}}.
\]
The identity $\Prb(A\subseteq X_\theta)=\theta^{|A|}p_A$ gives
\begin{align}
 \Cov_{\rho_\theta}(\phi_{A,\theta},\phi_{B,\theta})
 &=\theta^{|A\triangle B|/2}\frac{p_{A\cup B}}{\sqrt{p_Ap_B}}
   -\theta^{(|A|+|B|)/2}\sqrt{p_Ap_B}
   \longrightarrow\one_{\{A=B\}},\label{abs-eq:initial-covariance}\\
 \cE_\theta(\phi_{A,\theta},\phi_{B,\theta})
 &=|A\cap B|\theta^{|A\triangle B|/2}
                  \frac{p_{A\cup B}}{\sqrt{p_Ap_B}}
   \longrightarrow |A|\one_{\{A=B\}}\label{abs-eq:initial-energy}
\end{align}
as $\theta\downarrow0$, with $p_{A\cup B}=0$ for infeasible unions.
Thus the covariance matrix tends to the identity and the
Dirichlet-form matrix tends to the diagonal matrix of degrees.
This explains why the first degree beyond the cutoff $j$ supplies
the coefficient $j+1$.

We need this comparison uniformly over functions, since
\eqref{abs-eq:best-constant} takes an infimum. These normalized monomials,
together with constants, form a basis. In fixed dimension, the
entrywise limits \eqref{abs-eq:initial-covariance}--\eqref{abs-eq:initial-energy}
imply convergence in operator norm. Hence there is
$0\le\delta_\theta\to0$ such that every
$h=c+\sum_{A\ne\varnothing}a_A\phi_{A,\theta}$ satisfies, for small $\theta$,
\[
 \cE_\theta(h,h)\ge(1-\delta_\theta)\sum_{A\ne\varnothing}|A|a_A^2,
 \qquad
 Q_\theta(h)\le(1+\delta_\theta)\sum_{|A|>j}a_A^2.
\]
For the second inequality, approximate by the terms of degree at most
$j$ and optimize the constant; this uses the variance formulation in
\eqref{abs-eq:least-squares}. It follows that
\[
 \gamma_j(\theta)\ge(j+1)\frac{1-\delta_\theta}{1+\delta_\theta},
 \qquad
 \liminf_{\theta\downarrow0}\gamma_j(\theta)\ge j+1.
\]

\paragraph{Integrating the loss from spectral stability.}
Local Lipschitz continuity makes $\gamma_j$ locally absolutely
continuous. Integrating \eqref{abs-eq:hierarchy-ode} gives, for
$0<s<\theta<1$,
\[
 \gamma_j(\theta)\ge\gamma_j(s)
       \exp\left(-\int_s^\theta\frac{C(t)-1}{1-t}\,dt\right).
\]
Let $s\downarrow0$ and then $\theta\uparrow1$.
The endpoint bound, integrability, and continuity of the quadratic
forms and their variational constant at $1$ yield
$\gamma_j(1)\ge(j+1)e^{-I}$.
This proves \eqref{abs-eq:bdhier}; the comparison with GD at the
start of the subsection proves \eqref{abs-eq:hierarchy}.
\end{proof}

\begin{proof}[Proof of \cref{abs-lem:ss-sum}]
Only the law varies in this derivative. Applying
\eqref{abs-eq:thinning-derivative} to \eqref{abs-eq:thinned-energy} yields
\begin{equation}\label{abs-eq:energy-derivatives}
 \theta\frac d{d\theta}\cE_\theta(h,h)
 =\sum_S\left[|S|\rho_\theta(S)
                    -\sum_{k\notin S}\rho_\theta(S+k)\right]
           \sum_{i\in S}[h(S)-h(S-i)]^2.
\end{equation}

Fix $0<\theta<1$; norms and inner products below use $\rho_\theta$.
The right-hand side of \eqref{abs-eq:energy-derivatives}
equals $\|\mathcal Dh\|_2^2$ minus a weighted sum of cross products of
insertion increments from a common configuration, as verified below. We first bound these
cross products using conditional spectral stability.
Fix $S$ and condition on $X_\theta=S$. The remaining set has law
$\nu_\theta^S$: explicitly, for $T\in\Omega^S$,
\[
 \Prb(X\setminus S=T\mid X_\theta=S)
 =\frac{(1-\theta)^{|T|}\mu(S\cup T)}
        {\sum_{B\in\Omega^S}(1-\theta)^{|B|}\mu(S\cup B)}
 =\nu_\theta^S(T).
\]
In the sum
\[
 \sum_{i\in X\setminus S}[h(S+i)-h(S)]
 =\sum_{i\notin S}[h(S+i)-h(S)]
                      \mathbf1_{\{i\in X\setminus S\}},
\]
the coefficients $h(S+i)-h(S)$ are fixed. All randomness comes from
the indicators under the conditional law of $X$ given $X_\theta=S$.
This statistic links the posterior covariance to the generator:
its conditional mean, multiplied by $\theta/(1-\theta)$, is
$((L_\theta-\mathcal D)h)(S)$, while its second moment contains products
of insertion increments we need to control. The probability-ratio
identities below make this correspondence explicit.
Spectral stability \eqref{intro-eq:stability} gives
\[
 \Var\!\left(\left.
   \sum_{i\notin S}[h(S+i)-h(S)]\mathbf1_{\{i\in X\setminus S\}}
                          \,\right|X_\theta=S\right)
 \le C(\theta)\,
 \E\!\left[\left.
   \sum_{i\notin S}[h(S+i)-h(S)]^2\mathbf1_{\{i\in X\setminus S\}}
                          \,\right|X_\theta=S\right].
\]
Expand the variance. Since indicators square to themselves, the diagonal
second-moment terms equal the expectation on the right without its
factor $C(\theta)$. Moving them across leaves $C(\theta)-1$.
Also, by the definition of $\rho_\theta$,
\begin{align*}
 \frac{\rho_\theta(S+i)}{\rho_\theta(S)}
 &=\frac{\theta}{1-\theta}
       \Prb(i\in X\setminus S\mid X_\theta=S),\\
 \frac{\rho_\theta(S+i+k)}{\rho_\theta(S)}
 &=\left(\frac{\theta}{1-\theta}\right)^2
       \Prb(i,k\in X\setminus S\mid X_\theta=S)
       \qquad(i\ne k).
\end{align*}
Multiplying the resulting inequality by
$\rho_\theta(S)\theta^2/(1-\theta)^2$ therefore yields
\begin{equation}\label{abs-eq:ss-cross-terms}
\begin{split}
 &\sum_{\substack{i,k\notin S\\i\ne k}}
       \rho_\theta(S+i+k)[h(S+i)-h(S)][h(S+k)-h(S)]\\
 &\quad\le
 \frac{\left[\sum_{i\notin S}\rho_\theta(S+i)
                        [h(S+i)-h(S)]\right]^2}{\rho_\theta(S)}
 +\beta(\theta)\sum_{i\notin S}\rho_\theta(S+i)[h(S+i)-h(S)]^2.
\end{split}
\end{equation}

It remains to identify the sum of the left-hand sides. The cancellation
involves the four configurations $S-i-k,S-i,S-k,S$, connected by the
two orders of deleting $i$ and $k$. Grouping terms by the deleted pair
eliminates the value at $S$, leaving products of insertion increments
from the common lower configuration $S-i-k$. These are precisely the
cross products bounded in \eqref{abs-eq:ss-cross-terms}. For each $S$,
\[
 |S|\sum_{i\in S}[h(S)-h(S-i)]^2
       -\left(\sum_{i\in S}[h(S)-h(S-i)]\right)^2
 =\sum_{\{i,k\}\subseteq S}[h(S-i)-h(S-k)]^2.
\]
Also, replacing $S+k$ by $S$ gives
\[
 \sum_S\sum_{k\notin S}\rho_\theta(S+k)
                          \sum_{i\in S}[h(S)-h(S-i)]^2
 =\sum_S\rho_\theta(S)\sum_{k\in S}
                   \sum_{i\in S\setminus\{k\}}
                          [h(S-k)-h(S-i-k)]^2.
\]
Multiply the first identity by $\rho_\theta(S)$ and sum over $S$,
then subtract the second expression. Each unordered pair
$\{i,k\}\subseteq S$ contributes, after division by $\rho_\theta(S)$,
\begin{align*}
 &[h(S-i)-h(S-k)]^2
  -[h(S-i)-h(S-i-k)]^2-[h(S-k)-h(S-i-k)]^2\\
 &\qquad=-2[h(S-i)-h(S-i-k)][h(S-k)-h(S-i-k)].
\end{align*}
Reindex $S-i-k$ as $S$, replace $2\sum_{\{i,k\}}$ by
$\sum_{i\ne k}$, and use \eqref{abs-eq:energy-derivatives} to obtain
\begin{equation}\label{abs-eq:paired-squares}
\begin{split}
 &\|\mathcal Dh\|_{2,\rho_\theta}^2
 -\theta\frac d{d\theta}\cE_\theta(h,h)\\
 &\qquad=\sum_S\sum_{\substack{i,k\notin S\\i\ne k}}
       \rho_\theta(S+i+k)[h(S+i)-h(S)][h(S+k)-h(S)].
\end{split}
\end{equation}
Sum \eqref{abs-eq:ss-cross-terms} over $S$. Its first term on the right
is $\|(L_\theta-\mathcal D)h\|_2^2$, and reindexing its last sum gives
$\beta(\theta)\cE_\theta(h,h)$. Combining this bound with
\eqref{abs-eq:paired-squares} proves \eqref{abs-eq:energy-bound}.
\end{proof}

\subsection{Low-degree norm bound from marginal assumptions}
\label{abs-sec:difference}

We prove \cref{abs-lem:christoffel-rho-b} by induction on polynomial degree.
A forward difference lowers degree by one. The marginal assumptions
control its $L^2$ norm under positive pinning, allowing induction to
bound every local increment. A path of legal insertions and deletions
then bounds the value of the function at any configuration.
No spectral stability is needed.

Both marginal assumptions are inherited by positive pinning, so the
same estimates apply throughout the induction.

\begin{proof}[Proof of \cref{abs-lem:christoffel-rho-b}]
We prove the assertion simultaneously over all positive pinnings
$\nu=\mu^\tau$. For $j=0$, the function is constant and the claim is
immediate.

\paragraph{One difference lowers the degree.}
For an unpinned coordinate $i$ with $\tau+i\in\Omega$, define the
forward difference on the further-pinned residual space by
\[
 \Delta_i f(T):=f(T+i)-f(T),\qquad T\in\Omega^{\tau+i}.
\]
Both arguments belong to $\Omega^\tau$ by downward closure.
Writing a multilinear representative as $f=X_i g+h$, with $g,h$
independent of $X_i$, shows that $\Delta_i f=g$ on this space.
Thus a degree-$j$ function becomes a function of degree at most $j-1$.

The cost of taking this difference is
\begin{equation}\label{abs-eq:difference-L2}
 \|\Delta_i f\|_{2,\mu^{\tau+i}}
 \le\sqrt{\frac{1+\kappa}{b}}\,\|f\|_{2,\nu}.
\end{equation}
The two marginal assumptions play separate roles: $\kappa$ compares
the weights of the paired configurations $T$ and $T+i$, whereas $b$
controls the normalization when conditioning on occupation of $i$.
Neither step requires a lower bound on the probability of an individual
configuration. Specifically, weighted Cauchy--Schwarz gives, for every $T\in\Omega^{\tau+i}$,
\[
 \nu(T+i)|f(T+i)-f(T)|^2
 \le\left(1+\frac{\nu(T+i)}{\nu(T)}\right)
       [\nu(T+i)|f(T+i)|^2+\nu(T)|f(T)|^2].
\]
For fixed $i$, these pairs are disjoint. Sum and use the marginal-ratio
bound to obtain
\[
 \nu(i\in S)\|\Delta_i f\|_{2,\mu^{\tau+i}}^2
 \le(1+\kappa)\|f\|_{2,\nu}^2.
\]
Here $\mu^{\tau+i}(T)=\nu(T+i)/\nu(i\in S)$.
Dividing by the occupied-marginal lower bound $\nu(i\in S)\ge b$
proves \eqref{abs-eq:difference-L2}.

\paragraph{From local increments to the maximum norm.}
Let $a_j$ be the largest ratio $\|f\|_\infty/\|f\|_2$ for nonzero
functions of degree at most $j$, over all positive pinnings.
Then $a_0=1$. Choose $x\in\Omega^\tau$ with
$|f(x)|\le\|f\|_{2,\nu}$, which exists by averaging.
Choosing this starting point avoids paying for a possibly very small
probability of a prescribed configuration, such as the empty set.
For any $y\in\Omega^\tau$, delete $x\setminus y$ and then insert
$y\setminus x$. The path $x\to x\cap y\to y$ is legal by downward
closure and has at most $N$ steps. Each increment, up to sign, is a
value of some $\Delta_i f$ on $\Omega^{\tau+i}$. Therefore
\[
 |f(y)|\le\|f\|_{2,\nu}
       +N\max_i\|\Delta_i f\|_{\infty,\mu^{\tau+i}}
 \le\left(1+N\sqrt{\frac{1+\kappa}{b}}\,a_{j-1}\right)\|f\|_{2,\nu},
\]
where the last step uses the induction hypothesis and
\eqref{abs-eq:difference-L2}.
Singleton residual spaces need no path. Taking the supremum gives
\[
 a_j\le1+N\sqrt{\frac{1+\kappa}{b}}\,a_{j-1}
       \le\left(1+N\sqrt{\frac{1+\kappa}{b}}\right)^j,
\]
which proves the lemma.
\end{proof}

\smallskip\noindent\emph{Relation to marginal ratios under arbitrary pinning.}
The terminology \emph{bounded marginal ratios} follows
\cite[Definition 1.5]{CCCYZ25}. On downward-closed support,
\eqref{intro-eq:kappa} is equivalent to
\[
 \frac{\mu(X_i=1\mid X_\Lambda=\sigma)}
      {\mu(X_i=0\mid X_\Lambda=\sigma)}\le\kappa
\]
for every feasible pinning $\sigma\in\{0,1\}^\Lambda$ with
$\Lambda\subseteq[N]$ and $i\notin\Lambda$.
The denominators are positive by downward closure. One direction pins
all coordinates except $i$; the other averages full conditional
occupation probabilities, each at most $\kappa/(1+\kappa)$.

\section{Near-linear mixing for matchings}
\label{sec:sequential}
\label{sec:matching-mixing}\label{sec:profile}

We prove the near-linear mixing and sampling bounds of
\cref{thm:intro-sequential}: verify the spectral-stability and marginal
conditions of \cref{thm:intro-boolean}, obtain the mixing bound, and
implement each single-edge GD update in constant time.

Fix a simple graph $G=(V,E)$ with $m\ge1$ and activity $\lambda>0$,
and put $\mu=\mu_{G,\lambda}$. Matchings form a downward-closed
family with $N=m$ coordinates. The edgeless case simply returns the
empty matching. For $M\sim\mu$, define the dimer indicators $X_e=\one_{\{e\in M\}}$.
Write $\mu(e)=\E_\mu X_e$. 

\paragraph{Verifying spectral stability.}
Under a positive pinning $\tau$, the posterior $\nu_t^\tau$ is the
matching law on $H=G-V(\tau)$ at activity $\lambda(1-t)$.
The following covariance bound verifies field-dynamics spectral
stability of $\mu$.

\begin{proposition}[Edge covariance]
\label{prop:holant-edge}
For every finite simple graph at activity $s>0$, and after every
feasible edge pinning,
\begin{equation*}
 \Cov(X)\preceq\widehat C(s)\Diag(\E X),\qquad
 \widehat C(s)=
 \begin{cases}
 \min\{7+48s,(1-s)^{-1}\},&0<s<1,\\
 7+48s,&s\ge1.
 \end{cases}
\end{equation*}
Deterministic coordinates are omitted. Setting $\widehat C(0)=1$,
the function $\widehat C$ is nondecreasing and
$\widehat C(s)=1+O(s)$ as $s\downarrow0$.
\end{proposition}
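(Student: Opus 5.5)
The plan is to obtain the two branches of $\widehat C(s)$ from two separate known results and then take the minimum. First, reduce to an unpinned matching law. A feasible edge pinning deletes the endpoints of edges pinned present and removes edges pinned absent, as noted in the Preliminaries. What remains is the matching law at activity $s$ on a simple graph $H$, and the deterministic coordinates (edges forced absent) are dropped. So it suffices to prove $\Cov(X)\preceq\widehat C(s)\Diag(\E X)$ for the unpinned monomer--dimer law on an arbitrary finite simple graph $H$.

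\emph{The branch $7+48s$.} Invoke the degree-free covariance bound of \cite{ConcurrentHolant26} for log-concave Holant measures, specialized to matchings. I expect it to be stated in a form such as $\Cov(X)\preceq c(s)\Diag(\E X)$, or as a bound on the largest eigenvalue of $D^{-1/2}\Cov D^{-1/2}$, with an explicit affine dependence on the activity. The work here is bookkeeping. If their normalization is by variances $\Diag(\Var X_e)$ rather than means, use $\Var X_e\le\E X_e$ to pass to the mean normalization. Then track the constants through their harmonic-surrogate argument to confirm the affine bound $7+48s$, which is uniform in the maximum degree.

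\emph{The branch $(1-s)^{-1}$ for $s<1$.} Apply the low-activity estimate of \cite[Theorem~1.14 and Section~5.3]{CCCYZ25}. That estimate gives field-dynamics spectral stability of exactly this form for matching laws at activity below $1$, with rate $(1-s)^{-1}$. The expected mechanism is a comparison of the dimer law at activity $s$ with independent Bernoulli edges: one writes $X$ as the image of a product measure via a field or tilt, and this contributes a factor $1/(1-s)$. I would quote it directly, checking that it applies under the covariance characterization \cite[Proposition~3.3]{CCCYZ25} with mean normalization.

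Finally, verify the stated properties of $\widehat C$. Both branches are nondecreasing in $s$, so their minimum is too, and at $s=1$ the minimum switches continuously to $7+48s$ since $(1-s)^{-1}\to\infty$. Setting $\widehat C(0)=1$ keeps monotonicity, because each branch is at least $1$. For small $s$, $(1-s)^{-1}=1+s+O(s^2)$ gives $\widehat C(s)=1+O(s)$. This small-$s$ behaviour is the whole point of including the second branch: it makes the integral $I=\int_0^1 (C(t)-1)/(1-t)\,dt$ finite once we set $C(t)=\widehat C(\lambda(1-t))$, because near $t=1$ the integrand is $O(\lambda)$. The main obstacle I anticipate is the first step: extracting from \cite{ConcurrentHolant26} a statement in exactly the mean-normalized, pinning-closed form with the explicit constants $7$ and $48$. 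That may require re-running their argument rather than merely citing it.
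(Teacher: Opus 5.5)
Your proposal is correct and follows essentially the paper's proof. The paper likewise reduces a feasible edge pinning to the residual matching law. It obtains the $7+48s$ branch by tracking constants through the proofs of Lemma~7 and Theorem~4 of \cite{ConcurrentHolant26}: these give $C_1=2+36s$ and $C_2=6$, hence $\Cov(X)\preceq(2+36s+\sqrt{12})\Diag(\E X)\preceq(7+48s)\Diag(\E X)$. It takes the $(1-s)^{-1}$ branch from the proof of Corollary~5.10 of \cite{CCCYZ25}, so your speculative description of the low-activity mechanism is not needed.
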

\begin{proof}
For matchings, \cite[proof of Lemma 7]{ConcurrentHolant26} gives
$C_1=2+36s$ and $C_2=6$. The spectral argument in the proof of
Theorem 4 there then yields
\[
 \Cov(X)\preceq(2+36s+\sqrt{12})\Diag(\E X)
 \preceq(7+48s)\Diag(\E X).
\]
Combine with \cite[proof of Corollary 5.10]{CCCYZ25} for $s<1$.
Feasible edge pinning leaves a residual matching law, so both bounds apply.
\end{proof}
Thus $C(t)=\widehat C(\lambda(1-t))$ is an admissible
spectral-stability rate. Define
\begin{equation}\label{eq:c-lambda}
 I_\lambda=\int_0^\lambda\frac{\widehat C(s)-1}{s}\,ds,
 \qquad c_\lambda=e^{-I_\lambda}>0.
\end{equation}
Note that $I=I_\lambda<\infty$ in
\eqref{eq:integrated-stability}.
Moreover, $I_\lambda$ is nondecreasing and graph-independent, so
$c_\lambda^{-1}$ is uniformly bounded over $0<\lambda\le\lambda_{\max}$
for every fixed finite $\lambda_{\max}$.

\paragraph{Verifying the marginal conditions of $(\kappa, b)$.}

Every legal insertion has weight ratio $\lambda$, so bounded marginal
ratios hold with $\kappa=\max\{1,\lambda\}$.

For occupied-marginal lower bounds, let $H$ be any positively pinned
residual graph and $\Delta$ the maximum degree of $G$.
For $uv\in E(H)$, the one-vertex partition-function decomposition gives
\begin{align*}
 Z_H(\lambda)
 &=Z_{H-u}(\lambda)+\lambda\sum_{x\sim_H u}Z_{H-u-x}(\lambda)
 \le(1+\lambda d_H(u))Z_{H-u}(\lambda)\\
&\le(1+\lambda d_H(u))(1+\lambda d_{H-u}(v))Z_{H-u-v}(\lambda).
\end{align*}
Since $\mu_{H,\lambda}(uv)=\lambda Z_{H-u-v}(\lambda)/Z_H(\lambda)$,
\begin{equation*}
 \mu_{H,\lambda}(uv)
 \ge\frac{\lambda}
 {(1+\lambda d_H(u))(1+\lambda d_{H-u}(v))}
 \ge\frac{\lambda}{(1+\lambda\Delta)^2}.
\end{equation*}
Taking the last quantity as $b$ verifies \eqref{intro-eq:b}.
Degree enters only through this marginal bound, whose inverse costs
only a logarithmic factor in the criterion.

\paragraph{Poincar\'e inequality.}
We record the relaxation-time bound for the single-edge GD separately from the LSI consequence below.
This bound is used for approximate counting in Section~\ref{sec:preconditioned-js}.
\begin{proposition}[Poincar\'e inequality {\cite[Corollary 5 and Theorem 10]{ConcurrentHolant26}}]
\label{prop:matching-gap}
For every $\lambda>0$,
\begin{equation*}
 \gap(P_{\mathrm{GD}})^{-1}\le m(1+\lambda)c_\lambda^{-1}=O_\lambda(m).
\end{equation*}
The same bound holds on nontrivial residual matching instances.
\end{proposition}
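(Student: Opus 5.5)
The bound will follow from \cref{green-lem:known-gap} once its hypotheses (i)--(ii) of \cref{thm:intro-boolean} are checked for $\mu=\mu_{G,\lambda}$ with $N=m$. No occupied-marginal lower bound is needed, which matters for getting a degree-free $O_\lambda(m)$ constant.

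For the spectral-stability hypothesis, fix $0\le t<1$ and a positive pinning $\tau$. As noted above, $\nu_t^\tau$ is the matching law on $H=G-V(\tau)$ at activity $s=\lambda(1-t)$. \Cref{prop:holant-edge} then gives $\Cov_{\nu_t^\tau}(X)\preceq\widehat C(\lambda(1-t))D_{\nu_t^\tau}$, with deterministic coordinates omitted as Definition~\ref{def:intro-stability} allows. Hence $C(t)=\widehat C(\lambda(1-t))$ is an admissible rate. It is nonincreasing in $t$ because $\widehat C$ is nondecreasing, and it is at least $1$, since $\min\{7+48s,(1-s)^{-1}\}\ge1$ for $s<1$. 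Substituting $s=\lambda(1-t)$, so that $ds/s=dt/(1-t)$ up to sign, gives
\[
 I=\int_0^1\frac{C(t)-1}{1-t}\,dt=\int_0^\lambda\frac{\widehat C(s)-1}{s}\,ds=I_\lambda .
\]
This integral is finite because $\widehat C(s)-1=O(s)$ as $s\downarrow0$ and $\widehat C$ is bounded on $[0,\lambda]$. This is the one step requiring care: near $s=0$ the integrand stays bounded only because of the low-activity branch $(1-s)^{-1}$ of $\widehat C$.

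For the marginal-ratio hypothesis, every legal insertion multiplies the weight by $\lambda$, so \eqref{intro-eq:kappa} holds with ratio $\lambda$. Rather than using $\kappa=\max\{1,\lambda\}$, I would feed the sharp ratio into the comparison inside \cref{green-lem:known-gap}. Following Section~\ref{sec:abstract-hierarchy}, each legal pair $(S,S+e)$ has weight $\mu(S+e)/[m(1+\lambda)]$ in $\cE_{\mathrm{GD}}$, so $\cE_1\le m(1+\lambda)\cE_{\mathrm{GD}}$. Combined with \eqref{abs-eq:bdhier} at $j=0$, namely $\cE_1(f,f)\ge e^{-I}\Var_\mu(f)$, this yields
\[
 \gap(P_{\mathrm{GD}})^{-1}\le m(1+\lambda)e^{I_\lambda}=m(1+\lambda)c_\lambda^{-1}.
\]
The constant $c_\lambda^{-1}$ does not depend on the graph, so the bound is $O_\lambda(m)$.

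For nontrivial residual instances, feasible edge pinnings produce matching laws at the same activity on subgraphs, and \cref{prop:holant-edge} is stated after arbitrary feasible pinning. The same argument therefore applies with $m$ replaced by the number of unpinned edges, which is at most $m$.
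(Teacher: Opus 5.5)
Your proof is correct. It feeds in the same inputs as the paper: the rate $C(t)=\widehat C(\lambda(1-t))$ from \cref{prop:holant-edge}, with $C\ge1$ and $C$ nonincreasing, the substitution $s=\lambda(1-t)$ giving $I=I_\lambda$, and the exact insertion ratio $\lambda$ rather than $\kappa=\max\{1,\lambda\}$. The only difference is which gap theorem turns these inputs into the Poincar\'e inequality.

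\begin{itemize}
\item \emph{The paper's route.} It applies the external result \cite[Theorem 10, Eq.~(2.2)]{ConcurrentHolant26} as a black box.
\item \emph{Your route.} You obtain the inequality internally as the $j=0$ case of \eqref{abs-eq:bdhier}. The proof of that estimate in Section~\ref{sec:abstract-hierarchy} uses only spectral stability and integrability. You combine it with the comparison $\cE_1\le m(1+\lambda)\cE_{\mathrm{GD}}$, which is in fact an equality for matchings.
\end{itemize}

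Your route makes the proposition self-contained once \cref{prop:holant-edge} is granted. It also shows the Poincar\'e bound as the base case of the Dirichlet-form hierarchy; the paper itself remarks in the proof of \cref{thm:intro-boolean} that the $j=0$ bound gives $\gamma^{-1}\le A$. The paper's citation is shorter and credits the concurrent work for the relaxation bound.

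You were right not to quote \cref{green-lem:known-gap} verbatim. Under the hypotheses of \cref{thm:intro-boolean} it forces $\kappa\ge1$, so for $\lambda<1$ it gives only $2m\,e^{I_\lambda}$. Going through \eqref{abs-eq:bdhier} with the sharp ratio is what produces the stated constant $m(1+\lambda)c_\lambda^{-1}$.

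Your treatment of residual instances is also fine. Every feasible pinning, and every further positive pinning of its field-dynamics laws, is a matching law on a subgraph at activity $\lambda(1-t)$. Hence the same rate and $I_\lambda$ apply, with at most $m$ update coordinates.
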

\begin{proof}
Apply \cite[Theorem 10, Eq.~(2.2)]{ConcurrentHolant26} with the rate
$C(t)$ verified above and insertion ratio $\lambda$.
\end{proof}

\paragraph{Log--Sobolev inequality and mixing.}
The logarithmic factor from the marginal bound is controlled by
\begin{equation}\label{eq:A-lambda}
 A_\lambda=
 2\log\!\left(1+m(1+\lambda(n-1))
                      \sqrt{\frac{1+\lambda}{\lambda}}\right)
 =O\bigl(\log n+|\log\lambda|\bigr).
\end{equation}
Here $A_\lambda\ge1$. We obtain the following bounds with a universal
constant $C$.

\begin{corollary}[Log--Sobolev inequalities and mixing for matchings]
\label{thm:main}
For every $\lambda>0$, single-edge GD satisfies, with the LSI and MLSI conventions
\eqref{eq:lsi}--\eqref{eq:mlsi},
\begin{equation}\label{eq:matching-LS}
 \alpha_{\rm LS}^{-1},\ \alpha_{\rm mLS}^{-1}
 \le C m(1+\lambda)c_\lambda^{-1}A_\lambda.
\end{equation}
Writing $\mu_{\min}=\min_M\mu(M)$, for $0<\varepsilon\le1/2$,
\begin{equation}\label{eq:discrete-time}
 t_{\rm mix}(P_{\mathrm{GD}},\varepsilon)
 \le C m(1+\lambda)c_\lambda^{-1}
 \left[
 A_\lambda\log\log\frac{e^2}{\mu_{\min}}
 +\log(1/\varepsilon)
 \right].
\end{equation}
These bounds hold on residual graphs with at least one edge
after feasible edge pinnings; an edgeless residual law is a point mass.
At fixed activity, the inverse constants are $O_\lambda(m\log n)$
and the mixing bound is
$O_\lambda(m[\log^2 n+\log(1/\varepsilon)])$.
\end{corollary}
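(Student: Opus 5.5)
The plan is to apply Proposition~\ref{abs-prop:degree-to-functional} directly, feeding it Lemma~\ref{abs-thm:hierarchy} and Lemma~\ref{abs-lem:christoffel-rho-b} with the matching parameters already verified in this section, rather than quoting the packaged statement of Theorem~\ref{thm:intro-boolean}. This lets us keep explicit factors. We take $N=m$ and $\kappa=\max\{1,\lambda\}$. The rate is $C(t)=\widehat C(\lambda(1-t))$ from Proposition~\ref{prop:holant-edge}. Substituting $s=\lambda(1-t)$ turns $\int_0^1 (C(t)-1)/(1-t)\,dt$ into $I_\lambda$, so $e^{I}=c_\lambda^{-1}$.

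\emph{First hypothesis.} The Dirichlet-form constant should be taken from the insertion--deletion comparison with the actual insertion ratio $\lambda$, not $\kappa$. Each legal pair has GD weight $\mu(S+e)/[m(1+\lambda)]$, so $\cE_1\le m(1+\lambda)\cE_{\mathrm{GD}}$. Together with \eqref{abs-eq:bdhier} this gives $A=m(1+\lambda)c_\lambda^{-1}$. Here \eqref{abs-eq:bdhier} needs only stability and integrability.

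\emph{Second hypothesis.} For the norm bound, I would rerun the difference estimate \eqref{abs-eq:difference-L2} with the ratio $\lambda$ in place of $\kappa$. This gives the factor $(1+\lambda)/b$. I would use the sharper marginal bound $b_H(uv)\ge\lambda/[(1+\lambda d_H(u))(1+\lambda d_{H-u}(v))]$ with degrees at most $n-1$ and $n-2$. Then $(1+\lambda)/b\le (1+\lambda(n-1))^2(1+\lambda)/\lambda$, so $\Lambda=2\log(1+m\sqrt{(1+\lambda)/b})\le A_\lambda$ as defined in \eqref{eq:A-lambda}. The remaining condition $\Lambda\ge1$ holds because $A_\lambda\ge1$. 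Proposition~\ref{abs-prop:degree-to-functional} then yields the LSI bound in \eqref{eq:matching-LS}. The MLSI bound follows from $\alpha_{\rm mLS}\ge4\alpha_{\rm LS}$.

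\emph{Mixing.} The $j=0$ case gives $\gamma^{-1}\le A$, consistent with Proposition~\ref{prop:matching-gap}. GD has nonnegative spectrum, so \eqref{green-eq:ls-mixing} gives \eqref{eq:discrete-time}.

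\emph{Residual instances.} A feasible edge pinning leaves a matching law on a subgraph with at most $m$ edges and $n$ vertices, and positive pinnings of it are again such laws. Hence every input is monotone in $m$ and $n$, and the same constants apply. An edgeless residual law is a point mass.

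\emph{Fixed-activity asymptotics.} Here $A_\lambda=O_\lambda(\log n)$. Also $\mu_{\min}\ge (\min\{1,\lambda\})^{n/2}/Z_G(\lambda)$ and $Z_G\le(1+\lambda)^m$, so $\log(1/\mu_{\min})=O_\lambda(m+n)$ and $\log\log(e^2/\mu_{\min})=O_\lambda(\log n)$. This gives $O_\lambda(m[\log^2 n+\log(1/\varepsilon)])$.

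The main obstacle is bookkeeping rather than new ideas. The statement uses $1+\lambda$ rather than $\kappa$, and an $n$-based $A_\lambda$ rather than the $\log(N\kappa/b)$ form. So I must check that Lemma~\ref{abs-lem:christoffel-rho-b} and the GD comparison genuinely go through with ratio exactly $\lambda$ even when $\lambda<1$. This works because every legal insertion in a matching law has ratio exactly $\lambda$.
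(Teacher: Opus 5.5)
Your proposal is correct and follows the paper's proof in substance. The paper simply cites Theorem~\ref{thm:intro-boolean} with $N=m$, $\kappa=\max\{1,\lambda\}$, $b=\lambda/(1+\lambda\Delta)^2$ and $I=I_\lambda$, absorbing constants via $\kappa\le1+\lambda$ and $\log(m\kappa/b)\le A_\lambda$. You instead unpack that theorem into Proposition~\ref{abs-prop:degree-to-functional} plus Lemmas~\ref{abs-thm:hierarchy} and~\ref{abs-lem:christoffel-rho-b} with the exact ratio $\lambda$, which reproduces $A_\lambda$ exactly. Your worry about $\lambda<1$ is harmless but unnecessary: those lemmas' proofs never use $\kappa\ge1$, and in any case $\max\{1,\lambda\}\le1+\lambda$. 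One small slip is that $\Lambda\le A_\lambda$ does not give $\Lambda\ge1$; feed $A_\lambda$ itself as the parameter, which is allowed because the norm bound is monotone in $\Lambda$.
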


\begin{proof}
The verified parameters are
\[
 N=m,\qquad \kappa=\max\{1,\lambda\},\qquad
 b=\frac{\lambda}{(1+\lambda\Delta)^2},\qquad I=I_\lambda.
\]
Since $\kappa\le1+\lambda$ and $\log(m\kappa/b)\le A_\lambda$,
\cref{thm:intro-boolean} gives the LSI bound in
\eqref{eq:matching-LS} and the mixing bound \eqref{eq:discrete-time};
the MLSI bound follows from \eqref{eq:mlsi} and
$(a-b)(\log a-\log b)\ge4(\sqrt a-\sqrt b)^2$ for $a,b>0$.

Since $Z_G(\lambda)\le(1+\lambda)^m$ and every matching has weight
at least $\min\{1,\lambda^m\}$,
\begin{equation}\label{eq:matching-min-mass}
 \log\frac1{\mu_{\min}}
 \le m\log\bigl(1+\max\{\lambda,\lambda^{-1}\}\bigr).
\end{equation}
At fixed $\lambda$, this is $O_\lambda(m)$; using $m\le n^2/2$ and
$A_\lambda=O_\lambda(\log n)$ gives the claimed mixing time.
Feasible edge pinnings leave vertex- and edge-deleted simple graphs,
so the same applies.
\end{proof}

\paragraph{Near-linear sampling time.}
Starting from the empty matching and storing each vertex's matched
neighbor (or an unmatched flag) gives $O(1)$ work per GD update and
total sampling work $O(m+n+t_{\rm mix})$, proving
\cref{thm:intro-sequential}.

\section{Work-efficient parallel simulation of Glauber dynamics}
\label{sec:parallel-simulation}\label{sec:parallel-algorithm}

We prove \cref{thm:intro-parallel} by evaluating the same GD updates
in parallel. Section~\ref{sec:exact-parallel} establishes correctness and
the dependency structure, Section~\ref{sec:parallel-rounds} bounds the
number of parallel rounds, and Section~\ref{sec:parallel-complexity}
implements them with near-linear work.

Put $p=\lambda/(1+\lambda)$ and let $\Delta$ be the maximum degree;
assume $m\ge1$.

\paragraph{Parallel simulation algorithm.}
Given an initial matching and an integer $N\ge1$, generate $N$ \emph{update instructions}
$(e_s,B_s)$ with mutually independent uniform edge choices $e_s$ and
Bernoulli$(p)$ coins $B_s$. Instruction $s$ deletes $e_s$ if present,
then inserts it if $B_s=1$ and both endpoints are unmatched.
Process $B=\lceil N/m\rceil$ consecutive batches of at most
$m$ instructions, using each output matching as the next input.
Within each batch, index its $T$ updates by integers $s=1,\ldots,T$;
time $t\in\{0,\ldots,T\}$ denotes the state after $t$ updates.
\begin{enumerate}[label=(\arabic*)]
\item \emph{Records.} A record represents a proposed occupation interval;
accepting it means that the edge is occupied throughout that interval.
Sort instructions by edge and time. Each $B_s=1$ creates
$(e_s,[s,d))$, where
$d=\min\bigl(\{t\in\{s+1,\ldots,T\}:e_t=e_s\}\cup\{T+1\}\bigr)$.
Each edge $e$ of the batch's input matching creates $(e,[0,d))$, where
$d=\min\bigl(\{t\in\{1,\ldots,T\}:e_t=e\}\cup\{T+1\}\bigr)$.
Coin-zero updates create no records but still end previous records of
their edge. Thus records of the same edge have disjoint intervals.
\item \emph{Decisions.} Let $\mathcal R$ be the set of all records and
$\mathcal R_0\subseteq\mathcal R$ those starting at time zero. Write $x\sim y$ if the
records \emph{conflict}: their edges share an endpoint and their intervals
overlap. Write $t_x$ for the start time of $x$.
Accept every record in $\mathcal R_0$, and let
\[
 \mathcal C_{\rm init}:=\{x\in\mathcal R\setminus\mathcal R_0:
        x\sim y\text{ for some }y\in\mathcal R_0\},
 \qquad
 U_0:=\mathcal R\setminus(\mathcal R_0\cup\mathcal C_{\rm init}).
\]
Reject the records in $\mathcal C_{\rm init}$. For $k\ge0$, while the
\emph{active} (unresolved) set $U_k$ is nonempty, define the \emph{ready} records by
\[
 \mathcal Q_k:=\{x\in U_k:
       \text{there is no }y\in U_k\text{ with }t_y<t_x\text{ and }y\sim x\},
\]
and
\[
 \mathcal C_k:=\{y\in U_k\setminus\mathcal Q_k:
       y\sim x\text{ for some }x\in\mathcal Q_k\},
 \qquad
 U_{k+1}:=U_k\setminus(\mathcal Q_k\cup\mathcal C_k).
\]
Simultaneously accept $\mathcal Q_k$ and reject $\mathcal C_k$.
Readiness and distinct noninitial start times ensure that records in
$\mathcal Q_k$ do not conflict. These iterations are the \emph{peeling rounds},
not GD time steps.
\item \emph{Output.} Return the edges of all accepted records whose intervals contain $T$.
\end{enumerate}

\begin{theorem}[Exact discrete simulation]\label{thm:simulation}
For $m\ge1$, an integer $N\ge1$, and $0<\eta\le1/4$, the batch
algorithm exactly simulates $N$ random-scan single-edge GD updates from any
given matching. Let $r=\nu(G)$ be the maximum matching size of $G$
and put $S=(m+n+N+10)/\eta$. Its work is
$(m+n+N)\log^{O(1)}S$ on every run, and with probability at
least $1-\eta$ its depth is
\begin{equation}\label{eq:simulation-depth}
 \left(1+\frac Nm\right)
 \left(1+\min\{p\Delta,m^{1/3},\sqrt{p r}\}\right)\log^{O(1)}S.
\end{equation}
Here random edge choices and Bernoulli coins are exact. The algorithm
always runs to completion; $\eta$ bounds its depth tail, not its
sampling error.
\end{theorem}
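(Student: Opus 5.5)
The proof has three parts: exactness, work, and the depth tail. I will prove them in that order, treating one batch of $T\le m$ instructions at a time. Across batches the input is simply the previous output, so depth multiplies by $B=\lceil N/m\rceil\le1+N/m$.

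\emph{Exactness.} By induction on the start time I will show that a record is accepted iff the corresponding sequential GD run occupies its edge throughout its interval. Records of one edge have disjoint intervals, so the sequential matching at time $t$ is the set of edges whose unique record covering $t$ is accepted. A noninitial record $x=(e_s,[s,d))$ is accepted sequentially iff no incident edge is occupied just before $s$. Occupancy just before $s$ means some accepted record $y$ on an incident edge has $t_y<s$ and an interval containing $s-1$; since every interval extends at least to $s$ in that case, this is exactly $y\sim x$ with $t_y<t_x$. (Equal start times occur only at $0$, and the initial matching has no internal conflicts.) The peeling rule implements this. A record becomes ready only after all earlier conflicting records are resolved. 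If any of them was accepted, $x$ was already placed in some $\mathcal C_k$. So $x$ enters $\mathcal Q_k$ iff all earlier conflicting records were rejected, which is the sequential rule. The initial rejections $\mathcal C_{\rm init}$ follow the same logic. Distinct noninitial start times make each $\mathcal Q_k$ conflict-free, so simultaneous acceptance is consistent.

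\emph{Work.} The dense conflict graph must never be materialised. For each vertex $v$, I will keep the records on edges incident to $v$ as a family of intervals in an interval or segment tree. Then two queries become range queries over the active records at each endpoint of $x$: ``does some active record conflict with $x$ and start before $t_x$'' and ``does some ready record conflict with $x$''. Each query costs $\log^{O(1)}S$. A record is charged once per round while it is active, and that is too much work if the number of rounds is large. I will fix this by reprocessing only records whose endpoint trees changed in the previous round. Each record is removed from its two endpoint trees exactly once, so the total work becomes $\sum_x\deg$-free $\cdot\log^{O(1)}$ per removal. This amortisation, charging work to deletions rather than to rounds, is a step I will have to check carefully. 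Sorting and building the trees costs $(m+n+T)\log^{O(1)}$.

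\emph{Depth, the main obstacle.} I will bound the number of peeling rounds by the length of a dependency witness. If $x$ is still unresolved at round $k$, then there is an earlier conflicting record unresolved at round $k-1$, and iterating gives a chain of length $k$ with decreasing start times. Following the overview, I will compress the chain into the alternating witness of accepted occupations linked by rejected proposals. Consecutive accepted records in the witness are on disjoint edges, and each is joined to the next through at most four possible connecting edges.

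I will bound the expected number of witnesses of length $\ell$ in three ways. First, each link costs a factor of about $p\Delta$ for the choice of an adjacent edge times the conditional probability of a proposal. Second, counting pairs of accepted edges by the maximum matching structure gives a factor of $pr$ per two links, which yields $\sqrt{pr}$. Third, counting intervals per batch of $m$ instructions yields $m^{1/3}$.

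The delicate point is conditioning: the conflict graph and the priorities share randomness. I plan to handle it with the no-op padding device. Reveal the accepted trajectory first, then note that the rejected proposals are fresh coins of probability $\le p$ placed on uniformly chosen edges, so their probabilities are bounded after the accepted trajectory is revealed.

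A first-moment union bound then shows that a witness of length exceeding $L\cdot\log^{O(1)}S$ exists with probability at most $\eta/B$, where $L$ is the relevant minimum. A union bound over batches finishes the proof, with each round costing $\log^{O(1)}S$ depth. I expect the $m^{1/3}$ and $\sqrt{pr}$ counts, with the correct conditioning, to be the hardest part. If they fail I would at least secure the $p\Delta$ bound.
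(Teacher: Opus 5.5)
Your exactness argument is fine and is essentially the paper's interval-representation lemma. The depth and work parts, however, each miss the idea that makes the stated bound true.

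\textbf{Depth.} Your claim that consecutive accepted records of the witness lie on disjoint edges is false. The records $x_i$ and $x_{i+1}$ can share an endpoint, with $x_{i+1}$ starting after $x_i$ ends. The rejected link $z_i$ may then be any of up to $2\Delta$ edges, so the ``four connectors'' count is unavailable. The paper isolates these steps by a descendant bound. In the DAG $F$ of successive accepted occupations at each vertex, $|\mathcal D(x)|$ has an exponential tail, by domination with a pure-birth process over the $2T\le 2m$ padded slots. Hence, with high probability, $|\mathcal D(x)|\le L=O(\log S)$, every vertex carries at most $L$ accepted records, $h=|\mathcal A|\le 2rL$, and $R\le L(q+1)$, where $q$ counts only genuine connector steps.

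A second gap is in your first-moment count. Per-link factors such as ``$p\Delta$'' give $(Cp\Delta)^q$, which is useful only when $p\Delta$ is below a constant. They cannot give depth $\widetilde O(p\Delta)$. What is missing is the gain from time ordering. Conditional on $\mathcal F_{\rm vis}$ (which must also reveal renewals), the connector events use disjoint slot windows, each with probability at most $\frac{4p}{m}(t_{v_i}-t_{u_i})$. AM--GM over windows of total length at most $2T$ then gives $(8p/q)^q$. This $q^{-q}$ turns $2\Delta L$ targets per step into $(16p\Delta L^2/q)^q$, and $\binom hq$ increasing target lists into $(8epLh/q^2)^q$. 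Together with $h\le 2rL$, these give the thresholds $p\Delta$ and $\sqrt{pr}$ up to logarithms.

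The $m^{1/3}$ bound needs machinery absent from your plan altogether:
\begin{itemize}
\item the total successor-path-length lemma $\Prb(\mathcal B_q)\le (Cm/q^3)^q$;
\item contraction of each connector with its following $F$-path into an edge of a graph $J$ with at most $2mL^3$ edges;
\item the weighted increasing-path bound $\frac1{q!}(M/k)^k$ with $k=\lceil (q+1)/2\rceil$, obtained by writing gaps as integrals over cuts and noting that alternate path edges determine all vertices.
\end{itemize}

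\textbf{Work.} The bound must hold on every run, and reprocessing every record whose endpoint tree changed does not achieve it. A single removal at a vertex $v$ forces rechecks of all active records at $v$. So $k$ mutually overlapping records at $v$, rejected one per round through their other endpoints, already cost $\Theta(k^2)$ rechecks, and these are not charged to distinct removals. Your pull-style query ``does some ready record conflict with $x$'' has the same problem.

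The paper instead pushes information from removals. For each active incidence $I=[s,d)$ at $v$, it maintains the number of active incidences $J$ at $v$ with $s_J<s<d_J$. Removing $J$ is one lazy range decrement on starts in $(s_J,d_J)$, and newly attained zeros are reported through subtree minima, so each incidence is reported ready at most once.

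For rejections, at each vertex the intervals $[s_j,d_j)$ accepted in a round are sorted. The $j$th interval queries active intervals with start in $[d_{j-1},d_j)$ and end exceeding $s_j$, pruned by subtree maximum ends. The start ranges are disjoint, so each report is charged to the removal of the reported record. This is what makes the total work $\widetilde O(P)$ independent of the number of rounds.
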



\subsection{Correctness and dependency structure}\label{sec:exact-parallel}

Fix a batch of $1\le T\le m$ updates. We first prove that peeling reproduces
GD, then control successive accepted occupations and the rejected updates
that link them. The batch length ensures at most one expected update per
edge, which limits the propagation of dependencies.

\begin{lemma}[Interval representation]\label{lem:interval-exact}
Greedily accepting records in increasing start order, subject to having
no conflict with previously accepted records, reproduces every state
in the batch. Its terminal matching consists of the accepted records
whose intervals contain $T$.
\end{lemma}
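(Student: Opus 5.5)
The plan is to prove, by induction on time $t=0,1,\ldots,T$, an invariant connecting the greedy acceptance to the GD trajectory. Let $M_t$ be the matching produced by the sequential GD after the first $t$ instructions of the batch, with $M_0$ the input matching. Process the records in increasing order of start time: first the records of $\mathcal R_0$, which all start at $0$, and then the noninitial records, whose start times are distinct. The invariant to maintain is
\[
 M_t=\{e:\ \text{some accepted record of }e\text{ with start time}\le t\text{ has an interval containing }t\}.
\]

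\emph{Base case.} The records in $\mathcal R_0$ are exactly the edges of $M_0$. Their edges are pairwise disjoint because $M_0$ is a matching, so these records do not conflict with one another and greedy accepts all of them. At time $0$ their intervals $[0,d)$ contain $0$, and no other record starts by time $0$. This gives the invariant at $t=0$.

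\emph{Inductive step, from $t-1$ to $t$.} Update $t$ acts only on the edge $e_t$. Any record of $e_t$ alive at time $t-1$ has its interval end at $d=t$, by the definition of $d$ as the next update time of that edge. So the occupation of $e_t$ ends at $t$, which matches the deletion in GD. No record of any other edge ends at $t$, since an interval ends only at the next update of its own edge. Hence every other edge keeps its status, as it does in GD.

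It remains to handle a possible insertion. If $B_t=0$, there is no new record and GD leaves $e_t$ absent, so the invariant holds. If $B_t=1$, the new record $x=(e_t,[t,d))$ is considered by greedy exactly at its start time $t$, and all records considered earlier have smaller start times.

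\emph{Key point: the acceptance test is local in time.} For an earlier record $y$ of an incident edge, $y$ conflicts with $x$ precisely when $y$'s interval overlaps $[t,d)$. Since $t_y<t$, this happens exactly when $y$'s interval contains $t$. The argument is as follows. If $y$'s interval ends at or before $t$, there is no overlap. If it ends after $t$, it contains $t$ and therefore overlaps $[t,d)$. Because $e_t$ itself was just updated, no earlier record of a neighbouring edge ends exactly at $t$. Therefore $y$'s interval contains $t$ if and only if it contains $t-1$, which by the inductive hypothesis means $y$'s edge is in $M_{t-1}$ (and in $M_t$).

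Thus greedy accepts $x$ exactly when both endpoints of $e_t$ are unmatched in $M_{t-1}\setminus\{e_t\}$ and $B_t=1$. This is precisely the GD insertion rule, so the invariant holds at $t$.

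\emph{Records started later do not interfere.} A record starting after $t$ cannot affect the status at time $t$, and its acceptance is decided later using only records that start earlier. So the status at each time is fixed once the records starting by that time are processed. Taking $t=T$ shows that the terminal matching is the set of edges of accepted records whose intervals contain $T$, since the $d\le T+1$ convention makes intervals that are never closed reach $T$.

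The step I expect to be the most delicate is the equivalence between interval overlap and "occupied at $t-1$". It relies on the half-open convention: an interval ending at $d=t$ does not meet $[t,\cdot)$. It also needs the observation that only the edge $e_t$ changes at time $t$, so distinct edges' records cannot end at the same moment a conflicting record begins.
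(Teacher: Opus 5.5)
Your proof is correct and follows the same route as the paper: induction on the update index. Old records of $e_t$ expire at $t$, and the new record is accepted exactly when no earlier accepted record of an incident edge covers time $t$, which is the GD insertion rule. Your explicit check that no neighbouring record ends at $t$, so coverage at $t$ equals coverage at $t-1$, just spells out what the paper compresses into one line.

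The paper's proof adds one further sentence: the parallel peeling rounds agree with this greedy rule, since ready records are pairwise nonconflicting and must be accepted, and removing them together with their conflicts preserves all remaining decisions. Theorem~\ref{thm:simulation} relies on that step. Your argument covers the lemma as literally stated but not that extension.
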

\begin{proof}
At update $s$, any previous record of $e_s$ expires. The edge is occupied
after the update precisely if $B_s=1$ and neither endpoint is covered
at time $s$ by an earlier accepted record. Induction on $s$ gives the GD trajectory.
Parallel peeling agrees with this greedy rule: ready records are
pairwise nonconflicting and must be accepted; removing them and their
conflicts preserves all remaining decisions.
\end{proof}

\paragraph{Padding for conditional independence.}
We will reveal accepted occupations while retaining control of rejected
updates. For this analysis only, generate $2T$ independent slots, each
with probability $1/2$ performing no update (a \emph{no-op}) and otherwise
performing an ordinary GD update. The no-op option keeps conditional
probabilities controlled after accepted occupations are revealed
(\cref{lem:visible}). Conditioning on exactly $T$ ordinary updates recovers
the original batch with uniformly placed no-ops. Since
\begin{equation*}
 \Prb\{\operatorname{Bin}(2T,1/2)=T\}\ge(2T+1)^{-1},
\end{equation*}
an unconditioned failure probability $\zeta:=\eta/[B(2T+1)]$ becomes
at most $\eta/B$ after conditioning. No-ops change neither conflicts nor
peeling rounds and are never generated by the algorithm. Until the
assembly in Section~\ref{sec:parallel-complexity}, probabilities refer
to the \emph{unconditioned} padded experiment, and record times are slot
indices $0,\ldots,2T$, with terminal endpoint $2T+1$.

\paragraph{Controlling successive accepted occupations.}
Let $\mathcal A$ be the accepted records, including initial ones, and
let $F$ have vertex set $\mathcal A$. At each vertex of $G$, direct an
arc from each incident accepted record to the next one chronologically,
merging repeated arcs. This gives a directed acyclic graph with in- and
out-degree at most two. Write $\mathcal D(x)$ for the records reachable
from $x$, including $x$ itself.

Recall $r=\nu(G)$ and $S=(m+n+N+10)/\eta$.
\begin{lemma}[Descendant bound]\label{lem:descendants}
Conditionally on the slot outcomes through an accepted record $x$'s
start time $t_x$ (time zero for initial records), for every integer $k\ge1$,
\begin{equation}\label{eq:descendant-tail}
 \Prb\{|\mathcal D(x)|\ge k\}\le C e^{-ck}
\end{equation}
for absolute $c,C>0$. Consequently, for $L=\lceil C\log S\rceil$
with a sufficiently large absolute constant, except with probability
at most $\zeta/4$, every accepted record has at most $L$ descendants, every
vertex of $G$ is incident to at most $L$ accepted records, and
\begin{equation}\label{eq:accepted-count}
 h:=|\mathcal A|\le2rL.
\end{equation}
\end{lemma}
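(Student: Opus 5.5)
We will bound the descendant set by an exploration (branching) argument run forward in slot time from $t_x$. The DAG $F$ has out-degree at most two: from an accepted record $x=(uv,[t_x,d_x))$, the arcs go to the next accepted record at $u$ and the next one at $v$. The plan is to dominate $|\mathcal D(x)|$ by the total progeny of a subcritical Galton--Watson-type process. Reveal slots in chronological order after $t_x$. Each accepted record in $\mathcal D(x)$ is \emph{born} when it is accepted. It then spawns at most two children, one at each endpoint, namely the next accepted records there. The key quantitative input is that a record's interval is short. Its end time $d$ is the next slot choosing the same edge, and a slot is an ordinary update on a fixed edge with probability $1/(2m)$. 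So over a batch of $2T\le 2m$ slots, the interval has length of order $m$ at most, and it is geometric-like.

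The descendants can also be grouped by generation along time. A record $y$ at vertex $w$ is a child of $x$ only if $y$ starts after $x$'s interval ends, so no child can start before $d_x$. Since a batch has only $2T$ slots and each edge is updated $O(1)$ times in expectation, the main effect to control is the number of distinct accepted records reachable. For this we will use the following observation: each arc of $F$ corresponds to a new acceptance at an endpoint of a previously accepted edge.

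The counting step uses the no-op padding. Conditionally on the history up to time $t$, the next slot is a no-op with probability $1/2$. Otherwise it picks a uniform edge and accepts with probability at most $p$. Fix a \emph{frontier} of vertices at which the next accepted record is a descendant. An edge incident to a frontier vertex is chosen at the next ordinary update with probability at most $2\Delta/m$, which is not small. So vertex counting alone fails. Instead we will argue per frontier vertex $w$. After the current record at $w$ expires, at most one successor at $w$ is a child; the successor is not counted again unless it is itself accepted.

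The descendant set is therefore contained in the union of at most two chains, the ``right-most paths'' from $x$. Each step of a chain requires the current record to expire within the remaining batch time and a new acceptance at that vertex. Expiry of a fixed edge's record before slot $2T+1$ has probability at most $1-(1-1/(2m))^{2T}\le 1-e^{-1}$ (using $T\le m$), and this holds freshly for each new record by the memorylessness of slot choices. Each chain step thus costs a factor at most $1-e^{-1}<1$. A branching bound with branching factor $2$ would then exceed $1$ in mean, and this is the main obstacle. To fix it, we will exploit that the two branches at a record $y=(wz)$ are correlated: both are triggered only after the single expiry event of $y$, and each new record must itself expire. Grouping two generations gives mean offspring per record at most $2(1-e^{-1})\cdot\Pr[\text{accepted}]$. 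If that is still not below $1$, we will instead use the fact that total batch time is $2T+1\le 2m+1$, so the number of expiries of distinct edges in a descendant chain is stochastically dominated by a Poisson-like count with mean $O(1)$ per edge, giving exponential tails. Concretely, each generation consumes an independent $\mathrm{Geom}(1/(2m))$ waiting time, so $k$ generations fit in $2T$ slots with probability at most $\Pr[\mathrm{Gamma}\text{-type sum}\le 2m]\le e^{-ck}\cdot 2^k$ adjustments. Generations are at most logarithmic, and a sharper chain analysis gives $Ce^{-ck}$.

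Given \eqref{eq:descendant-tail}, the consequences follow. For the first, union bound over records: the number of records is at most $m+2T\le 3m$ (initial records plus coin-one slots), and $L=\lceil C\log S\rceil$ makes each failure probability at most $\zeta/(12m)$. For the second, the accepted records incident to a vertex $w$ form a chain in $F$ starting from the first one, so they lie in a single $\mathcal D(\cdot)$ and number at most $L$. For the third, every accepted record is a descendant of an initial record or of a ``root'' whose parents are absent. Accepted records alive at any single time form a matching of size at most $r$. The first accepted records at the vertices of one maximum-matching-size antichain then generate everything via at most two chains per matched edge, giving $h\le 2rL$; making this root count rigorous is the last step to check.
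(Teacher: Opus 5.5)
\textbf{Gap: the tail bound \eqref{eq:descendant-tail} is never established.}
Your claim that $\mathcal D(x)$ lies in the union of at most two chains is false. A child $y=(uw)$ of $x=(uv)$ has its own children at both $u$ and $w$, so $\mathcal D(x)$ is a genuinely branching DAG. Once branching is allowed, your per-generation estimate (two children, expiry probability at most $1-e^{-1}$) is supercritical, as you yourself note. The proposed repairs do not close this gap:
\begin{itemize}[leftmargin=*]
\item The acceptance factor can be close to one, since the next acceptance at a vertex may use any of its $\Delta$ incident edges.
\item The uniform bound $1-e^{-1}$ charges every generation the full batch length, which discards the fact that all generations share the same $2T\le2m$ slots.
\item Bounding the depth by Gamma-type waiting times and then using $|\mathcal D(x)|\le2^{\text{depth}+1}$ gives at best a tail of order $k/\lfloor\log_2k\rfloor!$. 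At $k=L=O(\log S)$ this is not even polynomially small in $S$, so the union bound over $3m$ starting labels fails.
\end{itemize}

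\textbf{The missing idea: spend the time budget, not generations.}
Explore forward from $t_x$ with live tracked records and waiting marks. When a live tracked edge is updated (renewals included), replace it by a mark at each endpoint. The next accepted record at that vertex consumes the mark and becomes live; marks reaching the same record merge. Consuming a mark never increases the population, so you pay nothing for the acceptance step at a high-degree vertex. Let $U$ count updates of live tracked edges. Then:
\begin{itemize}[leftmargin=*]
\item there are at most $1+U$ live records and marks, and at most $1+2U$ discovered records;
\item each slot increments $U$ with conditional probability at most $(1+U)/(2m)$, because each live edge is chosen with probability $1/(2m)$, independently of $\Delta$.
\end{itemize}
Hence $1+U$ is dominated by the chain $Y_0=1$ that moves $Y\to Y+1$ with probability $Y/(2m)$, run for $2T\le2m$ slots, and $|\mathcal D(x)|\le2Y_{2T}-1$. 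The rising factorial satisfies $\E[\rise{Y_{t+1}}{a}\mid Y_t]=\rise{Y_t}{a}(1+a/(2m))$, so $\E\rise{Y_{2T}}{a}\le a!e^a$. Markov's inequality then gives $\Prb(Y_{2T}\ge k)\le(ea/k)^a$, and taking $a=\lfloor k/(2e)\rfloor$ yields the exponential tail. Apply this conditionally on the past at each start label before taking the union bound.

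\textbf{The consequences.}
Your second consequence (records at one vertex form an $F$-path) matches the paper. The third is incomplete as written: the fact that records alive at a single time form a matching is not what is needed. Your root idea can be completed as follows. A record with no in-arc in $F$ is the first accepted record at both of its endpoints, so distinct roots are vertex-disjoint and there are at most $r$ of them. Since $F$ is acyclic, every accepted record descends from some root, which gives $h\le rL\le2rL$. The paper argues differently: it charges each accepted record to an endpoint in the vertex cover formed by the $2r$ endpoints of a maximum matching, and uses the per-vertex bound $L$.
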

\begin{proof}
Explore descendants forward from $t_x$, initially tracking only $x$
with no marks. A tracked record is \emph{live} while its interval is current.
When its edge is updated, replace it by a waiting mark at each endpoint.
The next accepted record there consumes the mark and becomes live;
marks reaching the same record merge. An update of an occupied edge
with coin one immediately starts a new record, a \emph{renewal}, and
is included in this rule.

Let $U$ count updates to tracked live edges. Each such update replaces
one live record by at most two marks. Thus there are at most $1+U$
live records and marks combined, and at most $1+2U$ records have been
discovered. The next slot updates a tracked live edge with probability
at most $(1+U)/(2m)$. Consequently, $1+U$ is stochastically dominated
by the process $Y_t$, $0\le t\le2T$, starting at $Y_0=1$, which increments
from $y$ to $y+1$ with probability $y/(2m)$ and otherwise stays at $y$.
There are at most $2T\le2m$ remaining slots, so
$|\mathcal D(x)|$ is stochastically dominated by $2Y_{2T}-1$.
Write $\rise{x}{a}=x(x+1)\cdots(x+a-1)$ for the rising factorial.
For every integer $a\ge1$,
\[
 \E[\rise{Y_{t+1}}a\mid Y_t]
 =\rise{Y_t}a\left(1+\frac a{2m}\right),\qquad
 \E\rise{Y_{2T}}a\le a!e^a.
\]
Thus $\Prb(Y_{2T}\ge k)\le(ea/k)^a$; taking
$a=\lfloor k/(2e)\rfloor$ gives an exponential tail. The domination above
then proves \eqref{eq:descendant-tail}, adjusting constants for small $k$.
Union-bound over the at most $2T+r\le3m$ possible starting labels
(slots and initial records), conditioning only on the past at each start.
Records incident to one vertex of $G$ form an $F$-path and hence number
at most $L$. The endpoints of a maximum matching form a vertex cover
of size $2r$; charging each record to one of these endpoints gives
\eqref{eq:accepted-count}.
\end{proof}

\begin{lemma}[Conditional rejected-update bound]\label{lem:visible}
Let $\mathcal F_{\rm vis}$ be the information generated by the matching
after every padded slot and the slot and edge identities of all renewals.
It determines the accepted records, their start and end times, and $F$.
Call a slot \emph{unrevealed} if it is neither a state-changing update
nor a renewal. Conditional on $\mathcal F_{\rm vis}$, these slots have
independent outcomes, and for every edge $e$ and unrevealed slot,
\begin{equation*}
 \Prb\{\text{rejected coin-one update of }e\mid\mathcal F_{\rm vis}\}
 \le p/m.
\end{equation*}
\end{lemma}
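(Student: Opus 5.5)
The proof will condition on the whole matching trajectory and verify that, given it, each unrevealed slot's outcome only has to be consistent with ``no state change and no renewal'' at that slot.

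\emph{Step 1: the trajectory determines the accepted records.} The matching after every padded slot determines each state-changing update: which edge was inserted or deleted, and when. Renewals are the only updates of an occupied edge that leave the matching unchanged while still starting a new record, and $\mathcal F_{\rm vis}$ reveals their slots and edges explicitly. So every accepted record is known: it starts at an insertion or renewal (or at time zero), and it ends at the next update of its edge. That next update is either a deletion, which is visible, or a renewal, which is revealed. Hence $\mathcal F_{\rm vis}$ determines $\mathcal A$, all start and end times, and $F$.

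\emph{Step 2: factorization over unrevealed slots.} The padded slots are independent a priori. Each slot is a no-op with probability $1/2$, and otherwise an update $(e,B)$ with $e$ uniform and $B\sim\mathrm{Bernoulli}(p)$. Fix the trajectory $M_0,M_1,\dots,M_{2T}$. The event $\mathcal F_{\rm vis}$ is an intersection, over slots $s$, of events that depend only on slot $s$'s outcome and on the deterministic $M_{s-1}$:
\begin{itemize}
\item at a state-changing slot, the outcome must produce $M_s$ from $M_{s-1}$;
\item at a revealed renewal, the outcome must be the given edge with coin one;
\item at every other slot, the outcome must leave $M_{s-1}$ unchanged and must not be a renewal.
\end{itemize}
Since the joint law is a product and the conditioning event is a product set, the conditional law is again a product. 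At each unrevealed slot it is the prior restricted to the allowed set $\mathcal N_s$, where $\mathcal N_s$ is determined by $M_{s-1}$. This gives conditional independence.

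\emph{Step 3: the probability bound.} This is the main point, and the padding is what makes it work. The allowed set $\mathcal N_s$ always contains the no-op, so
\[
 \Prb(\mathcal N_s)\ge \tfrac12 .
\]
A rejected coin-one update of $e$ at slot $s$ lies in $\mathcal N_s$ only if $e\notin M_{s-1}$ and $e$ is blocked. Its prior probability is at most $\tfrac12\cdot\tfrac1m\cdot p$. Dividing by $\Prb(\mathcal N_s)\ge\tfrac12$ gives the conditional bound $p/m$.

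One check is needed in the unconditioned padded experiment: the slot structure must be unaffected by the later conditioning on exactly $T$ updates. It is, because Step 2 never uses that conditioning.
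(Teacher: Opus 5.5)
Your proposal is correct and follows essentially the same route as the paper. The visible data restrict each slot separately, so the conditional law is a product, and every unrevealed slot's allowed set (no-op, coin-zero update of an absent edge, or coin-one update of an absent blocked edge) contains the no-op; conditioning therefore inflates the prior $p/(2m)$ by at most a factor $2$. You make the product-set structure more explicit than the paper does, via the slot-by-slot induction on the trajectory, but the argument is the same.
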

\begin{proof}
State changes identify their update edge and coin. Together with the
renewals, they specify when every accepted occupation begins and ends.
Fix this visible information. Each unrevealed slot allows exactly a no-op,
a coin-zero update on an absent edge, or a coin-one update on an absent
edge blocked by an occupied incident edge. These are separate restrictions
on independent slots, so conditional independence is preserved.
Each allowed set has probability at least $1/2$, since it includes the
no-op. A specified rejected update therefore has conditional probability
at most $(p/(2m))/(1/2)=p/m$.
\end{proof}

\subsection{Bounding the number of peeling rounds}
\label{sec:parallel-rounds}

A long execution requires a long conflict chain. The descendant bound
limits its successive accepted occupations; it remains to count the
rejected \emph{connectors} linking these short pieces.

If peeling takes $R$ rounds, backward tracing gives accepted records
$x_0,\ldots,x_{R-1}$ and rejected records $z_0,\ldots,z_{R-2}$ with
\begin{equation}\label{eq:interval-witness}
 t_{x_0}<t_{z_0}<t_{x_1}<\cdots<t_{z_{R-2}}<t_{x_{R-1}},
\end{equation}
where $z_i$ conflicts with both $x_i$ and $x_{i+1}$.
Trace backwards: a record selected in round $j$ had an earlier blocker
removed in round $j-1$ by an earlier acceptance. Initial records were
processed separately, so these birth times are positive and distinct.

If $x_i,x_{i+1}$ share a physical endpoint, then
$x_{i+1}\in\mathcal D(x_i)$. Otherwise $z_i$ uses one of at most four
edges joining their endpoint pairs: a \emph{connector step}.
With $q$ connector steps, the $q+1$ intervening successor pieces each
have at most $L$ records by Lemma~\ref{lem:descendants}, so
\begin{equation}\label{eq:contracted-length}
 R\le L(q+1).
\end{equation}

Let $\mathcal W_q$ be the event that a witness has at least $q$ connectors.
Fix visible data satisfying the descendant bound. Chronological
connector pairs $(u_i,v_i)$ have disjoint open time intervals and
$u_{i+1}\in\mathcal D(v_i)$, with equality allowed.
For fixed pairs, discard the overlap requirements and ask only for a
rejected connector update in each interval. These events use disjoint
unrevealed slots, even when a physical connector is reused.
Lemma~\ref{lem:visible} and AM--GM bound their joint probability by
\begin{equation}\label{eq:connector-product}
 \prod_{i=1}^q\frac{4p}{m}(t_{v_i}-t_{u_i})
 \le\left(\frac{8pT}{mq}\right)^q\le(8p/q)^q.
\end{equation}
\paragraph{Degree and matching size.}
Each source has at most $2\Delta L$ targets: its endpoints have at most
$2\Delta$ neighboring vertices, each in at most $L$ accepted records.
Alternatively, choose the increasing target list in $\binom hq$ ways.
There are $h$ choices for $u_1$ and at most $L$ for each subsequent source.
Thus, for $1\le q\le h$,
\begin{equation*}
\begin{split}
 \Prb(\mathcal W_q\mid\mathcal F_{\rm vis})
 &\le hL^{q-1}\min\{(2\Delta L)^q,\tbinom hq\}(8p/q)^q\\
 &\le\frac hL\min\left\{
    \left(\frac{16p\Delta L^2}{q}\right)^q,
    \left(\frac{8epLh}{q^2}\right)^q\right\}.
\end{split}
\end{equation*}
Truncate longer witnesses; for $q>h$ the event is empty.
Take $q=\lceil C[1+\log(S/\zeta)+
\min\{p\Delta L^2,\sqrt{pLh}\}]\rceil$.
Since $h\le2rL$ and $R\le L(q+1)$,
\begin{equation}\label{eq:degree-size-rounds}
 R=O\bigl((1+\min\{p\Delta,\sqrt{pr}\})\log^3 S\bigr)
\end{equation}
except with probability at most $\zeta/2$, including descendant failure.
The matching-size term alone needs only $\log^2 S$.

\paragraph{Using the number of edges.}
For the cube-root bound, count physical connectors by contracting
successor paths. The resulting ancestor choices are controlled by
their total path length, not a separate factor $L$ at each step.

\begin{lemma}[Total successor-path length]\label{lem:several-paths}
Let $\mathcal B_q$ be the event that $F$ has $q$ directed paths whose
total number of arcs is at least $3q$, with each path's final birth time
strictly before the next path's initial birth time. Paths with no arcs
are allowed. Before conditioning on $\mathcal F_{\rm vis}$,
\begin{equation}\label{eq:several-paths}
 \Prb(\mathcal B_q)\le(Cm/q^3)^q.
\end{equation}
\end{lemma}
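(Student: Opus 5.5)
We will prove the bound by extending the forward exploration in the proof of \cref{lem:descendants}, so that it tracks several directed paths in chronological order at once. The key fact is that the exploration of the forward process has exponential tails. We would like these tails to combine into one generating-function bound, in which the total number of arcs is paid for by powers of $q/m$-type probabilities, not by a separate factor $L$ for each path.

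\emph{Step 1: reduce to a counting of arcs per update.}
Every arc of $F$ leaving a record $y$ arises as follows. The edge of $y$ is updated, which ends the interval of $y$, and a waiting mark at one of the two endpoints is later consumed by the next accepted record there. So each arc of a path $P_\ell$ is charged to one slot that updates the currently live tracked edge of $P_\ell$; a renewal counts as such an update. Suppose path $P_\ell$ has $a_\ell$ arcs and spans the slot window $W_\ell=[t^{\rm start}_\ell,t^{\rm end}_\ell]$. The windows are disjoint and ordered, because each path's final birth precedes the next path's initial birth. Then at $a_\ell$ distinct slots inside $W_\ell$ the chosen edge equals a specific edge, namely the live edge of the path at that time.

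\emph{Step 2: probability for fixed windows.}
We fix the starting records only through their windows and use the domination by the process $Y_t$ from \cref{lem:descendants}. The difference is that we now follow a single chosen branch rather than all descendants. At each slot, given the past, the tracked live edge is updated with probability at most $1/(2m)$. At each update the path chooses one of at most two continuations. Hence
\[
\Prb(\text{path }\ell\text{ has }\ge a_\ell\text{ arcs in }W_\ell)\le\binom{|W_\ell|}{a_\ell}\Bigl(\frac{2}{2m}\Bigr)^{a_\ell}.
\]
The windows involve disjoint slots, so conditioning successively on the past makes these bounds multiply. Next we sum over the starting record of each path. The starting records are accepted records, and their number inside a window can be controlled by charging to slots that start accepted records: each such slot is itself an edge choice with a coin. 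The cleanest route is to index each path by its starting slot, which gives at most $2T+1\le3m$ choices, together with its arc count. Then
\[
\Prb(\mathcal B_q)\le\sum_{s_1<\dots<s_q}\ \sum_{a_1+\dots+a_q\ge3q}\ \prod_\ell\binom{|W_\ell|}{a_\ell}m^{-a_\ell}.
\]
Using $\sum_\ell|W_\ell|\le2T\le2m$ and AM--GM, the product is at most $(2e\,m/(m\sum a))^{\sum a}$-type terms. More precisely, $\prod_\ell\binom{|W_\ell|}{a_\ell}\le\binom{2m}{a}$ with $a=\sum_\ell a_\ell$, which gives $(2e/a)^a\le(2e/(3q))^{3q}$ up to geometric summation over $a\ge3q$. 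The choice of starting slots contributes $\binom{3m}{q}\le(3em/q)^q$. Multiplying, we get $(3em/q)^q\cdot(C/q)^{3q}\cdot(\text{number of splits of }a\text{ into }q\text{ parts})\le(Cm/q^3)^q$. The number of splits is $\binom{a+q}{q}\le 2^{4q}$ at $a\approx3q$, and the geometric decay absorbs larger $a$.

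\emph{Main obstacle.}
The delicate point is Step 2's justification that the per-slot probability of advancing a chosen path is at most $1/m$ given the past, uniformly, when the paths are not fixed in advance. Paths branch at merges, and which record is ``the path'' depends on the future. We handle this by a union bound over branch choices (factor $2^{a}$, absorbed into $C$) and over starting slots before revealing the future. We also need a careful treatment of initial records and of a starting record that is itself a renewal or a coin-one start. Starting labels are at most $2T+r\le3m$, and the probability that a given slot starts an accepted record is not needed, since we pay $\binom{3m}{q}$ directly. If the factor $(m/q)^q$ from the starts interacts badly with the $a^{-a}$ gain when windows are short, we would instead bound jointly by choosing all $q+a$ relevant slots among $2m$ in increasing order, as $\binom{2m+3m}{q+a}m^{-a}$, which again yields $(Cm/q^3)^q$ at $a=3q$.
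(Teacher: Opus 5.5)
Your argument is correct and is essentially the paper's proof. Both take a union bound over certificates (ordered root labels, a split of the arc count among the $q$ paths, and an endpoint choice at each arc), then track one edge at a time causally with conditional hit probability $1/(2m)$ per slot, summing over the $\binom{2T}{a}$ increasing hit positions. The differences are minor: the paper truncates to total length exactly $3q$ instead of summing over $a\ge3q$, and counts roots by $(3m)^q$ rather than $\binom{3m}{q}$. Also, your Step 2 windows should be fixed by consecutive root labels, or bypassed by summing over all hit positions at once, because the span $[t^{\rm start}_\ell,t^{\rm end}_\ell]$ is itself random.
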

\begin{proof}
Truncate to total length $3q$. Specify $q$ root birth labels (slots or
initial edges), lengths $\ell_1+\cdots+\ell_q=3q$, and an endpoint
at each step: at most
\[
 (3m)^q\binom{4q-1}{q-1}2^{3q}\le(3m)^q2^{7q}
\]
certificates. Follow each causally, failing at an unaccepted root or
if the previous path has not finished strictly earlier. Starting after
the root's birth, each successor step waits for its occupied edge's
update and then acceptance at the chosen endpoint.
Only one edge is tracked at a time, with conditional hit probability
at most $1/(2m)$. The required $3q$ distinct hits have probability at most
\[
 \binom{2T}{3q}(2m)^{-3q}\le(e/(3q))^{3q}.
\]
This sums over hit positions without future conditioning. Multiply by
the certificate count; if $3q>2T$, the event is empty.
\end{proof}

Contract each connector and its following successor path into one step:
\[
 x\xrightarrow{\text{connector}}y
  \xrightarrow{F\text{-path}}z.
\]
Form a simple graph $J$ on noninitial accepted records, joining $x$ to $z$ whenever
some $y$ satisfies $t_x<t_y\le t_z$, $z\in\mathcal D(y)$, and the
disjoint physical edges of $x,y$ have a connector. Dropping the initial
successor prefix maps every witness to an increasing $J$-path with one
edge per connector. On the descendant event, each physical connector
gives at most $2L^2$ pairs $(x,y)$ and $L$ choices of $z$ per pair; hence
\begin{equation*}
 |E(J)|\le M:=2mL^3.
\end{equation*}

Fix visible data satisfying the descendant event and $\mathcal B_q^c$;
both are $\mathcal F_{\rm vis}$-measurable, so \cref{lem:visible} still
applies. For an increasing $J$-path $x_0,\ldots,x_q$, let $d_i$ be the
maximum length of an $F$-path ending at $x_i$ and starting in
$(t_{x_{i-1}},t_{x_i}]$, allowing singletons. These paths are strictly
time-separated, so $\sum_i d_i<3q$. In-degree at most two gives at most
$2^{d_i+1}-1$ ancestors $y$ in the $i$th interval and
\begin{equation*}
 \prod_{i=1}^q2^{d_i+1}\le2^{4q}.
\end{equation*}
Each step requires a rejected update on one of at most four connectors per
ancestor. As in \eqref{eq:connector-product}, the necessary events use
disjoint open slot intervals. Their joint conditional probability is at most
\[
 \prod_{i=1}^q\frac{4p}{m}\,2^{d_i+1}
       (t_{x_i}-t_{x_{i-1}})
 \le (128p)^q\prod_{i=1}^q(\tau_{x_i}-\tau_{x_{i-1}}),
 \qquad \tau_x=t_x/(2T).
\]
The next estimate sums these gap products over $J$-paths.

\begin{lemma}[Weighted increasing paths]\label{lem:weighted-paths}
Let $J$ be a simple graph with at most $M$ edges and distinct vertex
times $\tau_x\in[0,1]$. For $q\ge1$ and $k=\lceil(q+1)/2\rceil$,
\begin{equation}\label{eq:weighted-paths}
 \sum_{\substack{\tau_{x_0}<\cdots<\tau_{x_q}\\
                  x_{i-1}x_i\in E(J)}}
 \prod_{i=1}^q(\tau_{x_i}-\tau_{x_{i-1}})
 \le\frac1{q!}(M/k)^k.
\end{equation}
\end{lemma}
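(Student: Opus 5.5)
The plan is to rewrite each gap product as the volume of a box of interleaving times, pull the simplex volume $1/q!$ out of an integral, and then bound the path count pointwise.

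\textbf{Integral representation.} For an increasing path $\tau_{x_0}<\cdots<\tau_{x_q}$, the product $\prod_{i=1}^q(\tau_{x_i}-\tau_{x_{i-1}})$ equals the Lebesgue volume of the box $\{s\in[0,1]^q: \tau_{x_{i-1}}<s_i<\tau_{x_i}\text{ for all } i\}$. Every point of this box automatically satisfies $0<s_1<\cdots<s_q<1$. Summing over paths and exchanging sum and integral gives
\[
 \sum_{\text{paths}}\prod_{i=1}^q(\tau_{x_i}-\tau_{x_{i-1}})
 =\int_{0<s_1<\cdots<s_q<1} N(s)\,ds .
\]
Here $N(s)$ counts increasing $J$-paths with $\tau_{x_{i-1}}<s_i<\tau_{x_i}$ for every $i$. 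The simplex has volume $1/q!$. It therefore suffices to show $N(s)\le(M/k)^k$ for almost every $s$, for instance whenever no $s_i$ equals a vertex time.

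\textbf{Slab decomposition.} Fix such an $s$ and set $s_0=0$, $s_{q+1}=1$. The constraints say exactly that $x_i$ lies in the time slab $(s_i,s_{i+1})$, for $0\le i\le q$. Let $E_i$ be the set of edges of $J$ with one endpoint in slab $i-1$ and the other in slab $i$. Because the slabs are disjoint and indexed in order, an edge lies in at most one $E_i$, so $\sum_{i=1}^q|E_i|\le M$. In any counted path, the $i$th edge $x_{i-1}x_i$ lies in $E_i$.

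\textbf{Counting via alternate edges.} Choose a set $K\subseteq\{1,\dots,q\}$ of edge positions whose edges cover every vertex $x_0,\dots,x_q$. For odd $q$, take $K=\{1,3,\dots,q\}$. For even $q$, take $K=\{1,3,\dots,q-1,q\}$. In both cases $|K|=\lceil(q+1)/2\rceil=k$. Each edge in $E_i$ has one endpoint in each slab, so it determines which vertex plays $x_{i-1}$ and which plays $x_i$. Hence the chosen edges $(e_i)_{i\in K}$ determine the entire path, and
\[
 N(s)\le\prod_{i\in K}|E_i|\le\Bigl(\frac1k\sum_{i\in K}|E_i|\Bigr)^k\le(M/k)^k
\]
by AM--GM and the disjointness of the $E_i$. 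Integrating over the simplex yields \eqref{eq:weighted-paths}.

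The step that needs care is the choice of covering positions $K$ together with the slab-disjointness of the $E_i$. It is the disjointness that lets AM--GM use the global edge budget $M$ rather than $M$ per step.
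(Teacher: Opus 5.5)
Your proof is correct and follows the paper's argument essentially verbatim. Both write each gap as an integral over a cut and pull out the simplex volume $1/q!$, use disjointness of the edge sets between consecutive layers, and bound the path count by the product over the $k$ covering positions $1,3,\dots$ (plus $q$ when $q$ is even) followed by AM--GM. The only cosmetic fix is that the outer slabs should be $[0,s_1)$ and $(s_q,1]$, since $\tau_{x_0}=0$ or $\tau_{x_q}=1$ is allowed.
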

\begin{proof}
Write each gap as an integral over a cut between its endpoint times.
The $q$ cuts lie in the ordered simplex of volume $1/q!$ and divide
the vertices into consecutive layers. Let $m_i$ count edges between
layers $i-1$ and $i$; these edge sets are disjoint, so $\sum_i m_i\le M$.
When $q$ is odd, path edges in positions $1,3,\ldots,q$ determine all
path vertices. When $q$ is even, use $1,3,\ldots,q-1,q$ instead.
These are $k$ positions, so the path count for fixed cuts is bounded
by a product of $k$ of the $m_i$, at most $(M/k)^k$ by AM--GM.
Integration proves \eqref{eq:weighted-paths}.
\end{proof}

Applying \cref{lem:weighted-paths} and $q!\ge(q/e)^q$, with
$k=\lceil(q+1)/2\rceil$, gives, for $1\le q\le M$,
\begin{equation}\label{eq:cube-tail}
 \Prb(\mathcal W_q\mid\mathcal F_{\rm vis})
 \le\frac{(128p)^q}{q!}(M/k)^k
 \le(M+1)\left(\frac{Cp\sqrt M}{q^{3/2}}\right)^q.
\end{equation}
For $q>M$ no such path exists. Choose
\[
 q=\left\lceil C\left[m^{1/3}+(p^2M)^{1/3}
                  +1+\log\frac{M+2}{\zeta}\right]\right\rceil.
\]
The first two terms control \eqref{eq:several-paths} and
\eqref{eq:cube-tail}; the logarithm reduces their combined failure to
$3\zeta/4$. Including descendant failure and using $M=2mL^3$,
$p\le1$, and \eqref{eq:contracted-length} gives
\begin{equation}\label{eq:cube-rounds}
 R=O\bigl(m^{1/3}\log^2 S\bigr)
\end{equation}
except with probability at most $\zeta$. The $m^{1/3}$ term is independent of $p$;
this argument does not give a $p^{2/3}$ improvement at small activity.

\subsection{Near-linear work and the sampling guarantee}
\label{sec:parallel-complexity}\label{sec:interval-implementation}

Per-vertex interval trees find ready records and remove conflicts
without constructing the conflict graph. Each record
incurs only polylogarithmic work, regardless of the number of rounds.

\begin{lemma}[Cost of interval peeling]\label{lem:interval-implementation}
For a batch of $T\ge1$ update instructions and an initial matching $M_0$, the
interval-peeling algorithm has a CREW implementation with total work
\[
 (T+|M_0|)\log^{O(1)}(T+|M_0|+2)
\]
on every instruction sequence. Preparing the records, executing each peeling
round, and extracting the terminal matching each take polylogarithmic
depth. After reading the graph once, the implementation uses only the
edges specified by the instructions and the current matching list.
\end{lemma}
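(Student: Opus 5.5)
We implement the records and peeling rounds with sorting, per-vertex interval structures and lazy deletion, never materializing the conflict graph, and we charge every operation either to a record or to a (record, round) event that happens only once per record.

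\emph{Record preparation.} Sort the $T$ instructions by the pair (edge, time) using a sorting network. A segmented scan then gives each instruction the time of the next instruction on the same edge, which fixes the end $d$ of every record. Records from $M_0$ are handled the same way: look up the first occurrence of each edge of $M_0$ by binary search in the sorted list. There are at most $T+|M_0|$ records, and this step has polylogarithmic depth and $(T+|M_0|)\log^{O(1)}$ work. Duplicate each record to its two endpoints and sort these copies by vertex. For every vertex $v$ with incident records, build a static balanced interval (segment) tree over the time axis holding the intervals of records incident to $v$, in parallel over all such vertices. The total size is $O(T+|M_0|)$, and only vertices that touch instructions or $M_0$ are ever visited. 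This gives the last assertion of the lemma, that after the single initial read only the instruction edges and the matching list are used.

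\emph{One peeling round.} A record $x$ with edge $uv$ conflicts with every active record whose interval overlaps $[t_x,d_x)$ and whose edge contains $u$ or $v$. It is ready exactly when, in the trees at $u$ and $v$, no active interval overlapping $[t_x,d_x)$ has a start time smaller than $t_x$. Each tree node stores the minimum start time of the active intervals in its subtree. With this, a readiness test is an $O(\log)$ range-min query; run all queries in parallel (CREW allows concurrent reads). Rejection is the harder part. A newly accepted $x$ can conflict with many active records, and we cannot afford to charge $x$ for each of them. Instead we charge each rejection to the rejected record, which is removed after that and never appears again. Concretely, each ready $x$ marks the $O(\log)$ canonical tree nodes covering the overlap range at $u$ and at $v$. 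Then every active record $y$, in parallel, checks whether any canonical node on its own root-to-leaf paths in its two endpoint trees carries a mark from this round. That takes $O(\log^2)$ work per active record per round. Each mark is stamped with the round index, so stale marks are ignored without ever being cleared. Deletions then update the subtree-minimum fields bottom-up, level by level, for a polylogarithmic depth per round.

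\emph{The work issue.} The step above still charges $O(\log^2)$ to every \emph{active} record in every round. Summed over $R$ rounds this is $R(T+|M_0|)\log^{O(1)}$, which is not near-linear. This is the main obstacle. To fix it, restrict each round to records that could change status. A record can only become ready or rejected in a round if one of its earlier-start overlapping conflicts was removed in the previous round, or if a conflicting record was accepted. So maintain a frontier: when a record is removed, enqueue the active records that become first-in-line at its endpoints. At each tree node keep the active records ordered by start time, so the successor can be found in $O(\log)$ time. Each removal enqueues at most $O(1)$ candidates per endpoint, namely the next active interval starting after it whose earlier overlaps may have vanished. Only enqueued records are rechecked. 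Rejections are detected from the accepted record's side by enumerating the active overlapping intervals through the tree: each reported interval is deleted and charged to itself, and the per-node overhead is $O(\log)$ per accepted record.

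Every record is therefore touched $O(1)$ times as a candidate and once when it is deleted, for total work $(T+|M_0|)\log^{O(1)}(T+|M_0|+2)$ regardless of $R$. Each round is a constant number of parallel tree operations, compactions and scans, so it has polylogarithmic depth. Finally, extract the terminal matching by filtering the accepted records whose interval contains $T$ and compacting them, again in polylogarithmic depth.
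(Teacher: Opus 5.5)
Your record preparation, readiness test, rejection step (output-sensitive overlap reporting charged to the rejected record), and terminal extraction are fine. The gap is in the frontier step, which is meant to make total work independent of the number of rounds. You claim a removal enqueues only $O(1)$ candidates per endpoint, namely the next active interval after it. That treats readiness at a vertex as a first-in-line property, and it is not one: pairwise disjoint intervals at the same vertex are simultaneously unblocked there.

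Concretely, let a noninitial record $J$ on $vw$ have an interval containing the pairwise disjoint intervals of records $y_1,\dots,y_\ell$ on edges $va_1,\dots,va_\ell$. Suppose each $y_i$ is blocked at $v$ only by $J$, and there are no other records at $a_i$. If $J$ is rejected in round $k$ because of an accepted conflict at $w$, then every $y_i$ belongs to $\mathcal Q_{k+1}$. Your successor rule enqueues only $y_1$, and $y_{i+1}$ is discovered only after $y_i$ has been removed.

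So your implementation does not compute the ready sets $\mathcal Q_k$ of the algorithm. It runs a slower schedule whose number of rounds can exceed the peeling-round count by about $\ell$, so the round bounds of Section~\ref{sec:parallel-rounds} no longer give its depth. The naive repair is to recheck every active record overlapping each removed interval. That restores the correct ready sets but loses near-linearity: a record with many earlier blockers is rescanned at every blocker removal. For example, $\ell$ nested long intervals over $\ell$ short ones cost $\Theta(\ell^2)$.

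The missing ingredient is an output-sensitive structure that reports exactly the incidences whose set of earlier-starting overlapping active incidences has just become empty. The paper keeps, for each endpoint incidence $I=[s,d)$ at $v$, the count of active incidences $J=[s_J,d_J)$ at $v$ with $s_J<s<d_J$.
\begin{itemize}
\item Removing $J$ becomes a range decrement on the incidences at $v$ whose starts lie in $(s_J,d_J)$. It is applied with lazy tags on $O(\log P)$ canonical nodes of a segment tree ordered by start time.
\item Subtree minima let each round descend only into subtrees containing a newly attained zero.
\item Active sets only shrink, so each incidence reaches zero at most once; it is then overwritten by a sentinel. All reports together therefore cost $P\log^{O(1)}P$.
\item A record enters the ready set exactly when both of its endpoint incidences have been reported.
\end{itemize}
If you replace your successor rule with this counter structure, the rest of your argument goes through.
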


\begin{proof}
For $P\le T+|M_0|$ records, store both endpoint incidences sorted by
vertex and start time. Trees use these static arrays only at touched
vertices. If $P=0$, peeling has constant control cost.

\paragraph{Finding ready records without rescanning.}
For an active incidence $I=[s,d)$ at vertex $v$, maintain its
earlier-blocker count
\[
 c_v(I)=\#\{\text{active incidences }J=[s_J,d_J)
                        \text{ at }v:s_J<s<d_J\}.
\]
Sorting starts and ends and taking prefix sums gives initial counts.
Removing $J$ decrements counts with starts in $(s_J,d_J)$.
A segment tree with lazy range-add tags and subtree minima supports
these updates, deletion, and zero reporting. Report zeros after removing
initial records and their conflicts, then only newly attained zeros.

Give reported incidences value $10P$ and a separate ready flag; give
removed incidences the same sentinel and mark them inactive.
At most $P$ further decrements remain, so each incidence is reported
at most once. It continues blocking until actual removal, which
decrements its original interval. Group reports by record, update
endpoint flags, and enqueue each active record with both flags ready
once, excluding records removed in that batch.

For CREW execution, group range updates by their $O(\log(P+1))$
canonical nodes, sum shared-node updates, and recompute touched
ancestors levelwise. Batch deletions likewise; report newly attained zeros
only in updated trees, descending only zero-minimum subtrees.
This costs polylogarithmic depth per round and work per operation,
with $O(P)$ updates, removals, and reports.

\paragraph{Removing conflicts without repeated reporting.}
A second tree stores subtree maximum active ends. Canonical range
decomposition and maximum-based pruning report intervals in a start
range whose ends exceed a threshold, with polylogarithmic depth and
polylogarithmic work per query and reported interval.

At a fixed vertex, sort the disjoint intervals accepted in a round as
$[s_1,d_1),\ldots,[s_k,d_k)$. Query starts in $[d_{j-1},d_j)$ and
ends greater than $s_j$, taking $d_0=-\infty$.
A record first intersecting interval $j$ starts below $d_j$ but not
below $d_{j-1}$, which would force an earlier intersection.
Thus all conflicts are reported, at most once per endpoint because
the start ranges are disjoint. Sort to deduplicate endpoint reports;
remove accepted records as well.

Process initial records by the same queries before initializing ready
flags. Each query belongs to a record accepted once, giving
$\widetilde O(P)$ total work and polylogarithmic depth per round.
Sorting instructions for next updates, initializing trees, and extracting
compact matching lists also fit the claimed work and depth.
\end{proof}

\begin{proof}[Proof of Theorem~\ref{thm:simulation}]
Correctness follows from \cref{lem:interval-exact} and its proof.
For batch lengths $T_j$ and initial matchings $M_{j-1}$,
\[
 \sum_{j=1}^B(T_j+|M_{j-1}|)\le N+Br\le2N+m,
\]
since $|M_{j-1}|\le r\le m$. Lemma~\ref{lem:interval-implementation}
and the one-time $O(m+n)$ graph initialization give the work bound on every run.
This accounting charges work to records, not to rounds: each record
is removed once, and each endpoint incidence is reported ready at most
once. The data structures avoid scanning unresolved records in every
round. Thus a long dependency chain increases depth without multiplying
the total work by the number of rounds.

Take the smaller bound in \eqref{eq:degree-size-rounds} and
\eqref{eq:cube-rounds}. Undoing the padding conditioning gives failure
at most $(2T+1)\zeta=\eta/B$ per batch, uniformly over its initial matching.
Although a batch's input matching depends on preceding batches,
conditioning on their history fixes this input while leaving the batch's
instructions independent with their original law. Its conditional
failure probability is therefore still at most $\eta/B$; independence
between batch failures is not needed.
A union bound over $B\le1+N/m$ batches and polylogarithmic depth per
round give \eqref{eq:simulation-depth}. Only the analysis selects a
bound; the algorithm computes neither $F,J$ nor a maximum matching.
\end{proof}

\begin{proof}[From simulation to sampling: proof of Theorem~\ref{thm:intro-parallel}]
Choose $N=O_\lambda(m[\log^2 n+\log(1/\varepsilon)])$ from
\eqref{eq:discrete-time} using \eqref{eq:matching-min-mass}.
Theorem~\ref{thm:simulation} preserves its GD output law, hence its
$\varepsilon$ sampling error. Since $m,r,\Delta\ge1$ and $p\le1$,
\[
 1+\min\{p\Delta,m^{1/3},\sqrt{pr}\}
 \le 2\min\{\Delta,m^{1/3},\sqrt r\}.
\]
Since $r\le n/2$ and $m\le n^2/2$, substitution in
\cref{thm:simulation} gives the claimed work and depth, with the
remaining factors polylogarithmic in $n$, $1/\varepsilon$, and $1/\eta$.
\end{proof}

If a deterministic depth cap is desired, replace $r=\nu(G)$ by $n/2$ in
\eqref{eq:simulation-depth}, choose the tail budget at most
$\varepsilon/2$, and return the empty matching on overflow. This
can be coupled to the uncapped run using the same update instructions:
their outputs agree unless overflow occurs, so their total-variation
distance is at most the overflow probability. Running GD to
error $\varepsilon/2$ yields the same near-linear
work and $\widetilde O_\lambda(\min\{\Delta,m^{1/3},\sqrt n\})$ depth
on every run.

\section{Faster counting via learned Jerrum--Sinclair dynamics}
\label{sec:preconditioned-js}

The counting reduction needs repeated samples at nearby activities.
We first prove it from warm-start sampling
(Section~\ref{sec:counting}), then construct the learned
Jerrum--Sinclair chain and its monomer weights
(Section~\ref{js:sec:learned-kernel}).
Section~\ref{js:sec:learned-mixing} derives warm-start mixing from a star
relaxation bound by two Dirichlet-form comparisons.
Section~\ref{sec:star-inequalities} proves that bound, with auxiliary
gap estimates.

Throughout, $\mu=\mu_{G,\lambda}$ on a simple graph with $n\ge2$ and
$\lambda>0$. Recall the dimer indicators $X_e$ from
Section~\ref{sec:sequential} and set $J_v=1-\sum_{e\sim v}X_e$.
Write $v^0=\{J_v=1\}$,
so $\mu(v^0)=\E_\mu J_v$, and set $H=|M|$.
Concatenated events denote intersections.
For every fixed finite $\lambda_{\max}>0$, all analytic
constants below can be chosen uniformly over $0<\lambda\le\lambda_{\max}$.

\subsection{Counting from warm-start sampling}\label{sec:counting}

The connection to counting is the identity
\begin{equation*}
 \log Z_G(\lambda)-\log Z_G(\lambda_0)
 =\int_{\log\lambda_0}^{\log\lambda}
                         \E_{\mu_{G,e^\beta}}H\,d\beta.
\end{equation*}
Choose $\lambda_0$ so that $Z_G(\lambda_0)$ is close to one, discretize
this integral, and estimate its integrand by matching cardinalities
along an annealing trajectory. The identity
$d\E H/d\log\lambda=\Var H\le n/8$
(\cref{cnt:lem:cov}) controls both the quadrature error and the
single-sample variance. The remaining issue is to control correlations
between observations at different activities.

Here is the sampling guarantee used by the reduction. With fixed
weights $\mu(v^0)\le w_v\le B\mu(v^0)$, $w_v\le1$, the learned kernel
$\widehat P_w$ defined in \cref{js:sec:learned-kernel} satisfies
\[
 \nu\le W\mu
 \quad\Longrightarrow\quad
 \TV(\nu\widehat P_w^N,\mu)\le\delta
 \quad\text{for }N=\widetilde O_{\lambda,B}(n),
\]
where the suppressed factors are logarithmic in $n,W,1/\delta$.
Each step takes $O(\log n)$ work. The mixing statement is
\cref{js:lem:learned-mixing}, proved from a WPI below; the weights
are learned in $\widetilde O_\lambda(m+n)$ work by
\cref{js:lem:learning}.

Nearby activities make the preceding stationary law a warm start for
the next block. To bound correlations, we also apply the same guarantee
to the law with density $(H+1)/(\E H+1)$ relative to that stationary
law. Both reference inputs are at most $(n+2)$-warm at the next
activity. This is why warm-start mixing suffices for counting even
though successive matchings are dependent.

\paragraph{Counting algorithm.}
Given target $\lambda>0$ and $0<\varepsilon\le1/2$, return one if
$m\lambda\le\varepsilon/100$; this includes every edgeless graph.
Otherwise use the integers $b,s,N$ specified in
\cref{cnt:sec:algorithm}, and put $\lambda_0=\lambda2^{-b}$ and $K=bs$.
Their scales are $\lambda_0=\Theta(\varepsilon/m)$,
$K=\widetilde O_\lambda(n/\varepsilon^2)$, and
$N=\widetilde O_\lambda(n)$.
\begin{enumerate}[label=(\arabic*)]
\item Divide each of the $b$ doubling intervals from $\lambda_0$ to
$\lambda$ into $s$ equal linear pieces, obtaining endpoints
$\lambda_0<\cdots<\lambda_K=\lambda$.
\item Independently prelearn one guide table at each doubling interval's
lower endpoint by \cref{js:lem:learning}, with failure probability
$1/(100b)$ per table and randomness independent of annealing.
\item Start empty. For $i=0,\ldots,K$, run
$Q_i=\widehat P_i^N$ and record $H_i=|M_i|$, carrying the matching
between blocks. Here $\widehat P_i$ is the learned kernel at $\lambda_i$,
with stationary law $\mu_i=\mu_{G,\lambda_i}$, using the table for the
interval containing $(\lambda_{i-1},\lambda_i]$,
or the first table when $i=0$.
\end{enumerate}
Return
\begin{equation}\label{cnt:eq:estimator}
 \widehat Z=\prod_{i=1}^K(\lambda_i/\lambda_{i-1})^{H_i}.
\end{equation}
This thermodynamic-integration estimator (path sampling)
\cite{GelmanMeng98} exponentiates a quadrature of the mean cardinality;
it is not an unbiased telescoping product of ratio estimates.

We now give the parameters and verify the counting reduction, using
the sampling and learning guarantees stated above.

\subsubsection{Parameter choices}\label{cnt:sec:algorithm}

The direct-return branch is accurate because
\[
 1\le Z_G(\lambda)\le(1+\lambda)^m\le e^{m\lambda},
 \qquad 1-Z_G(\lambda)^{-1}\le m\lambda\le\varepsilon/100.
\]
On the nontrivial branch set
\[
 b=\left\lceil\log_2\frac{100m\lambda}{\varepsilon}\right\rceil,
 \qquad s=\left\lceil\frac{2^{16}nb}{\varepsilon^2}\right\rceil.
\]
Then $b\ge1$ and
\begin{equation}\label{cnt:eq:starting-activity}
 \frac{\varepsilon}{200m}<\lambda_0\le\frac{\varepsilon}{100m}.
\end{equation}
The schedule satisfies
\[
 a_i=\log(\lambda_i/\lambda_{i-1})\le1/s,\quad
 L=\sum_i a_i=b\log2,\quad A_2=\sum_i a_i^2\le b/s.
\]
Choose
\begin{equation*}
 N=\left\lceil C_\lambda n\left(1+\left\lceil
             \log_2\frac{n}{\delta}\right\rceil\right)^5\right\rceil,
 \qquad
 \delta=\frac{\varepsilon^2}
 {2^{20}(n+2)^2(K+1)^2(b+1)^2}.
\end{equation*}

With probability at least $99/100$, every learned table is valid.
By \eqref{js:eq:guide-transfer}, it then satisfies
$\mu_{G,t}(v^0)\le w_v\le6\mu_{G,t}(v^0)$ throughout its doubling interval.
Fix any such realization of all tables. Until the final averaging,
all probabilities, expectations, and TV distances are conditional on
this realization; independent prelearning leaves fixed kernels and
fresh annealing randomness.
An effective constant $C_\lambda$ from \cref{js:lem:learned-mixing} ensures error
$\delta$ from every input at most $(n+2)\mu_i$, uniformly for
$0<\lambda_i\le\lambda$.

\subsubsection{Accuracy without independent configurations}
\label{cnt:sec:analysis}

Put $R=\lfloor n/2\rfloor$ and $h_i=\E_{\mu_i}H$.
The empty point mass is 2-warm at $\lambda_0$, since
$Z(\lambda_0)\le e^{m\lambda_0}<2$.
Furthermore $\mu_{i-1}\le e^{R/s}\mu_i\le2\mu_i$.
Comparing the actual input to this warm reference at each block gives
$e_i:=\TV(\operatorname{Law}(M_i),\mu_i)\le(i+1)\delta$.
This is a TV comparison, not a claim that the actual input is warm.

For $i<j$, introduce the observable-biased law
\[
 \rho_i=\frac{H+1}{h_i+1}\mu_i.
\]
It obeys $\rho_i\le(R+1)\mu_i\le(n+2)\mu_{i+1}$.
Both $\rho_i$ and $\mu_i$ therefore mix to error $\delta$ in the
first future block $Q_{i+1}$. Contraction through all remaining
blocks bounds their future expectations of $H_j$ within $2R\delta$.
Write $\E_{\rm ref}$ for expectation under the process started from
$M_i\sim\mu_i$ and evolved by $Q_{i+1},\ldots,Q_j$. Then
\[
 |\E_{\rm ref}[(H_i-h_i)(H_j-h_j)]|
 \le2R(R+1)\delta.
\]
Indeed the left expectation is $(h_i+1)$ times the difference of
these two future expectations; the constant $h_j$ cancels because
the first centered factor has mean zero. The actual pair differs
from this pair by at most $e_i$ in TV. Since the centered product
has magnitude at most $R^2$, this adds at most $2R^2e_i$.
The diagonal term is at most $n/8+R^2e_i$ by
\eqref{cnt:eq:var}. Summing with weights $a_i a_j$ proves
\begin{equation}\label{cnt:eq:mse}
 \E\left(\sum_i a_i(H_i-h_i)\right)^2
 \le\frac n8 A_2+5(R+1)^2(K+1)\delta L^2
 \le\varepsilon^2/2^{18}.
\end{equation}
No conditioning on a specified past matching has been used.

For $F(\beta)=\log Z(e^\beta)$, one has
$F'(\beta)=\E H$ and $F''(\beta)=\Var H\le n/8$.
The right Riemann sum $\sum_i a_i h_i$ thus differs from
$\log Z(\lambda)-\log Z(\lambda_0)$ by at most
$nA_2/16\le\varepsilon^2/2^{20}$.
Also $\log Z(\lambda_0)\le\varepsilon/100$.
Chebyshev's inequality in \eqref{cnt:eq:mse} now gives
\[
 \Prb(|\log\widehat Z-\log Z(\lambda)|>\varepsilon/3)<1/100.
\]
This bound holds for every fixed valid realization of the tables.
Averaging over those realizations and adding the guide-learning
failure probability gives the unconditional accuracy guarantee.

\subsubsection{Work bound}\label{cnt:sec:implementation}

Since $b=O_\lambda(\log(m/\varepsilon))$, all learning stages and table
constructions cost $\widetilde O_\lambda(m+n)$ in total.
The bound \eqref{cnt:eq:starting-activity} controls every small-activity
logarithm in the Glauber burn-ins. Within a doubling interval,
neighbor tables and sums $\sum_{v\sim u}w_v$ remain fixed.
Store these unscaled sums in the free-vertex tree: changing activity
then changes one global multiplier. No edge scan is needed between
two successive fine activity values.

Every block consists of exactly $N=\widetilde O_\lambda(n)$ learned JS steps,
and $K=\widetilde O_\lambda(n/\varepsilon^2)$.
Each step costs $O(\log n)$ work by \cref{js:sec:learned-kernel},
even with invalid but clipped positive guides. Thus, on every run,
\[
 \text{work}\le\widetilde O_\lambda(m+n+KN)
       =\widetilde O_\lambda(n^2/\varepsilon^2).
\]
Here simplicity gives $m\le n(n-1)/2$, so the final bound includes
all preprocessing, guide learning, and initialization.

Since $H_i\le n/2$, repeated squaring evaluates
\eqref{cnt:eq:estimator} in $O(K\log n)$ arithmetic operations.
The guide and statistical failure probabilities sum to less than
$1/50$. Logarithmic error at most $\varepsilon/3$ implies relative
error at most $\varepsilon$, proving \cref{thm:intro-counting}.
The sampling and learning constants are uniform for activities at most
$\lambda_{\max}$; the direct-return branch and
\eqref{cnt:eq:starting-activity} also control small-activity burn-in costs.

\subsection{The learned dynamics and monomer weights}
\label{js:sec:learned-kernel}

\subsubsection{Local moves and the normalization of star updates}
\label{js:sec:normalization}

Fix weights $0<w_v\le1$. Their accuracy affects the running time;
the kernel below preserves $\mu$ for every such table. Its local
proposal weights are
\begin{equation}\label{js:eq:proposal-weights}
\begin{aligned}
 c_w(M,M-uv)&=w_u+w_v &&(uv\in M),\\
 c_w(M,M+uv)&=\lambda(w_u+w_v) &&(u,v\text{ free}),\\
 c_w(M,M-vz+uv)&=\lambda w_v &&(u\text{ free},\ vz\in M).
\end{aligned}
\end{equation}
All other distinct-state weights are zero. Define the reversible
proposal generator by
\begin{equation*}
 (\mathcal G_w f)(M)=\sum_{M'\ne M}
                   c_w(M,M')[f(M')-f(M)].
\end{equation*}
To identify what the weights do, let $H_v$ resample the edges incident
to $v$ conditional on all other edges. After erasing the current edge
at $v$, the number of available neighbors and the conditional
partition function are
\begin{equation*}
 F_v(M)=\sum_{u\sim v}J_u(M)+1-J_v(M),
 \qquad Z_v(M)=1+\lambda F_v(M).
\end{equation*}
The blank choice has weight one and every available edge has weight
$\lambda$. Thus the proposal rates give the exact identity
\begin{equation*}
 \mathcal G_w=\sum_v w_vZ_v(H_v-I).
\end{equation*}
A star at $v$ contributes $w_v$ to a deletion and $\lambda w_v$
to an insertion or exchange. The multiplier $w_vZ_v$ is constant
on each conditional star: it is that star's update rate in
$\mathcal G_w$.

Write $\mu^{v^0}$ for $\mu$ conditioned on $v$ being free, identified
with $\mu_{G-v,\lambda}$. For an outside-star matching, the sum
of the weights of its star completions is $Z_v$. Its law under $\mu$
therefore has density $\mu(v^0)Z_v$ relative to $\mu^{v^0}$. Normalizing gives
\begin{equation}\label{js:eq:clock-normalization}
 \E_{\mu^{v^0}}Z_v=\frac1{\mu(v^0)},
 \qquad \mu(v^0)Z_v=\frac{Z_v}{\E_{\mu^{v^0}}Z_v}.
\end{equation}
Thus $w_v=\mu(v^0)$ normalizes the star rate by its mean under the
free-vertex conditional law. Constant-factor estimates suffice.
The lower bound $w_v\ge \mu(v^0)$ lets a lower-tail bound for $\mu(v^0)Z_v$
control stars with small proposal rates. The upper bound $w_v\le B\mu(v^0)$
controls the total proposal rate and hence the cost of passing to
discrete steps. These are the two uses of the learned monomer
marginals in the WPI proof.

\subsubsection{Discrete steps and their implementation}

The total proposal rate is
\begin{equation}\label{js:eq:rate}
\begin{gathered}
 r_u=\lambda\sum_{v\sim u}w_v,\qquad
 R(M)=\sum_{M'\ne M}c_w(M,M')=X(M)+Y(M),\\
 X(M)=\sum_u r_uJ_u(M),\qquad
 Y(M)=\sum_vw_v(1-J_v(M)).
\end{gathered}
\end{equation}
The two orientations of an insertion and the two endpoints of a deletion
are both included in these sums. For a graph with at least one edge,
$R(M)>0$ for every matching: a nonempty matching has a deletion, and
an empty matching has an insertion. On an edgeless graph, the chain
stays at the empty matching; the following description assumes $m\ge1$.

A step of the \emph{learned Jerrum--Sinclair chain} first holds with
probability $1/2$. Otherwise it proposes $M'$ with probability
$c_w(M,M')/R(M)$ and accepts with probability
$\min\{1,R(M)/R(M')\}$. Equivalently,
\begin{equation}\label{js:eq:learned-kernel}
 \widehat P_w(M,M')=
 \frac{c_w(M,M')}{2\max\{R(M),R(M')\}}\quad(M\ne M'),
\end{equation}
with the diagonal chosen to make each row sum one.
The proposal is generated by selecting a free vertex $u$ with weight
$r_u$ or a matched endpoint $v$ with weight $w_v$. In the first case,
choose a neighbor with weight $w_v$ and propose the corresponding
insertion or exchange; in the second case, propose deletion.
This realizes the weights in \eqref{js:eq:proposal-weights} exactly.

Insertion and deletion have weight ratio $\lambda$, whereas opposite
exchanges about $v$ have equal weight $\lambda w_v$. Therefore
\begin{equation}\label{js:eq:proposal-balance}
 \mu(M)c_w(M,M')=\mu(M')c_w(M',M).
\end{equation}
The symmetric denominator in \eqref{js:eq:learned-kernel} proves detailed
balance. This is the standard Metropolis--Hastings correction for the
normalized proposal~\cite{Hastings70}. The off-diagonal row sum is at
most $1/2$, so the kernel is lazy and positive semidefinite. Additions
and deletions give irreducibility. Each accepted move changes at most
two matching edges and at most two monomer indicators.

\paragraph{Implementation.}
Precompute the $r_u$ and a weighted-neighbor prefix table at every
nonisolated vertex. Maintain mate arrays and two dynamic sum trees,
with weights $r_uJ_u$ and $w_u(1-J_u)$ respectively. Construction takes
$O(m+n)$ arithmetic work. Weighted tree selection and binary search in
a neighbor prefix table each take $O(\log n)$ work.

Put $d_u=r_u-w_u$. The candidate total is computed before accepting:
\begin{equation*}
\begin{array}{c|c}
\text{candidate move}&R(M')-R(M)\\\hline
\text{insert }uv&-d_u-d_v\\
\text{delete }uv&d_u+d_v\\
\text{replace }vz\text{ by }uv&d_z-d_u.
\end{array}
\end{equation*}
These identities follow from $R(M)=\sum_vw_v+\sum_ud_uJ_u(M)$.
On acceptance, update at most two free-status entries and the mate
arrays; on rejection, no entry changes. Thus every discrete step costs
$O(\log n)$ arithmetic operations, for every trajectory and every
positive guide table, including clipped invalid tables. Observations
count both rejections and lazy holds.

\subsubsection{Learning and reusing the weights}
\label{js:sec:learning}

We now show that ordinary Glauber occupation times estimate all the
required marginals simultaneously. We use the weighted monomer covariance
and absolute row-sum bounds of \cref{thm:monomer-input}, stated and
derived from \cite{ConcurrentHolant26} in Appendix~\ref{sec:covariance-proof}.
These bounds also show that a table remains valid, up to a constant
factor, throughout a doubling interval of activities.

\begin{lemma}[Consequences of monomer covariance]\label{cnt:lem:cov}
For $\mu_{G,\lambda}$ with unit monomer weights,
\begin{alignat}{2}
 &\Var H\le n/8,
 &\qquad&\mu(v^0)\ge(1+\lambda\deg_G(v))^{-1},\label{cnt:eq:var}\\
 &\lambda\sum_u\sum_{v\sim u}\mu(u^0)\mu(v^0)\le(1+\lambda/4)n,
 &\qquad&-1\le\frac{d\log \mu(v^0)}{d\log\lambda}\le0.\label{js:eq:guide-transfer}
\end{alignat}
More generally, under any matching law with positive edge activities
and positive monomer weights, let $F=\sum_v c_vJ_v$, where
$0\le c_v\le c_*$ and $c_*>0$. Then
\begin{equation}\label{cnt:eq:monomer-mgf}
 \log\E e^{tF}\le\frac{\E F}{2c_*}(e^{2c_*t}-1)
 \qquad(t\in\mathbb R).
\end{equation}
\end{lemma}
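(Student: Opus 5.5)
The plan is to derive every item from the weighted monomer covariance and absolute row-sum bounds of \cref{thm:monomer-input}, together with elementary partition-function identities. I will use these bounds in two forms. The first is a diagonal domination $\Var_\rho(F)\le 2c_*\E_\rho F$ for $F=\sum_v c_vJ_v$ with $0\le c_v\le c_*$. The second is a row-sum bound $\sum_u|\Cov(J_u,J_v)|\le 2\Var(J_v)$, or something of comparable strength. Both must hold under \emph{every} matching law with positive edge activities and monomer weights. This generality matters because exponential tilts by monomer statistics remain in that class.

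\emph{The moment generating function bound.} Let $\psi(t)=\log\E e^{tF}$. The tilted law $\propto e^{tF}\mu$ is again a matching law, with monomer weights multiplied by $e^{tc_v}$. Therefore $\psi''(t)=\Var_t F\le 2c_*\E_tF=2c_*\psi'(t)$ for every $t\in\mathbb R$. Grönwall gives $\psi'(t)\le \E F\,e^{2c_*t}$ for $t\ge0$ and $\psi'(t)\ge \E F\,e^{2c_*t}$ for $t\le0$. Integrating from $0$ with $\psi(0)=0$ yields \eqref{cnt:eq:monomer-mgf} in both directions of $t$.

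\emph{Variance and the two single-vertex facts.} Since $2H=n-\sum_vJ_v$, we have $\Var H=\tfrac14\Var(\sum_vJ_v)$. Then
\[
\Var\Bigl(\sum_vJ_v\Bigr)\le\sum_v\sum_u|\Cov(J_u,J_v)|\le 2\sum_v\mu(v^0)(1-\mu(v^0))\le n/2,
\]
which gives $n/8$. The lower bound on $\mu(v^0)$ is the one-vertex decomposition already used in \cref{sec:sequential}: $\mu(v^0)=Z_{G-v}/Z_G$ and $Z_G\le(1+\lambda\deg v)Z_{G-v}$. For the derivative, write $\log\mu(v^0)=\log Z_{G-v}-\log Z_G$, so that
\[
\frac{d\log\mu(v^0)}{d\log\lambda}=\E_{G-v}H-\E_GH .
\]
I would show $0\le\E_GH-\E_{G-v}H\le1$ by the standard coupling of $\mu_G$ and $\mu_{G-v}$. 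In that coupling the symmetric difference is at most one alternating path starting at $v$, so cardinalities differ by $0$ or $\pm1$, with the sign controlled by the parity at $v$. An alternative route is to write the derivative as $\Cov(J_v,H)/\mu(v^0)=-\tfrac12\sum_u\Cov(J_u,J_v)/\mu(v^0)$ and bound it using a signed version of the row-sum bound.

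\emph{The adjacency sum.} For adjacent $u,v$, both $\Prb(u,v\text{ free})$ and $\mu(uv)/\lambda$ equal $Z_{G-u-v}/Z_G$. Hence $\lambda\mu(u^0v^0)=\mu(uv)$. Summing over ordered adjacent pairs gives $2\E H\le n$. Writing $\mu(u^0)\mu(v^0)=\mu(u^0v^0)-\Cov(J_u,J_v)$, the remaining term is $\lambda\sum_u\sum_{v\sim u}(-\Cov(J_u,J_v))$. I expect this to be the main obstacle. The row-sum bound alone gives a factor $\lambda\deg$ rather than $\lambda n/4$. I would first try a weighted form of \cref{thm:monomer-input}, for instance a row-sum bound over neighbors $\lambda\sum_{u\sim v}|\Cov(J_u,J_v)|\le\tfrac14$-type control per $v$. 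Failing that, I would bound $-\Cov(J_u,J_v)\le\mu(u^0)\mu(v^0)$ on the negatively correlated pairs only and absorb that contribution.
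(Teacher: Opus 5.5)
Your arguments for the moment generating function bound, for $\Var H\le n/8$, and for $\mu(v^0)\ge(1+\lambda\deg_G(v))^{-1}$ are the paper's arguments.

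\textbf{The gap: the adjacency sum.} You leave this bound unproved, after asserting that the row-sum bound ``gives a factor $\lambda\deg$''. That assertion misreads \cref{thm:monomer-input}. Its absolute row-sum bound runs over \emph{all} $w\neq v$, so restricting to neighbors loses nothing:
\[
 \sum_{u\sim v}|\Cov(J_u,J_v)|\le\sum_{u\ne v}|\Cov(J_u,J_v)|\le\mu(v^0)(1-\mu(v^0))\le\tfrac14 .
\]
Your own form $\sum_u|\Cov(J_u,J_v)|\le2\Var J_v$ gives the same thing once the diagonal term is removed. Combining this with your identity $\lambda\mu(u^0v^0)=\mu(uv)$ yields
\[
 \lambda\sum_u\sum_{v\sim u}\mu(u^0)\mu(v^0)\le2\E H+\lambda\sum_v\mu(v^0)(1-\mu(v^0))\le(1+\lambda/4)n,
\]
which is exactly the paper's proof. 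Your fallback would not have worked. Bounding $-\Cov(J_u,J_v)\le\mu(u^0)\mu(v^0)$ on negatively correlated pairs returns the left-hand side itself with coefficient one, so there is nothing to absorb.

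\textbf{The derivative bound: a different but valid route.} The paper writes $d\mu(v^0)/d\log\lambda=\Cov(J_v,H)=-\frac12\sum_u\Cov(J_v,J_u)$. It then uses the same absolute row-sum bound to place this in $[-\mu(v^0)(1-\mu(v^0)),0]$; no signed version is needed.

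Your identity $d\log\mu(v^0)/d\log\lambda=\E_{G-v}H-\E_GH$ also works once the coupling is made explicit:
\begin{enumerate}[label=(\roman*)]
\item Draw independent $M\sim\mu_G$ and $M'\sim\mu_{G-v}$.
\item Swap $M$ and $M'$ on every component of $M\triangle M'$ that avoids $v$. This involution preserves $\lambda^{|M|+|M'|}$, so $M\triangle P_v\sim\mu_{G-v}$, where $P_v$ is the component containing $v$.
\item Because $v$ is free in $M'$, $P_v$ begins with an edge of $M$. Hence the cardinality difference is $0$ or $+1$, never $-1$ as your ``$\pm1$'' suggests.
\end{enumerate}
This gives $0\le\E_GH-\E_{G-v}H\le1-\mu(v^0)$, the same bound as the paper. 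Your route is elementary and does not depend on the covariance input. The paper's route simply reuses the row-sum bound it needs anyway.
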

\begin{proof}
Apply the absolute row-sum bound of \cref{thm:monomer-input} to
$H=(n-\sum_vJ_v)/2$ to obtain $\Var H\le n/8$.
Deleting an occupied edge $uv$ gives
$\mu(uv)=\lambda\mu(u^0v^0)$, so
\[
 \lambda\sum_u\sum_{v\sim u}\mu(u^0)\mu(v^0)
 \le2\E H+\lambda\sum_v\mu(v^0)(1-\mu(v^0))\le(1+\lambda/4)n.
\]
Also
$d\mu(v^0)/d\log\lambda=\Cov(J_v,H)
=-\frac12\sum_u\Cov(J_v,J_u)$ lies between
$-\mu(v^0)(1-\mu(v^0))$ and zero. The lower bound on $\mu(v^0)$ follows by
conditioning outside its star.

For the exponential moment, put $g(t)=\log\E e^{tF}$.
Tilting by $e^{tF}$ changes only the positive monomer weights, so
\cref{thm:monomer-input} gives $g''(t)\le2c_*g'(t)$.
Thus $e^{-2c_*t}g'(t)$ is nonincreasing. For $t\ge0$, integrate
$g'(u)\le g'(0)e^{2c_*u}$ over $[0,t]$; for $t<0$, integrate
the reverse inequality over $[t,0]$ and negate.
Both give \eqref{cnt:eq:monomer-mgf}.
\end{proof}

\begin{lemma}[Guides learned by ordinary Glauber dynamics]
\label{js:lem:learning}
For $0<\delta<1/2$, bounded $\widetilde O_\lambda(m+n)$ work suffices
to compute weights $0<w_v\le1$ such that, with probability $1-\delta$,
\begin{equation*}
 \mu(v^0)\le w_v\le3\mu(v^0)\qquad(v\in V).
\end{equation*}
The suppressed factors include $\log(1/\delta)$.
Uniformly for $0<\lambda\le\lambda_{\max}$ with fixed $\lambda_{\max}$,
only additional logarithms of $1/\lambda$ at small activity are needed.
For $0<\eta\le1/2$, fresh Glauber initialization costs another
$\widetilde O_\lambda(m+n)$ work, including $\log(1/\eta)$, and is
$\eta$-accurate conditional on every realized guide table.
\end{lemma}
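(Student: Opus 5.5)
The plan is to estimate each $\mu(v^0)=\E_\mu J_v$ by time-averaging $J_v$ along a single trajectory of ordinary single-edge Glauber dynamics. We burn in from the empty matching using \cref{thm:main}, so that the law is within a small TV distance of $\mu$. We then record the running occupation time $\sum_t J_v(M_t)$ of each vertex over a window of $T=\widetilde O_\lambda(m+n)$ further steps. Each GD step changes at most two free-status indicators, so all $n$ counters are maintained in $O(1)$ work per step. The final weight is $w_v=\min\{1,2\widehat\mu_v\}$, clipped below by $(1+\lambda\deg_G v)^{-1}$ from \eqref{cnt:eq:var}. This clipping keeps $w_v>0$ and puts a floor under every estimate.

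The estimate needs only a multiplicative factor-$2$ accuracy, uniformly over $v$. Since $\mu(v^0)$ can be as small as $1/(1+\lambda\Delta)$, I will not use worst-case variance bounds. Instead, I will bound the variance of the empirical average by a spectral argument. For a stationary reversible chain, the variance of $T^{-1}\sum_t f(M_t)$ is at most $2\gap^{-1}\Var_\mu(f)/T$. \cref{prop:matching-gap} gives $\gap^{-1}=O_\lambda(m)$. By \cref{thm:monomer-input}, or by the row-sum bound used in \cref{cnt:lem:cov} with $F=J_v$, we have $\Var J_v\le\mu(v^0)$. So the relative variance is $O_\lambda(m/(T\mu(v^0)))$. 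Together with $\mu(v^0)\ge(1+\lambda\Delta)^{-1}$, this is small once $T\gtrsim m\Delta$. That is too much, and this is the main obstacle: a plain Chebyshev bound pays $1/\mu(v^0)$.

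To beat it, I will use exponential concentration with a variance proxy proportional to the mean. The Lézaud/Paulin-type Bernstein inequality for reversible chains gives
\[
 \Prb\bigl(|\widehat\mu_v-\mu(v^0)|>\mu(v^0)/2\bigr)\le\exp\bigl(-c\,T\gap\,\mu(v^0)^2/(\Var J_v+\mu(v^0))\bigr)\le\exp\bigl(-c'T\mu(v^0)/m\bigr).
\]
This still depends on $\mu(v^0)$. The key additional observation is that lower-tail control is cheap: the clipping floor already gives $w_v\ge\mu(v^0)$ in the worst case up to the degree factor. The actual requirement $\mu(v^0)\le w_v\le3\mu(v^0)$ therefore needs the estimate only when $\mu(v^0)$ is near its lower bound. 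There I would replace the time average of $J_v$ by the Rao--Blackwellized observable $1/Z_v$-type quantity from \eqref{js:eq:clock-normalization}. Its conditional mean given the outside of the star is $\mu(v^0)$, and it is bounded by $1/(1+\lambda F_v)$, with relative variance $O(1)$ by the monomer exponential moment bound \eqref{cnt:eq:monomer-mgf} applied to $F=F_v$. This yields $T=\widetilde O_\lambda(m+n)$ with a $\log(n/\delta)$ factor from the union bound over vertices.

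Finally, I will handle the burn-in error and the uniformity in $\lambda$. Starting from the empty matching, the TV error after $\widetilde O_\lambda(m)$ steps enters additively into the failure probability. The $\log(1/\mu_{\min})$ term in \eqref{eq:matching-min-mass} contributes the extra $\log(1/\lambda)$ at small activity. The fresh-initialization claim is just \cref{thm:main} rerun with accuracy $\eta$, independent of the table.
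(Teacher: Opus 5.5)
Your diagnosis of the obstacle is correct, and the Rao--Blackwellized observable $r_v=1/Z_v$ is the right ingredient. Indeed $\Var_\mu r_v\le2\lambda\mu(v^0)^2$, via $\Var_{\mu^{v^0}}Z_v\le2\lambda(A-1)$ and $\E_{\mu^{v^0}}Z_v^{-1}\le(1+2\lambda)/A$ with $A=1/\mu(v^0)$.

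The gap is the step from this variance bound to a guarantee for all $n$ vertices at once. The only concentration tool you invoke is a Bernstein inequality for reversible chains. Up to constants, its exponent is $T\gap\,\epsilon^2/(\Var_\mu f+\epsilon\|f-\E_\mu f\|_\infty)$. For $f=r_v$ the range term can still be of order one even when $\mu(v^0)$ is small: $F_v=0$ occurs when $v$ is free and all its neighbors are matched elsewhere. So at $\epsilon=\mu(v^0)/2$ the exponent is again of order $T\mu(v^0)/m$, which is exactly the dependence you were trying to remove.

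Rao--Blackwellization therefore helps only at the Chebyshev level. With $\gap(P_{\mathrm{GD}})^{-1}=O_\lambda(m)$, it gives relative error $1/2$ with constant probability after $O_\lambda(m)$ stationary steps. A union bound over $n$ vertices along a single trajectory would then need $T\gtrsim mn/\delta$. The missing idea is median amplification, which is what the paper uses. It runs an odd number $R=O(\log(n/\delta))$ of blocks of length $O_\lambda(m)$. Each block is preceded by a fresh worst-start burn-in to TV error $\delta/(4R)$ and coupled to an independent stationary start, so each block fails at a given vertex with probability at most $1/16$. Coordinatewise medians then lie in $[\mu(v^0)/2,3\mu(v^0)/2]$ for every $v$ with probability $1-\delta$, and $w_v$ is twice the median, clipped.

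Two further points need repair.

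\emph{The case split.} Your reduction to vertices with $\mu(v^0)$ near its lower bound is false. The floor $(1+\lambda\deg_G v)^{-1}$ can be smaller than $\mu(v^0)$ by a factor as large as $1+\lambda\deg_G v$, so it never certifies $w_v\ge\mu(v^0)$. Intermediate marginals such as $\mu(v^0)\approx(1+\lambda\deg_G v)^{-1/2}$ are covered by neither branch. Simply use $r_v$ at every vertex.

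\emph{The cost of maintaining $r_v$.} Your $O(1)$-per-step counter maintenance is valid for $J_v$ but not for $r_v$, since a status change at $u$ changes $F_v$ at every neighbor of $u$. This is fixable, but it has to be argued. One option: each vertex changes status $O(T/m+1)$ times in expectation, so lazy interval accounting costs $O(T+m)$ in expectation and can be capped per block. Another: average $O(1+\lambda)$ burned-in snapshots of all $r_v$, each computed in $O(m+n)$ work, since one stationary snapshot already has relative variance at most $2\lambda$.

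\emph{Comparison with the paper.} The paper avoids this cost by averaging $J_v$ itself. It shows the asymptotic variance of $J_v$ is $O_\lambda(\mu(v^0)^2)$ through the split $J_v-\mu(v^0)=(J_v-r_v)+(r_v-\mu(v^0))$. The second term is your variance bound divided by the gap. The first satisfies the local Poisson equation $-\cL_{\mathrm{GD},v}(J_v-r_v)=Z_v(J_v-r_v)$, giving $\|J_v-r_v\|_{-1,\cL_{\mathrm{GD}}}^2\le\E[r_v^2(1-r_v)]$. With the median step added, your route of averaging $r_v$ directly is a valid alternative that bypasses this Poisson-equation argument.

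(Minor: $r_v$ is the conditional mean of $J_v$ given the outside of the star; it is its $\mu$-mean that equals $\mu(v^0)$.)
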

This initialization guarantee is in TV, not a pointwise warm-start bound.
\begin{proof}
If $m=0$, take $w_v=\mu(v^0)=1$ for every vertex and initialize at the empty
matching. Henceforth assume $m\ge1$.
For this and subsequent proofs in this section, use the Glauber generator
$\cL_{\mathrm{GD}}=m(1+\lambda)(P_{\mathrm{GD}}-\mathrm{Id})$, which deletes
occupied edges at rate one and inserts addable edges at rate $\lambda$.
At the fixed activity $\lambda$, write $\cE$ for the Dirichlet form
$\cE_1$ of Section~\ref{sec:abstract-glauber}, evaluated at
$\mu=\mu_{G,\lambda}$. Thus
\begin{equation*}
\begin{aligned}
 \cE(f,f):=\cE_1(f,f)
 &=\E_\mu\sum_{e\in M}[f(M)-f(M-e)]^2\\
 &=-\langle f,\cL_{\mathrm{GD}} f\rangle_\mu
 =m(1+\lambda)\cE_{P_{\mathrm{GD}}}(f,f).
\end{aligned}
\end{equation*}
By \cref{prop:matching-gap}, its spectral
gap is at least $\Gamma_\lambda:=c_\lambda=e^{-I_\lambda}>0$,
with $I_\lambda$ defined in \eqref{eq:c-lambda}.
We prove that the
asymptotic variance of a monomer indicator is proportional to $\mu(v^0)^2$,
even when $\mu(v^0)$ is small.
For a centered function $g$ and an irreducible reversible generator
$\mathcal A$, write
\[
 \|g\|_{-1,\mathcal A}^2
 :=\langle g,(-\mathcal A)^{-1}g\rangle_\mu
 =\sup_f\{2\langle g,f\rangle_\mu-\cE_{\mathcal A}(f,f)\},
\]
where the inverse is on the mean-zero subspace. The variational
formula also applies to a reducible generator when $g$ is orthogonal
to its kernel.

Let $F$ count free neighbors of $v$ under
$\mu^{v^0}$, and put $Z=1+\lambda F$, $A=\E_{\mu^{v^0}} Z=1/\mu(v^0)$.
The law outside the star of $v$ under $\mu$ has density $\mu(v^0) Z$
relative to $\mu^{v^0}$, and $r=\E[J_v\mid\text{outside the star}]=1/Z$.
By \cref{thm:monomer-input}, $\Var_{\mu^{v^0}} Z\le2\lambda(A-1)$. The identity
\[
 \E_{\mu^{v^0}}\frac1Z=\frac1A+
       \E_{\mu^{v^0}}\frac{(Z-A)^2}{A^2Z}\le\frac{1+2\lambda}{A}
\]
gives $\E_\mu r^2\le(1+2\lambda)\mu(v^0)^2$ and
$\Var_\mu r\le2\lambda \mu(v^0)^2$.
Write $J_v-\mu(v^0)=h+(r-\mu(v^0))$, where $h=J_v-r$.
Let $\cL_{\mathrm{GD},v}$ be the restriction of $\cL_{\mathrm{GD}}$ to updates
of edges incident to $v$, with unchanged rates. Then
$-\cL_{\mathrm{GD},v}h=Zh$, and its Dirichlet form is a sub-sum of that of
$\cL_{\mathrm{GD}}$. The variational formula for the inverse generator gives
\[
 \|h\|_{-1,\cL_{\mathrm{GD}}}^2\le\E[r^2(1-r)],\qquad
 \|r-\mu(v^0)\|_{-1,\cL_{\mathrm{GD}}}^2\le2\lambda \mu(v^0)^2/\Gamma_\lambda.
\]
Indeed $h/Z$ solves the local Poisson equation in each conditional
star. Hence
\begin{equation*}
 \|J_v-\mu(v^0)\|_{-1,\cL_{\mathrm{GD}}}^2\le K_\lambda \mu(v^0)^2,
 \qquad K_\lambda=2(1+2\lambda+2\lambda/\Gamma_\lambda).
\end{equation*}

Put $s_0=m(1+\lambda)$, so
$\cL_{\mathrm{GD}}=s_0(P_{\mathrm{GD}}-\mathrm{Id})$.
To pass to discrete occupation times, use the
standard covariance-series bound for a stationary chain with an
irreducible, reversible, positive semidefinite kernel $P$. For centered $g$,
\[
\begin{aligned}
 \Var\left(N^{-1}\sum_{t=1}^N g(M_t)\right)
 &=\frac{\|g\|_2^2}{N}
   +\frac2{N^2}\sum_{h=1}^{N-1}(N-h)\langle g,P^h g\rangle_\mu\\
 &\le\frac2N\langle g,(I-P)^{-1}g\rangle_\mu.
\end{aligned}
\]
Indeed all spectral coefficients are nonnegative, so the finite sum
is bounded by the corresponding geometric series.
For $P=P_{\mathrm{GD}}$, $(I-P)^{-1}=s_0(-\cL_{\mathrm{GD}})^{-1}$ on centered
functions. Thus
\[
 \Var\left(N^{-1}\sum_{t=1}^N J_v(M_t)\right)
 \le\frac{2s_0K_\lambda \mu(v^0)^2}{N}.
\]
Choose $N=\lceil128s_0K_\lambda\rceil$. Each estimate has relative
error at most $1/2$ except with probability $1/16$.
Use an odd number $R=O(\log(n/\delta))$ of independent blocks, each
preceded by a fresh worst-start Glauber burn-in with error $\delta/(4R)$.
Choose its length using \eqref{eq:discrete-time} and
\eqref{eq:matching-min-mass}.
Coupling their initial states to independent stationary starts and taking
coordinatewise medians puts every median in $[\mu(v^0)/2,3\mu(v^0)/2]$ with
probability at least $1-\delta$. Set $w_v$ to twice the median,
clipped to $[(1+\lambda(n-1))^{-1},1]$.

All $n$ occupation totals can be measured in $O(1)$ work per tick:
store the start of each current free interval, closing it whenever its
vertex changes status. A tick changes at most two statuses. Flush all
counters in $O(n)$ work per block. The burn-ins, measurements, and medians
therefore cost the asserted amount.
Use a fresh burn-in after learning: conditional on every realized guide
table its output is $\eta$-close to $\mu$. An accurate initial law is
not asserted to have a bounded pointwise density ratio.
For the asserted uniformity, $I_\lambda\le I_{\lambda_{\max}}$ and
$K_\lambda\le2(1+2\lambda_{\max}
+2\lambda_{\max}e^{I_{\lambda_{\max}}})$. Thus measurement blocks
have length $O_{\lambda_{\max}}(m)$; only the Glauber burn-ins introduce
the stated small-activity logarithms, as follows from
\eqref{eq:A-lambda} and \eqref{eq:matching-min-mass}.
\end{proof}

\subsection{Warm-start mixing from the star relaxation bound}
\label{js:sec:learned-mixing}
\label{js:sec:proposal-estimates}

The target is the following WPI and its warm-start mixing consequence.

\begin{lemma}[Learned JS mixing]\label{js:lem:learned-mixing}
Fix $\mu(v^0)\le w_v\le B\mu(v^0)$, $w_v\le1$, with constant $B$.
For $0<\rho\le1/2$, the learned kernel satisfies
\begin{equation}\label{js:eq:learned-wpi}
 \Var_\mu f\le
 C_{\lambda,B}n\log^2 n\log^2\frac{n}{\rho}\,
 \cE_{\widehat P_w}(f,f)
 +\rho\operatorname{osc}(f)^2.
\end{equation}
For $W\ge1$, $0<\varepsilon\le1/2$, and $\nu\le W\mu$,
\begin{align}
 \TV(\nu\widehat P_w^N,\mu)&\le\varepsilon,
 \label{js:eq:learned-mixing}\\
 \text{provided}\qquad N&\ge C'_{\lambda,B}n\log^2 n
       \log^2\frac{nW}{\varepsilon}
       \log\frac{eW}{\varepsilon}.
 \label{js:eq:learned-step-bound}
\end{align}
These are actual discrete steps, and the constants are uniform over
$0<\lambda\le\lambda_{\max}$ for every fixed finite $\lambda_{\max}$.
\end{lemma}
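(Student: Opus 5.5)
The plan has two parts: derive \eqref{js:eq:learned-wpi} by comparing Dirichlet forms, then convert the WPI into warm-start mixing. The reference is the star relaxation bound announced for Section~\ref{sec:star-inequalities}. I expect it in the form $\Var_\mu f\le C_\lambda n\log^2 n\,\sum_v\E_\mu[\Var(f\mid\text{outside star }v)]$, i.e.\ the star-update dynamics $\sum_v(H_v-I)$ has inverse gap $\widetilde O_\lambda(n)$.

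\emph{Step 1 (continuous proposal chain).} By the identity $\mathcal G_w=\sum_v w_vZ_v(H_v-I)$, the proposal Dirichlet form is $\cE_{\mathcal G_w}(f,f)=\sum_v\E_\mu[w_vZ_v\Var(f\mid\text{outside }v)]$. Since $w_v\ge\mu(v^0)$, the weight $w_vZ_v$ dominates $\mu(v^0)Z_v$. By \eqref{js:eq:clock-normalization}, this normalized star rate has mean one under $\mu^{v^0}$. I split on the event $\{\mu(v^0)Z_v\ge\theta\}$. On that event the star term is at least $\theta$ times the star-update term. On the complement the star term is at most $\operatorname{osc}(f)^2\Prb(\mu(v^0)Z_v<\theta)/4$ per vertex. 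For this lower-tail probability I will use \eqref{cnt:eq:monomer-mgf} with $t<0$, applied under $\mu^{v^0}$ to $F=\lambda\sum_{u\sim v}J_u$ and $c_*=\lambda$. It gives $\Prb(Z_v<\theta A)\le\exp(-\Omega(\E F))$ for small $\theta$, which is polynomially small when $\E F\gtrsim\log(n/\rho)$. When $\E F$ is small, $A=1/\mu(v^0)=O(\log(n/\rho))$ and $Z_v\ge1$, so $\mu(v^0)Z_v\gtrsim1/\log(n/\rho)$ deterministically. Choosing $\theta\asymp1/\log(n/\rho)$ and summing over $v$ then gives
\[
\Var_\mu f\lesssim n\log^2 n\log(n/\rho)\,\cE_{\mathcal G_w}(f,f)+\rho\operatorname{osc}(f)^2 .
\]
If the tail argument needs it, a second $\log(n/\rho)$ factor can be absorbed here.

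\emph{Step 2 (discrete Metropolis chain).} By \eqref{js:eq:learned-kernel}, the discrete chain satisfies $\cE_{\widehat P_w}\ge\cE_{\mathcal G_w}$ restricted to transitions with $\max\{R(M),R(M')\}\le R_*$, divided by $2R_*$. Using $w_v\le B\mu(v^0)$, the mean $\E R=\E X+\E Y$ is at most $B\lambda\sum_u\sum_{v\sim u}\mu(u^0)\mu(v^0)+B\sum_v\mu(v^0)$. This sum is $O_{\lambda,B}(n)$, and the $X$ part is controlled by \eqref{js:eq:guide-transfer}. Both $X$ and $Y$ are linear in the monomer indicators $J$ with coefficients $r_u\le\lambda\deg(u)$ and $w_v\le1$, so a concentration bound should give $R\le R_*=O(n\log(n/\rho))$ outside an event of probability at most $\rho/n$. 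Applying \eqref{cnt:eq:monomer-mgf} directly is awkward because the coefficients $r_u$ are unbounded; I would instead cap them by noting that $r_u\mu(u^0)$ is bounded. Transitions touching the high-rate set are then charged to $\rho\operatorname{osc}(f)^2$: each move changes at most two monomer indicators, so $R(M')/R(M)$ is controlled and the bad event is essentially closed under a single move. Steps 1 and 2 together yield \eqref{js:eq:learned-wpi}, with the $n\log^2 n\log^2(n/\rho)$ bookkeeping.

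\emph{Step 3 (warm-start mixing).} For $\nu\le W\mu$, write $h_k=d(\nu\widehat P_w^k)/d\mu$, which satisfies $\|h_k\|_\infty\le W$. Laziness and positive semidefiniteness give $\Var(h_{k+1})\le\Var(h_k)-\cE(h_k,h_k)$. Applying the WPI with $\operatorname{osc}\le W$ gives $\Var(h_{k+1})\le(1-1/\Gamma)\Var(h_k)+\rho W^2/\Gamma$, where $\Gamma=C n\log^2n\log^2(n/\rho)$. Choosing $\rho=\varepsilon^2/(4W^2)$ and iterating $\Gamma\log(eW/\varepsilon)$ steps gives $\Var(h_N)\le4\varepsilon^2$. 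Then $\TV\le\tfrac12\|h_N-1\|_2\le\varepsilon$, which is \eqref{js:eq:learned-step-bound}.

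The main obstacle is Step 1's lower tail for small-rate stars, specifically making the split uniform over vertices whose $\mu(v^0)$ is tiny. The second obstacle is Step 2's truncation of $R$, which has unbounded coefficients. For the latter I would first try a Chernoff bound on $\sum_u r_uJ_u$ after splitting vertices into dyadic classes of $r_u$.
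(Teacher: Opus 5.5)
\textbf{Verdict: genuine gap.} Your architecture matches the paper's: a star Poincar\'e inequality, then transfer to $\cE_{\mathcal G_w}$ via the lower tail of $\mu(v^0)Z_v$, then to $\widehat P_w$ by truncating $R$, then variance contraction. Steps~1 and~3 are sound in method. The gap is the scale of the input you assume, and it breaks the final bookkeeping.

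The star bound the paper proves (Lemma~\ref{js:lem:stars}) is $\Var_\mu f\le T_{\rm star}\sum_v\E_\mu\Var_v(f)$ with $T_{\rm star}=d_\lambda^{-1}\log^2 n$. So the generator $\sum_v(H_v-I)$, at rate one per vertex, has \emph{polylogarithmic} relaxation time, not $\wtO_\lambda(n)$. With your weaker $T_{\rm star}=C_\lambda n\log^2 n$, Step~1 correctly gives the coefficient $n\log^2 n\log(n/\rho)$ in front of $\cE_{\mathcal G_w}$. Step~2, as you set it up, then multiplies this by $2R_*$ with $R_*=O(n\log(n/\rho))$. The result is $n^2\log^2 n\log^2(n/\rho)$, a factor $n$ worse than \eqref{js:eq:learned-wpi}. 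So the claim that Steps~1 and~2 ``together yield'' the lemma is an arithmetic slip that hides a real deficiency.

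In the correct accounting, the single factor $n$ in the lemma comes only from passing to discrete steps, since the total proposal rate $R$ is of order $n$. Hence $\mathcal G_w$ itself must relax in time $O_\lambda(\log^2 n\log(n/\rho))$. That polylogarithmic star gap is the substantive input (Bip Coding, the small-activity down--up walk, and field-dynamics transfer in Section~\ref{sec:star-inequalities}). An $\wtO(n)$ bound, of the kind natural for a random-scan discrete star chain, does not suffice; it would give $\wtO(n^2)$ steps.

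Smaller points, not fatal:
\begin{enumerate}[label=(\alph*)]
\item In Step~2, reversibility \eqref{js:eq:proposal-balance} bounds the bad transitions by the exit capacity $\sum_{R(M)>R_*}\mu(M)R(M)$. So what you must make small is the truncated first moment $\E_\mu[R\one_{\{R>R_*\}}]$, multiplied by the Step~1 coefficient, not just $\Prb(R>R_*)$. An exponential tail controls both.
\item No dyadic splitting of the $r_u$ is needed. Since $w_v\le 1$, $\max_u r_u\le\lambda n$, the same order as the natural scale $C_n=B(1+2\lambda)n$. Then \eqref{cnt:eq:monomer-mgf} with $c_*=\lambda n$ and $t=1/C_n$, together with $Y\le n\le C_n$, gives $\E_\mu e^{R/C_n}\le e^{e}$ directly. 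This yields $R_*=O_{\lambda,B}(n\log(n/\rho))$.
\item In Step~1, the tail you compute under $\mu^{v^0}$ transfers to $\mu$ at no cost. On the bad event the outside-star density $\mu(v^0)Z_v$ of $\mu$ relative to $\mu^{v^0}$ is below $\theta\le1$.
\end{enumerate}
With $T_{\rm star}=O_\lambda(\log^2 n)$ substituted, your Steps~1--3 reproduce the paper's proof.
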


Recall that $H_v$ resamples the star at $v$ conditional on all other
edges. Write $\Var_v(f)$ for that conditional variance and
$\mathcal S_G=\sum_v(H_v-I)$.
The proof uses two comparisons, in the order
\[
 \underbrace{\mathcal S_G}_{\text{one update rate per star}}
 \ \longrightarrow\ \mathcal G_w
 \ \longrightarrow\ \widehat P_w.
\]
All auxiliary chains introduced in \cref{sec:star-inequalities} supply
only the bound $T_{\rm star}=O_\lambda(\log^2 n)$ in
\[
 \Var_\mu f\le T_{\rm star}\cE_{\mathcal S_G}(f,f).
\]
We use this bound here and prove it after completing the WPI comparison.

\subsubsection{From star updates to the proposal form}

The two forms differ only by the star rates:
\begin{equation}\label{js:eq:two-star-forms}
 \cE_{\mathcal S_G}(f,f)=\sum_v\E_\mu\Var_v(f),
 \qquad
 \cE_{\mathcal G_w}(f,f)=\sum_v\E_\mu[w_vZ_v\Var_v(f)].
\end{equation}
Because $w_v\ge \mu(v^0)$, it suffices to control the lower tail of
$\mu(v^0)Z_v$.

\begin{lemma}[Small star coefficients]\label{js:lem:clock-tail}
For $L\ge4$ and $V_v=\mu(v^0)Z_v$,
\begin{equation}\label{js:eq:clock-tail}
 \Prb_\mu(V_v<1/L)\le L^{-1}
                    \exp(-(L-1)/(16\lambda)).
\end{equation}
\end{lemma}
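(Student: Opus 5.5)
The plan is to change measure to the free-vertex conditional law $\mu^{v^0}$, where $Z_v$ is a linear statistic of monomer indicators, and then apply the exponential-moment bound \eqref{cnt:eq:monomer-mgf} of \cref{cnt:lem:cov}.

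\emph{Step 1 (change of measure).} By the normalization \eqref{js:eq:clock-normalization}, the law of the outside-star configuration under $\mu$ has density $V_v=\mu(v^0)Z_v$ relative to $\mu^{v^0}$. The function $Z_v$ depends only on that outside-star configuration. Hence
\[
 \Prb_\mu(V_v<1/L)=\E_{\mu^{v^0}}\bigl[V_v\one_{\{V_v<1/L\}}\bigr]
 \le L^{-1}\,\Prb_{\mu^{v^0}}(V_v<1/L).
\]
This produces the prefactor $L^{-1}$. It remains to show $\Prb_{\mu^{v^0}}(V_v<1/L)\le\exp(-(L-1)/(16\lambda))$.

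\emph{Step 2 (reduce to a lower tail for $F$).} Under $\mu^{v^0}$, identified with $\mu_{G-v,\lambda}$, we have $Z_v=1+\lambda F$, where $F=\sum_{u\sim v}J_u$ counts free neighbours. Put $A=\E_{\mu^{v^0}}Z_v=1/\mu(v^0)$. The event $\{V_v<1/L\}$ is $\{Z_v<A/L\}$. Since $Z_v\ge1$, this event is empty unless $A>L\ge4$, so assume $A>L$. Set $F'=\lambda F=\sum_{u\sim v}\lambda J_u$. This is a statistic of the form in \eqref{cnt:eq:monomer-mgf} under a matching law with unit monomer weights, with $c_u=\lambda$ on neighbours, $c_u=0$ elsewhere, and $c_*=\lambda$. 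Its mean is $\E F'=A-1$. The event becomes $\{F'<A/L-1\}\subseteq\{F'<A/L\}$.

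\emph{Step 3 (Chernoff).} For $t>0$, Markov's inequality together with \eqref{cnt:eq:monomer-mgf} gives
\[
 \Prb(F'<A/L)\le\exp\Bigl(\frac{tA}{L}+\frac{A-1}{2\lambda}\bigl(e^{-2\lambda t}-1\bigr)\Bigr).
\]
Choose $t=\log 2/(2\lambda)$. Then $e^{-2\lambda t}-1=-1/2$, and the exponent equals
\[
 \frac{A\log2}{2\lambda L}-\frac{A-1}{4\lambda}.
\]
Since $L\ge4$, the first term is at most $0.09A/\lambda$. So the exponent is at most
\[
 -\frac1\lambda\bigl(0.16A-0.25\bigr).
\]
Because $A>L\ge4$, we have $0.16A-0.25\ge(A-1)/16\ge(L-1)/16$. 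This yields the claimed bound $\exp(-(L-1)/(16\lambda))$.

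The main obstacle is Step 1 combined with the regime check in Step 2. The lower tail must be measured under $\mu^{v^0}$, where the density identity and the monomer MGF bound both apply, rather than under $\mu$. One must also notice that the trivial bound $Z_v\ge1$ forces $A>L$, and this is what converts the $A$-dependent Chernoff exponent into an $L$-dependent one. The remaining work is numerical constant-chasing, where there is slack.
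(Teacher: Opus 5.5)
Your proof is correct and matches the paper's argument: a change of measure to $\mu^{v^0}$ through the density $V_v$ gives the factor $L^{-1}$, the event is empty unless $A>L$, and a Chernoff bound from \eqref{cnt:eq:monomer-mgf} at the same effective tilt finishes the proof. The only difference is cosmetic: the paper applies the bound to $F$ with $c_*=1$ and threshold $\E_{\mu^{v^0}}F/2$, whereas you apply it to $\lambda F$ with $c_*=\lambda$ and threshold $A/L$ directly.
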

\begin{proof}
Under $\mu^{v^0}$, let $F$ count free neighbors of $v$, and put
$Z=1+\lambda F$ and $A=\E_{\mu^{v^0}} Z=1/\mu(v^0)$, as in
\eqref{js:eq:clock-normalization}. Applying \eqref{cnt:eq:monomer-mgf}
with $c_*=1$ and $t=-(\log2)/2$, Chernoff's inequality gives
\[
 \Prb_{\mu^{v^0}}(F\le\E_{\mu^{v^0}} F/2)
 \le e^{-(1-\log2)\E_{\mu^{v^0}} F/4}\le e^{-\E_{\mu^{v^0}} F/16}.
\]
The event $\mu(v^0)Z<1/L$ is empty unless $A>L$; otherwise it implies
$F<\E_{\mu^{v^0}} F/2$ and $\E_{\mu^{v^0}} F>(L-1)/\lambda$.
The outside-star law under $\mu$ has density $Z/A<1/L$ on this
event, proving \eqref{js:eq:clock-tail}.
\end{proof}

\begin{lemma}[WPI for the proposal form]\label{js:lem:proposal-wpi}
If $w_v\ge \mu(v^0)$, then for $0<s\le1/2$,
\begin{equation*}
 \Var_\mu f\le a_\lambda(s)\cE_{\mathcal G_w}(f,f)
                     +s\operatorname{osc}(f)^2,
 \qquad
 a_\lambda(s)=O_\lambda\!\left(
       \log^2 n\log\frac{n}{s}\right).
\end{equation*}
The constants can be chosen uniformly for $0<\lambda\le\lambda_{\max}$,
for every fixed finite $\lambda_{\max}$.
\end{lemma}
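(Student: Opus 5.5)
Split each star variance according to whether the coefficient $V_v=\mu(v^0)Z_v$ is small, and pay for the small-coefficient part with the oscillation term. We assume the star relaxation bound $\Var_\mu f\le T_{\rm star}\cE_{\mathcal S_G}(f,f)$ with $T_{\rm star}=O_\lambda(\log^2 n)$, and the two Dirichlet-form expressions \eqref{js:eq:two-star-forms}. Since $w_v\ge\mu(v^0)$, we have $w_vZ_v\ge V_v$. Fix a threshold $L\ge4$, to be chosen in terms of $n$ and $s$. Each star variance then splits as
\[
 \E_\mu\Var_v(f)=\E_\mu[\Var_v(f)\one_{\{V_v\ge1/L\}}]
                 +\E_\mu[\Var_v(f)\one_{\{V_v<1/L\}}].
\]
This split is legitimate because $V_v$ is a function of the outside-star configuration, hence constant on each conditional star.

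On the first event $w_vZ_v\ge1/L$, so summing over $v$ bounds the first part by $L\,\cE_{\mathcal G_w}(f,f)$. For the second part, use $\Var_v(f)\le\operatorname{osc}(f)^2/4$ and Lemma~\ref{js:lem:clock-tail}; summing over the $n$ vertices gives at most
\[
 \tfrac n4 L^{-1}e^{-(L-1)/(16\lambda)}\operatorname{osc}(f)^2 .
\]
Combining, we obtain
\[
 \Var_\mu f\le T_{\rm star}L\,\cE_{\mathcal G_w}(f,f)
 +T_{\rm star}\,\tfrac n4 e^{-(L-1)/(16\lambda)}\operatorname{osc}(f)^2 .
\]

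Next we choose $L$ so that the defect is at most $s$. It suffices to take $L=\max\{4,\,1+16\lambda\log(nT_{\rm star}/s)\}$. Because $T_{\rm star}=O_\lambda(\log^2 n)$, this gives $L=O_\lambda(\log(n/s))$, and then
\[
 a_\lambda(s)=T_{\rm star}L=O_\lambda\!\left(\log^2 n\log\tfrac ns\right).
\]
For uniformity over $0<\lambda\le\lambda_{\max}$, note two facts. First, the choice of $L$ depends on $\lambda$ only through the factor $16\lambda\le16\lambda_{\max}$. Second, the constants in $T_{\rm star}$ are assumed uniform in $\lambda$, as stated at the start of Section~\ref{sec:preconditioned-js}.

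The main obstacle is the ingredient borrowed from later in the paper, the star relaxation bound $T_{\rm star}=O_\lambda(\log^2 n)$. This step only uses it. Within the step itself, the only delicate point is the measurability of $V_v$ with respect to the outside-star configuration, which makes the indicator split commute with $\Var_v$. Lemma~\ref{js:lem:clock-tail} gives exactly the lower tail needed: the density bound $Z/A<1/L$ supplies the prefactor, and the Chernoff bound supplies the exponential.
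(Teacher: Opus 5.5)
Your proposal is correct and follows the paper's own proof essentially step for step. You split each star variance according to whether the outside-star-measurable coefficient is at least $1/L$, and charge the large-coefficient part to $\cE_{\mathcal G_w}$ at cost $L$. You bound the small-coefficient part by $\operatorname{osc}(f)^2/4$ times the tail from Lemma~\ref{js:lem:clock-tail}, then apply the star bound and take $L=O_\lambda(\log(n/s))$. Your explicit choice of $L$, and the observation that uniformity follows from $d_\lambda\ge d_{\lambda_{\max}}$ and $16\lambda\le16\lambda_{\max}$, simply spell out what the paper leaves terse.
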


\begin{proof}
By \eqref{js:eq:two-star-forms}, on $\{w_vZ_v\ge1/L\}$ the star
variance is at most $L$ times its proposal contribution. On the
complement, use $\Var_v(f)\le\operatorname{osc}(f)^2/4$.
Since $w_v\ge\mu(v^0)$, \cref{js:lem:clock-tail} bounds the probability
of this complement. Summing over $v$ and applying the star bound
with $T_{\rm star}=d_\lambda^{-1}\log^2n$ from \cref{js:lem:stars} gives
\[
 \Var f\le T_{\rm star}L\cE_{\mathcal G_w}(f,f)
 +\frac{nT_{\rm star}}{4L}e^{-(L-1)/(16\lambda)}
                         \operatorname{osc}(f)^2.
\]
Choosing $L=O_\lambda(\log(n/s))$ sufficiently large makes the second
coefficient at most $s$. Uniformity follows by using
$d_{\lambda_{\max}}$ and $\lambda_{\max}$ in these estimates.
\end{proof}

\subsubsection{From proposal rates to discrete mixing}

The second comparison controls the state-dependent denominator in
\eqref{js:eq:learned-kernel}.

For a threshold $b$, call an undirected transition $\{M,M'\}$ with
$c_w(M,M')>0$ \emph{good} if $R(M)\le b$ and $R(M')\le b$, and
\emph{bad} otherwise. On every good transition, \eqref{js:eq:learned-kernel}
gives $c_w(M,M')\le 2b\,\widehat P_w(M,M')$.
By reversibility \eqref{js:eq:proposal-balance}, the total exit capacity
of states with $R>b$ counts each transition crossing this threshold
once and each transition between two such states twice. It therefore
bounds the bad-transition contribution after replacing each squared
increment by $\operatorname{osc}(f)^2$. Consequently,
\begin{equation}\label{js:eq:proposal-discrete-comparison}
 \cE_{\mathcal G_w}(f,f)
 \le2b\cE_{\widehat P_w}(f,f)
   +\E_\mu[R\one_{\{R>b\}}]\operatorname{osc}(f)^2.
\end{equation}
The required upper tail uses the other half of the guide condition,
$w_v\le B\mu(v^0)$.

\begin{lemma}[Total proposal weight]\label{js:lem:total-weight}
Suppose $\mu(v^0)\le w_v\le B\mu(v^0)$ for a constant $B\ge1$. Then
\begin{equation}\label{js:eq:mean-work}
 \E_\mu R\le B(5+\lambda)n/4,
\end{equation}
and, for $C_n=B(1+2\lambda)n$,
\begin{equation}\label{js:eq:exp-work}
 \E_\mu e^{R/C_n}\le e^e.
\end{equation}
In particular, for every $b\ge0$,
\begin{equation}\label{js:eq:total-weight-tail}
 \E_\mu[R\one_{\{R>b\}}]\le e^e(b+C_n)e^{-b/C_n}.
\end{equation}
\end{lemma}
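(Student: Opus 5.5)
The plan is to exploit that $R$ is an affine function of the monomer indicators $J_u$. The mean bound then follows from the marginal estimates of \cref{cnt:lem:cov}, the exponential moment from the monomer moment-generating bound \eqref{cnt:eq:monomer-mgf}, and the tail from a pointwise inequality.

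\emph{Mean.} Split $R=X+Y$ as in \eqref{js:eq:rate}.
\begin{itemize}
\item For $X$: using $w_v\le B\mu(v^0)$ and $\E J_u=\mu(u^0)$,
\[
 \E_\mu X=\lambda\sum_u\sum_{v\sim u}w_v\,\mu(u^0)
 \le B\lambda\sum_u\sum_{v\sim u}\mu(u^0)\mu(v^0)\le B(1+\lambda/4)n,
\]
by \eqref{js:eq:guide-transfer}.
\item For $Y$: $\E_\mu Y=\sum_v w_v(1-\mu(v^0))\le B\sum_v\mu(v^0)(1-\mu(v^0))\le Bn/4$.
\end{itemize}
Adding the two gives $B(5+\lambda)n/4$, which is \eqref{js:eq:mean-work}.

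\emph{Exponential moment.} Since $w_v\le1$, we have $Y\le n$, so $Y/C_n\le 1/(B(1+2\lambda))\le1$ deterministically. It therefore suffices to bound $\E e^{tX}$ with $t=1/C_n$.

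Write $X=F=\sum_u c_uJ_u$ with $c_u=r_u\ge0$. Each $r_u\le\lambda\deg(u)\le\lambda n$, so we may take $c_*=\lambda n$ in \eqref{cnt:eq:monomer-mgf}; a larger $c_*$ only weakens that bound, so this choice is legitimate. Then
\[
 x:=2c_*t=\frac{2\lambda}{B(1+2\lambda)}\le1 .
\]
Because $(e^x-1)/x$ is increasing, $(e^x-1)/x\le e-1$ for $x\le1$. Hence
\[
 \log\E e^{tX}\le (e-1)\,t\,\E X\le (e-1)\frac{B(1+\lambda/4)n}{B(1+2\lambda)n}\le e-1 .
\]
Combining with $Y/C_n\le1$ gives $\E_\mu e^{R/C_n}\le e^{1+(e-1)}=e^e$, which is \eqref{js:eq:exp-work}.

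\emph{Tail.} We use the pointwise inequality
\[
 x\one_{\{x>b\}}\le (b+C_n)\,e^{(x-b)/C_n}\qquad(x\ge0).
\]
For $x\le b$ the left side is zero. At $x=b$ the right side exceeds $x$ by $C_n>0$. For $x\ge b$ the right side has derivative $\frac{b+C_n}{C_n}e^{(x-b)/C_n}\ge1$, so the gap stays positive. Taking expectations and applying \eqref{js:eq:exp-work} yields \eqref{js:eq:total-weight-tail}.

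The only delicate point is the exponential moment. The coefficients $r_u$ can be as large as order $\lambda n$, so one must verify that the normalization $C_n=B(1+2\lambda)n$ keeps $2c_*t\le1$. This is what makes the factor $(e^{2c_*t}-1)/(2c_*t)$ bounded by a constant, and the computation above shows it does.
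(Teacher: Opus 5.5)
Your proof is correct and follows the paper's argument essentially step for step. Both use the split $R=X+Y$, the marginal bounds \eqref{js:eq:guide-transfer} for the mean, and \eqref{cnt:eq:monomer-mgf} with $2c_*t\le1$ plus a deterministic bound $Y\le C_n$ for the exponential moment. The differences are cosmetic: you bound $r_u$ and $Y$ via the section's standing convention $w_v\le1$ (the paper instead uses $w_v\le B\mu(v^0)\le B$, taking $c_*=\lambda Bn$), and you get the tail from a pointwise inequality rather than by integrating Markov's bound on $\Prb(R>t)$.
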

\begin{proof}
Write $R=X+Y$ as in \eqref{js:eq:rate}. By
\cref{cnt:lem:cov},
\[
 \E X\le B(1+\lambda/4)n,\qquad
 \E Y\le B\sum_v\mu(v^0)(1-\mu(v^0))\le Bn/4,
\]
which proves \eqref{js:eq:mean-work}. For the exponential moment,
use $X=\sum_u r_uJ_u$ with $r_u=\lambda\sum_{v\sim u}w_v\le\lambda Bn$.
Apply \eqref{cnt:eq:monomer-mgf} with $c_*=\lambda Bn$ and $t=1/C_n$.
Writing $x=2\lambda Bn/C_n\le1$ gives
\[
 \log\E e^{X/C_n}
 \le\frac{\E X}{C_n}\frac{e^x-1}{x}\le e-1,
\]
since $\E X/C_n\le1$ and $(e^x-1)/x\le e-1$ for $0<x\le1$.
Together with $Y\le Bn\le C_n$, this proves \eqref{js:eq:exp-work}.
Finally integrate the
exponential upper tail of $R$:
\[
 \E[R\one_{\{R>b\}}]
 =b\Prb(R>b)+\int_b^\infty\Prb(R>t)\,dt
 \le e^e(b+C_n)e^{-b/C_n}.
\]
\end{proof}

\begin{proof}[Proof of \cref{js:lem:learned-mixing}]
Combining \eqref{js:eq:proposal-discrete-comparison} with
\cref{js:lem:proposal-wpi} yields
\begin{equation}\label{js:eq:learned-comparison}
 \Var f\le2b a_\lambda(s)\cE_{\widehat P_w}(f,f)
 +\left(s+a_\lambda(s)\E_\mu[R\one_{\{R>b\}}]\right)
                      \operatorname{osc}(f)^2.
\end{equation}
Set $s=\rho/2$, $C_n=B(1+2\lambda)n$, and choose
\[
 U=\max\left\{1,\frac{4e^eC_na_\lambda(\rho/2)}\rho\right\},
 \qquad b=2C_n(1+\log U).
\]
By \eqref{js:eq:total-weight-tail} and
$e^{-2}(3+2\log U)\le2U$ for $U\ge1$,
\[
 a_\lambda(\rho/2)\E[R\one_{\{R>b\}}]\le\rho/2.
\]
Moreover $b=O_{\lambda,B}(n\log(n/\rho))$.
Substitution in \eqref{js:eq:learned-comparison} proves
\eqref{js:eq:learned-wpi}. The threshold $b$ is used only in the proof.

\emph{Discrete variance contraction.}
Let $h_t=d(\nu\widehat P_w^t)/d\mu$ and $V_t=\Var_\mu h_t$.
Reversibility gives $h_{t+1}=\widehat P_wh_t$, and Markov contraction
preserves $0\le h_t\le W$. In particular $V_0\le W-1$ and
$\operatorname{osc}(h_t)\le W$. Since the kernel is positive
semidefinite,
\[
 V_t-V_{t+1}
 =\langle h_t,(I-\widehat P_w^2)h_t\rangle_\mu
 \ge\cE_{\widehat P_w}(h_t,h_t).
\]
Take $\rho=\varepsilon^2/W^2\le1/4$ in
\eqref{js:eq:learned-wpi}, and enlarge its coefficient $D$
to be at least one. Then
\[
 V_{t+1}\le(1-D^{-1})V_t+\varepsilon^2/D,
 \qquad V_t\le e^{-t/D}(W-1)+\varepsilon^2.
\]
After $O(D\log(eW/\varepsilon))$ steps, $V_t\le2\varepsilon^2$.
The inequality $\TV(\nu\widehat P_w^t,\mu)\le\sqrt{V_t}/2$
proves \eqref{js:eq:learned-mixing}--\eqref{js:eq:learned-step-bound}.
The proposal WPI is uniform on each bounded activity interval, and
$C_n\le B(1+2\lambda_{\max})n$ there. Consequently both constants in
the lemma are uniform on that interval; no positive lower bound on
the activity is needed.
\end{proof}

\subsection{A comparison proof of the star relaxation bound}
\label{sec:star-inequalities}

We prove the star relaxation bound used in
\cref{js:sec:learned-mixing}. Write
$T_{\rm star}^G(\lambda)=\gap(-\mathcal S_G)^{-1}$, with update rate
one per vertex. On a singleton state space the variance bounds are
trivial. The auxiliary kernels below are used only in the proof.

\begin{lemma}[Star spectral gap]\label{js:lem:stars}
For every simple graph at activity $\lambda>0$,
\begin{equation}\label{js:eq:star-gap}
 \gap(-\mathcal S_G)\ge d_\lambda\log^{-2}n,\qquad
 d_\lambda=
 \frac{c_0e^{-I_{2\lambda}}}
 {\max\{1,\lambda\}(1+\lambda e^{I_\lambda})},
\end{equation}
where $c_0>0$ is absolute. In particular $d_\lambda\ge d_{\lambda_{\max}}>0$
whenever $0<\lambda\le\lambda_{\max}$.
\end{lemma}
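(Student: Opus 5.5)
The plan follows the three-stage route sketched in the technical overview: bipartite graphs at very small activity, then the target activity, then general graphs. The factor $e^{-I_{2\lambda}}$ in $d_\lambda$ comes from the field-dynamics transfer. The factor $1+\lambda e^{I_\lambda}$ comes from the GD gap of \cref{prop:matching-gap} used in the bipartition step.

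\emph{Step 1: bipartite graphs at small activity.} Let $G$ be bipartite with sides $L,R$. For each $u\in L$ add one blank symbol. A matching then becomes a fixed-cardinality configuration that chooses, for every $u\in L$, either a neighbor in $R$ or its blank. Under this encoding the down--up walk removing one element and resampling it is exactly the star update $H_u$ for $u\in L$, up to rate normalization. I would verify the pinned spectral-independence hypothesis of the local-to-global theorem of \cite{AL20} for this homogeneous encoding, using the joint monomer--dimer covariance bound (\cref{thm:monomer-input}, adapted from \cite{ConcurrentHolant26}). This bound holds under positive pinning, which removes matched vertices, and also under pinning of blanks, which only deletes a vertex of $L$; both leave a matching law. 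The point is that the normalized influence coefficient is $1+O(\sqrt\lambda)$ and not merely $O(1)$. The local-to-global product $\prod_k(1-\epsilon_k)$ over at most $n$ levels then stays bounded below by a constant once $\lambda\le c\log^{-2}n$. This gives a constant down--up gap, hence $\gap(-\sum_{u\in L}(H_u-I))\ge c$ at that activity. Symmetrizing over the two sides gives the same for $\mathcal S_G$.

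\emph{Step 2: transfer to the target activity.} Apply the field-dynamics comparison \cite[Theorem 1.16]{CCCYZ25} along decreasing scalar fields from $\lambda$ down to $\lambda_1=c\log^{-2}n$, with the stability rate supplied by \cref{prop:holant-edge}. The spectral-stability loss integrates to at most $e^{I_{2\lambda}}$; the doubled activity gives slack when the star block is conditioned. The comparison of one field-dynamics step at ratio $\lambda_1/\lambda$ with star updates costs a factor $\lambda/\lambda_1\cdot\max\{1,\lambda\}$. This produces the $\log^{-2}n$ factor and $\max\{1,\lambda\}$. I expect this to be the \emph{main obstacle}. The field-dynamics resampling from a pinned posterior must be dominated by star updates, which needs the star gap at small activity uniformly over all positively pinned residual bipartite graphs. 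Step 1 must therefore be proved in that generality, and the $1+O(\sqrt\lambda)$ bound must survive pinning. I would first try to show that the pinned residual of a bipartite matching law is again a bipartite matching law, so Step 1 applies verbatim.

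\emph{Step 3: general graphs via random bipartition.} For general $G$, color the vertices independently and uniformly and let $G_\sigma$ be the bipartite subgraph of bichromatic edges. Conditioned on the configuration of the monochromatic edges, the law is a matching law on a bipartite graph, so Steps 1--2 apply there. Averaging over $\sigma$, each edge is bichromatic with probability $1/2$. This controls the variance of $f$ by the star form, up to the variance carried by the monochromatic edges. That remaining variance is absorbed by the GD Poincar\'e inequality (\cref{prop:matching-gap}). Each single-edge update is itself part of a star update at either endpoint, so this contributes the factor $1+\lambda e^{I_\lambda}$. Combining the three steps yields \eqref{js:eq:star-gap}. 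Uniformity in $\lambda\le\lambda_{\max}$ follows because $I_\lambda$ is nondecreasing.
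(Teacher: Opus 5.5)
Your Steps 1 and 2 follow the paper. The one-shore blank encoding identifies down--up with $\mathcal S_B$. The joint coefficient $1+28\sqrt\lambda$ fed into the product of \cite{AL20} gives $\cE_{\mathcal S_B}(f,f)\ge\frac12\Var f$ at activity $\lambda_*\asymp\log^{-2}n$ on every residual bipartite graph; the relevant input is \cref{js:prop:small-activity-covariance}, not the monomer-only \cref{thm:monomer-input}. Field dynamics with frozen fraction $1-t_0/\alpha$ is then combined with this bound conditionally on the frozen edges.

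The genuine gap is Step 3. Your coloring $\sigma$ is independent of $M$, so the law of total variance leaves the term $\E_\sigma\Var(\E[f\mid\sigma,M_{\rm mono}])$. You need this to be at most $(1-c)\Var f$ with $c$ free of $\Delta$. The remark that ``each edge is bichromatic with probability $1/2$'' is a product-measure heuristic; for $\mu$ it amounts to a degree-free block factorization, which must itself be proved. The Glauber Poincar\'e inequality plus ``each single-edge update is part of a star update'' cannot supply it. The bound $\E\Var(f\mid X_{-e})\le\E\Var(f\mid\mathcal O_u)$ holds edge by edge, but summing over the edges at $u$ loses a degree factor, and this loss is real. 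On $K_n$ at $\lambda=1$, take $f=|M|$. Then $\cE_1(f,f)=\E|M|\approx n/2$, whereas $\sum_v\E\Var_v(f)=\sum_v\E[Z_v^{-1}(1-Z_v^{-1})]\le\sum_v\mu(v^0)=O(\sqrt n)$. So the Glauber form is not bounded by any $n^{o(1)}$ multiple of the unit-rate star form. Relatedly, given $(\sigma,M_{\rm mono})$ your residual law has activity $\lambda$, so $I_{2\lambda}$ never arises in your construction. Your Step 2 explanation of the doubled activity does not correspond to any step of your argument.

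The missing idea is the paper's Bip Coding coupling, in which the cut is sampled \emph{given} $M$. Each matched edge is oriented across the cut and each free vertex is placed uniformly, giving \eqref{js:eq:bip-coding-law} and $M\mid L\sim\mu_{G[L,L^c],2\lambda}$. The factor $2$ counts the two orientations and is the source of $I_{2\lambda}$. The Glauber gap is then applied not to $f$ but to $Cg(M)=\E[g(L)\mid M]$. Inserting an addable edge $e$ forces opposite signs at its endpoints, so $Cg(M+e)-Cg(M)=-\E[g\chi_e\mid M]$ with orthonormal sign products $\chi_e$. Bessel's inequality then gives $\Gamma_\lambda\Var(Cg)\le\cE(Cg,Cg)\le\lambda\E\Var(g\mid M)=\lambda(\Var g-\Var Cg)$. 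Hence $\|C\|^2\le\lambda/(\Gamma_\lambda+\lambda)$ on centered functions and $T_{\rm Bip}\le1+\lambda e^{I_\lambda}$. The law of total variance then gives $T_{\rm star}^G(\lambda)\le T_{\rm Bip}\sup_L T_{\rm star}^{G[L,L^c]}(2\lambda)$.

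With an independent coloring, this Bessel step has no pointwise analogue. Let $E_{\rm mono}(\sigma)$ be the set of monochromatic edges and $C'g(M)=\E_\sigma g(\sigma,M\cap E_{\rm mono}(\sigma))$. The function $g(\sigma,N)=\one\{N=M_0\cap E_{\rm mono}(\sigma)\}$ has $\Var(g\mid M=M_0)=0$, while $C'g(M_0)-C'g(M_0-e)=1/2$ for every $e\in M_0$.
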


There are two branches in the proof. The first uses the gap of $\cL_{\mathrm{GD}}$
to control a matching-to-cut-to-matching kernel, called
Bip Coding. The second controls bipartite star updates through a
small-activity down--up walk and field dynamics. They combine as follows:
\[
\begin{array}{ccccc}
 \text{Glauber gap}&\longrightarrow&\text{Bip Coding}&&\\
 &&\searrow&&\\[-1mm]
 &&&\text{general-graph star}&\\[-1mm]
 &&\nearrow&&\\
 \text{down--up / one-shore star}
 &\xrightarrow{\ \text{field}\ }&\text{bipartite star}.&&
\end{array}
\]
We establish these comparisons first, postponing the Bip Coding and
small-activity down--up gap proofs to \cref{js:sec:auxiliary-proofs}.

\subsubsection{Bip Coding and the reduction to bipartite stars}

Given $M$, independently orient each occupied edge across a random
cut, and assign each free vertex to either side with equal probability.
If $L$ is one side of the cut, the joint law is
\begin{equation}\label{js:eq:bip-coding-law}
 \varpi_\lambda(M,L)=
 \frac{2^{-n}(2\lambda)^{|M|}}{Z_G(\lambda)}
          \one\{M\subseteq E(L,L^c)\}.
\end{equation}
Its matching marginal is $\mu_{G,\lambda}$, and
$M\mid L\sim\mu_{G[L,L^c],2\lambda}$. Define $P_{\rm Bip}$ by
sampling $L\mid M$ and then a new matching conditional on $L$.
Its Dirichlet form and relaxation time satisfy
\begin{equation}\label{js:eq:bip-coding-gap}
 \cE_{P_{\rm Bip}}(f,f)=\E_L\Var(f(M)\mid L),
 \qquad
 T_{\rm Bip}:=\gap(P_{\rm Bip})^{-1}
       \le1+\lambda e^{I_\lambda}.
\end{equation}
The identity follows from conditional resampling; the gap bound is
proved in \cref{js:sec:auxiliary-proofs} using the already established
gap lower bound $\Gamma_\lambda=e^{-I_\lambda}$ for $\cL_{\mathrm{GD}}$.

For a fixed cut, apply the star Poincar\'e inequality on its bipartite
graph at activity $2\lambda$. Let
$\mathcal O_v(M)=\{e\in M:v\notin e\}$ be the matching configuration
outside the star at $v$. The conditional variance comparison
needed to average over cuts is explicit:
\[
 \E_{\varpi_\lambda}\Var(f(M)\mid\mathcal O_v,L)
 \le\E_\mu\Var(f(M)\mid\mathcal O_v).
\]
This is the law of total variance, and summing over $v$ gives
\begin{equation*}
 \E_L\cE_{\mathcal S_{G[L,L^c]},\,2\lambda}(f,f)
 \le\cE_{\mathcal S_G,\,\lambda}(f,f).
\end{equation*}
Combining these identities yields the comparison of relaxation times:
\begin{equation}\label{js:eq:general-bip-comparison}
 T_{\rm star}^G(\lambda)
 \le T_{\rm Bip}\sup_L T_{\rm star}^{G[L,L^c]}(2\lambda).
\end{equation}
Cuts with a singleton matching space contribute zero variance.
Thus only the bipartite star bound remains to be supplied.

\subsubsection{Down--up is a one-shore star update}

On a bipartite graph $A\sqcup B$ with $k=|B|>0$, encode a matching as
\[
 Y(M)=M\cup\{b^0:b\in B\text{ is free in }M\}.
\]
Each $b\in B$ contributes exactly one symbol: either its matched
edge or its blank $b^0$. The resulting law is supported on $k$-element
sets. A down--up step deletes one uniformly chosen symbol and
resamples it conditional on the remaining $k-1$ symbols. The missing
symbol identifies the vertex $b$ whose star is resampled, so exactly
\begin{equation}\label{js:eq:down-up-star}
 P_{\rm DU}=\frac1k\sum_{b\in B}H_b,
 \qquad \mathcal S_B:=\sum_{b\in B}(H_b-I)=k(P_{\rm DU}-I).
\end{equation}
In particular a $1/(2k)$ down--up gap means a $1/2$ gap for the
one-shore star generator. The technical input, proved in
\cref{js:sec:auxiliary-proofs}, is
\begin{equation}\label{js:eq:small-activity-input}
 \lambda_*=[112(1+\log(n+1))]^{-2},\qquad
 \cE_{\mathcal S_B}(f,f)\ge\tfrac12\Var f
 \quad(0<\alpha\le\lambda_*).
\end{equation}
It holds on every bipartite graph with at most $n$ vertices, at
activity $\alpha$, including all residual graphs. Since
$\cE_{\mathcal S_{A\sqcup B}}\ge\cE_{\mathcal S_B}$, the corresponding
all-star relaxation time is at most two as well.

\subsubsection{Field dynamics transfers the small-activity bound}

Fix a bipartite graph at activity $\alpha$ and $0<t\le\alpha$.
Given $M$, freeze each occupied edge independently with probability
$1-t/\alpha$, obtaining $F_t$, and resample the remaining matching
conditional on $F_t$. Denote this field kernel by
$P_{\rm fld}^{\alpha,t}$. Its conditional unfrozen law is the matching
law at activity $t$ on the graph obtained by deleting the endpoints
of $F_t$. Thus
\[
 \cE_{P_{\rm fld}^{\alpha,t}}(f,f)
 =\E\Var(f(M)\mid F_t).
\]
The field-dynamics estimate \cite[Theorem 1.16]{CCCYZ25}, with
retained fraction $1-t/\alpha$ and rate
$C(s)=\widehat C(\alpha(1-s))$ from \cref{prop:holant-edge}, gives
\begin{equation*}
 \frac t\alpha e^{-(I_\alpha-I_t)}\Var f
 \le\cE_{P_{\rm fld}^{\alpha,t}}(f,f).
\end{equation*}
Here $I_\alpha-I_t=\int_t^\alpha(\widehat C(u)-1)\,du/u$ by
\eqref{eq:c-lambda}; substituting $u=\alpha(1-s)$ in the cited
integral gives the displayed factor.
At $t=\alpha$ the kernel resamples the whole law, so equality holds.
For $t_0=\min\{\alpha,\lambda_*\}$, the residual star bound
\eqref{js:eq:small-activity-input} gives the other half of the comparison:
\begin{equation*}
 \frac{t_0}{\alpha}e^{-(I_\alpha-I_{t_0})}\Var f
 \le\E\Var(f(M)\mid F_{t_0})
 \le2\cE_{\mathcal S_{A\sqcup B},\,\alpha}(f,f).
\end{equation*}
For the last inequality, apply the star inequality conditionally on
$F_{t_0}$, then use
$\E\Var(f\mid\mathcal O_v,F_{t_0})
\le\E\Var(f\mid\mathcal O_v)$ at each vertex. A vertex incident to
a frozen edge has zero residual star variance. Therefore
\begin{equation}\label{js:eq:bip-star-relaxation}
 T_{\rm star}^{A\sqcup B}(\alpha)
 \le\frac{2\alpha}{t_0}e^{I_\alpha-I_{t_0}}
 =O_\alpha(\log^2 n).
\end{equation}
\begin{proof}[Proof of \cref{js:lem:stars} from the comparison bounds]
Apply \eqref{js:eq:bip-star-relaxation} at $\alpha=2\lambda$ in
\eqref{js:eq:general-bip-comparison}, and use
\eqref{js:eq:bip-coding-gap}. With
$t_1=\min\{2\lambda,\lambda_*\}$, this gives
\[
 \gap(-\mathcal S_G)\ge
 \frac{t_1}{4\lambda}
 \frac{e^{-(I_{2\lambda}-I_{t_1})}}
      {1+\lambda e^{I_\lambda}}.
\]
Since $t_1/(4\lambda)\ge\lambda_*/(4\max\{1,\lambda\})$ and
\[
 \lambda_*\ge\frac1{\log^2 n}
 \left[\frac{\log2}{112(1+\log3)}\right]^2,
\]
we obtain \eqref{js:eq:star-gap} with
$c_0=\tfrac14[\log2/(112(1+\log3))]^2$.
\end{proof}

\subsubsection{Auxiliary spectral-gap estimates}
\label{js:sec:auxiliary-proofs}

We complete the argument by proving the two auxiliary gap estimates.

\paragraph{Glauber controls Bip Coding.}

We prove the gap bound in \eqref{js:eq:bip-coding-gap}. Under the joint
law \eqref{js:eq:bip-coding-law}, let
$Cg(M)=\E[g(L)\mid M]$. Conditional on $M$, the free vertex signs
are independent fair signs. For each addable edge $e$, let $\chi_e$
be the product of its endpoint signs. These products are orthonormal
for distinct addable edges, and
\[
 Cg(M+e)-Cg(M)=-\E[g\chi_e\mid M].
\]
Indeed adding $e$ conditions its two signs to be opposite.
Bessel's inequality gives
\[
 \sum_{e\text{ addable}}[Cg(M+e)-Cg(M)]^2
 \le\Var(g(L)\mid M).
\]
Recall that $\cE$ is the unit-deletion Glauber form; equivalently,
it sums insertion increments with rate $\lambda$.
The gap lower bound $\Gamma_\lambda=e^{-I_\lambda}$
therefore gives
\[
 \Gamma_\lambda\Var(Cg)
 \le\cE(Cg,Cg)
 \le\lambda\E\Var(g\mid M)
 =\lambda(\Var g-\Var(Cg)).
\]
On centered functions, this bounds the squared norm of $C$ by
$\lambda/(\Gamma_\lambda+\lambda)$. The adjoint conditional expectation
has the same norm and $P_{\rm Bip}=CC^*$, so
\[
 \gap(P_{\rm Bip})\ge
 \frac{\Gamma_\lambda}{\Gamma_\lambda+\lambda},
 \qquad T_{\rm Bip}\le1+\lambda e^{I_\lambda},
\]
as asserted.

\paragraph{The small-activity down--up gap.}

The encoded walk includes monomer blanks as well as occupied edges.
Its gap therefore requires a joint covariance bound; edge covariance
alone does not suffice. We use the following consequence of the
all-activity bound proved in Appendix~\ref{sec:covariance-proof}.

\begin{proposition}[Small-activity joint covariance]
\label{js:prop:small-activity-covariance}
For every finite simple graph and $0<\lambda\le1/16$, the joint vector
$Z=(X,J)$ of dimer and monomer indicators under $\mu_{G,\lambda}$ satisfies
\[
 \Cov(Z)\preceq(1+28\sqrt\lambda)\Diag(\E Z).
\]
The same bound holds after every feasible edge pinning, with conditional
marginal means and deterministic coordinates omitted.
\end{proposition}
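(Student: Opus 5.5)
The plan is to derive the proposition directly from the all-activity joint covariance bound of Appendix~\ref{sec:covariance-proof}, which the text just before the statement calls the source. That bound is the monomer--dimer analogue of \cref{prop:holant-edge}. It follows the covariance, harmonic-surrogate and random-revealing arguments of \cite{ConcurrentHolant26}, with the corrected monomer statistic. It should give $\Cov(Z)\preceq C(\lambda)\Diag(\E Z)$, where $C(\lambda)$ has explicit additive constants; the overview says the coefficient is $1+O(\sqrt\lambda)$ at small activity. Feasible edge pinnings leave a residual matching law on a simple graph at the same activity. So it suffices to prove the unpinned statement for every finite simple graph and then apply it to the residual graph. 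Vertices covered by pinned edges have $J_v\equiv0$, and absent-pinned edges are deterministic, so both kinds of coordinates are simply omitted.

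\textbf{Step 1: reduce to a Schur-type bound.} After normalizing, it suffices to bound the largest eigenvalue of $D^{-1/2}\Cov(Z)D^{-1/2}$, with $D=\Diag(\E Z)$. I split $\Cov(Z)$ into three blocks. The first is the dimer block $\Cov(X)$, which \cref{prop:holant-edge} bounds by $\min\{7+48\lambda,(1-\lambda)^{-1}\}\Diag(\E X)$; for $\lambda\le1/16$ this is $1+O(\lambda)$. The second is the monomer block $\Cov(J)$, and the third is the cross block $\Cov(X,J)$.

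\textbf{Step 2: bound the monomer and cross blocks.} For the monomer block I would write $J_v=1-\sum_{e\ni v}X_e$. I would then use the row-sum estimates of \cref{thm:monomer-input}, which give the weighted absolute row sums $\sum_u|\Cov(J_v,J_u)|\sqrt{\mu(u^0)/\mu(v^0)}$. By a Schur test, this yields $\Cov(J)\preceq(1+O(\lambda))\Diag(\E J)$. The key observation is that at small $\lambda$ each monomer is free with probability $1-O(\lambda\deg)$. Correlations come only through occupied edges, whose total mass near $v$ is $\lambda\sum_{u\sim v}\mu(u^0)\mu(v^0)$, in line with \eqref{js:eq:guide-transfer}.

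For the cross block, $\Cov(X_e,J_v)$ is nonzero mainly for $v\in e$, where it equals $-\mu(e)\mu(v^0)$ up to correction. I would control the off-diagonal block by Cauchy--Schwarz in the form $2|\langle a,Cb\rangle|\le\epsilon\|a\|_D^2+\epsilon^{-1}\|b\|_D^2$, using the Schur bound $\|D_X^{-1/2}\Cov(X,J)D_J^{-1/2}\|\le\sqrt{\max_e\sum_v|\cdot|\sqrt{\cdot}\cdot\max_v\sum_e|\cdot|\sqrt{\cdot}}$. The per-vertex sum over incident edges is $\sum_{e\ni v}\mu(e)\le1-\mu(v^0)$. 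The weighted geometric factor $\sqrt{\mu(e)\mu(v^0)}$ summed over incident edges is $O(\sqrt\lambda)$, because $\mu(e)\approx\lambda\mu(u^0)\mu(v^0)$. So the off-diagonal block costs $O(\sqrt\lambda)$, and this is where $\sqrt\lambda$ rather than $\lambda$ arises. Adding the three blocks gives $1+c_1\lambda+c_2\sqrt\lambda\le1+28\sqrt\lambda$ for $\lambda\le1/16$, using $\lambda\le\sqrt\lambda/4$.

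\textbf{Main obstacle.} I expect the hardest part to be the exact cancellation in the monomer and cross blocks, where naive row sums lose degree factors. A vertex $v$ of large degree has many incident edges, each with tiny marginal. The cross covariance then has to be measured against $\mu(v^0)$ and $\mu(e)$ simultaneously. This is why the appendix uses the corrected monomer statistic, something like $J_v-\sum_{e\ni v}(\ldots)$, rather than raw $J_v$. I would first try to verify the needed weighted row-sum inequality by conditioning on the configuration outside the star of $v$, as in \eqref{js:eq:clock-normalization}. In that setting $\mu(v^0\mid\text{outside})=1/Z_v$ and the dimer at $v$ is uniform over available neighbours. This makes the local covariance explicit, and the outside fluctuation of $Z_v$ is controlled by $\Var Z_v\le2\lambda(\E Z_v-1)$. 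Tracking constants to reach exactly $28$ is routine once these pieces are in place.
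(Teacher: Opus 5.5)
Your first sentence names the right source, but you never use its explicit coefficient, and that coefficient is all the paper needs. The proposition is just arithmetic on \cref{thm:joint-covariance}: with $\eta(\lambda)=(\sqrt{1+3\lambda}+6\sqrt\lambda)^2$ one gets $\eta(\lambda)-1=39\lambda+12\sqrt{\lambda(1+3\lambda)}\le12\sqrt\lambda+60\lambda\le27\sqrt\lambda$ once $\sqrt\lambda\le1/4$, and pinnings are already covered by that theorem.

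Your Steps~1--2 instead rebuild the bound block by block, and this has a genuine gap at the cross block.
\begin{itemize}
\item For every vertex $v$ and edge $e\ni v$ one has $X_eJ_v\equiv0$, so $\Cov(X_e,J_v)=-\mu(e)\mu(v^0)$. Hence the $v$-column of $D_E^{-1/2}\Cov(X,J)D_V^{-1/2}$ has squared norm at least $\mu(v^0)\sum_{e\ni v}\mu(e)=\mu(v^0)(1-\mu(v^0))$.
\item On the star $K_{1,d}$ at $\lambda=1/d$ with $d\ge16$, this equals $1/4$. So the normalized cross block has norm at least $1/2$, however small $\lambda$ is. Your incident-edge sum $\sum_{e\ni v}\sqrt{\mu(e)\mu(v^0)}$ equals $1/(2\sqrt\lambda)$ there, not $O(\sqrt\lambda)$.
\item On the same star the dimer block has normalized eigenvalue exactly $1$ on coefficient vectors orthogonal to $\one$. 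In the split $2|\langle a,Cb\rangle|\le\epsilon\|a\|^2+\epsilon^{-1}\|b\|^2$, keeping the dimer side at $1+O(\sqrt\lambda)$ therefore forces $\epsilon=O(\sqrt\lambda)$. The monomer side then pays $\Omega(\lambda^{-1/2})$.
\end{itemize}
The true bound survives only because $J_v$ couples to the symmetric direction $\sum_{e\ni v}X_e=1-J_v$, where the dimer block is small. Any blockwise estimate throws this cancellation away.

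Your monomer-block step is also unsupported. Via the Schur test, the absolute row sums of \cref{thm:monomer-input} give only $2\max_v(1-\mu(v^0))$. This tends to $2$ at vertices with $\lambda\deg(v)\gg1$: at the center of a large star, the off-diagonal row sum equals the diagonal entry $\mu(v^0)(1-\mu(v^0))$. The heuristic ``free with probability $1-O(\lambda\deg)$'' is exactly the degree-dependent input that is not available.

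The missing idea is to treat $X$ and $J$ jointly through the affine relation $J=\one-I_GX$.
\begin{itemize}
\item The Schur-complement reduction of \cref{cov-lem:reduction} turns $\Cov(Z)\preceq\eta\Diag(\E Z)$ into $\Cov(X)\preceq\eta A^{-1}$, with $A=D_E^{-1}+I_G^{\mathsf T}D_V^{-1}I_G$. This is a variance bound for the single statistic $\mathrm I+\mathrm{II}$ against the quadratic form \eqref{cov-eq:denominator}.
\item The term $\mathrm I$ is centered by the harmonic surrogate $Y_{uv}=X_{uv}-\lambda J_uJ_v$ and has coefficient $1+3\lambda$ (\cref{cov-lem:first-variance}).
\item The corrected monomer statistic $\mathrm{II}$ has variance at most $36\lambda\sum_e\mu(e)\theta_e^2$, by the monomer block heat bath (\cref{cov-lem:second-variance}).
\end{itemize}
The $\sqrt\lambda$ then comes from $\sqrt{\Var(\mathrm I+\mathrm{II})}\le\sqrt{\Var(\mathrm I)}+\sqrt{\Var(\mathrm{II})}$, not from a small off-diagonal block. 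The constant $28$ is the arithmetic above, not a routine constant chase in your decomposition.
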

\begin{proof}
By \cref{thm:joint-covariance}, it suffices to bound its coefficient.
For $0<\lambda\le1/16$,
\[
 \eta(\lambda)-1
 =39\lambda+12\sqrt{\lambda(1+3\lambda)}
 \le12\sqrt\lambda+60\lambda
 \le27\sqrt\lambda<28\sqrt\lambda.
\]
\end{proof}

Return to the encoding in \eqref{js:eq:down-up-star}, with
$k=|B|$ and indicator vector $Y$, at activity
$0<\lambda'\le\lambda_*<1/16$. Put $\delta=28\sqrt{\lambda'}$.
Conditioning on the presence of
encoded edges or blanks leaves a matching problem on an induced
bipartite graph. The remaining encoded coordinates form a subset of
its joint dimer and monomer coordinates. Thus
\cref{js:prop:small-activity-covariance} gives, in every such conditional law,
\[
 \Cov(Y)\preceq(1+\delta)\Diag(\E Y).
\]

Suppose first that $k\ge2$ and the encoded law is not a point mass.
For a conditional law on $\ell$-element sets, $\ell\ge2$, put
$p_i=\E Y_i$, $p_{ih}=\E Y_iY_h$, omitting zero-marginal coordinates.
Its pair walk has
$K_{ih}=p_{ih}/((\ell-1)p_i)$ for $i\ne h$ and $K_{ii}=0$.
It is reversible for $p/\ell$, and
\[
 \Diag(p)^{-1}\Cov(Y)=I+(\ell-1)K-\one p^\top.
\]
On functions orthogonal to constants under $p/\ell$, the covariance
bound therefore gives
$\lambda_2(K)\le\delta/(\ell-1)$.
The standard local-to-global gap product of
\cite[Theorem 3.1]{AL20}, applied to the down--up kernel $P_{\rm DU}$,
now yields
\[
 \gap(P_{\rm DU})\ge\frac1k
 \prod_{\ell=2}^k
 \left(1-\frac{\delta}{\ell-1}\right).
\]
By \eqref{js:eq:down-up-star}, $\mathcal S_B=k(P_{\rm DU}-I)$.
Each product deficit is at most
$[4(1+\log(n+1))(\ell-1)]^{-1}$, and their sum is at most $1/4$.
Consequently
\begin{equation*}
 \cE_{\mathcal S_B}(f,f)\ge\tfrac12\Var f.
\end{equation*}
For $k=1$, the down--up kernel resamples the entire matching, so the
same bound holds directly; for $k=0$, or any singleton state space,
all variances vanish. The roles of the two shores may be exchanged.

\section{Discussion and open problems}\label{sec:discussion}

The seminal work of Jerrum and Sinclair~\cite{JerrumSinclair}
established polynomial-time tractability of approximate sampling and
counting matchings, and helped shape the modern theory of Markov-chain
algorithms. Our work strengthens this tractability picture: at every
fixed activity $\lambda>0$, ordinary single-edge Glauber dynamics
achieves near-linear mixing and sampling time on every simple graph,
without any degree restriction. This is particularly noteworthy in
the large-degree regime, where deteriorating correlation decay gives
the monomer--dimer model a near-critical character~\cite{BGKNT07}.
Despite this loss of degree-uniform correlation decay, a natural
local dynamics still samples in time nearly proportional to the
input size.

Together with our work-efficient sublinear-depth sampler and faster
approximate counting algorithm, this result opens several directions
for further research. We discuss six below; unless stated otherwise,
the activity $\lambda>0$ is fixed.

\paragraph{A unified MLSI criterion.}
Localization provides a common framework for mixing
analysis~\cite{CE22}, yet existing entropy criteria use different
pinning, external-field, and marginal assumptions, sometimes for
different update chains~\cite{CLV21,CFYZ22,AJKPV22}.
Can a quantitative MLSI criterion encompass these regimes and the
scalar-field hypotheses of \cref{thm:intro-boolean}?
For ordinary matching GD, a natural benchmark is an inverse MLSI
constant $O_\lambda(m)$, which would give
$O_\lambda(m[\log n+\log(1/\varepsilon)])$ mixing.
The broader question is which assumptions allow MLSI to avoid the
loss from small occupied marginals. Such an improvement must use a
sharper entropy argument: the marginal logarithm in the general LSI
bound is necessary even for a Bernoulli law.

\paragraph{Activity dependence and the perfect-matching limit.}
Can our Glauber mixing and sampling bounds retain near-linear
dependence on graph size with only polynomial dependence on $\lambda$?
A more ambitious goal is a monomer--dimer sampler whose running time,
for $\lambda\ge1$, is polynomial in $n,m,\log\lambda$, and
$1/\varepsilon$.
Such a sampler would yield approximate uniform sampling of perfect
matchings on every graph admitting one: conditional on being perfect,
a monomer--dimer matching is uniform, and
$\lambda\ge 2^m/\varepsilon$ already puts probability at least
$1-\varepsilon$ on perfect matchings.
Polynomial dependence on $\lambda$ alone would not suffice for this
reduction, since exponentially large activity can be necessary when
near-perfect matchings vastly outnumber perfect
matchings~\cite[Section~5.4]{JerrumBook}.
The stronger goal cannot follow merely from a sharper worst-start
mixing bound for ordinary single-edge GD: even on a four-cycle,
rare deletions force $\tmix(1/4)=\Omega(\lambda)$ as
$\lambda\to\infty$.
It therefore calls for new sampling ideas beyond improving that
mixing bound.

\paragraph{Sampling matchings in RNC.}
Can the monomer--dimer law on every simple graph be sampled with
polylogarithmic depth and polynomial work at inverse-polynomial
total-variation error, ideally with near-linear work?
This complements RNC maximum matching \cite{MVV87} and the
perfect-matching/permanent questions \cite{Teng95,LY25}; Teng's barrier
concerns Broder--Jerrum--Sinclair simulation, not sampling itself.
The open problem concerns sampling a matching; approximate counting
may be easier to parallelize.

Unlike sequential equivalence \cite{JVV86}, the parallel reductions
are asymmetric. Nonadaptive annealing \cite{HKLYZ26}
counts using one batch of samples at prescribed activities; RNC sampling
along the schedule therefore gives RNC counting. The reverse reductions
\cite{AGR24,ABCHKLV26} provide sublinear, but not polylogarithmic,
expected-round guarantees, using conditional marginals under arbitrary
pinning. Even if sufficiently accurate pinned counts supply these
marginals in RNC, those reductions do not give RNC sampling.

\paragraph{Exact and local sampling of matchings.}
Perfect-sampling approaches based on spatial mixing~\cite{FGY22,AJ22},
coupling from the past (CFTP)~\cite{PW96}, and coupling toward the past
(CTTP)~\cite{FGWWY25} motivate the following questions for matchings
on arbitrary simple graphs, without a degree restriction.
Can one generate $M\sim\mu_{G,\lambda}$ exactly in expected
$\wtO_\lambda(m+n)$ time?
For local sampling, locality should concern query work rather than
graph radius: even a radius-one neighborhood can contain the entire graph.
A concrete goal is to answer $k$ edge-occupation queries in expected
$\wtO_\lambda(k)$ total online work, uniformly in the maximum degree,
where $\wtO$ suppresses polylogarithmic factors in~$n$.
For any fixed feasible pinning $\tau$, the answers must have the
joint law obtained by revealing a single exact sample from
$\mu_{G,\lambda}^{\tau}$.
These questions are particularly compelling in the unbounded-degree
regime, including dense graphs, where exponential spatial-mixing
bounds deteriorate with the degree~\cite{BGKNT07}.

\paragraph{Counting below the annealing baseline.}
Recent subquadratic counting results for spin
systems~\cite{AFFGW25,CCLZ26} suggest a corresponding question for
matchings. Here the annealing baseline is
$\wtO_\lambda(mn/\varepsilon^2)$ after input
preprocessing~\cite{SVV09,HKLYZ26}.
When $m=\Theta(n^2)$, our $\wtO_\lambda(n^2/\varepsilon^2)$ bound
saves a factor $n$ in the polynomial part of this baseline and is
near-linear in the input size at constant accuracy.
Can a polynomial saving over $mn$ be obtained across all graph
densities, apart from the input cost? In particular, for every graph
with $m=\Theta(n)$, can counting be achieved in
$\wtO_\lambda(n^{2-\delta})$ work at constant accuracy, for some
$\delta>0$?
The variance-reduced and aggregate estimators of~\cite{CCLZ26} suggest
a route through perfect marginal sampling, linking this question to
the preceding one. A matching analogue would need sufficiently cheap
conditional samples and, for aggregation, suitable stopping-time
control; the existing spin-system results do not automatically supply
these guarantees on degree-unbounded graphs.

\paragraph{Deterministic approximate counting.}
Bayati, Gamarnik, Katz, Nair, and Tetali~\cite{BGKNT07} give a
deterministic FPTAS for bounded-degree graphs and a
subexponential-time approximation scheme on general graphs.
Can $Z_G(\lambda)$ be approximated deterministically within relative
error $\varepsilon$ in time polynomial in $n$ and $1/\varepsilon$,
without a degree restriction?
Near-linear Glauber mixing makes derandomization particularly natural,
but does not by itself supply an FPTAS.
CTTP provides a concrete route from efficient sampling to
derandomization~\cite{FGWWY25}: under suitable conditions, truncating a
local marginal sampler leaves only logarithmically many random choices
from constant-size domains, which can be enumerated for deterministic
counting. Can one obtain such a sampler for matchings under feasible
pinnings, with controlled truncation error and a polynomial enumeration
cost whose exponent is independent of the maximum degree?
The challenge is to bound this local exploration uniformly in degree,
despite the deterioration of correlation decay~\cite{BGKNT07}.

\paragraph{Disclosure of AI use.}
The main ideas of this work were developed with assistance from OpenAI's
Codex. The human authors contributed original ideas in developing the
AI-generated initial drafts and through continued interaction with the
AI. They worked through the proofs in detail and refined, restructured,
and rewrote the arguments.
AI assistance was also used in preparing, auditing, and revising the
manuscript. The authors take full responsibility for its mathematical
content and final presentation.

\begingroup
\small
\setlength{\emergencystretch}{2em}
\raggedright
\urlstyle{same}
\AddToHookNext{env/thebibliography/begin}{%
  \phantomsection
  \addcontentsline{toc}{section}{References}%
  \setlength{\itemsep}{2pt}%
}
\bibliographystyle{plainurl}
\bibliography{references}
\endgroup

\appendix

\section{Joint covariance of monomer and dimer indicators}
\label{sec:covariance-proof}

We prove an all-activity joint covariance bound for monomer and dimer
indicators. Its small-activity consequence,
\cref{js:prop:small-activity-covariance}, controls the encoded down--up
walk in the counting proof. The weighted monomer bound
(\cref{thm:monomer-input}) also underlies the learning and concentration
estimates in Section~\ref{js:sec:learning}.
We use the harmonic surrogate of
\cite[Definition 17]{ConcurrentHolant26} and the monomer covariance
estimates in \cref{thm:monomer-input}, then adapt their random-revealing
argument to a corrected monomer statistic.

For $M\sim\mu=\mu_{G,\lambda}$, let $X_e=\one_{\{e\in M\}}$ and
$J_u=1-\sum_{e\sim u}X_e$. Write $u^0=\{J_u=1\}$,
$u^1=\{J_u=0\}$, and $\mu(e)=\E_\mu X_e$.
Concatenated events denote intersections, and event superscripts
denote conditioning, so
$\mu^{u^0}(v^0)=\Prb_\mu(J_v=1\mid J_u=1)$.
Deleting an occupied edge $uv$ gives
$\mu(uv)=\lambda\mu(u^0v^0)$.

\begin{theorem}[Covariance bound for monomer and dimer indicators]
\label{thm:edge-cov}\label{thm:joint-covariance}
For every finite simple graph $H$ and activity $\lambda>0$, let
$\mu=\mu_{H,\lambda}$ and let $Z=(X,J)$ be the joint vector of its
dimer and monomer indicators. Its covariance matrix,
including mixed covariances, satisfies
\begin{equation*}
 \Cov(Z)\preceq\eta(\lambda)\Diag(\E Z),
 \qquad
 \eta(\lambda)=
 \left(\sqrt{1+3\lambda}+6\sqrt\lambda\right)^2,
\end{equation*}
where
$\Diag(\E Z)=\Diag((\mu(e))_{e\in E(H)},
(\mu(v^0))_{v\in V(H)})$ uses marginal means. In particular,
\begin{equation}\label{eq:covariance}
 \Var_\mu\left(\sum_e a_eX_e\right)
 \le\eta(\lambda)\sum_e\mu(e)a_e^2
 \qquad(a\in\mathbb R^{E(H)}).
\end{equation}
Both assertions hold after every feasible edge pinning, with the
conditional marginal means and the remaining random coordinates.
Moreover, $\eta(\lambda)=1+12\sqrt\lambda+O(\lambda)$ as
$\lambda\downarrow0$.
\end{theorem}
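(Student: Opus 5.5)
The coefficient $\eta(\lambda)=(\sqrt{1+3\lambda}+6\sqrt\lambda)^2$ has the form $(\sqrt a+\sqrt b)^2$. I read it as the output of a $2\times2$ block argument: a bound on the pure monomer block, a bound on the pure dimer block, and a Cauchy--Schwarz/AM--GM treatment of the mixed block. Concretely, I normalize and write $\widetilde X_e=X_e/\sqrt{\mu(e)}$ and $\widetilde J_v=J_v/\sqrt{\mu(v^0)}$. It suffices to show
\[
\Var\Bigl(\sum_e a_eX_e+\sum_v c_vJ_v\Bigr)\le\eta(\lambda)\Bigl(\sum_e\mu(e)a_e^2+\sum_v\mu(v^0)c_v^2\Bigr).
\]
I would split the left side as $\Var(A)+\Var(C)+2\Cov(A,C)$ with $A=\sum a_eX_e$ and $C=\sum c_vJ_v$. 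I would then aim for $\Var(C)\le(1+3\lambda)\sum\mu(v^0)c_v^2$ together with a ``corrected'' decomposition giving $\Var(A)\lesssim 36\lambda$ times the dimer norm. Then $\Var(A+C)\le(\|C\|+\|A\|)^2$ in the norm $\sqrt{\Var}$ yields exactly $(\sqrt{1+3\lambda}+6\sqrt\lambda)^2$ times the combined diagonal. This uses $\sqrt{x}\sqrt{y}$-type mixing and the inequality $(\sqrt{\alpha}s+\sqrt{\beta}t)^2\le(\sqrt\alpha+\sqrt\beta)^2(s^2+t^2)/\dots$, which I will arrange via weighted Cauchy--Schwarz.

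For the monomer block, I would invoke the weighted monomer covariance of \cref{thm:monomer-input}, which (as used in \cref{cnt:lem:cov}) gives $g''\le 2c_*g'$ for monomer tilts. Its matrix form should be $\Cov(J)\preceq(1+3\lambda)\Diag(\E J)$, possibly after the random-revealing argument of \cite{ConcurrentHolant26}: reveal stars one at a time and use that, conditional on the outside of a star, $J_v=1/Z_v$ in expectation, so the fluctuation is a star term plus a term of order $\lambda$.

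For the dimer block, the naive bound $\Var(A)\le(7+48\lambda)\sum\mu(e)a_e^2$ from \cref{prop:holant-edge} is not $O(\lambda)$. Its constant part comes from the diagonal Bernoulli variance, so I instead decompose $X_e$ using the corrected monomer statistic. Since $\mu(uv)=\lambda\mu(u^0v^0)$, deletion gives $X_{uv}\approx\lambda J_uJ_v$ in a harmonic sense. I would write $A=\widetilde A+R$, where $\widetilde A$ is a harmonic surrogate in the sense of \cite[Definition 17]{ConcurrentHolant26}, expressed through products of monomer indicators at scale $\lambda$. I would then show that $R$ and the surrogate together are absorbed by the monomer term and by $36\lambda\sum\mu(e)a_e^2$. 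The claim $C_2=6$ used in \cref{prop:holant-edge} corresponds to the factor $6\sqrt\lambda$ here: a quantity like $\sqrt{\lambda}\cdot 6$ arises from Cauchy--Schwarz between $\sum_e a_e\mu(e)$ and degree sums bounded by $\lambda\sum_u\sum_{v\sim u}\mu(u^0)\mu(v^0)\le(1+\lambda/4)n$ per vertex.

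Pinning is immediate, since every feasible edge pinning leaves a matching law on a subgraph, and deterministic coordinates contribute nothing. The asymptotic $\eta=1+12\sqrt\lambda+O(\lambda)$ is algebra. The main obstacle is the mixed covariance $\Cov(A,C)$: edges and their endpoint monomers are strongly negatively correlated, since $X_{uv}=1$ forces $J_u=J_v=0$. A crude bound would give a constant, not $O(\sqrt\lambda)$. My first attempt would be to include the endpoint monomers inside the corrected statistic, replacing $J_u$ by $J_u+\sum_{e\ni u}X_e=1$. The exact cancellation then moves the dimer coefficients into the monomer block with weights $\mu(e)\le\lambda\mu(u^0)$, which is precisely where a $\sqrt{\lambda}$ cross term and the $(\sqrt{\cdot}+\sqrt{\cdot})^2$ shape appear.
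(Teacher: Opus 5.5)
\textbf{There is a genuine gap, at the dimer block.}

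Your target $\Var(\sum_e a_eX_e)\lesssim 36\lambda\sum_e\mu(e)a_e^2$ is false. Since $\mu(uv)=\lambda\mu(u^0v^0)$ and these two events are disjoint, $\mu(e)\le\lambda/(1+\lambda)$. So for $a=\one_e$ you get $\Var X_e=\mu(e)(1-\mu(e))\ge\mu(e)/(1+\lambda)>36\lambda\,\mu(e)$ whenever $\lambda\le1/40$.

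Rewriting $A$ through a harmonic surrogate cannot change $\Var(A)$. Moreover, when $c=0$ there is no monomer term for a remainder to be absorbed into. The Bernoulli diagonal of the dimers must be charged, with coefficient about one, to the dimer norm. Splitting the test vector by coordinate type ($A$ versus $C$) is therefore the wrong decomposition.

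The arithmetic points the same way. Write $\|a\|^2=\sum_e\mu(e)a_e^2$ and $\|c\|^2=\sum_v\mu(v^0)c_v^2$. Bounds $\Var(A)\le\beta\|a\|^2$ and $\Var(C)\le\alpha\|c\|^2$ against these separate block norms combine, by Minkowski and Cauchy--Schwarz, to $(\alpha+\beta)(\|a\|^2+\|c\|^2)$, i.e.\ $1+39\lambda$. The shape $(\sqrt\alpha+\sqrt\beta)^2$ appears only when both pieces are measured against one and the same quadratic form. Two smaller points: \cref{thm:monomer-input} gives $\Cov(J)\preceq2\Diag(\E J)$, not $1+3\lambda$; and the $6$ in $6\sqrt\lambda$ is not the constant $C_2=6$ of \cref{prop:holant-edge}.

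\textbf{The missing idea} is to eliminate $J$ through $J=\one-I_GX$ before splitting. Your last sentence gestures at this but does not carry it out.

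\emph{Reduction.} We have $\Cov(Z)=B\Cov(X)B^{\mathsf T}$, where $B$ stacks $I$ over $-I_G$. A Schur-complement argument shows the joint bound is equivalent to $\Cov(X)\preceq\eta A^{-1}$ with $A=D_E^{-1}+I_G^{\mathsf T}D_V^{-1}I_G$. Substituting $\widetilde\theta=AD_E\theta$ turns each Rayleigh quotient into $\Var[\sum_e\theta_eX_e-\sum_uh_uJ_u]$ divided by $\sum_e\mu(e)\theta_e^2+\sum_u\mu(u^0)h_u^2$, where $h_u=\sum_{e\sim u}\mu(e)\theta_e/\mu(u^0)$. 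The mixed covariances you flag as the main obstacle are absorbed automatically.

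\emph{Split along the surrogate.} Only now does one split, along $K_{uv}=\lambda J_uJ_v$, into two pieces.
\begin{itemize}
\item The term $\mathrm I=\sum\theta_{uv}(X_{uv}-\lambda J_uJ_v)$ keeps the Bernoulli diagonal. It is bounded by $(1+3\lambda)$ times the \emph{full} denominator. Summands on disjoint edges are uncorrelated, and incident pairs reduce to $\lambda^2\mu(u^0)$ times the second moment of $\sum_{v\sim u}\theta_{uv}J_v$ under $\mu^{u^0}$. That moment is controlled by conditional monomer covariance.
\item The corrected monomer statistic $\mathrm{II}=\lambda\sum\theta_{uv}J_uJ_v-\sum_uh_uJ_u$ is a sum of star terms centered given $J_u=1$. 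It satisfies $\Var[\mathrm{II}]\le36\lambda\sum_e\mu(e)\theta_e^2$. This follows from the monomer block heat-bath Poincar\'e inequality (gap at least $1/3$) together with the random-revealing surrogate $F_R$, whose mean-square error is at most $12\lambda\sum_e\mu(e)\theta_e^2$.
\end{itemize}
Minkowski for $\mathrm I+\mathrm{II}$ against the common denominator then gives $(\sqrt{1+3\lambda}+\sqrt{36\lambda})^2$.
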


Fix a finite simple graph $G=(V,E)$ and $\mu=\mu_{G,\lambda}$,
$\lambda>0$, using the notation from the section opening.
Write $Z=(X,J)$ and define
\[
  D_E:=\Diag\bigl((\mu(e))_{e\in E}\bigr),\qquad
  D_V:=\Diag\bigl((\mu(u^0))_{u\in V}\bigr),\qquad
  \Diag(\E Z)=\Diag(D_E,D_V).
\]
\paragraph{Reducing joint covariance to two variance bounds.}

The monomer indicators are affine functions of the dimer indicators.
The incidence correction below uses this relation to convert the bound
for $(X,J)$ into a bound for $X$ with a modified quadratic form.
We decompose the corrected statistic into a centered dimer term
$\mathrm I$ and a monomer correction $\mathrm{II}$, then bound their
variances separately.

\begin{lemma}[Reduction to two variance estimates]\label{cov-lem:reduction}
Let $I_G\in\{0,1\}^{V\times E}$ be the unsigned incidence matrix and set
\[
  A:=D_E^{-1}+I_G^{\mathsf T}D_V^{-1}I_G.
\]
For fixed $\theta\in\mathbb R^E$, define
\begin{align*}
  \mathrm I(X)
  &:=\sum_{uv\in E}\theta_{uv}Y_{uv},
  &Y_{uv}&:=X_{uv}-\lambda J_uJ_v,
  \\
  \mathrm{II}(X)
  &:=\lambda\sum_{uv\in E}\theta_{uv}J_uJ_v
  -\sum_{u\in V}
    \frac{\sum_{e\sim u}\mu(e)\theta_e}{\mu(u^0)}J_u.
\end{align*}
Suppose that, for every $\theta\in\mathbb R^E$,
\begin{align}
  \Var_\mu[\mathrm I]
  &\le\eta_1(\lambda)\left[
    \sum_{e\in E}\mu(e)\theta_e^2
    +\sum_{u\in V}
      \frac{\bigl(\sum_{e\sim u}\mu(e)\theta_e\bigr)^2}{\mu(u^0)}
  \right],\label{cov-eq:first-target}\\
  \Var_\mu[\mathrm{II}]
  &\le\eta_2(\lambda)\sum_{e\in E}\mu(e)\theta_e^2.
  \label{cov-eq:second-target}
\end{align}
Then
\[
  \Cov(Z)
  \preceq
  \bigl(\sqrt{\eta_1(\lambda)}+\sqrt{\eta_2(\lambda)}\bigr)^2
  \Diag(\E Z).
\]
\end{lemma}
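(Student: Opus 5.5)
The plan is to dualize the matrix inequality twice: first to a statement about dimer statistics alone, using the affine relation between monomer and dimer indicators, and then to the corrected statistic $\mathrm I+\mathrm{II}$. Every edge marginal $\mu(e)$ and every monomer marginal $\mu(u^0)$ is positive, so $D_E$, $D_V$ and $A$ are invertible and no coordinates need to be omitted.

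\emph{Step 1: pass to dimer statistics.} The goal is
\[
\Var(a^{\mathsf T}X+b^{\mathsf T}J)\le c\,[a^{\mathsf T}D_Ea+b^{\mathsf T}D_Vb]
\qquad\text{for all }(a,b)\in\mathbb R^E\times\mathbb R^V,
\]
with $c=(\sqrt{\eta_1}+\sqrt{\eta_2})^2$. Since $J=\one-I_GX$, we have $a^{\mathsf T}X+b^{\mathsf T}J=\mathrm{const}+\theta^{\mathsf T}X$, where $\theta=a-I_G^{\mathsf T}b$. So for each fixed $\theta$ it suffices to prove
\[
\Var(\theta^{\mathsf T}X)\le c\min_{b}\bigl[(\theta+I_G^{\mathsf T}b)^{\mathsf T}D_E(\theta+I_G^{\mathsf T}b)+b^{\mathsf T}D_Vb\bigr].
\]
The minimum is a positive definite quadratic minimization in $b$. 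Its value is $\theta^{\mathsf T}A^{-1}\theta$: solve the normal equations and apply the Woodbury identity, or argue by convex duality. Hence the target is equivalent to $\Cov(X)\preceq c\,A^{-1}$. Since $A$ is positive definite, this is in turn equivalent to
\[
\Var(\psi^{\mathsf T}AX)\le c\,\psi^{\mathsf T}A\psi\qquad\text{for all }\psi\in\mathbb R^E,
\]
obtained by substituting $\theta=A\psi$.

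\emph{Step 2: identify the corrected statistic.} Put $\psi_e=\mu(e)\theta_e$, a bijective change of variables. Then
\[
\psi^{\mathsf T}AX=\sum_e\theta_eX_e+\sum_u\frac{\sum_{e\sim u}\mu(e)\theta_e}{\mu(u^0)}\,(1-J_u).
\]
Up to an additive constant this equals $\mathrm I+\mathrm{II}$, because the $\lambda J_uJ_v$ terms cancel between the two pieces. Likewise,
\[
\psi^{\mathsf T}A\psi=\sum_e\mu(e)\theta_e^2+\sum_u\frac{\bigl(\sum_{e\sim u}\mu(e)\theta_e\bigr)^2}{\mu(u^0)}=:Q,
\]
which is exactly the bracket in \eqref{cov-eq:first-target}.

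\emph{Step 3: combine the two variance bounds.} By Minkowski's inequality in $L^2$,
\[
\Var(\mathrm I+\mathrm{II})\le\bigl(\sqrt{\Var\mathrm I}+\sqrt{\Var\mathrm{II}}\bigr)^2 .
\]
Hypothesis \eqref{cov-eq:first-target} gives $\Var\mathrm I\le\eta_1Q$. Hypothesis \eqref{cov-eq:second-target} gives $\Var\mathrm{II}\le\eta_2\sum_e\mu(e)\theta_e^2\le\eta_2Q$, since the vertex terms in $Q$ are nonnegative. Together these yield $\Var(\psi^{\mathsf T}AX)\le(\sqrt{\eta_1}+\sqrt{\eta_2})^2\psi^{\mathsf T}A\psi$, which is the claim.

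The only step requiring care is the duality identity $\min_b[\cdots]=\theta^{\mathsf T}A^{-1}\theta$ in Step 1. I would verify it by writing the minimum as the Schur complement of the block matrix
\[
\begin{pmatrix}D_E & D_EI_G^{\mathsf T}\\ I_GD_E & D_V+I_GD_EI_G^{\mathsf T}\end{pmatrix},
\]
which equals $D_E-D_EI_G^{\mathsf T}(D_V+I_GD_EI_G^{\mathsf T})^{-1}I_GD_E$, and then checking that this is $A^{-1}$ by the Woodbury formula.
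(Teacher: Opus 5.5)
Your proof is correct and is essentially the paper's argument. The paper also reduces $\Cov(Z)\preceq c\,\Diag(\E Z)$ to $\Cov(X)\preceq c\,A^{-1}$, citing a weighted Schur complement identity that your minimization over $b$ with Woodbury makes explicit; it then substitutes $\tilde\theta=AD_E\theta$ (your $\theta=A\psi$, $\psi=D_E\theta$), identifies the statistic with $\mathrm I+\mathrm{II}$ and the quadratic form with the bracket in \eqref{cov-eq:first-target}, and concludes by Minkowski.
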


\begin{proof}
The edgeless case is immediate, so assume $E\ne\varnothing$.
All entries of $D_E$ and $D_V$ are positive. Since $J=\mathbf1-I_GX$,
\[
  \Cov(Z)
  =\begin{pmatrix}I\\-I_G\end{pmatrix}
    \Cov(X)
    \begin{pmatrix}I&-I_G^{\mathsf T}\end{pmatrix}.
\]
The weighted Schur complement identity therefore gives the equivalence
\begin{equation}\label{cov-eq:schur-reduction}
  \Cov(Z)
  \preceq\eta\Diag(D_E,D_V)
  \quad\Longleftrightarrow\quad
  \Cov(X)\preceq\eta A^{-1}.
\end{equation}
Set $\widetilde\theta:=AD_E\theta$. This is an invertible change of
coefficients, and
\begin{align}
  \widetilde\theta^{\mathsf T}A^{-1}\widetilde\theta
  &=\theta^{\mathsf T}D_EAD_E\theta\notag\\
  &=\sum_{e\in E}\mu(e)\theta_e^2
    +\sum_{u\in V}
      \frac{\bigl(\sum_{e\sim u}\mu(e)\theta_e\bigr)^2}{\mu(u^0)}.
  \label{cov-eq:denominator}
\end{align}
Moreover, using $I_GX=\mathbf1-J$ and discarding an additive constant
inside the variance,
\begin{align*}
  \widetilde\theta^{\mathsf T}
    \Cov(X)\widetilde\theta
  &=\Var_\mu\!\left[
    \sum_{e\in E}\theta_eX_e
    -\sum_{u\in V}
      \frac{\sum_{e\sim u}\mu(e)\theta_e}{\mu(u^0)}J_u
  \right]\\
  &=\Var_\mu[\mathrm I+\mathrm{II}].
\end{align*}
Apply
\[
  \sqrt{\Var_\mu[\mathrm I+\mathrm{II}]}
  \le\sqrt{\Var_\mu[\mathrm I]}
    +\sqrt{\Var_\mu[\mathrm{II}]}
\]
and use \eqref{cov-eq:first-target}--\eqref{cov-eq:denominator}. This bounds every
Rayleigh quotient in \eqref{cov-eq:schur-reduction} by the claimed constant.
\end{proof}

\paragraph{Controlling the centered dimer statistic.}

The centered variables $Y_e$ have zero covariance on disjoint edges,
so only incident pairs remain. We control these pairs by conditional
monomer covariance, avoiding a degree-dependent bound on their number.
The following consequence of \cite[Theorem 13]{ConcurrentHolant26}
gives the required bound; its weighted form is also used in
Section~\ref{sec:preconditioned-js}.

\begin{lemma}[Monomer covariance and absolute row sums]
\label{thm:monomer-input}
This lemma allows arbitrary positive edge activities $\lambda_e$ and
positive monomer weights $z_v$: let
\[
 \mu(M)\propto\prod_{e\in M}\lambda_e
                  \prod_{v:\,J_v(M)=1}z_v
 \qquad(M\text{ a matching of }G).
\]
After every feasible conditioning $\tau$ on monomer coordinates,
\begin{equation}\label{eq:monomer-input}
 \begin{aligned}
 \Cov_{\mu^\tau}(J)&\preceq 2\Diag((\mu^\tau(v^0))_{v\in V}),\\
 \sum_{w\ne v}|\Cov_{\mu^\tau}(J_v,J_w)|
   &\le \mu^\tau(v^0)(1-\mu^\tau(v^0)),\\
 \sum_{w\in V}|\Cov_{\mu^\tau}(J_v,J_w)|
   &\le 2\mu^\tau(v^0)(1-\mu^\tau(v^0))\le2\mu^\tau(v^0).
 \end{aligned}
\end{equation}
The absolute row-sum assertion is stated separately from the spectral
assertion; it is not inferred from the latter.
\end{lemma}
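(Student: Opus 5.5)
The plan is to derive \cref{thm:monomer-input} from \cite[Theorem 13]{ConcurrentHolant26}. First, I would reduce the conditioned laws to unconditioned ones within the same class. Conditioning on $J_v=1$ means deleting $v$. Conditioning on $J_v=0$ means forcing $v$ to be matched, which is not itself a matching law on a smaller graph. I would handle this by a weight limit instead: send $z_v\downarrow0$. Since $z_v$ multiplies only configurations where $v$ is free, the law $\mu$ with $z_v\to0$ converges to $\mu^{v^1}$. Covariances and marginals are continuous in the weights. Both spectral and row-sum inequalities involve only finitely many entries and close under limits, including the degenerate case where the conditional law makes some coordinates deterministic, which we omit. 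Hence it suffices to prove all three assertions for unconditioned weighted monomer--dimer laws with arbitrary positive $\lambda_e,z_v$.

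For the unconditioned law, I would invoke \cite[Theorem 13]{ConcurrentHolant26}. I expect it to give, for Holant/matching laws with monomer weights, a bound of the form $\sum_{w\ne v}|\Cov(J_v,J_w)|\le\mu(v^0)(1-\mu(v^0))$; this is the total influence of pinning the monomer at $v$, via a self-avoiding-walk or random-revealing argument. Adding the diagonal term $\Var J_v=\mu(v^0)(1-\mu(v^0))$ gives the third line, and $1-\mu(v^0)\le1$ finishes it.

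The spectral assertion I would then derive from a symmetric row-sum bound. With $D=\Diag(\mu(v^0))$ and $S=\Cov(J)$, use the weighted Schur test with weights $\mu(v^0)$: for $x$,
\[
 x^{\mathsf T}Sx\le\sum_{v,w}|S_{vw}|\tfrac12(x_v^2+x_w^2)
 =\sum_v x_v^2\sum_w|S_{vw}|\le2\sum_v\mu(v^0)x_v^2.
\]
This uses only the row sums, by the symmetry of $S$. Thus the spectral line follows from the row-sum line, though the statement insists the row-sum bound is the primary input.

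The main obstacle is matching the exact form of \cite[Theorem 13]{ConcurrentHolant26}. If it instead bounds $\sum_w\Cov(J_v,J_w)\,\mathrm{sign}$ by a different normalization, for example using $\mu(v^1)$ or an influence matrix $\Prb(w^0\mid v^0)-\Prb(w^0\mid v^1)$, I would convert with $\Cov(J_v,J_w)=\mu(v^0)\mu(v^1)[\Prb(w^0\mid v^0)-\Prb(w^0\mid v^1)]$. The needed claim then becomes that the total absolute influence $\sum_{w\ne v}|\Prb(w^0\mid v^0)-\Prb(w^0\mid v^1)|\le1$. I would try to prove this directly by a coupling along an alternating path. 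The conditioned laws on $v$ free versus matched can be coupled so that the symmetric difference is a single alternating path from $v$, which flips monomer status at exactly one other vertex (its far endpoint) or none. This bounds the total influence by $1$, with all weights positive and arbitrary, as required.
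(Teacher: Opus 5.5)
Your proposal is essentially the paper's proof. The paper takes the coupling of the laws conditioned on $J_v=1$ and $J_v=0$ from \cite[Theorem 13]{ConcurrentHolant26}, with expected discrepancy at most one outside $v$, and turns it into the row sums via $\Cov(J_v,J_w)=\mu(v^0)\mu(v^1)[\Prb(w^0\mid v^0)-\Prb(w^0\mid v^1)]$; it then adds the diagonal and uses $2|a_va_w|\le a_v^2+a_w^2$ for the spectral line, exactly as you do. Your alternating-path coupling is that coupling specialized to matchings. Its existence follows because swapping all components $O$ of $M\triangle M'$ other than the path through $v$ preserves $\mu^{v^0}\otimes\mu^{v^1}$, so $(M,M'\triangle O)$ is a valid coupling. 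That path always has a discrepant far endpoint, so your ``or none'' never occurs, harmlessly. The only other difference is that you reduce pinnings to the unpinned case by vertex deletion and the limit $z_v\downarrow0$, whereas the paper invokes the pinned (exact-degree) version of Theorem 13 directly; both work.
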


\begin{proof}
Use the Holant signatures $f_v=(z_v,1,0,\ldots)$ and the given
edge activities. Monomer pinnings are exact-degree constraints.
By \cite[Theorem 13]{ConcurrentHolant26}, based on the
Chen--Gu coupling, the occupied-degree vector $D=\one-J$ has
degree coupling independence with constant two, also under these
pinnings. For $0<\mu^\tau(v^0)<1$, couple the laws conditioned on $J_v=1$
and $J_v=0$. The root contributes one discrepancy, leaving expected
Hamming discrepancy at most one outside $v$. Hence
\[
\begin{aligned}
 \sum_{w\ne v}|\Cov_{\mu^\tau}(J_v,J_w)|
 &=\mu^\tau(v^0)(1-\mu^\tau(v^0))\\
 &\quad{}\times\sum_{w\ne v}
 \left|\E_{\mu^\tau}[J_w\mid J_v=1]
       -\E_{\mu^\tau}[J_w\mid J_v=0]\right|\\
 &\le \mu^\tau(v^0)(1-\mu^\tau(v^0)).
\end{aligned}
\]
For deterministic $J_v$ the row is zero. Adding the diagonal gives
the full row bound, and symmetry with $2|a_va_w|\le a_v^2+a_w^2$
gives $\Var(\sum_v a_vJ_v)\le2\sum_v\mu^\tau(v^0)(1-\mu^\tau(v^0))a_v^2$.
This proves the spectral assertion as well.
\end{proof}

The weighted extension above concerns only monomer covariance.
The rest of this proof uses common edge activity $\lambda$ and unit
monomer weights; no nonuniform-activity version of
\cref{thm:joint-covariance} or of the algorithms is asserted here.

\begin{lemma}[Centered-dimer variance bound]\label{cov-lem:first-variance}
For every $\theta\in\mathbb R^E$,
\[
  \Var_\mu\!\left[\sum_{e\in E}\theta_eY_e\right]
  \le(1+3\lambda)\sum_{e\in E}\mu(e)\theta_e^2
  +\sum_{u\in V}
    \frac{\bigl(\sum_{e\sim u}\mu(e)\theta_e\bigr)^2}{\mu(u^0)}.
\]
Thus \eqref{cov-eq:first-target} holds with $\eta_1(\lambda)=1+3\lambda$.
\end{lemma}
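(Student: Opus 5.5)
The plan is a direct second-moment computation that exploits a toggling symmetry of the centered variables $Y_{uv}=X_{uv}-\lambda J_uJ_v$. I would then control the surviving incident cross terms by the conditional monomer covariance bounds of \cref{thm:monomer-input}, applied under the monomer pinning $u^0$.

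\emph{Step 1: only incident pairs survive.} Fix $e=uv$. The variable $Y_e$ is nonzero only on two kinds of configuration: those with $e\in M$, where $Y_e=1$, and those with $u,v$ both free, where $Y_e=-\lambda$. The map $M\mapsto M+e$ pairs these two sets, and the weights satisfy $\mu(M+e)=\lambda\mu(M)$. Hence $\E_\mu[Y_e\,g]=0$ for every $g$ with $g(M)=g(M+e)$ whenever $u,v$ are free in $M$. In particular $\E Y_e=0$.

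Now take an edge $f=wz$ disjoint from $e$ and check that $Y_f$ is invariant under this toggle. Adding $e$ does not change $X_f$. It does not change $J_w$ or $J_z$ either, since $w,z\notin e$. So $\Cov(Y_e,Y_f)=0$ for disjoint edges, whether or not they are joined by an edge. For incident pairs $e=uv$ and $f=uw$, the products $X_eX_f$, $X_eJ_uJ_w$ and $X_fJ_uJ_v$ all vanish. Therefore $\E[Y_eY_f]=\lambda^2\mu(u^0v^0w^0)$. Also $\E Y_e^2=\mu(e)+\lambda^2\mu(u^0v^0)=(1+\lambda)\mu(e)$, using $\mu(uv)=\lambda\mu(u^0v^0)$. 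This gives
\[
 \Var\Bigl[\sum_e\theta_eY_e\Bigr]=(1+\lambda)\sum_e\mu(e)\theta_e^2
 +\sum_u\sum_{\substack{v\ne w\\ v,w\sim u}}\theta_{uv}\theta_{uw}\lambda^2\mu(u^0v^0w^0).
\]

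\emph{Step 2: extract the target square.} Write $\mu(u^0v^0w^0)=\mu(u^0)\,\E_{\mu^{u^0}}[J_vJ_w]$. Also $\mu(uv)=\lambda\mu(u^0)\mu^{u^0}(v^0)$. Expanding the square gives
\[
 \frac{\bigl(\sum_{e\sim u}\mu(e)\theta_e\bigr)^2}{\mu(u^0)}
 =\lambda^2\mu(u^0)\sum_{v,w}\theta_{uv}\theta_{uw}\mu^{u^0}(v^0)\mu^{u^0}(w^0).
\]
So the cross term at $u$ equals the following three pieces. First, the target square. Second, minus its diagonal $v=w$ part, which is nonpositive and is dropped. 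Third, the remainder
\[
 \lambda^2\mu(u^0)\sum_{v\ne w}\theta_{uv}\theta_{uw}\Cov_{\mu^{u^0}}(J_v,J_w).
\]

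\emph{Step 3: bound the remainder.} This is the main step. Apply the absolute row-sum bound of \cref{thm:monomer-input} with the monomer pinning $\tau=u^0$, which is a feasible conditioning on monomer coordinates. Combined with $2|a_va_w|\le a_v^2+a_w^2$, it gives
\[
 \Bigl|\sum_{v\ne w}a_va_w\Cov_{\mu^{u^0}}(J_v,J_w)\Bigr|\le\sum_v a_v^2\mu^{u^0}(v^0).
\]
Take $a_v=\theta_{uv}$ and multiply by $\lambda^2\mu(u^0)$. Each term becomes $\lambda\mu(uv)\theta_{uv}^2$. Each edge is counted once from each of its two endpoints, so the remainder contributes at most $2\lambda\sum_e\mu(e)\theta_e^2$. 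Adding the $(1+\lambda)$ diagonal term from Step 1 yields the coefficient $1+3\lambda$, together with the full monomer-square term. This is the claimed bound.

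The only delicate point is Step 1's claim for disjoint but adjacent edges, and it is settled by the toggle invariance. Everything else is bookkeeping. The pinning version follows because a feasible edge pinning leaves a residual matching law at the same activity.
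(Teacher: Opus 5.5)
Your proof is correct and follows essentially the same route as the paper. Both compute the second moment directly; you verify the disjoint-edge orthogonality with the toggle $M\mapsto M+e$ instead of citing the harmonic-surrogate identity, and both then control the incident cross terms by the conditional monomer covariance bound under the pinning $u^0$. The only bookkeeping difference is that you apply the off-diagonal absolute row-sum bound directly to the cross terms, whereas the paper completes the square to $\lambda^2\mu(u^0)\E_{\mu^{u^0}}S_u^2$, leaving coefficient $1-\lambda$, and then applies the spectral bound $\Cov\preceq 2\Diag$; both give $1+3\lambda$.
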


\begin{proof}
The harmonic surrogate of \cite[Definition 17]{ConcurrentHolant26}
is $K_{uv}=\lambda J_uJ_v$ for matchings. Its conditional-mean
identity (Lemma 18 there), together with insertion and deletion, gives
\[
  \mathbb E_\mu Y_e=0,\qquad
  \mathbb E_\mu Y_e^2=(1+\lambda)\mu(e),\qquad
  \mathbb E_\mu[Y_eY_f]=0\quad(e\cap f=\varnothing).
\]
For distinct incident edges,
$\E[Y_{uv}Y_{uw}]=\lambda^2\mu(u^0v^0w^0)$,
since all terms containing an occupied edge vanish.
Put $S_u=\sum_{v\sim u}\theta_{uv}J_v$. By \eqref{eq:monomer-input}
and the insertion identity,
\[
 \lambda^2\mu(u^0)\E_{\mu^{u^0}}S_u^2
 \le2\lambda\sum_{e\sim u}\mu(e)\theta_e^2
   +\frac{\bigl(\sum_{e\sim u}\mu(e)\theta_e\bigr)^2}{\mu(u^0)}.
\]
This is the star-moment estimate from the matching-specific proof of
Lemma 7 in \cite[Section 4]{ConcurrentHolant26}, written in terms of
the second moment and retaining coefficient one on the squared mean.
Grouping incident pairs by their unique common endpoint gives
\begin{align*}
  \Var_\mu\!\left[\sum_e\theta_eY_e\right]
  &=(1-\lambda)\sum_e\mu(e)\theta_e^2
    +\lambda^2\sum_u\mu(u^0)
      \E_{\mu^{u^0}}S_u^2\\
  &\le(1+3\lambda)\sum_e\mu(e)\theta_e^2
    +\sum_u
      \frac{\bigl(\sum_{e\sim u}\mu(e)\theta_e\bigr)^2}{\mu(u^0)}.
\end{align*}
Each edge is counted at both endpoints, explaining both the diagonal
correction $1-\lambda$ and the final coefficient $1+3\lambda$.
\end{proof}

\paragraph{Controlling the corrected monomer statistic.}

The vertex terms in $\mathrm{II}$ center each corrected star sum
conditional on its center being unmatched. We preserve this cancellation
when revealing a random subset of monomer indicators.

We use the following Poincar\'e inequality for a block heat-bath chain
on the monomer indicators.

\begin{lemma}[Monomer block heat bath {\cite[Theorem 14]{ConcurrentHolant26}}]
\label{cov-lem:monomer-block-gap}
Let $P$ be the following kernel on monomer configurations: include each
vertex independently in $R$ with probability $1/3$, independently of
the current configuration; retain $J_R$ and resample the other monomer
indicators jointly from their conditional law given $J_R$.
The kernel is reversible for the law of $J$ under $\mu$, and for every
real function $F$ of $J$,
\begin{equation}\label{cov-eq:monomer-block-poincare}
 \cE_P(F,F)
 =\E_R\E_\mu\Var_\mu(F(J)\mid J_R)
 \ge\tfrac13\Var_\mu(F(J)).
\end{equation}
In particular, $\gap(P)\ge1/3$ whenever the monomer law is nontrivial.
\end{lemma}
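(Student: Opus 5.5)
The plan is to treat \cref{cov-lem:monomer-block-gap} in two parts: an algebraic part (reversibility and the Dirichlet-form identity), then a spectral part (the gap bound $1/3$), which I would get by Hamming contraction from the degree coupling independence underlying \cref{thm:monomer-input}.

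\emph{Reversibility and the Dirichlet-form identity.} For a fixed set $R\subseteq V$, let $E_R F:=\E_\mu[F(J)\mid J_R]$. This is an orthogonal projection in $L^2$ of the law of $J$. The kernel is $P=\E_R E_R$, an average over the independent random set $R$ of such projections, so $P$ is self-adjoint, i.e.\ reversible, and positive semidefinite. Hence
\[
 \cE_P(F,F)=\langle F,(I-P)F\rangle=\E_R\langle F,(I-E_R)F\rangle=\E_R\E_\mu\Var_\mu(F(J)\mid J_R),
\]
which is the identity in \eqref{cov-eq:monomer-block-poincare}.

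\emph{Gap via Wasserstein contraction.} I would bound the $W_1$ contraction of $P$ in the Hamming metric on monomer configurations, then use the standard fact that a reversible kernel contracting $W_1$ by a factor $1-\kappa$ (for some metric) has spectral gap at least $\kappa$; this is Chen's / Ollivier's theorem, proved by testing against Lipschitz eigenfunctions. By path coupling, it suffices to handle feasible configurations $\sigma,\sigma'$ that differ at a single vertex $v$. If the path between feasible configurations passes through infeasible ones, I would either enlarge the metric to the path metric on feasible monomer configurations, or use the Kantorovich form of the coupling argument directly.

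Couple the two steps with the same random $R$. If $v\notin R$, which happens with probability $2/3$, the retained data $J_R$ agree, so both chains resample from the same conditional law and can be coupled identically, leaving discrepancy $0$. If $v\in R$, which happens with probability $1/3$, the two resampling laws are $\mu^{\tau}$ conditioned on $J_v=1$ and on $J_v=0$, where $\tau$ is the common pinning $J_{R\setminus v}$. This is a feasible monomer pinning, so \cref{thm:monomer-input} (degree coupling independence with constant two, valid under such pinnings by \cite[Theorem 13]{ConcurrentHolant26}) supplies a coupling with expected Hamming discrepancy at most $2$, counting the root $v$ itself. The expected discrepancy after one step is therefore at most $\tfrac13\cdot2=\tfrac23$, giving $\kappa=1/3$ and $\gap(P)\ge1/3$. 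This is equivalent to $\cE_P(F,F)\ge\tfrac13\Var_\mu F$.

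\emph{Main obstacle.} I expect the main difficulty to be the $v\in R$ case: one must check that the coupling-independence statement really applies to the conditional laws given $J_R$. This needs monomer pinnings, i.e.\ the exact-degree Holant constraints, with arbitrary pinned sets; that is exactly the ``after every feasible conditioning'' clause of \cref{thm:monomer-input}. One must also check that the coupling bound counts the root, so the total is $2$ rather than $3$. If the path-coupling reduction to adjacent pairs is problematic because the feasible set is not connected under single flips, I would fall back on a direct spectral argument. That argument reveals $R$ one vertex at a time and uses the covariance bound $\Cov_{\mu^\tau}(J)\preceq2\Diag$ from \eqref{eq:monomer-input} at each step to control the loss of variance. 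Either route yields the constant $1/3$ from the inclusion probability $1/3$ times the coupling constant two.
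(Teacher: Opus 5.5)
Your reversibility argument and the identity $\cE_P(F,F)=\E_R\E_\mu\Var_\mu(F(J)\mid J_R)$ match the paper. The paper does not reprove the inequality: it imports it from \cite[Theorem 14]{ConcurrentHolant26}, uses the degree coupling independence of \cite[Theorem 13]{ConcurrentHolant26} (constant two, under exact-degree pinnings) as the hypothesis, and notes that $D=\one-J$ makes conditioning on $D_R$ the same as conditioning on $J_R$.

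Your path-coupling proof of the inequality has a genuine gap. Every feasible monomer configuration has $\sum_v(1-J_v)=2|M|$, so two feasible configurations always differ in an \emph{even} number of vertices. The pairs $\sigma,\sigma'$ differing at a single vertex, on which your whole contraction computation rests, therefore do not exist in the state space. The single-flip graph on it has no edges, so the reduction yields nothing. This is not a contingency, and your hedge does not supply the missing estimate. In any path structure on feasible configurations, adjacent pairs differ in at least two vertices $u,v$. With probability $1/9$ both lie in $R$, and you then need a coupling of the conditional laws for retained pinnings that differ at two sites. Chaining single-site coupling independence needs an intermediate pinning of positive probability, and this can fail. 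For example, on a path $u$--$x$--$v$ with $x\in R$ pinned matched, both one-site flips of the retained pinning are infeasible.

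The gap can be closed along either of your two directions.

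(a) Call two feasible configurations adjacent if they differ in exactly two vertices. Flipping one alternating path component of $M\triangle M'$ at a time shows that the graph distance is $d_H/2$. Use a common $R$ for both chains.
If neither of $u,v$ is in $R$, the outputs coincide.
If exactly one is in $R$ (probability $4/9$), coupling independence gives expected $d_H\le2$.
If both are in $R$ (probability $1/9$), chain through a feasible one-site intermediate when one exists, giving $d_H\le4$. Otherwise, under the common pinning of $R\setminus\{u,v\}$ the value of $J_u$ determines $J_v$, so one application of coupling independence at $u$ gives $d_H\le2$.
The expected halved distance is then at most $\frac49+\frac29=\frac23$, and Chen--Ollivier gives $\gap(P)\ge\frac13$.

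(b) For your spectral fallback, the mean-normalized bound $\Cov_{\mu^\tau}(J)\preceq2\Diag(\mu^\tau(v^0))$ is the wrong input. The quantity $\Var(\E[g\mid J_v])$ carries the factor $\mu^\tau(v^0)(1-\mu^\tau(v^0))$, which can be much smaller than $\mu^\tau(v^0)$. Use the row-sum line of \eqref{eq:monomer-input} instead. By Schur's test it gives $\Cov_{\mu^\tau}(J)\preceq2\Diag\bigl((\Var_{\mu^\tau}J_v)_v\bigr)$. Cauchy--Schwarz against $\sum_v(\E[g\mid J_v]-\E g)$ then gives $\sum_v\Var_{\mu^\tau}(\E[g\mid J_v])\le2\Var_{\mu^\tau}(g)$ under every feasible pinning $\tau$. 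Reveal vertices at independent uniform times, so that $R_t$ contains each vertex with probability $t$. Then $\Psi(t)=\E_{R_t}\E_\mu\Var(F\mid J_{R_t})$ satisfies $\Psi'(t)\ge-2\Psi(t)/(1-t)$. Hence $\Psi(1/3)\ge\frac49\Var_\mu F\ge\frac13\Var_\mu F$. This route needs no connectivity of the state space and even improves the constant.
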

\begin{proof}
For each fixed $R$, conditional resampling is reversible and its
Dirichlet form is $\E_\mu\Var_\mu(F(J)\mid J_R)$. Averaging over $R$
gives the displayed identity. Matchings have capacity-one signatures
$f_v=(1,1,0,\ldots)$, so \cite[Theorem 13]{ConcurrentHolant26}
verifies the hypotheses of \cite[Theorem 14]{ConcurrentHolant26}
for their occupied degrees $D=\mathbf1-J$. Conditioning on $D_R$
and on $J_R$ is the same. Equation (3.2) there, with $r=1/3$,
therefore gives the inequality.
\end{proof}

We next bound this Dirichlet form for $F=\mathrm{II}$. The calculation
follows \cite[proof of Corollary 16]{ConcurrentHolant26}, retaining
both the edge and vertex terms of $\mathrm{II}$.

\begin{lemma}[Corrected-monomer variance bound]\label{cov-lem:second-variance}
For every $\theta\in\mathbb R^E$,
\[
  \Var_\mu[\mathrm{II}]
  \le36\lambda\sum_{e\in E}\mu(e)\theta_e^2.
\]
Thus \eqref{cov-eq:second-target} holds with $\eta_2(\lambda)=36\lambda$.
\end{lemma}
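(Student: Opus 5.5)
The plan is to bound $\Var_\mu[\mathrm{II}]$ through the monomer block heat bath of \cref{cov-lem:monomer-block-gap}. Since $\mathrm{II}$ is a function of $J$ alone, \eqref{cov-eq:monomer-block-poincare} gives
\[
 \Var_\mu[\mathrm{II}]\le3\,\E_R\E_\mu\Var_\mu(\mathrm{II}\mid J_R),
\]
where each vertex is revealed independently with probability $1/3$. It therefore suffices to show $\E_R\E_\mu\Var(\mathrm{II}\mid J_R)\le12\lambda\sum_e\mu(e)\theta_e^2$. Conditioning on $J_R$ is a feasible monomer conditioning, so \cref{thm:monomer-input} applies to every conditional law. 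I will follow the structure of \cite[proof of Corollary 16]{ConcurrentHolant26}, keeping both the edge and vertex terms.

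\emph{Decomposition by revealed status.} I first rewrite $\mathrm{II}$ vertex by vertex so that the cancellation is visible. Put $c_u=\sum_{e\sim u}\mu(e)\theta_e$ and $S_u=\sum_{v\sim u}\theta_{uv}J_v$. The insertion identity $\mu(uv)=\lambda\mu(u^0v^0)$ gives
\[
 \E_{\mu^{u^0}}[\lambda S_u]=c_u/\mu(u^0).
\]
Hence each corrected star $J_u(\lambda S_u-c_u/\mu(u^0))$ is centered conditional on $u$ being free. Given $R$, I split the edge sum $\lambda\sum_{uv}\theta_{uv}J_uJ_v$ into three parts. Edges with both endpoints in $R$ are constant given $J_R$. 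Edges with exactly one endpoint $u\in R$ contribute a linear function $\lambda J_u\theta_{uv}J_v$ of the unrevealed $J_v$. Edges with both endpoints outside $R$ contribute a quadratic term. The vertex terms for $u\in R$ are constant, and those for $u\notin R$ are linear. For the linear part I apply the conditional spectral bound $\Cov_{\mu^\tau}(J)\preceq2\Diag(\mu^\tau(v^0))$ with coefficients $a_v=\lambda\sum_{u\in R,u\sim v}\theta_{uv}J_u-\one_{v\notin R}c_v/\mu(v^0)$. The quadratic part I would handle by conditioning once more on the state of one endpoint, or by bounding its conditional variance by its conditional second moment around the centered star predictor.

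\emph{Averaging back to edge weights.} After averaging over $J_R$ and $R$, each resulting term should take the form $\lambda^2\mu(u^0)\E_{\mu^{u^0}}(\cdot)^2$ of a star sum, possibly with one extra conditioning. I then convert these star moments to $\sum_e\mu(e)\theta_e^2$ using the star-moment estimate from the proof of \cref{cov-lem:first-variance}. The centering built into $\mathrm{II}$ is exactly what removes the $c_u^2/\mu(u^0)$ term there. Concretely, I use $\lambda^2\mu(u^0)\Var_{\mu^{u^0}}S_u\le2\lambda\sum_{e\sim u}\mu(e)\theta_e^2$, which follows from \cref{thm:monomer-input} and insertion. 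The remaining cross terms between different stars I would control with the absolute row-sum bounds in \eqref{eq:monomer-input}, because these avoid any degree factor. The probabilities $1/3$ and $2/3$ and the factor $2$ from \cref{thm:monomer-input} should combine to a constant at most $12$, and the factor $3$ from the gap then yields $36\lambda$.

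\emph{Main obstacle.} The delicate step is the quadratic part from edges with both endpoints unrevealed, together with the vertex terms for $u\notin R$. Conditioned on $J_R$, the vertex correction $c_u/\mu(u^0)$ uses the unconditional marginal, whereas the natural conditional center of $\lambda S_u$ given $J_R$ and $J_u=1$ differs from it. A naive bound would reintroduce $c_u^2/\mu(u^0)$ with coefficient of order one instead of $O(\lambda)$. My first attempt would be to bound $\Var(\mathrm{II}\mid J_R)$ by $\E[(\mathrm{II}-\Phi)^2\mid J_R]$, where $\Phi$ is a $J_R$-measurable predictor that replaces each unrevealed $J_v$ in the stars by its conditional mean. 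I would then expand the difference so that every surviving term carries at least one factor $\lambda J_uJ_v$, which is controlled through $\mu(uv)=\lambda\mu(u^0v^0)$. If this fails, I would fall back on the exact grouping used for $\Var[\mathrm I]$, organizing the terms by common endpoint.
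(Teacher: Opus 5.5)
\textbf{There is a genuine gap.} Two of your ingredients are correct and are exactly what is needed. The first is the reduction via \cref{cov-lem:monomer-block-gap} to $\E_R\E_\mu\Var(\mathrm{II}\mid J_R)\le12\lambda\sum_e\mu(e)\theta_e^2$. The second is the star estimate $\lambda^2\mu(u^0)\Var_{\mu^{u^0}}S_u\le2\lambda\sum_{e\sim u}\mu(e)\theta_e^2$. But you never actually bound $\Var(\mathrm{II}\mid J_R)$, and splitting by revealed status destroys the cancellation you rely on. For an unrevealed $v$, the correction $h_vJ_v$ (with $h_v=\sum_{e\sim v}\mu(e)\theta_e/\mu(v^0)$) goes into your linear part. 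The terms $\lambda\theta_{vw}J_vJ_w$ with $w\notin R$ go into the quadratic part. The coefficient $a_v$ is then typically of order $h_v$. After averaging over $J_R$, the linear part alone contributes about $h_v^2\mu(v^0)=(\sum_{e\sim v}\mu(e)\theta_e)^2/\mu(v^0)$. On a star $K_{1,d}$ with $\theta\equiv1$, this is $d$ times $\lambda\sum_e\mu(e)\theta_e^2$, and only the quadratic part can cancel it. Treating the quadratic part by conditional second moments brings in products $\lambda J_uJ_v\cdot\lambda J_uJ_w$ of incident edges, which regenerate the same mean-square term. The row-sum bounds do not remove it. Your predictor with $\E[J_v\mid J_R]$ in place of unrevealed $J_v$ is not carried out. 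It moves everything to the laws $\mu^{J_R}$, whose star centers differ from $h_u$, and the constant $12$ is asserted rather than derived.

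\textbf{The missing idea} is to choose the $J_R$-measurable competitor by inverse-probability weighting, then average over $R$ with the matching fixed. With $B_u=\one_{\{u\in R\}}$ and $K_{uv}=\lambda J_uJ_v$, set $F_R=9\sum_{uv}\theta_{uv}K_{uv}B_uB_v-3\sum_uh_uJ_uB_u$. It depends only on $J_R$, so $\E_R\E_\mu\Var(\mathrm{II}\mid J_R)\le\E_\mu\E_R(\mathrm{II}-F_R)^2$. Put $\xi_u=3B_u-1$; these are independent, with mean zero and $\E\xi_u^2=2$. One then has exactly
\[
 F_R-\mathrm{II}=\sum_u\xi_uA_u+\sum_{uv\in E}\theta_{uv}K_{uv}\xi_u\xi_v,
 \qquad
 A_u=\lambda J_u\sum_{v\sim u}\theta_{uv}\bigl(J_v-\mu^{u^0}(v^0)\bigr),
\]
so each centered star appears intact as the coefficient of a single $\xi_u$. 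Distinct linear and quadratic $\xi$-monomials are orthogonal, since simplicity makes distinct edges distinct vertex pairs. Hence
\[
 \E_R(F_R-\mathrm{II})^2=2\sum_uA_u^2+4\sum_e\theta_e^2K_e^2,
\]
with no interaction between incident edges and no conditional laws given $J_R$. Your star estimate gives $\E A_u^2=\lambda^2\mu(u^0)\Var_{\mu^{u^0}}S_u\le2\lambda\sum_{e\sim u}\mu(e)\theta_e^2$, and $\E K_e^2=\lambda\mu(e)$. Summing gives $12\lambda\sum_e\mu(e)\theta_e^2$, and the block-gap factor $3$ yields $36\lambda$.
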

\begin{proof}
Put $F=\mathrm{II}$, $K_{uv}=\lambda J_uJ_v$ and
$Q=\sum_e\mu(e)\theta_e^2$. By
\eqref{cov-eq:monomer-block-poincare}, it suffices to prove
$\cE_P(F,F)\le12\lambda Q$. Write
\[
 h_u=\frac{\sum_{e\sim u}\mu(e)\theta_e}{\mu(u^0)}
 =\lambda\sum_{v\sim u}\mu^{u^0}(v^0)\theta_{uv}.
\]
Then
\[
 A_u:=\sum_{e\sim u}\theta_eK_e-h_uJ_u
 =\lambda J_u\sum_{v\sim u}\theta_{uv}
       \bigl(J_v-\mu^{u^0}(v^0)\bigr).
\]
Thus $A_u=0$ when $J_u=0$ and $\E[A_u\mid J_u=1]=0$.
The conditional monomer covariance bound \eqref{eq:monomer-input}
and the insertion identity give
\begin{equation}\label{cov-eq:corrected-star-moments}
\begin{aligned}
 \E A_u^2
 &=\lambda^2\mu(u^0)\Var_{\mu^{u^0}}
       \left(\sum_{v\sim u}\theta_{uv}J_v\right)
 \le2\lambda\sum_{e\sim u}\mu(e)\theta_e^2,\\
 \E K_e^2&=\lambda\mu(e).
\end{aligned}
\end{equation}

For the random set $R$ in the block update, write
$B_u=\one_{\{u\in R\}}$ and define
\[
 F_R=9\sum_{uv\in E}\theta_{uv}K_{uv}B_uB_v
       -3\sum_uh_uJ_uB_u.
\]
For fixed $R$, $F_R$ is determined by $J_R$. The conditional mean
minimizes mean-square error, so
\[
 \cE_P(F,F)=\E_R\E_\mu\Var(F\mid J_R)
 \le\E_R\E_\mu(F-F_R)^2.
\]
To evaluate the right side, put $\xi_u=3B_u-1$.
These independent variables have mean zero and second moment two, and
\[
 F_R-F=\sum_u\xi_uA_u
       +\sum_{uv\in E}\theta_{uv}K_{uv}\xi_u\xi_v.
\]
Conditional on the matching, the distinct linear and quadratic
monomials are orthogonal: simplicity ensures that distinct edges
index distinct pairs of vertices. Hence
\[
 \E_R\E_\mu(F_R-F)^2
 =2\sum_u\E A_u^2+4\sum_e\theta_e^2\E K_e^2
 \le12\lambda Q,
\]
by \eqref{cov-eq:corrected-star-moments}, counting each edge at both
endpoints. Hence $\cE_P(F,F)\le12\lambda Q$, and
\eqref{cov-eq:monomer-block-poincare} gives
$\Var F\le3\cE_P(F,F)\le36\lambda Q$.
\end{proof}
\begin{proof}[Proof of Theorem~\ref{thm:joint-covariance}]
Lemmas~\ref{cov-lem:first-variance} and~\ref{cov-lem:second-variance} establish
the two estimates in Lemma~\ref{cov-lem:reduction} with
\[
  \eta_1(\lambda)=1+3\lambda,\qquad
  \eta_2(\lambda)=36\lambda.
\]
The covariance bound follows from that reduction. Its dimer principal block
gives \eqref{eq:covariance}. A feasible edge pinning leaves a matching
law on a vertex- and edge-deleted simple graph, together with deterministic
coordinates. Applying the same proof to that residual graph gives the
pinned statement as well.
\end{proof}

\section{Extensions to Jerrum--Sinclair dynamics and \texorpdfstring{$b$}{b}-matchings}
\label{app:local-matching-chains}

We record two consequences of the mixing analysis, neither needed in the
main proofs: a comparison for classical Jerrum--Sinclair dynamics and an
application of the general criterion to capacity-constrained matchings.

\subsection{The Jerrum--Sinclair chain}
Insertion and deletion moves alone allow us to transfer the Glauber
bounds to the classical Jerrum--Sinclair chain
\cite{JerrumSinclair,CFJMYZ25}, without the learned parameters of
Section~\ref{sec:preconditioned-js}.
We use its $1/2$-lazy version,
denoted by $P_{\rm JS}$. With probability $1/2$ it stays put;
otherwise it chooses an edge $e=uv$ uniformly. If $e$ is occupied,
the proposal deletes it. If both endpoints are unmatched, the proposal
inserts it. If exactly one endpoint is matched, the proposal replaces
that endpoint's matching edge by $e$. In the remaining case it stays
put. A proposed $M'$ is accepted with probability
$\min\{1,\lambda^{|M'|-|M|}\}$.

\begin{corollary}\label{cor:original-js}
For every $\lambda>0$, the $1/2$-lazy Jerrum--Sinclair chain defined above
on an $n$-vertex simple graph with $m\ge1$ edges satisfies, for
$0<\varepsilon\le1/2$,
\[
 \alpha_{\rm LS}(P_{\rm JS})^{-1}=O_\lambda(m\log n),\qquad
 t_{\rm mix}(P_{\rm JS},\varepsilon)
 =O_\lambda\!\left(m[\log^2n+\log(1/\varepsilon)]\right).
\]
Each step takes $O(1)$ work after $O(m+n)$ initialization.
\end{corollary}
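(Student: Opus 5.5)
The plan is a Dirichlet-form comparison between $P_{\rm JS}$ and single-edge Glauber dynamics $P_{\mathrm{GD}}$, after which the bounds transfer directly from Corollary~\ref{thm:main}.

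\textbf{Comparison of Dirichlet forms.} Both kernels are reversible for $\mu=\mu_{G,\lambda}$, since Metropolis acceptance with ratio $\lambda^{|M'|-|M|}$ and a symmetric edge-uniform proposal give detailed balance. Every legal insertion/deletion pair $\{M,M+e\}$ is a single transition of $P_{\rm JS}$. The insertion proposal is made with probability $\frac{1}{2m}$ and accepted with probability $\min\{1,\lambda\}$, so
\[
\mu(M)P_{\rm JS}(M,M+e)=\frac{\mu(M)\min\{1,\lambda\}}{2m}.
\]
For Glauber dynamics,
\[
\mu(M)P_{\mathrm{GD}}(M,M+e)=\frac{\mu(M)}{m}\cdot\frac{\lambda}{1+\lambda}.
\]
The ratio of these two edge weights is bounded by a constant depending only on $\lambda$. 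Glauber dynamics has no other off-diagonal transitions. Discarding the exchange moves, which only increase $\cE_{P_{\rm JS}}$, gives
\[
\cE_{\mathrm{GD}}(f,f)\le c'_\lambda\,\cE_{P_{\rm JS}}(f,f), \qquad c'_\lambda=\frac{2\lambda}{(1+\lambda)\min\{1,\lambda\}}.
\]

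\textbf{Transfer of the inequalities.} Since the stationary law is the same, the variance and entropy sides agree. Corollary~\ref{thm:main} then yields
\[
\alpha_{\rm LS}(P_{\rm JS})^{-1}\le c'_\lambda\,\alpha_{\rm LS}(P_{\mathrm{GD}})^{-1}=O_\lambda(m\log n).
\]
Likewise, Proposition~\ref{prop:matching-gap} gives $\gap(P_{\rm JS})^{-1}=O_\lambda(m)$.

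\textbf{Mixing time.} To apply \eqref{green-eq:ls-mixing} I need $P_{\rm JS}$ to have nonnegative spectrum. This holds because the chain is $1/2$-lazy: $P_{\rm JS}=\frac12(I+P')$ with $P'$ reversible and stochastic, so $P_{\rm JS}$ is positive semidefinite. Substituting the LSI and gap bounds, together with $\log\log(e^2/\mu_{\min})=O_\lambda(\log n)$ from \eqref{eq:matching-min-mass} and $m\le n^2$, gives
\[
t_{\rm mix}(P_{\rm JS},\varepsilon)=O_\lambda\bigl(m[\log^2 n+\log(1/\varepsilon)]\bigr).
\]

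\textbf{Implementation.} Maintain a mate array. A step then needs only $O(1)$ work: a uniform edge choice, reading the status of its endpoints, and a Bernoulli acceptance draw. Initialization costs $O(m+n)$.

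\textbf{Main obstacle.} The only delicate point is getting the comparison constant and the laziness/spectrum condition right, including the case $\lambda>1$ where insertions are always accepted but deletions are accepted with probability $1/\lambda$. The argument handles this by comparing each undirected pair through its symmetric weight, which keeps the constant $O_\lambda(1)$ for every fixed $\lambda$.
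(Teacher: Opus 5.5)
Your proposal is correct and follows the paper's proof essentially step for step. You use the same insertion/deletion Dirichlet-form comparison and discard exchanges. Your constant $2\lambda/((1+\lambda)\min\{1,\lambda\})$ equals the paper's $2\max\{1,\lambda\}/(1+\lambda)\le 2$; you read it off the insertion weight rather than the deletion weight. You then transfer the LSI from Corollary~\ref{thm:main} and the gap from Proposition~\ref{prop:matching-gap}, use laziness for nonnegative spectrum in \eqref{green-eq:ls-mixing}, and implement with a mate array, exactly as the paper does.
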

\begin{proof}
The proposal is symmetric: an exchange is reversed by proposing the edge
it removed. The Metropolis acceptance rule therefore makes the chain
reversible for $\mu_{G,\lambda}$; insertion and deletion ensure
irreducibility, and laziness ensures nonnegative spectrum.
For every $e\in M$, the stationary transition weight is
\[
 \mu(M)P_{\rm JS}(M,M-e)
 =\frac{\mu(M)}{2m\max\{1,\lambda\}}.
\]
Insertion and deletion therefore give
\[
 \cE_{P_{\rm JS}}(f,f)
 \ge\frac{1+\lambda}{2\max\{1,\lambda\}}\cE_{\mathrm{GD}}(f,f)
 \ge\tfrac12\cE_{\mathrm{GD}}(f,f),
\]
since exchange moves add nonnegative terms. The comparison transfers both
the LSI in \eqref{eq:matching-LS} and the $O_\lambda(m)$ relaxation-time
bound of \cref{prop:matching-gap}, due to
\cite[Corollary 5]{ConcurrentHolant26}.
Together with \eqref{eq:matching-min-mass}, the LSI-to-mixing estimate
\eqref{green-eq:ls-mixing} gives the stated accuracy dependence.
Storing each vertex's matched neighbor (or an unmatched flag) makes every
move take $O(1)$ work. Thus TV-$\varepsilon$ sampling takes
$O_\lambda(n+m[\log^2n+\log(1/\varepsilon)])$ work.
\end{proof}

\subsection{\texorpdfstring{$b$}{b}-matchings}
To distinguish capacities from the marginal parameter $b$, let $B\ge1$
be an integer and write $0\le B_v\le B$ for integer vertex capacities.
On an $n$-vertex simple graph $G=(V,E)$ with $m\ge1$ edges, define
\[
 \Omega=\{M\subseteq E:d_M(v)\le B_v\ \text{for all }v\},
 \qquad \mu_\lambda(M)\propto\lambda^{|M|}.
\]
The following extension applies \cref{thm:intro-boolean} using the
degree-independent covariance estimates of
\cite[proofs of Theorem 4 and Corollary 5]{ConcurrentHolant26}.

\begin{corollary}\label{cor:b-matching}
For every integer $B\ge1$ and $\lambda>0$, single-edge Glauber dynamics for
$\mu_\lambda$ satisfies, for $0<\varepsilon\le1/2$,
\[
 \alpha_{\rm LS}^{-1}=O_{B,\lambda}(m\log n),\qquad
 t_{\rm mix}(\varepsilon)
 =O_{B,\lambda}\!\left(m[\log^2 n+\log(1/\varepsilon)]\right).
\]
At $\lambda=1$, the respective bounds are
$O(Bm\log n)$ and
$O(Bm[\log^2 n+\log(1/\varepsilon)])$, with absolute constants.
The bounds apply to instances with at least two feasible configurations;
a one-state instance has mixing time zero. Each update takes $O(1)$ work
after $O(m+n)$ initialization.
\end{corollary}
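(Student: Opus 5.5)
The plan is to apply \cref{thm:intro-boolean} with $N=m$ coordinates indexed by edges and $\Omega$ the family of $B$-capacity subgraphs. This family is downward closed. Any instance with at least two feasible configurations has a nonempty feasible edge set; edges $e=uv$ with $\min\{B_u,B_v\}=0$ never occur and can be discarded.

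The three hypotheses are verified as follows.

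\emph{Bounded marginal ratios.} Every legal insertion multiplies the weight by $\lambda$, so $\kappa=\max\{1,\lambda\}$.

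\emph{Occupied-marginal lower bound.} Under a positive pinning $\tau$, the residual law is again a $b$-matching law on the graph with the pinned edges removed and capacities reduced by the pinned degrees. For a legal edge $e=uv$ in this residual instance, compare configurations containing $e$ with those that do not. Let $M\not\ni e$ be a residual configuration. If $M+e$ is infeasible, then $u$ or $v$ is saturated; delete one edge of $M$ at $u$ and one at $v$ (as needed), and then insert $e$. This map from $M$ to a configuration containing $e$ removes at most two edges. Each image has at most $(1+\Delta)^2$ preimages, since the removed edges are incident to $u$ or $v$. The weight changes by a factor of at least $\min\{1,\lambda\}^2\lambda$. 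Hence
\[
\mu^\tau(e\in S)\ge \frac{c}{1+c},\qquad c=\frac{\lambda\min\{1,\lambda\}^2}{(1+\Delta)^2}.
\]
This gives $b\ge n^{-O(1)}$ for fixed $\lambda$, so $\log(N\kappa/b)=O_\lambda(\log n)$, with only logarithmic dependence on $\Delta$.

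\emph{Spectral stability.} This is the main step. The posteriors $\nu_t^\tau$ are $b$-matching laws at activity $s=\lambda(1-t)$ on residual instances, so I need an edge covariance bound
\[
\Cov(X)\preceq \widehat C_B(s)\,\Diag(\E X)
\]
holding uniformly over residual instances. I would take it from \cite[proofs of Theorem 4 and Corollary 5]{ConcurrentHolant26}. Their log-concave Holant signatures $f_v=(1,1,\dots,1,0,\dots)$ (with $B_v+1$ ones) cover capacity constraints, and pinning preserves this class. The constants $C_1,C_2$ there depend on $B$ and $s$, giving $\widehat C_B(s)=O_B(1+s)$.

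Integrability of $I=\int_0^1 (C(t)-1)/(1-t)\,dt$ also requires $\widehat C_B(s)=1+O_B(s)$ as $s\downarrow0$. If the Holant bound does not already provide this, I would prove a low-activity bound directly, as in \cite[Section 5.3]{CCCYZ25}: at small $s$, covariances between distinct edges are $O(s)$ times the marginals in row-sum norm, by a one-step coupling. Since each edge sees at most $2\Delta$ neighbors, I must make sure the bound is degree-free. Summing over neighbors $f$ of $e$ weighted by the marginals $\mu(f)$ gives $\sum_f\mu(f)=O_B(1)$ at each vertex, because the expected degree is at most $B$. This degree-free small-$s$ estimate is where I expect the real work.

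Once $I=O_{B,\lambda}(1)$, \cref{thm:intro-boolean} gives inverse LSI constant $O(\kappa e^I m\log n)$. For the mixing time, note that $\log(1/\mu_{\min})\le m\log(1+\max\{\lambda,1/\lambda\})$, so the $\log\log(e^2/\mu_{\min})$ factor is $O_\lambda(\log n)$, and the stated mixing bound follows.

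At $\lambda=1$, the claimed $O(B)$ dependence requires $e^I=O(1)$ and $\kappa=1$. The extra factor $B$ must therefore come from a covariance rate of the form $C(t)\le 1+O(B(1-t))$ integrated against $dt/(1-t)$, which is problematic. More likely I would instead compare with the insertion--deletion form: \cref{abs-thm:hierarchy} incurs the factor $N(1+\kappa)e^I$, so the $O(B)$ must be absorbed into $e^I$ via a bound $I\le\log B+O(1)$. This holds if $C(t)\le\min\{O(B),1+O(1-t)\}$, since the integral up to the crossover $1-t\approx B$-dependent cutoff gives $\log B$.

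For the $O(1)$ update cost, store the current degree of each vertex; a proposal is then feasible iff both endpoint degrees are below capacity, which is checked in constant time.
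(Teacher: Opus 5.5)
Your architecture matches the paper's: you apply \cref{thm:intro-boolean} with $N=m$ and $\kappa=\max\{1,\lambda\}$, obtain $b$ from a delete-then-insert map, take covariance input from \cite{ConcurrentHolant26}, use the crude bound on $\log(1/\mu_{\min})$, and keep degree counters for $O(1)$ updates. The gap is in the step you yourself flag as the real work: the low-activity half of spectral stability, which is what makes $I<\infty$.

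\textbf{Why the fallback fails.} The Theorem~4 argument does not give a constant tending to $1$ as $s\downarrow0$; for ordinary matchings it gives $7+48s$ (see \cref{prop:holant-edge}). So a separate low-activity estimate is unavoidable, and the one you sketch cannot work. Capacities $B_v=1$ are allowed. In that case incident edges satisfy $\Cov(X_e,X_f)=-\mu(e)\mu(f)$ exactly, which is not $O(s)\mu(e)\mu(f)$. Take a star $K_{1,\Delta}$ at activity $s$ with $s\Delta\gg1$, and put $p=s/(1+s\Delta)$ and $D=\Diag(\E X)$. Then $D^{-1/2}\Cov(X)D^{-1/2}=\mathrm{Id}-p\one\one^{\mathsf T}$. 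Its entrywise absolute value has top eigenvalue $1+(\Delta-2)p\approx2$, however small $s$ is. Hence any row-sum or Schur-test bound, with any weights, is stuck near $2$ rather than $1+O(s)$, and then $\int^1(C(t)-1)/(1-t)\,dt=\infty$. The true constant on that star is $1$, because $\Cov(X)=p\,\mathrm{Id}-p^2\one\one^{\mathsf T}\preceq D$. The negative correlations must be used with their sign.

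\textbf{What the paper uses instead.} The paper takes the genuinely spectral bound \cite[Theorem~11, eq.~(2.3)]{ConcurrentHolant26}: $\Cov_{\mu_s^\tau}(X)\preceq(1-s)^{-1}\Diag(\E X)$ for $0<s<1$, under every positive pinning. It glues this to a uniform constant $A=O_{B,\lambda}(1)$ valid for $a\le s\le\lambda$, where $a=\min\{\lambda,1/2\}$. This gives $I=(A-1)\log(\lambda/a)-\log(1-a)$.

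\textbf{The $\lambda=1$ case.} The same input repairs your discussion here. With absolute constants, $C(t)\le\min\{O(B),1+O(1-t)\}$ would force $I=O(1)$ and no factor $B$ at all, so it is not the right condition. What is available is $C(t)=\min\{A,1/t\}$. Here $1/t$ is the low-activity bound at $s=1-t$, and $A=1804B$ holds uniformly for $0<s\le1$ \cite[proof of Corollary~5]{ConcurrentHolant26}. The $\log B$ comes from $\int_{1/A}^1\frac{1/t-1}{1-t}\,dt=\log A$. Therefore $I\le1+\log A$ and $e^I\le1804eB$.

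\textbf{A minor slip.} Your weight factor in the occupied-marginal bound is wrong for $\lambda>1$. Deleting two edges and inserting $e$ multiplies the weight by $\lambda^{-1}$, so the factor should be $\lambda/\max\{1,\lambda\}^2$. This still gives $\log(1/b)=O_\lambda(\log n)$.
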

\begin{proof}
Positive pinning deletes the pinned edges and reduces the capacities
by their occupied degrees, preserving the form of the law. Edges forced
absent are omitted from covariance matrices but remain among the $m$
update coordinates of the original chain.

\emph{Spectral stability.}
Write $X_e=\mathbf1_{\{e\in M\}}$; the decreasing field has activity
$s=(1-t)\lambda$. The low-activity estimate
\cite[Theorem 11, equation (2.3)]{ConcurrentHolant26} states, for every positive pinning $\tau$,
\[
 \Cov_{\mu_s^\tau}(X)\preceq
 \frac1{1-s}\operatorname{Diag}(\E_{\mu_s^\tau}X),\qquad 0<s<1.
\]
Put $a=\min\{\lambda,1/2\}$. The proof of Theorem 4 of the same reference, with
insertion odds between $a$ and $\lambda$, bounds these covariance matrices
by $A\operatorname{Diag}(\E X)$ uniformly for $a\le s\le\lambda$,
where $A=O_{B,\lambda}(1)$ may be enlarged to satisfy $A\ge2$.
Consequently, an admissible nonincreasing rate is
\[
 C(t)=
 \begin{cases}
 A,&(1-t)\lambda>a,\\
 [1-(1-t)\lambda]^{-1},&(1-t)\lambda\le a,
 \end{cases}
 \qquad
 I=(A-1)\log\frac\lambda a-\log(1-a)<\infty.
\]
For $\lambda=1$, \cite[proof of Corollary 5]{ConcurrentHolant26}
gives the uniform covariance coefficient $A=1804B$ for all $0<s\le1$.
We can then take $C(t)=\min\{A,1/t\}$, with $C(0)=A$, obtaining
\[
 I=-(A-1)\log(1-A^{-1})+\log A\le1+\log A,
 \qquad e^I\le1804eB.
\]

\emph{Marginal bounds.}
Every legal insertion has weight ratio $\lambda$, so take
$\kappa=\max\{1,\lambda\}$. To bound occupied marginals, fix a feasible
edge $e=uv$ in any residual instance. Map each configuration not containing $e$ to
one containing it by deleting one incident occupied edge at each
saturated endpoint, chosen in a fixed order, and then inserting $e$.
The two deleted edges, if both present, are distinct because the graph
is simple and $e$ was absent. An image $M'$ and the two deletion records
uniquely determine the preimage, whose weight is $\lambda^{|M'|-1+k}$
if $k$ edges were deleted. Summing over possible records bounds the
total preimage weight by
\[
 \lambda^{|M'|-1}(1+\lambda d_G(u))(1+\lambda d_G(v)).
\]
Adding the configurations that already contain $e$ and summing gives
\[
 \mu_\lambda^\tau(e\in M)
 \ge\frac{\lambda}
 {\lambda+(1+\lambda d_G(u))(1+\lambda d_G(v))}
 \ge\frac{\lambda}{\lambda+(1+\lambda\Delta)^2}.
\]
Here $\Delta$ is the original graph's maximum degree.
Apply \cref{thm:intro-boolean} with $N=m$ and the last expression as $b$.
Since $\Delta\le n-1$ and $m\le\binom n2$,
$\log(m\kappa/b)=O_\lambda(\log n)$.
Also $\log(1/\mu_{\min})\le m(\log2+|\log\lambda|)$, since there are at
most $2^m$ states and the ratio of any two weights is at most
$\max\{\lambda,\lambda^{-1}\}^m$. The LSI and mixing bounds follow.
Storing the edge occupations and each vertex's occupied degree makes
every Glauber update take $O(1)$ work. Running for the stated mixing-time
bound gives the same sampling-time bound, plus $O(m+n)$ initialization
and output.
\end{proof}
\end{document}